\definecolor{darkblue}{rgb}{0.1,0.1,.7}
\titleformat*{\section}{\large\bfseries}
\titleformat*{\subsection}{\normalsize\bfseries}
\titleformat*{\subsubsection}{\normalsize\it}
\titleformat*{\paragraph}{\normalsize\bfseries}
\titleformat*{\subparagraph}{\normalsize\bfseries}
\newcommand{\reef}[1]{(\ref{#1})}
\def\eps{\epsilon}
\newcommand{\beq}{\begin{equation}} 
\newcommand{\eeq}{\end{equation}}
\def\nn{\nonumber} 
\def\bR {\mathbb{R}} 
   \def\calC {{\cal C}} \def\calD {{\cal D}} 
 \def\calJ {{\cal J}}   
 \def\calN {{\cal N}}  \def\calO {{\cal O}}
\def\bZ {\mathbb{Z}}
\def\ge{\geqslant}
\def\le{\leqslant}
\def\geq{\geqslant}
\def\leq{\leqslant}
\def\<{\langle}
\def\>{\rangle}
\newcommand{\diffop}[2]{\ifthenelse{\equal{#2}{1}}{\frac{\mrm{d}}{\mrm{d} #1}}{\frac{\mrm{d}^#2}{\mrm{d} #1^#2}}}
\newcommand{\be}{\begin{equation}}
\newcommand{\ee}{\end{equation}}
\newcommand{\bea}{\begin{eqnarray}}
\newcommand{\eea}{\end{eqnarray}}
\newcommand{\ket}[1]{|#1\rangle}
\newcommand{\bra}[1]{\langle #1|}
\newcommand{\mrm}[1]{{\mathrm #1}}
\def\kappa{\varkappa} 
\def\tr{\mathrm{tr}}
\newcommand{\Rep}{{\sf Rep}}
\newcommand{\Rephat}{\widehat{\Rep}}
\newcommand{\Reptilde}{\widetilde{\Rep}}
\newcommand{\Kar}{{\sf Kar}}
\newcommand{\Hom}{{\rm Hom}}
\newcommand{\ba}{{\bf a}}
\newcommand{\ra}{\rightarrow}
\newcommand{\fcy}[1]{\mathcal{#1}}
\theoremstyle{plain}
\newtheorem{prop}{Proposition}[section]
\newtheorem{thm}[prop]{Theorem}
\newlength{\dhatheight}
\numberwithin{equation}{section}
\begin{document}

\vspace*{-.6in} \thispagestyle{empty}
\begin{flushright}
\verb|PUPT-2601|
\end{flushright}
\vspace{1cm} {\large
\begin{center}
{\bf Deligne Categories in Lattice Models and Quantum Field Theory,\\
	\emph{or} Making Sense of $O(N)$ Symmetry with Non-integer $N$}
\end{center}}
\vspace{1cm}
\begin{center}
{\bf Damon J. Binder$^a$, Slava Rychkov$^{b,c}$}\\[2cm] 
{
$^{a}$ Joseph Henry Laboratories, Princeton University, Princeton, NJ 08544, USA\\
$^b$  Institut des Hautes \'Etudes Scientifiques, Bures-sur-Yvette, France\\
$^c$  
Laboratoire de Physique de l’Ecole normale sup\'erieure, ENS,\\ 
{\small Universit\'e PSL, CNRS, Sorbonne Universit\'e, Universit\'e de Paris,} F-75005 Paris, France
}
\vspace{1cm}
\end{center}

\vspace{4mm}

\begin{abstract}
When studying quantum field theories and lattice models, it is often useful to analytically continue the number of field or spin components from an integer to a real number. In spite of this, the precise meaning of such analytic continuations has never been fully clarified, and in particular the symmetry of these theories is obscure. We clarify these issues using Deligne categories and their associated Brauer algebras, and show that these provide logically satisfactory answers to these questions. Simple objects of the Deligne category generalize the notion of an irreducible representations, avoiding the need for such mathematically nonsensical notions as vector spaces of non-integer dimension. We develop a systematic theory of categorical symmetries, applying it in both perturbative and non-perturbative contexts. 
A partial list of our results is: categorical symmetries are preserved under RG flows; continuous categorical symmetries come equipped with conserved currents; CFTs with categorical symmetries are necessarily non-unitary.
 \end{abstract}
\vspace{.2in}
\vspace{.3in}
\hspace{0.7cm} November 2019

\newpage

\setcounter{tocdepth}{3}

{
\tableofcontents
}

\section{Introduction}

In quantum field theory, the number of field components $N$ should naively be an integer. When performing calculations however, it is often fruitful to analytically continue results to non-integer values.\footnote{The spacetime dimension $d$ is also often analytically continued. In this paper we focus on the internal symmetry, but our considerations are also relevant to spacetime symmetry; see also the discussion.} These constructions are conceptually puzzling; in particular, what remains of the symmetries of a model when we analytically continue $N$? The textbook definition of a symmetry requires a symmetry \emph{group}, and groups like $O(N)$ do not make sense as mathematical objects when we go to non-integer $N$. Our purpose here will be to give a definition which applies when the textbook definition fails. Here are some of our punchlines: 

\begin{itemize}
	\item 
	Some families of `symmetries' allow meaningful analytic continuation in $N$. These include continuous groups, such as $O(N)$, $Sp(N)$, and $U(N)$, and also discrete groups, such as $S_N$. Others families, like $SO(N)$ or $\bZ_N$, do not.
	
	\item We do not analytically continue the group, nor any specific representation of a group, but rather the whole `representation theory'. The algebraic structure underlying this analytically continued `representation theory' is known in mathematics as `Deligne categories' \cite{Deligne}. 
	
	\item A certain algebra of string diagrams underlies the Deligne categories (such as the Brauer algebra for the $O(N)$ case). It is used for practical computations, and `explains' the meaning of $\delta_{ab}$ tensors with non-integer $N$.
\end{itemize}

To set the stage, here are a few examples why one may care about non-integer $N$ {(other physical applications will be given in section \ref{sec:pasha}, as well as sprinkled throughout)}:

\begin{itemize}
	\item 
	Even if interested one is mostly in integer $N$, one may wish to understand how physical observables behave as a function of $N$, which will then necessarily involve intermediate non-integer values. For instance in the theory of critical phenomena, one might study critical exponents of $O(N)$ models as a function of $N$. This is mostly easily done in perturbation theory, where at each order $N$ enters polynomially via contractions of invariant tensors. However, non-perturbative analytic continuations also exist, as the next example shows.  
	
	\item
	
	Certain non-integer $N$ models are interesting in their own right. One famous example are loop models in statistical physics, which are probabilistic ensembles of loops living on a lattice. Every loop contributes to the probability weight a factor $N K^{\text{loop length}}$, where $K$ is a coupling, see Fig.~\ref{fig-loop1}.
	\begin{figure}[h!]
		\centering
		\includegraphics[scale=0.6]{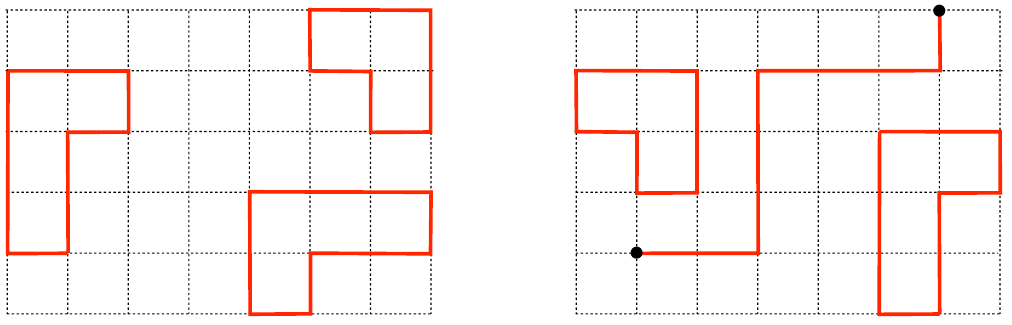}
		\caption{\label{fig-loop1} \emph{Left:} Partition function of an $O(N)$ loop model sums up weights of loop configurations living on a lattice. E.g.~the shown configuration of three loops has weight: $N^3 K^{8+10+10}$, where 8, 10, 10 are the loop lengths. \emph{Right:} Typical contribution to the defect-defect correlator in the loop model, which corresponds to the spin-spin correlator of the spin model.}
	\end{figure}
	This is referred to as $O(N)$ loop model, because it can be obtained by analytic continuing a spin model in the $O(N)$ universality class (this will be reviewed below). 
	This correspondence extends to correlation functions: in the loop formulation we introduce point defects where lines are forced to end. There is a vast literature on the $O(N)$ loop models for continuous values of $N$ and the CFTs describing their phase transitions \cite{Nienhuis2008,Jacobsen2012}. While many quantities of physical interest have been computed, to our knowledge the question of symmetry has never been properly explained.
	
	\item Extensions to nearby non-integer $N$ help to learn about aspects of models with integer $N$. For instance, the fractal dimension of same-sign clusters of spins at the phase transition of the Ising model is a non-local observable, and is difficult to compute within the Ising model itself. It becomes more easily accessible when the model is extended to $N=1+\eps$.\footnote{A great reference about such limits, which often lead to logarithmic conformal field theories, is \cite{Cardy:2013rqg}.}
	
\end{itemize}

The rest of the paper is structured as follows. After a brief reminder of the properties of symmetries (section \ref{sec:need}), we introduce the Brauer algebra in section \ref{sec:brouer} and with its help give an operational definition of non-integer $O(n)$ symmetry in section \ref{sec:oper}. Section \ref{sec:transcat} explains the meaning of this symmetry in the language of Deligne categories. Armed with category theory, in section \ref{sec:lattice} we discuss how to construct the most general $O(n)$-symmetric lattice model. We also discuss Wilsonian renormalization in this context and establish that the categorical symmetry parameter $n$ is not renormalized, just like for group symmetries. In section \ref{sec:QFT} we consider various aspects of QFTs with categorical symmetries: free theories, path integrals, perturbation theory, global symmetry currents, explicit and spontaneous symmetry breaking, conformal field theories and unitarity. Highlights here include the categorical Noether's theorem (section \ref{sec:current}), the categorical Goldstone theorem (conjectured in section \ref{sec:spont}), and two theorems about CFTs with categorical symmetries: Theorem \ref{thm:AllObs} about completeness of the global symmetry spectrum (new even for group symmetries), and Theorem \ref{thm:nonUnitary} about the lack of unitarity. In section \ref{sec:pasha} we emphasize how Deligne categories lead to a non-trivial interplay of algebra and analysis in situations of physical interest. In section \ref{sec:other} we define Deligne categories which interpolate $U(N)$, $Sp(N)$, and $S_N$. We also report what is known about the other families of categories, and about the so far elusive possibility of interpolating the exceptional Lie groups. Then we conclude. Appendix \ref{sec: category} is a self-contained review of tensor categories, and Appendix \ref{sec:CONTCATS} outlines the theory of continuous categories.


\emph{Reader's guide:} This article is rather long, and depending on your interest you may take different routes. The first five sections should be read by everyone. Section \ref{sec:lattice} is mostly for statistical physicists interested in lattice models, while section \ref{sec:QFT} is written with a high energy physics audience in mind, and section \ref{sec:pasha} is for mathematicians interested in how Deligne categories give rise to interesting behaviour in physical systems.

\section{What do we need from a symmetry?}
\label{sec:need}

It is hardly necessary to explain the central role of symmetries\footnote{In this paper we will use the word symmetry to refer to global (i.e.~internal) symmetries.} as an organizing principle in physics. Here is a partial list of their uses, relevant for quantum field theory and statistical physics:
\begin{enumerate}
	
	\item
	States and local operators are classified by representations of the symmetry group. 
	
	\item
	Symmetries restricts the form of measured quantities, such as correlation functions, scattering matrices, and transfer matrices. These must all be invariant tensors of the symmetry group.
	
	\item
	Symmetry is preserved along the renormalization group (RG) flow. If a microscopic theory has a certain symmetry, a CFT describing its long-distance limit should have the same symmetry.\footnote{As is well known, there are two types of exceptional situations where this statement requires qualifications. First, symmetries may break spontaneously. Second, additional symmetries may emerge at long distances because operators which break them are irrelevant in the renormalization group sense.} 
	In other words, universality classes of phase transitions can be classified by their symmetries.
	
	\item Continuous symmetries in systems with local interactions lead to local conserved current operators.
	
	\item
	When a continuous group symmetry is spontaneously broken at long distances, this leads to Goldstone bosons and constraints on their interactions.
	
\end{enumerate}

A `non-integer $N$ symmetry' should be a conceptual framework with similar consequences:

\begin{enumerate}
	
	\item[$1'$.] There has to be a notion which replaces that of irreducible representation and which is used to classify states and local operators. 

\item[$2'$.] We would like to know what the algebraic objects are to which correlation functions and transfer matrices belong, and which replace invariant tensors.

\item[$3'$.] We would like to know that symmetry is preserved along the RG flow. For example, parameter $N$ should not be renormalized. We should then be able to use this `symmetry' to classify universality classes of phase transitions with non-integer $N$. In particular, this would provide a robust explanation why phase transitions of loop models on different lattices are described by the same CFT.

\end{enumerate}

\section{The Brauer algebra}
\label{sec:brouer}

Our first goal is to explain the mathematical meaning of the Kronecker delta tensor, $\delta_{IJ}$, for non-integer $n$.\footnote{Small $n$ will stand for a real number, which may or may not be integer. Capital $N$ will be reserved for positive integers: $N\in\bZ_+$.}
In the usual naive approach, these tensors are manipulated using a few simple rules:\footnote{Here we do not distinguish upper and lower indices, but later we will.}
\begin{subequations}
	\begin{gather}
	\delta_{IJ}=\delta_{JI},\label{eq:rule1}\\ 
	\delta_{IJ}\delta_{JK}=\delta_{IK},\label{eq:rule2}\\ 
	\delta_{IJ}\delta_{JI}=n\,. \label{eq:rule3}
	\end{gather}
\end{subequations}
From experience, these rules are consistent, i.e.~give the same answer when applied in any order. For instance we have associativity:
\beq
(\delta_{IJ}\delta_{JK})\delta_{KL}=\delta_{IJ}(\delta_{JK}\delta_{KL})=\delta_{IL},\qquad\text{etc.}
\eeq
For $n=N\in\bZ_+$, consistency is guaranteed by properties of matrix multiplication, but what are we doing for $n\in\bR$ and why is this consistent?

The answer is: stop thinking of $\delta_{IJ}$ as a tensor and $I,J$ as indices in a vector space, since vector spaces of non-integer dimension do not exist. Instead, view it as a notation for a string connecting a pair of points labelled $I$ and $J$: 
\beq
\delta_{IJ}\equiv \includegraphics[trim=0 2.5em 0 0,scale=0.4]{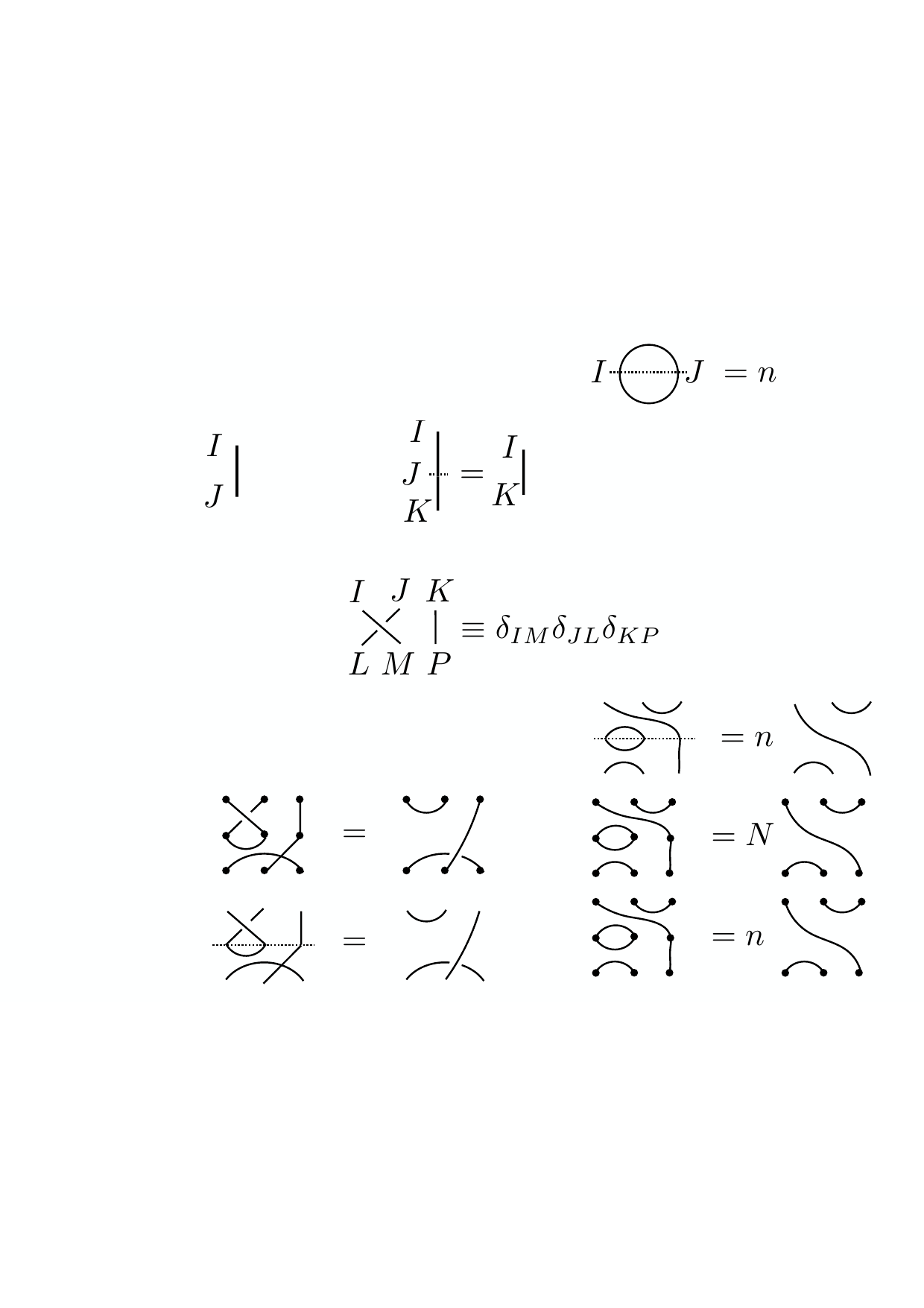}
\eeq\\[-0.5em]
Then, \reef{eq:rule1} is manifest, \reef{eq:rule2} becomes a rule for concatenating strings and erasing the midpoint, while \reef{eq:rule2}
replaces a circle by $n$: (dashed line shows where the concatenation happens)
\beq
\raisebox{-2em}{\includegraphics[scale=0.6]{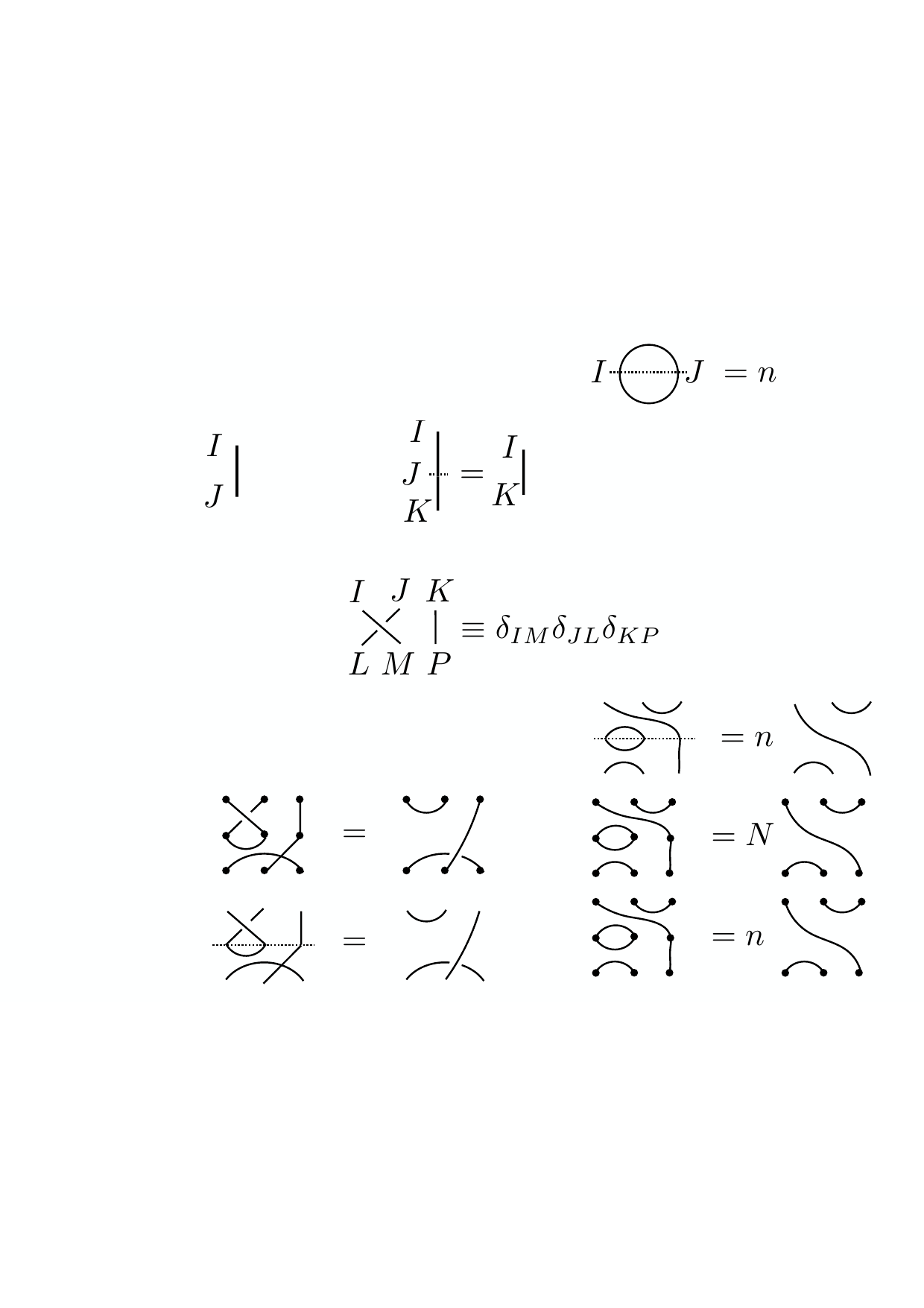}}\qquad\qquad
\raisebox{-1em}{\includegraphics[trim=0 0 0 0,scale=0.6]{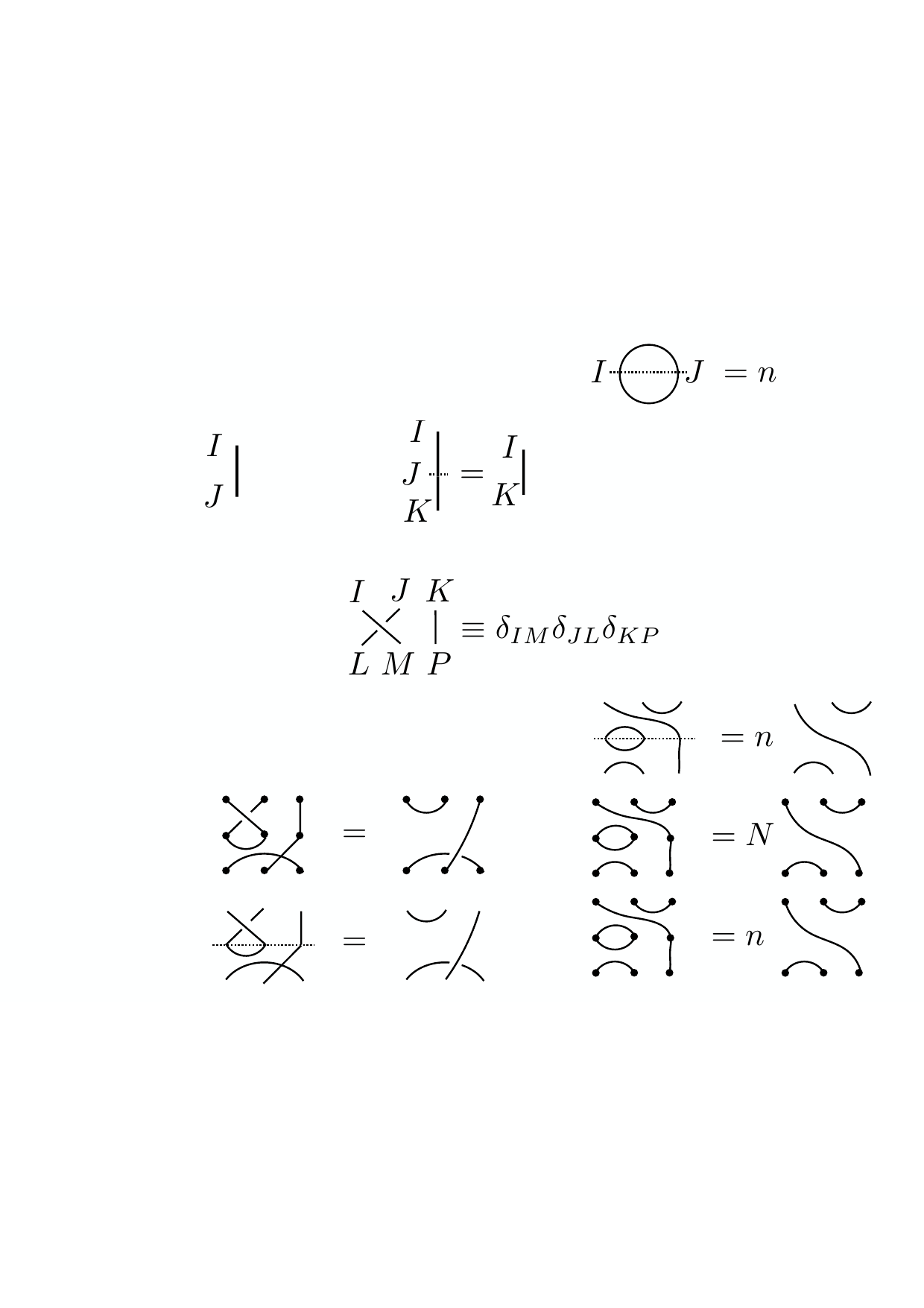}}\,.
\eeq
Similarly, products of several $\delta$-tensors are replaced by diagrams containing several strings, e.g.\footnote{It does not matter which string passes above which, only who is connected to whom.}
\beq
\raisebox{-1em}{\includegraphics[trim=0 0em 0 0,scale=0.45]{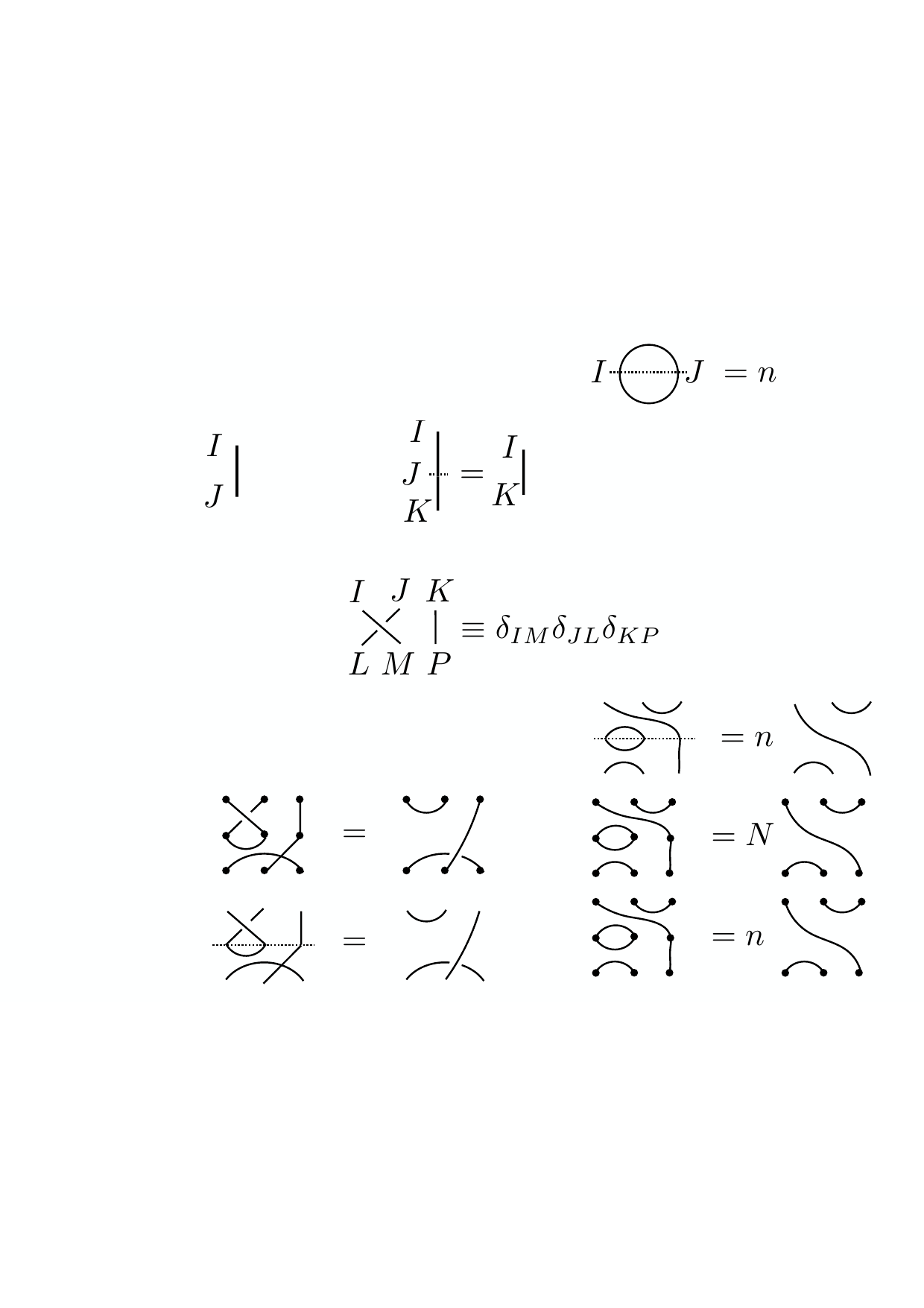}}\,.
\eeq
Such string diagrams can then be multiplied by concatenation: putting them one on top of the other and erasing middle vertices. Each loop produced in the process is erased and replaced by a factor of $n$. This operation replaces contraction of $\delta$-tensors. The resulting string diagram may then be `straightened up', just to improve the visibility of the remaining connections. Here are two examples:
\beq
\raisebox{-1em}{\includegraphics[scale=0.6]{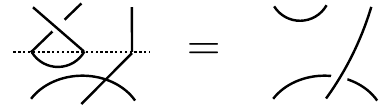}}\,,\qquad \raisebox{-1em}{\includegraphics[scale=0.6]{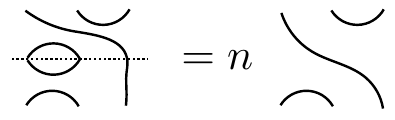}}\,.
\eeq
For a fixed integer $k$, consider then all string diagrams with $k$ points in the bottom and top rows. By taking formal linear combinations of we can construct a finite-dimensional vector space. By the above multiplication rules this vector space turns into an algebra, called the Brauer algebra $B_k(n)$. We can state our first lesson:

\emph{For $n\in \bR$, formal manipulation rules involving $\delta$-tensors are nothing but operations in the Brauer algebra of string diagrams. The rules are consistent because the Brauer algebra exists (and vice versa).}

We can also generalize the Brauer algebra, considering $(k_1,k_2)$-diagrams with $k_1$ points in the bottom and $k_2$ in the top row. The product of a $(k_1,k_2)$-diagram with a $(k_2,k_3)$-diagram is defined as the concatenated $(k_1,k_3)$-diagram, times a factor $n^{\#\text{(erased loops)}}$. We shall call this mathematical structure the `category $\Rephat\,O(n)$', and it is the first step towards the Deligne category $\Reptilde\,O(n)$. Using the language of category theory will assist us in both reasoning about these algebraic structures and in simplifying notation. We will start using it in section \ref{sec:rephat}, but for now let us proceed thinking more pictorially by using string diagrams.

\section{Operational definition of `$O(n)$ symmetry', $n\in\bR$}
\label{sec:oper}
 
To define something as fundamental as symmetry needs care. We will try to give a reasonably general definition, making sure that it is neither circular nor tautological. We will also separate the `definition' from the `construction': first we shall define our notion of symmetry, and then we shall construct symmetric models.

Imagine someone hands us a model as a black box, an oracle which can be queried for values of observables.\footnote{By observable we mean any measurable quantity which has a numerical value, like a correlation function. No relation to observable in quantum mechanics.} How can we determine whether the model has a symmetry?

\subsection{First attempt}
If it is a model of spins ($N\in\bZ_+$), we might act as follows. Let us query the oracle for a correlator of multiple spins $s(x_i)$. Probing all possible spin components $I_i=1\ldots  N$, we can check that the correlator is an $O(N)$ invariant tensor, i.e. expandable in a basis of products of $\delta_{I_i I_j}$. For instance, for a 4pt function we should find:
\beq
\langle s_{I}(x_1) s_{J}(x_2)s_{K}(x_3)s_{L}(x_4)\rangle = C_1(x_i) 
\delta_{IJ}\delta_{KL}+C_2(x_i)\delta_{IK}\delta_{JL}+C_3(x_i)\delta_{IL}\delta_{JK}\,.
\label{eq:ssss}
\eeq
If this holds for any correlator we check, it is tempting to declare that the model has $O(N)$ symmetry. 

Consider now a loop model and try to devise a test for whether there is a non-integer `$O(n)$ symmetry', whatever that might mean. The analogue of the above would be to query the oracle for correlation functions of defect operators $D$ where lines can end. Any such correlator is a linear combination of string diagrams, e.g. 4pt function  
\beq
\< D(x_1) D(x_2) D(x_3) D(x_4)\>= C_1(x_i)\, \includegraphics[trim=0 1em 0 0,scale=0.65,angle=180]{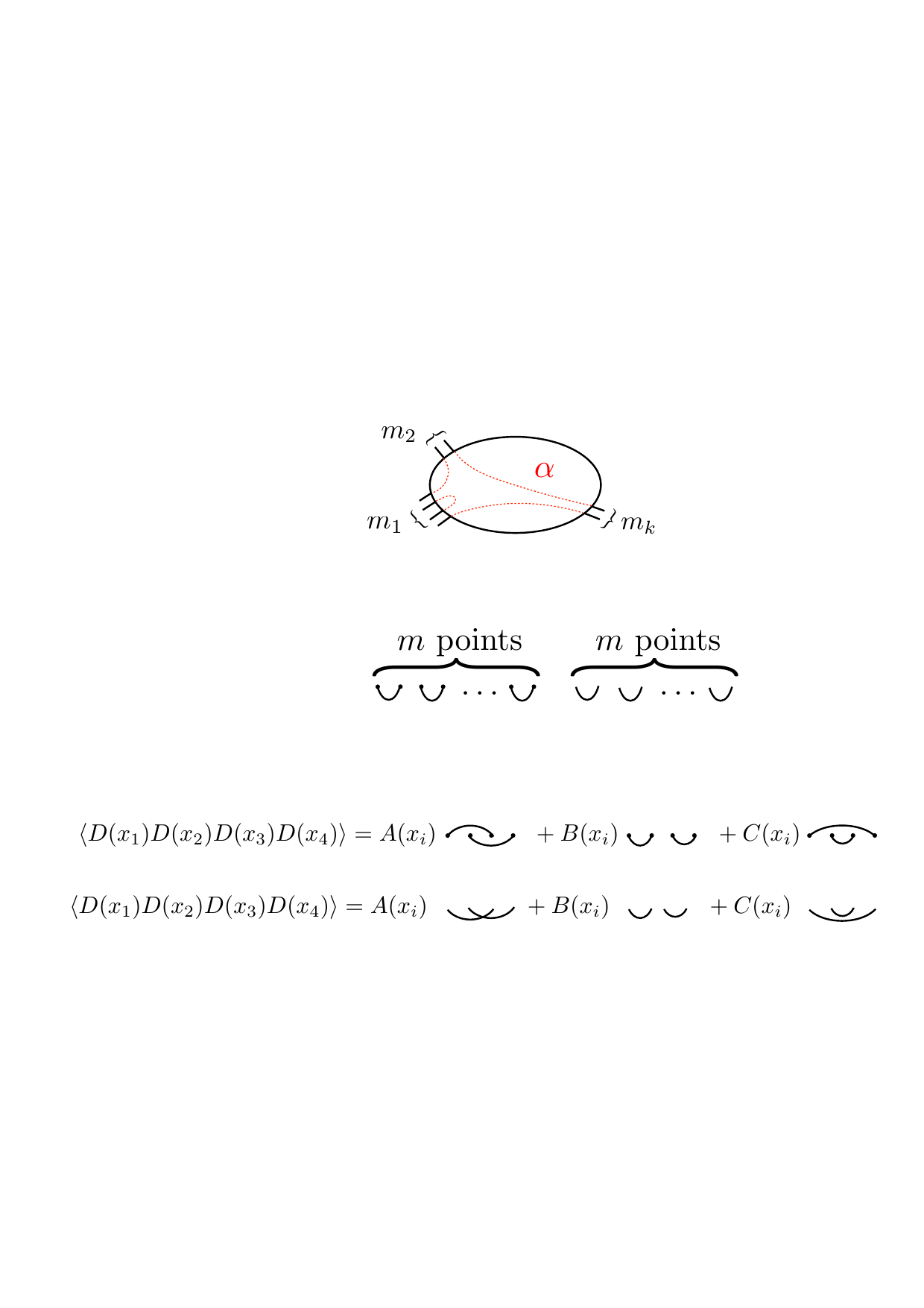} +
C_2(x_i)\,\includegraphics[trim=0 1em 0 0,scale=0.65,angle=180]{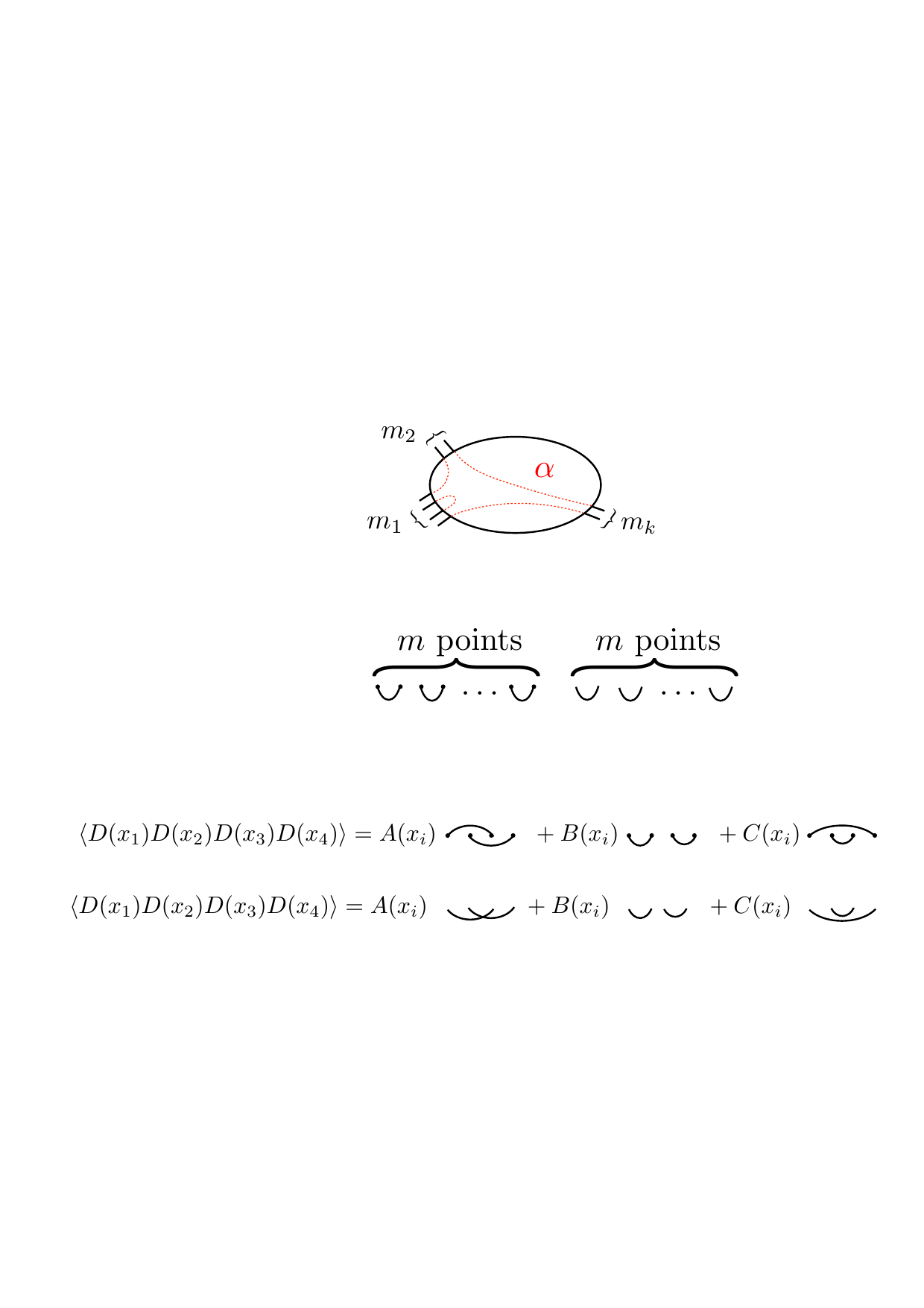} +
C_3(x_i)\, \includegraphics[trim=0 1em 0 0,scale=0.65,angle=180]{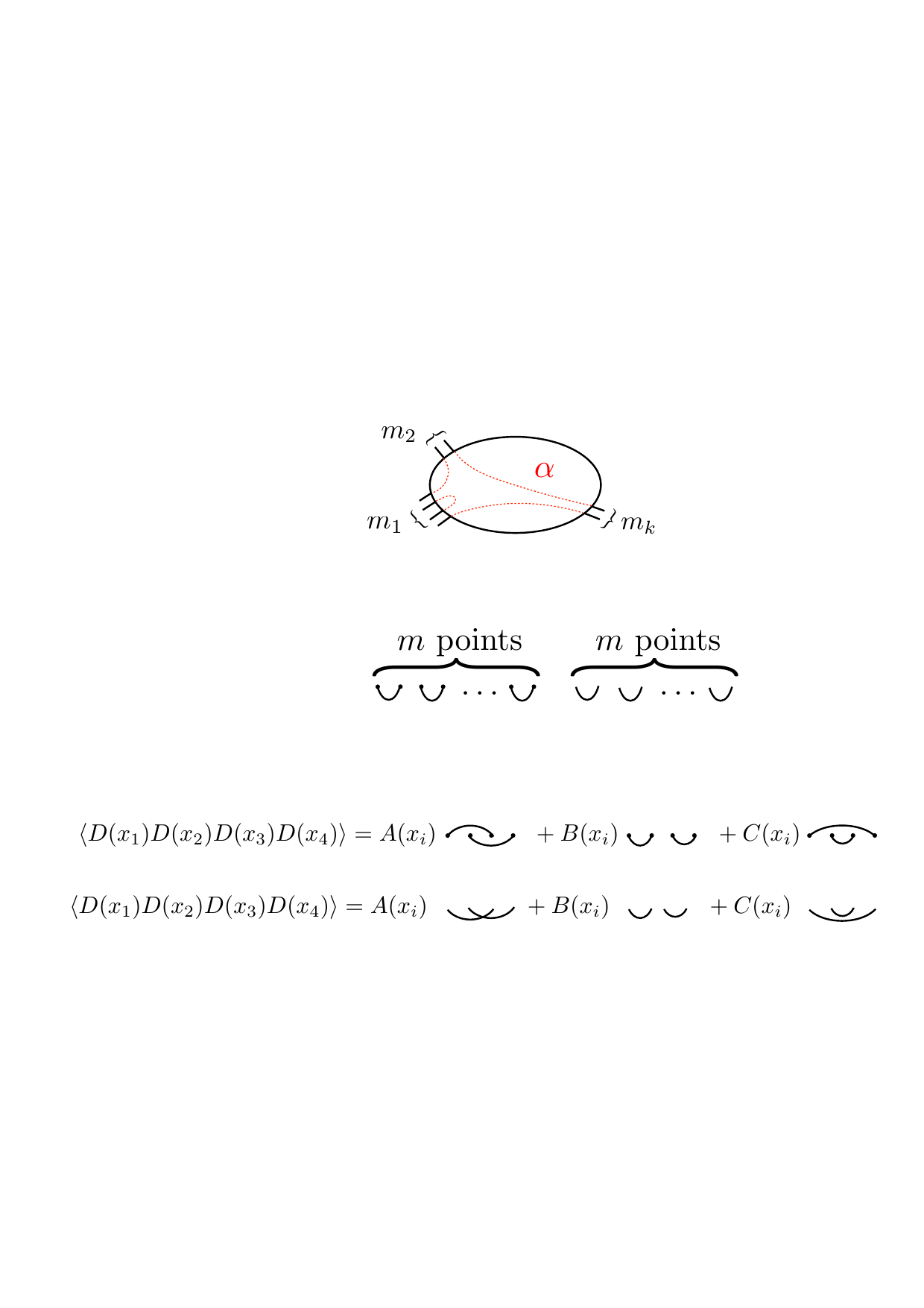} \,,
\label{eq:DDDD}
\eeq
where the numerical values of coefficients will be provided by the oracle, while the string diagrams are just formal placeholders saying who's connected to whom. Compared to \reef{eq:ssss}, we lost the external indices. In fact, it looks like Eq.~\reef{eq:DDDD} contains zero information about symmetry: it would be valid for an $O(n)$ loop model with any $n$, as well as for \emph{ad hoc} loop model without any symmetry. This seeming uselessness of \reef{eq:DDDD} compared to \reef{eq:ssss} is paradoxical.

The paradox will be resolved as follows: in fact both \reef{eq:ssss} and \reef{eq:DDDD} are insufficient to determine if we have a symmetry. In addition to each observable having the expected form, we will have to examine relations between different observables. Think of it as ``looking under the hood'' of the oracle. 

\subsection{A definition which works}

We will first give a slightly more detailed way to check for $O(N)$ symmetry of the spin models. Instead of correlators, let us query the oracle for `incomplete partition functions':
\beq
Z_k :=Z|_{s(x_i)=s_i,i=1\ldots  k}\,,
\eeq 
computed with $k$ spins held at some fixed values, while the rest are integrated over. We can call this the `joint probability distribution' of the $k$ chosen spins. It is a powerful observable which contains information about all correlators: the latter can be computed from it performing the remaining integrals weighted by the spin components of interest.

An oracle test of $O(N)$ symmetry, $N\in\bZ_+$, consists in checking two properties of $Z_k$'s:
\begin{align}
&1. \text{ (Invariance) Each $Z_k$ must be a function of scalar products $s_i\cdot s_j$.}
\label{eq:invar} \\[2mm]
&  2. \text{ (Consistency) If we integrate $Z_k$ over one of the spins, we should get $Z_{k-1}$:}\nn\\
&\hspace{1cm} \int ds_1\, Z_k = Z_{k-1}\,,
\label{eq:consist}\\
&\text{where in the r.h.s. the spin at $x_1$ is no longer fixed, while the rest remain fixed to the same values.}\nn
\end{align}
Here $\int ds$ is an $O(N)$ invariant integration from the model's path integral. For each even $m\in \bZ_{\ge0}$ we have:
\beq
\int ds\, s^{I_1}\ldots  s^{I_{m}} = \calJ_{m} \bigl[\delta^{I_1 I_2}\ldots  \delta^{I_{m-1}I_m}+\text{ other pairings}\bigr],
\label{eq:ONint}
\eeq
where all pairings are included with the same overall coefficient $\calJ_m$. For $m$ odd there are no pairings so the integral vanishes. One example is the integral over the unit sphere, for which 
\beq
\label{eq:Jmsphere}
\calJ_m=\frac{\text{Area}(S^{N-1})}{N(N+2)\ldots (N+m-2)}\,.
\eeq
We want to allow models including radial degrees of freedom, which effectively modifies $\calJ_m$.\footnote{ Imagine that the original field of the model is $\phi^I=r s^I$ where $s^I$ lives on the sphere and $r$ is the radial direction, and the integration measure on each site includes a factor $\int dr f(r)$ with some fixed weight $f(r)$. Then once we integrate out the $r$ direction, the resulting effective coefficients $\calJ_m$ will be modified w.r.t.~\reef{eq:Jmsphere} by factors $\int_0^\infty dr\, r^m\, f(r)$.} So our general $O(N)$ invariant integral is defined by \reef{eq:ONint} with some fixed but {\bf arbitrary} $\calJ_m$. We require that one $\int ds$ should work for all spin integrations and for all $k$'s. Notice that it is not necessary to query the oracle directly about this integral: knowing $Z_k$'s, we can decide if a $\int ds$ exists which makes \reef{eq:consist} true, and so determine the $\calJ_m$'s if it does exist.

Furthermore, this point of view generalizes to the loop model case. The loop analogue of leaving $k$ spins unintegrated is to allow an arbitrary number $m_i$ of lines to end at the corresponding lattice points $x_i$. The line ends can be interconnected in an arbitrary way in the bulk of the lattice. These connections are the analogues of $(s_i\cdot s_j)$ in the spin model. Then, the loop version of $Z_k$ is an infinite formal sum:
\beq
Z_k = \sum_{m_1=0}^\infty \ldots  \sum_{m_k=0}^\infty \sum_{\alpha} C_\alpha \raisebox{-2em}{\includegraphics[trim=0 0em 0 0,scale=0.65]{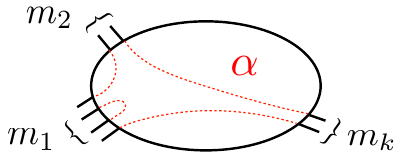}}\,,
\label{eq:Zkloop}
\eeq
where $\alpha$ labels all possible pairwise interconnections of external line ends (with one example shown), and we may query the oracle for the numerical coefficients $C_\alpha$. Simple defect correlators are contained in $Z_k$: they correspond to all $m_i=1$.
Unlike its spin analogue \reef{eq:invar}, Eq.~\reef{eq:Zkloop} is not yet a test of anything: the test will come from the consistency condition.

By analogy with \reef{eq:ONint}, we define the loop model integral as a permutation-invariant formal sum of string diagrams pairing $m$ points with some numerical coefficients $\calJ_m$:
\beq
\int = \sum_{m\ge 0\text{ even}} \calJ_m \int_m\ ,\qquad\qquad \int_m = \overbrace{\vphantom{I}\includegraphics[trim=0 1em 0 0,scale=0.6,angle=180]{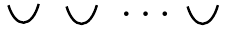}}^{m\text{ points}}+\text{ permutations}\,.
\label{eq:intloops}
\eeq
Consistency condition says that if we take $Z_k$ in the form \reef{eq:Zkloop} and apply the integral \reef{eq:intloops} at point $x_1$, we should get $Z_{k-1}$ corresponding to the remaining points $x_2,\ldots , x_k$:
\beq
\int_{\text{at }x_1} Z_k = Z_{k-1}\,.
\eeq
Integral is applied by concatenating the r.h.s.~of \reef{eq:intloops} with the lines in $Z_k$ ending at $x_1$. Only the terms with $m=m_1$ give non-zero answer and are simplified by the Brauer algebra rules, replacing circles by factors of $n\in\bR$. If we can define the integral (i.e. specify $\calJ_m$'s) so that this consistency condition holds, we say that the loop model has an `$O(n)$ symmetry', where $n$ is not necessarily an integer.

\section{Translation to categories}
\label{sec:transcat}
Having described in some detail how $O(n)$ models and their symmetry for $n\in\bR$ can be defined, let us introduce some category theory language to describe this more succinctly. A category is a collection of objects with connections (`morphisms') among them which can be composed.\footnote{See appendix \ref{sec: category} for an exposition of the necessary parts of category theory.} For the full picture we will need to introduce three categories: 
\begin{itemize}
	\item $\Rep\, O(N)$, for $N\in\bZ_+$, of finite-dimensional complex representations of the group $O(N)$
	\item $\Rephat\,O(n)$, the category of string diagrams
	\item The Deligne category $\Reptilde\,O(n)$ built on top of $\Rephat\,O(n)$.
\end{itemize} 
The latter two categories are defined for $n\in\bR$ and provide an `analytic continuation' of the former one, in a sense which will be made precise. It is convenient to start with $\Rephat\,O(n)$

\subsection{$\Rephat\,O(n)$, $n\in\bR$}
\label{sec:rephat}
This category collects the string diagrams into a single algebraic structure. Consider the diagrams introduced in section \ref{sec:brouer}. Each diagram is a series of strings connecting $k_1$ points at the bottom to $k_2$ points at the top. For each $k\in\mathbb Z_{\geq 0}$ we define an object $[k]$ in $\Rephat\,O(n)$. Each diagram $f$ connecting $k_1$ points at the bottom to $k_2$ points at the top is an element of the set $\Hom([k_1]\rightarrow[k_2])$, which we call the morphisms of $\Rephat\,O(n)$. We will often write $f:[k_1]\rightarrow[k_2]$ in order to denote an element $f\in\Hom([k_1]\rightarrow[k_2])$. We can also consider linear combinations of string diagrams, turning each set of morphisms $\Hom([k_1]\rightarrow[k_2])$ into a vector space. In particular the combination with zero coefficients is the zero morphism ${0:[k_1]\rightarrow[k_2]}$, which satisfies $0f = 0$ and $f+0 = f$ for any other morphism $f$.

To define a category we must have a way to compose morphisms. In $\Rephat\,O(n)$ we compose any two diagrams $f:[k_1]\rightarrow[k_2]$ and $g:[k_2]\rightarrow[k_3]$ by stacking $g$ on top of $f$, and simplifying them using the rules of section \ref{sec:brouer} (in particular every loop gives a factor of $n$). This gives us a new morphism, which we denote as $g\circ f$. This composition rule is clearly associative:
\begin{equation}f\circ (g\circ h) = (f\circ g)\circ h\,.\end{equation}
Composition (and the tensor product defined below) extends by linearity to arbitrary linear combinations of diagrams. Because the morphisms $\Hom([k_1]\to[k_2])$ are now vectors and the morphism composition is linear, we call $\Rephat\,O(n)$ a \emph{linear category}.

We would like to warn the reader that $[k]$ are just names of inequivalent objects, and we do not think of $[k]$ as sets of anything (which means that our category is `abstract' as opposed to `concrete').\footnote{So although we may pictorially imagine $[k]$ by drawing $k$ points, it would be inappropriate to think of $[k]$ as a `set consisting of these points'.} Also morphisms are not thought as maps from one set to another, but just as abstract elements of vector spaces $\Hom([k_1]\to [k_2])$ on which the associative composition operation is defined. In this sense $f:[k_1]\rightarrow[k_2]$ is just a convenient notation.

For every object $[k]$, there is an identity morphism $\text{id}_{[k]}:[k]\rightarrow[k]$ which is the diagram:
\begin{equation}
\text{id}_{[k]} = \raisebox{-0.75em}{\includegraphics[trim=0 0em 0 0,scale=0.4]{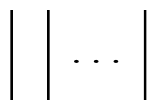}}\,.
\end{equation}
This diagram acts trivially when composed with other diagrams 
$$\text{id}_{[k]}\circ f = f = f\circ\text{id}_{[k]}.$$
The objects, morphisms, composition law $\circ$ and identities $\text{id}_{\bf a}$ together form the data we need to define the category $\Rephat\,O(n)$.

\begin{figure}[t!]
	\centering
	\includegraphics[width=0.5\linewidth]{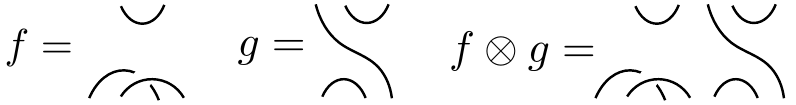}
	\caption{\label{fig-fotimesg} An illustration of the tensor product definition. Here $f:[4]\to[2]$, $g:[3]\to[3]$, and $f\otimes g: [7]\to [5]$.
	}
\end{figure}

In the category $\Rephat\,O(n)$ there is another way we can combine two diagrams $f$ and $g$: we can place them next to each other, to define $f\otimes g$, see Fig.~\ref{fig-fotimesg}.
We call $\otimes$ a ``tensor product'', for reasons that will become apparent later. For any two objects, we define $[k_1]\otimes[k_2] = [k_1+k_2]$. This means that for morphisms $f:[k_1]\rightarrow[k_2]$ and $g:[l_1]\rightarrow[l_2]$, their tensor product is a morphism from $f\otimes g:[k_1]\otimes[l_1]\rightarrow[k_2]\otimes[l_2]$. The tensor product $\otimes$ turns $\Rephat\,O(n)$ into a `monoidal category' (see appendix \ref{sec: category}).

We can now translate correlation functions in loop models into this new language. If we take the correlator of $k$ defect operators $D(x)$, as for instance in \eqref{eq:DDDD}, the answer will be a linear combination of string diagrams. This is simply a morphism from $[k]\rightarrow[0]$. We can think of each defect operator $D(x)$ as being associated with a single dot $[1]$, and that when we have multiple operators $D(x_1)\ldots  D(x_k)$ we should associated these with the tensor product $[1]\otimes\ldots \otimes[1] = [k]$. The correlation function is then a morphism from $[k]\rightarrow [0]$.

This may seem all well and good, but how do we connect this to physical observables, which are numbers? Any morphism can be expanded in the basis of string diagrams, and the expansion coefficients are numbers which one can physically measure or compute in (say) a Monte Carlo simulation. Equivalently, we can think as follows. Let us consider morphisms from $[0]\rightarrow[0]$. Because in the Brauer algebra closed loops give us factors of $n$, all morphisms from $[0]\rightarrow[0]$ are proportional to $\text{id}_{[0]}$, which is the empty diagram. For this reason, if we have any morphism $f:[0]\rightarrow[k]$, then we can compose it with our correlator to find
\begin{equation}\label{eq:measD}
\langle D(x_1)\ldots  D(x_k)\rangle \circ f= \lambda\,\text{id}_{[0]}
\end{equation}
for some number $\lambda$, which is physically measurable. Choosing different $f$'s we can determine all the expansion coefficients of the correlator {when $n$ is not integer}.\footnote{This follows from the semisimplicity of the Brauer algebra \cite{Wenzl88}. A similar statement holds in any semisimple category, see proposition \ref{pr:ndeg} for more details. In the next section we shall see that for integer $n$ certain structures become ``null'' and vanish when composed with any morphism $[0]\rightarrow[m]$.}

Now let us consider the `incomplete partition functions' $Z_k$ defined in the previous section. As we can see in \eqref{eq:Zkloop}, for each set of values $m_1, \ldots  m_k$ it gives rise to a morphism
\beq
\sum_{\alpha} C_\alpha \raisebox{-2em}{\includegraphics[trim=0 0em 0 0,scale=0.65]{fig-alpha.pdf}} \equiv f_{m_1,\ldots ,m_k}\in\Hom\left([0]\to [m_1]\otimes[m_2]\ldots \otimes [m_k]\right)\,.
\label{eq:ZktoCat2}
\eeq
In time, we shall see how to define a notion $\oplus$ analogous to a direct sum, and we will then see that $Z_k$ itself can be thought of as a morphism.

Finally, let us consider the loop model integration we define in \eqref{eq:intloops}. For every even $m$, that equation defines a morphism $\int_m:[0]\rightarrow[m]$ as a sum of diagrams from $[0]\rightarrow[m]$. 
Like $Z_k$, $\oplus$ will allow us to think of the morphisms $\int_m$ as part of a single morphism $\int$.

\subsection{$\Rep\,O(N)$, $N\in\bZ_+$, and its relation to $\Rephat\,O(n)$}
Representation theory of $O(N)$ studies representations, and the $O(N)$ covariant maps between them. There is a standard way to package all of this information into a single algebraic structure: category $\Rep\,O(N)$. The objects ${\bf a}, {\bf b}, \ldots  $ of $\Rep\,O(N)$ are representations of $\Rep\,O(N)$, and the morphisms $\Hom({\bf a}\rightarrow{\bf b})$ are $O(N)$ covariant tensors between the representations ${\bf a}$ and $\bf b$. We will denote the identity tensor mapping a representation $\bf a$ to itself by $\text{id}_{\bf a}$.

As a category, $\Rep\,O(N)$ has a lot of additional structure. For instance we have a tensor product $\otimes$, which we can use to combine any two representations $\bf a$ and $\bf b$ into a new representation ${\bf a}\otimes{\bf b}$. We can also tensor together any two covariant tensors $f:{\bf a}\rightarrow{\bf b}$ and $g:{\bf c}\rightarrow{\bf d}$ to produce a new covariant tensor $f\otimes g:{\bf a}\otimes{\bf c}\rightarrow{\bf b}\otimes{\bf d}.$ We also have the trivial representation $\bf 1$, which acts trivially on any other representation.

Because we have a tensor product, $\Rep\,O(N)$ is what is known as a monoidal category. We give a detailed description of what this means in appendix \ref{sec:MONOID}, but in practise the rules for manipulating $\otimes$ in a monoidal category generalize the usual tensor product rules.

We can also consider category $\Rephat\,O(n)$. That category was defined in section \ref{sec:rephat} for any $n\in\bR$, but now we would like to consider it for $n=N\in\bZ_+$ and determine its relationship to the category $\Rep\,O(N)$. To do so we need the notion of a functor, that is, a map between two categories. A functor $F$ between categories $\fcy C$ and~$\fcy D$ associates to every object $\bf a\in\fcy C$ an object $F(\bf a)\in\fcy D$ and to every morphism $f:\bf a\rightarrow\bf b$ a morphism $F(f):F({\bf a})\rightarrow F({\bf b})$, such that function composition is preserved:
\begin{equation}F(f\circ g) = F(f)\circ F(g)\,,\quad F(\text{id}_{\bf a}) = \text{id}_{F(\bf a)}.\end{equation}
If we have a functor between between two monoidal categories, we also want it to preserve the tensor product:\footnote{In mathematical parlance this is a strict monoidal functor. }
\begin{equation}
F({\bf a}\otimes{\bf b}) = F({\bf a})\otimes F({\bf b})\,,\quad F(f\otimes g) = F(f)\otimes F(g)\,,\quad F({\bf 1}_{\fcy C}) = {\bf 1}_{\fcy D}.
\end{equation}

In this language, the relation between $\Rephat\,O(n)$ and $\Rep\,O(N)$ is expressed by saying that for $N\in\bZ_+$ we can construct a functor $\fcy S: \Rephat\,O(N)\rightarrow \Rep\,O(N)$ from the former to the latter. This functor takes string diagrams and relates them to invariant tensors in $\Rep\,O(N)$. Under $\fcy S$, we map the objects $[k]$ to ${\bf N}^{\otimes k}$ where ${\bf N}$ is the $N$-dimensional vector representation of $O(N)$. The morphisms from $[k_1]\rightarrow[k_2]$ are translated into invariant tensors from ${\bf N}^{\otimes k_1}\rightarrow{\bf N}^{\otimes k_2}$ by attaching indices $a_1,a_2,\ldots $ to each dot and then associating to each string a tensor $\delta_{a_ia_j}$. The function $\fcy S: \Hom([k_1]\rightarrow[k_2])$ to $\Hom({\bf N}^{\otimes k_1}\rightarrow{\bf N}^{\otimes k_2})$ is surjective; in category theoretic parlance, we say that $\fcy S$ is a full functor.

Because $\fcy S$ is a functor, if we compose invariant diagrams using the diagrammatic rules and then apply $\fcy S$, we will get the same result as if we first apply $\fcy S$ and then compose the tensors. In this sense, the category $\Rephat\,O(n)$ captures the essential rules of tensor composition in $\Rep\,O(N)$. There are however some very important differences between the two categories. 

The only objects in $\Rephat\,O(n)$ are of the form $[k]$, and under $\fcy S$ these map to $\fcy S([k]) = {\bf N}^{\otimes k}$ in $\Rep\,O(N)$. But we know that there are many other representations in $\Rep\,O(N)$, such as symmetric and antisymmetric tensors. Furthermore, in $\Rephat\,O(n)$ we have no notion of direct sum $\oplus$, while this notion is very important in $\Rep\,O(N)$, as it allows us to decompose tensor products into sums of irreducible representations. This leads us to ask: what is the $\Rephat\,O(n)$ analogue of the decomposition of ${\bf N}^{\otimes k}$ into irreducible representations?

\subsection{Irreducible representations}
To answer this question, let us first consider how to recover irreducible representations in $\Rep\,O(N)$ from a more categorical point of view. In a semisimple\footnote{See appendix \ref{sec:LINCAT} for a precise definition of seimisimplicity.} category, an object ${\ba}$ is called simple if every morphism $\Hom({\ba}\rightarrow{\ba})$ is proportional to the identity $\text{id}_{\ba}$. Thus, by Schur's lemma, irreps are precisely the simple objects ${\ba}\in\Rep\,O(N)$.

We know that  ${\bf N}^{\otimes k}\in\Rep\,O(N)$ will be decomposable into the direct sum of irreducible representations:
\begin{equation}\label{eq:NktoSim}
{\bf N}^{\otimes k} = {\ba_1}\oplus{\ba_2}\oplus\ldots
\end{equation}
We know what this equation means in the language of the usual representation theory, but now let us translate it into category theory.
For each representation $\ba_i$ appearing in \eqref{eq:NktoSim} there is a pair of morphisms; $\pi_{\ba_i}:{\bf N}^{\otimes k}\rightarrow{\ba_i}$ which projects down onto $\ba_i$, and $\iota_{\ba_i}:{\ba_i}\rightarrow{\bf N}^{\otimes k}$ which embeds $\ba_i$ into ${\bf N}^{\otimes k}$. These satisfy the relationship
\begin{equation}
\label{eq:projIda}
\pi_{\ba_i}\circ\iota_{\ba_i} = \text{id}_{\ba_i}\,.
\end{equation}
Composing these maps in the reverse order, we define a `projector' morphism:
\begin{equation}
\label{eq:projIda1}
P_{\ba_i} = \iota_{\ba_i}\circ\pi_{\ba_i}\in\Hom({\bf N}^{\otimes k}\rightarrow{\bf N}^{\otimes k})\,.
\end{equation}
It follows from \eqref{eq:projIda} that these morphisms are idempotent: $P_{\ba_i}^2 = P_{\ba_i}$, justifying the name ``projector''. Furthermore, we demand that for any two terms in \reef{eq:NktoSim}, the projectors are `orthogonal':
\begin{equation}\label{eq:projij}
P_{\ba_i}\circ P_{\ba_j} = \iota_{\ba_i}\circ\pi_{\ba_i}\circ\iota_{\ba_j}\circ\pi_{\ba_j} = 0\,\qquad (i\ne j)\,.
\end{equation}
If ${\ba_i}$ and ${\ba_j}$ are two different irreps, then orthogonality is automatic, because the middle piece $\pi_{\ba_i}\circ\iota_{\ba_j}:{\ba_i}\rightarrow{\ba_j}$ is then a morphism between two distinct irreducible representations, and such morphisms are trivial by Schur's lemma. If the direct sum contains several copies of the same irrep, then we can achieve orthogonality of the corresponding projectors by a change of basis.

The decomposition \eqref{eq:NktoSim} of ${\bf N}^{\otimes k}$ into irreducible representations thus corresponds to a decomposition of the identity morphism
\begin{equation}\label{eq:idNP}\text{id}_{\bf N^{\otimes k}} = \sum_i P_{{\ba}_i}\end{equation}
as a sum of mutually orthogonal projectors. Furthermore, because each $\ba_i$ is irreducible this decomposition is maximal, that is, $P_{\ba_i}$ cannot be further decomposed.

Although in $\Rephat\,O(n)$ we cannot decompose $[k]$ as the sum of simpler objects, we can still generalize \eqref{eq:NktoSim} by decomposing $\text{id}_{[k]}$ as the sum of morphisms which are idempotent and mutually orthogonal, since these concepts make sense even in an abstract category setting. Taking the example of $k=2$, the three string diagrams forming the basis of $\Hom([2]\rightarrow[2])$ are
\beq	
	\includegraphics[width=0.4\linewidth]{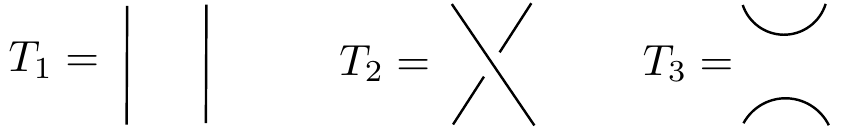}\,.
\eeq
From these we can define the three idempotent morphisms
\begin{equation}
\label{eq:idemp2}
P_{\bf 1} = \frac 1n T_3\,,\quad P_{\bf S} = \frac 12(T_1+T_2)-\frac 1n T_3\,,\quad P_{\bf A} = \frac 12(T_1-T_2)\,,
\end{equation}
which are mutually orthogonal and satisfy
\begin{equation}
P_{\bf 1} + P_{\bf S} + P_{\bf A} = T_1 = \text{id}_{[2]}\,.
\end{equation}
By analogy to $\Rep\,O(N)$, it is tempting to say that $[2]$ ``splits into three irreducible representations $\bf 1$, $\bf S$ and $\bf A$". For the moment we are not allowed to use this language, but the Deligne category $\Reptilde\,O(n)$ will allow us to do so.

This analogy with $\Rep\,O(N)$ can be further extended by defining a `trace' on $\Rephat\,O(n)$. Recall that in $\Rep\,O(N)$ we can for any morphism $f:{\bf N}^{\otimes k}\rightarrow{\bf N}^{\otimes k}$ take the trace $\text{tr}(f)\in\mathbb C$ by summing over all diagonal components of $f$. For a projector $P_{\ba_i}$
\begin{equation}\label{eq:trdim}\text{tr}(P_{\ba_i}) = \text{dim}({\ba_i}) \end{equation}
where $\text{dim}({\ba_i})$ is the dimension of ${\ba_i}$.

Likewise, in $\Rephat\,O(n)$ we define the trace of a diagram $f:[k]\rightarrow[k]$ as
\begin{equation}
\label{eq:traceo(n)}
\text{tr}(f)=\raisebox{-2.2em}{\includegraphics[trim=0 0 0 0,scale=0.4]{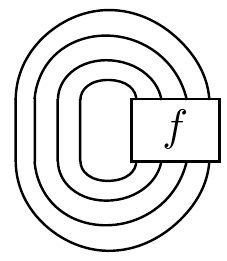}}\,,
\end{equation}
 which is just a number: $n$ to the power of number of closed loops in the r.h.s. We extend this definition by linearity to any morphism in $\Hom([k]\rightarrow[k])$. Now let us compute
\begin{equation}\label{eq:2decomp}
\text{tr}(P_{\bf 1}) = 1\,,\quad \text{tr}(P_{\bf S}) = \frac{(n+2)(n-1)} 2\,,\quad \text{tr}(P_{\bf A}) = \frac{n(n-1)}{2}\,.
\end{equation}
For $n\in\mathbb Z_+$ these numbers are precisely the dimensions of the trivial representation, the traceless symmetric tensor, and the antisymmetric tensor; indeed, the functor $\fcy S$ maps these three idempotents onto the relevant projectors in ${\bf N}^{\otimes 2}\rightarrow{\bf N}^{\otimes 2}.$ Furthermore $\fcy S$ preserves the trace, which is why the ``dimensions'' in $\Rephat\,O(n)$ match those of $\Rep\,O(n)$. So $\Rephat\,O(n)$ gives a precise algebraic meaning to, and hence allows us to extend, the usual dimension formulae and fusion rules for any value of $n$.

We can repeat this procedure for each $[k]$, searching for idempotents in $\text{Hom}([k]\rightarrow[k])$. For integer $N$ these idempotents correspond to subrepresentations of ${\bf N}^{\otimes k}$. The sum of two orthogonal idempotents is also an idempotent, just as the direct sum of any two representations is itself a representation. We are most interested in the \emph{simple idempotents}, which cannot themselves be decomposed as the sum of non-trivial idempotents. Under the action of $\fcy S$ they correspond to projectors onto simple objects (irreps) in $\Rep\,O(N)$. By computing the trace of the idempotents, we can compute the dimension of the corresponding representations.

Let us now consider \eqref{eq:2decomp} for $n = 1$. In this case we find that
\begin{equation}\text{tr}(P_{\bf S}) = \text{tr}(P_{\bf A}) = 0\,.\end{equation}
As we know from linear algebra, the only idempotent matrix with null trace is the null matrix, and so both these idempotents belong to the kernel of $\fcy S$. 

More generally we will define a null idempotent to be a simple idempotent with zero trace. They can be found for any $N\in\mathbb Z_+$. Consider for instance the morphism in $\Hom([k]\rightarrow[k])$ given by the antisymmetrized linear combination of diagrams:
\begin{equation}
P_{{\bf A}^k} = \raisebox{-0.6em}{\includegraphics[trim=0 0em 0 0,scale=0.5]{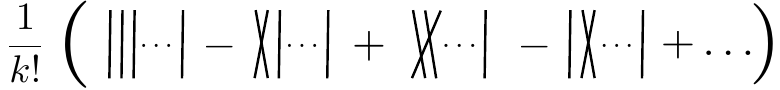}},
\label{eq:PAk}
\end{equation}
This is clearly a simple idempotent, and it is easy to compute the trace:
\begin{equation}
\text{tr}(P_{{\bf A}^k}) = \frac{n(n-1)...(n-k+1)}{k!}.
\label{eq:trPAk}
\end{equation}
If $n=N\in\mathbb Z_+$, this trace vanishes for $k\ge N$. So $P_{{\bf A}^k}$ is null, hence it belongs in the kernel of $\fcy S.$ In $\Rep\,O(N)$ the morphism $\fcy S(P_{{\bf A}^k})$ is the projection onto fully antisymmetric tensors, and indeed such tensors can only exist for $k<N$.

For $n\notin \mathbb Z_+$, null idempotents do not exist (Wenzl \cite{Wenzl88}). For instance, in this case the trace \reef{eq:trPAk} does not vanish for any $k$. (Although notice that it may be negative for $k>n+1$, so in this abstract setting non-integer `dimensions' may also become negative.) This is one manifestation of an important difference between integer and non-integer $n$: the Brauer algebra $\Hom([k]\rightarrow[k])$ is always semisimple\footnote{This means that the algebra is isomorphic to a direct sum of finite-dimensional matrix algebras.} for $n\not\in\mathbb Z$, but this is not true for integer $N$. If, however, for integer $N$ we take the \emph{quotient} of the Brauer algebra by null idempotents $\fcy N$, the result is a semisimple algebra \cite{Wenzl88}. As we have already seen, the functor $\fcy S$ naturally implements such a quotient. Conversely, one can show that the kernel of $\fcy S$ is precisely generated by the null idempotents, and hence $\text{Hom}([k]\rightarrow[k])/\fcy N$ is isomorphic to  $\text{Hom}({\bf N}^{\otimes k}\rightarrow{\bf N}^{\otimes k})$.

We have seen that the decomposition
\begin{equation}
\text{id}_{[k]} = \sum_{i = 1}^m P_i
\label{eq:iddecomp}
\end{equation}
 of $\text{id}_{[k]}$ as the sum of mutually orthogonal simple idempotents is closely related to the decomposition of ${\bf N}^{\otimes k}$ into the direct sum of irreps. Can we always decompose $\text{id}_{[k]}$ in this way, and is such a decomposition unique?
Wenzl \cite{Wenzl88} showed that the answer is yes, the decomposition \reef{eq:iddecomp} always exists, and is unique up to conjugation by invertible morphisms\footnote{A morphism $U$ is invertible if another morphism $U^{-1}$ exists such that $U^{-1}\circ U=\text{id}_{[k]}$. E.g., a single diagram in ${\rm Hom}[k]\to [k]$ is invertible if and only if each point at the bottom is connected to a point at the top.} $U\in\text{Hom}([k]\rightarrow[k]).$ 

Indeed, note that for any invertible $U\in\Hom([k]\rightarrow[k])$, the morphisms $U^{-1}P_iU$ are also simple idempotents summing to $\text{id}_{[k]}.$ Note as well that the trace on $\Rephat\,O(n)$ is, like the usual trace, cyclic:
\begin{equation}\text{tr}(AB) = \text{tr}(BA)\,,\end{equation}
and so the two decompositions have the same dimensions
\begin{equation}\tr(U^{-1}P_iU) = \tr(P_i)\,.\end{equation}

To summarize, we have found that $\Rephat\,O(n)$ contains all the information needed to construct the various representations in $\Rep\,O(N)$. The functor $\fcy S$ relates idempotents to the projections onto various representations in $\Rep\,O(N)$, while null idempotents are mapped to $0$. Unlike $\Rep\,O(N)$ however, we can define $\Rephat\,O(n)$ for any value of $n\in\mathbb R$, and give precise algebraic meanings to the 
dimensions of ``analytically continued'' $O(N)$ representations.

\subsection{Deligne's category $\Reptilde\,O(n)$}
\label{sec:reptON}
While $\Rephat\,O(n)$ goes a long way towards generalizing $\Rep\,O(N)$ to non-integer $n$, it does not have all of the usual properties of $\Rep\,O(N)$. We would like to be able to use rigorously the usual language of representation theory. In particular we would like to have objects corresponding to irreducible representations, and to be able to decompose other representations as the direct sum of these irreducible representations. Within $\Rephat\,O(n)$, this is impossible. We have idempotent morphisms which should somehow correspond to identity morphisms acting on simple objects analogous to `irreducible representations', but these simple objects are not part of the category $\Rephat\,O(n)$. Also, similar-looking idempotent morphisms will be found for some value of $k$ and for all larger $k$'s, and they need to be identified somehow. Finally, $\Rephat\,O(n)$ does not even have a notion of direct sum.

Fortunately, there exists a standard (and essentially unique) method to build from $\Rephat\,O(n)$ a new category with the properties we desire: the Deligne category $\Reptilde\,O(n)$. Our construction proceed in two standard steps: `Karoubi envelope' and `additive completion'.
These names may sound scary, but as you will see it is primarily a matter of introducing an appropriate language. 


The Karoubi envelope construction (first described by Freyd \cite{freyd1964}) can be applied to any category and allows to associate objects with idempotents. Here we describe the Karoubi envelope of $\Rephat\,O(n)$, which we call $\mathsf{Kar}\,O(n)$. The objects of this new category are pairs $([k],P)$ where $[k]\in\Rephat\,O(n)$ and $P:[k]\rightarrow[k]$ is idempotent (though not necessarily simple). We then define $\Hom(([k_1],P_1)\rightarrow([k_2],P_2))$ to be the subspace of morphisms ${f\in\Hom([k_1]\rightarrow[k_2])}$ such that
\begin{equation}\label{eq:KarCond}f\circ P_1 = f = P_2\circ f.\end{equation}
With this definition, $P$ becomes the identity morphism ${\rm id}_P$ on $([k],P)$.

We compose and tensor morphisms together using the usual morphism composition and tensor product in $\Rephat\,O(n)$, while for objects we define the tensor product as:
\begin{equation}
([k_1],P_1)\otimes([k_2],P_2) = ([k_1+k_2],P_1\otimes P_2)\,.
\end{equation}
It is straightforward to verify that if \eqref{eq:KarCond} is satisfied by $f$ and $g$, the it is also satisfied by $f\circ g$ and $f\otimes g$.

We can naturally embed $\Rephat\,O(n)$ into $\mathsf{Kar}\,O(n)$ via a functor which sends $[k]$ to the object $([k],\text{id}_{[k]})$.
Intuitively, we think of the object $([k],P)$ as a ``sub-object'' in $([k],\text{id}_{[k]})$. Categorically, this statement is made precise as follows. We can define $\pi_P:([k],\text{id}_{[k]})\rightarrow ([k],P)$ and $\iota_P:([k],P)\rightarrow([k],\text{id}_{[k]})$ to be the morphism $P$, which trivially satisfies \eqref{eq:KarCond}. It then follows that:
\begin{equation}
P = \iota_{P}\circ\pi_{P}\,,\quad \text{id}_{P} = \pi_{P}\circ\iota_{P},
\end{equation}
These equations are analogous to Eqs.~\reef{eq:projIda}, \reef{eq:projIda1} from the previous section, so that $\pi_P$ and $\iota_P$ are completely analogous to the projection and embedding morphisms there.

So, by considering idempotents from $[k]\to [k]$ in $\Rephat\,O(n)$, we construct objects in $\Kar\,O(n)$. When $n\neq\mathbb Z$, the simple idempotents give rise to simple objects. By repeating this construction for all $k$, we get all objects of $\Kar\,O(n)$. It's important to realize however that some of the so constructed objects will be isomorphic. This should not come as a great surprise.
We can consider for analogy what happens in $\Rep\,O(N)$. In this category, if ${\bf a}$ is a subrepresentation of ${\bf N}^{\otimes k}$ then we also can embed ${\bf a}$ within ${\bf N}^{\otimes (k+2)}$. Also, in $\Rep\,O(N)$ the same representation ${\bf a}$ 
may occur several times inside ${\bf N}^{\otimes k}$ for a given $k$.

Indeed, in $\Rep\,O(N)$, we think of two representations ${\bf a}$ and ${\bf b}$ as being ``the same'' if there is an isomorphism between them; that is, there is an $O(N)$ covariant map $f:{\bf a}\rightarrow{\bf b}$ which has an inverse. Likewise, in $\mathsf{Kar}\,O(n)$ we should consider objects to be ``the same'' if there is an isomorphism (i.e., an invertible morphism) between them. This avoids a proliferation of redundant objects.

Let us consider an example. For an idempotent $P:[k]\rightarrow[k]$ and any of the three idempotents $P_{\ba}:[2]\to [2]$ from Eq.~\reef{eq:idemp2} we can construct a new idempotent $P\otimes P_{\bf a}$ from $[k+2]\to [k+2]$. This gives us distinct objects $([k],P)$ and $([k+2],(P\otimes P_{\bf a}))$ in $\Kar\,O(n)$. Let us see however that for $\ba={\bf 1}$ these two objects are isomorphic (in full analogy with the tensor product with the trivial representation).
The isomorphism is
\begin{equation} f = P\otimes \rotatebox{180}{\includegraphics[trim=0 1em 0 0,scale=0.4]{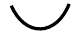}}\,,\quad f^{-1} = \frac 1 n P\otimes \includegraphics[trim=0 0em 0 0,scale=0.4]{fig-semicirc.pdf}.\end{equation}
It's easy to check that
\beq
f\circ f^{-1}=P={\rm id}_P\,,\quad f^{-1}\circ f = P\otimes P_{\bf 1}={\rm id}_{P\otimes P_{\bf 1}}\,.
\eeq
the morphisms in the r.h.s.~being the identity morphisms on the considered objects of $\Kar\,O(n)$.

For a pair of objects ${\bf a}$ and ${\bf b}$ in $\mathsf{Kar}\,O(n)$ (or more generally, in any linear category) the direct sum ${\bf a}\oplus{\bf b}$, if it exists is defined to be the unique (up to unique isomorphism) object satisfying:
\begin{enumerate}
\item There exists \emph{embedding} morphisms $\iota_1:{\bf a}\rightarrow {\bf a}\oplus{\bf b}$ and $\iota_2:{\bf b}\rightarrow {\bf a}\oplus{\bf b}$
\item There exist \emph{projection} morphisms $\pi_1:{\bf a}\oplus{\bf b}\rightarrow{\bf a}$ and $\pi_2:{\bf a}\oplus{\bf b}\rightarrow{\bf b}$
\item These maps satisfy the equations
\begin{equation}\pi_1\circ \iota_1 = \text{id}_{\bf a}\,,\quad \pi_2\circ \iota_2 = \text{id}_{\bf b}\,,\quad \iota_1\circ \pi_1 + \iota_2\circ \pi_2 = \text{id}_{\bf a\oplus \bf b}.
\end{equation}
\end{enumerate}
As can example, given two objects $([k],P_1)$ and $([k],P_2)$ such that $P_1P_2 = 0$, the direct sum $([k],P_1)\oplus ([k],P_2)$ does exist and is isomorphic to $([k],P_1+P_2)$. If we can decompose $\text{id}_{[k]}$ as the sum of mutually orthogonal idempotents $P_i$, we then find that in $\mathsf{Kar}\,O(n)$ ($\approx$ means isomorphic)
\begin{equation}([k],\text{id}_{[k]}) \approx ([k],P_1)\oplus ... \oplus ([k],P_{l}).\end{equation}

Any $\Hom([k]\rightarrow[k])$ contains the zero morphism, which is very trivially idempotent. In $\mathsf{Kar}\,O(n)$ the objects $([k],0)$ are all isomorphic, and the isomorphisms between them are also zero morphisms. For this reason, we will use $\bf 0$ to denote any object isomorphic to $([k],0)$. For any other object $([k],P)$ in $\mathsf{Kar}\,O(n)$, we see that there is a unique morphism $([k],P)\rightarrow{\bf 0}$ and ${\bf 0}\rightarrow([k],P)$, both of which are themselves also zero morphisms. In category theory language, $\bf 0$ is called the zero object. We can think of it as a ``zero-dimensional representation''.

It is not hard to check that $\bf 0$ is the additive identity, so that for any other $\ba\in\mathsf{Kar}\,O(n)$,
\begin{equation}{\bf 0}\oplus \ba \approx \ba \approx \ba\oplus{\bf 0}\,.\end{equation}
The existence of $\bf 0$ is also useful for various technical reasons, as it allows us to define notions of kernels, cokernel, and quotients.

Although we have defined the direct sum in $\mathsf{Kar}\,O(n)$, it does not always exist. Indeed, denote by $\bf n$ the object in $\mathsf{Kar}\,O(n)$ corresponding to $[1]$ in $\Rephat\,O(n)$. It is not hard to see that ${\bf n}\oplus{\bf 1}$ does not exist in $\mathsf{Kar}\,O(n)$, because ${\bf 1}$ can only be embedded into ${\bf n}^{\otimes k}$ for even $k$, while $\bf n$ can only be embedded for odd $k$. To fix this we define $\Reptilde\,O(n)$ to be the \emph{additive completion} of $\mathsf{Kar}\,O(n)$ to construct $\Reptilde\,O(n)$. This consists of formally defining for any series of objects ${\ba_1},...,{\ba_n}\in\mathsf{Kar}\,O(n)$ an object ${\ba_1}\oplus...\oplus {\ba_n}$, along with projection and embedding morphisms
\begin{equation}
\pi_{\ba_i}:{\ba_1}\oplus...\oplus {\ba_n}\rightarrow {\ba_i}\,,\quad \iota_{\ba_i}:{\ba_i}\rightarrow{\ba_1}\oplus...\oplus {\ba_n} 
\end{equation}
which satisfy
\begin{equation}
\iota_{\ba_i}\circ \pi_{\ba_i} = \text{id}_{\ba_i}\,,\quad \sum_{i = 1}^n \pi_{\ba_i}\circ\iota_{\ba_i} = \text{id}_{{\ba_1}\oplus...\oplus {\ba_n}}.
\end{equation}

The reader may again be concerned that $\Reptilde\,O(n)$ will have many duplicate objects. But because the direct sum is unique up to unique isomorphism, if two objects ${\bf a},{\bf b}\in\mathsf{Kar}\,O(n)$ already have a direct sum ${\bf c}\in\mathsf{Kar}\,O(n)$, then in $\Reptilde\,O(n)$ we will find that ${\bf c}\approx{\bf a}\oplus{\bf b}$. 
The additive completion merely makes sure that all objects have direct sums, and it is a construction that can be uniquely applied to any linear category. If we take the additive completion of a category that is already additive, then we construct a category which is equivalent to the old category.

\subsection{Properties of $\Reptilde\,O(n)$}
Having constructed $\Reptilde\,O(n)$, let us describe some of its properties and see how these generalize those of $\Rep\,O(N)$. For $n\not\in\mathbb Z$ it is a semisimple category, which means that any object ${\bf a}\in\Reptilde\,O(n)$ can be written as the direct sum of a finite number of simple objects. This is a consequence of Wenzl's result that the Brauer algebra is semisimple. 

Finding all simple objects in $\Reptilde\,O(n)$ and computing their dimensions is an exercise in combinatorics. As shown in \cite{Deligne}, simple objects ${\bf x}_{(\lambda_1\lambda_2...\lambda_k)}$ are labelled by a series of non-increasing numbers $\lambda_i\in\mathbb Z_+$, which we can think of as a sort of ``generalized Dynkin index''. The dimension of these objects is given by the interpolation of the Weyl dimension formula:
\begin{equation}
\text{dim}({\bf x}_{(\lambda_1\lambda_2...\lambda_k)}) = \prod_{i=1}^k\frac{(n-i-r)_{\lambda_i}}{(1-i+r)_{\lambda_i}}\prod_{1\leq i<j\leq k}\frac{(\lambda_i+\lambda_j+n-i-j)(\lambda_i-\lambda_j+j-i)}{(n-i-j)(j-i)}.
\end{equation}
This function is a polynomial in $n$, and is never zero for $n\not\in\mathbb Z,$ although for sufficiently large $k$ it may become negative. A particularly simple case are the objects ${\bf x}_{(1\ldots 1)}$ which correspond to the antisymmetric projectors \reef{eq:PAk} and whose dimension is given by \reef{eq:trPAk}. 

For integer $N$ the category $\Reptilde\,O(N)$ is no longer semisimple because of the presence of null idempotents. Under the Karoubi construction a null idempotent $P_o:[k]\rightarrow[k]$ gives rise to an object $([k],P_o)$ which is simple and has dimension $0$. As we show in proposition \ref{pr:zeroDim}, this is not possible in a semisimple category. So we can think $\Reptilde\,O(N)$ as a bigger version of $\Rep\,O(N)$, containing additional null idempotents and zero dimensional objects. More precisely, there is a full functor $\fcy F: \Reptilde\,O(N)\rightarrow\Rep\,O(N)$ which maps all of the null idempotents to $0$ and the zero dimensional objects to the $0$ dimensional representation.


\section{Deligne categories and lattice models}
\label{sec:lattice}

Having introduced Deligne's categories mathematically, we come back to their physical applications.
We will refer to the integer $n$ symmetry as a `group symmetry', while non-integer $n$ symmetry as a `categorical symmetry'.

States and local operators of a theory with categorical symmetry will be classified by the \emph{simple objects} of the associated category, while correlation functions and transfer matrices will be \emph{morphisms} of this category.

Let us describe how to construct the most general lattice model with a categorical symmetry, generalizing considerations of section \ref{sec:oper}.
For a group symmetry, one puts on every lattice site a ``spin'': a variable transforming in a certain representation, the simplest case being when this representations is irreducible and the same for all lattice sites. For categorical symmetry, we place on every lattice site $x$ a ``spin'' $s(x)$, which is an object of Deligne's category. We will write $s(x)$ or $s_x$ interchangeably. For simplicity, we consider the case where all these objects are isomorphic to a fixed simple object: $s(x)\approx \ba$. The setup of section \ref{sec:oper} can be seen as corresponding to $\ba=[1]$. For a group symmetry, we could also write $s^I(x)$ where $I$ is an index in the representation in which $s(x)$ transforms. For a categorical symmetry, only the index-free way of writing makes sense.

We next introduce interactions. For a group symmetry, interaction between two sites $H_{xy}$ is constructed contracting products of $s_x^I$ and $s_y^J$ with invariant tensors of the group. E.g., for spins $\vec s_x$ in the fundamental of $O(N)$, we can take any function of $\vec s_x\cdot \vec s_y$, $\vec s_x\cdot \vec s_x$, $\vec s_y\cdot \vec s_y$. We are not imposing here the constraint $\vec s_x\cdot \vec s_x=1$.
 
For a categorical symmetry, this is generalized by saying that $H_{xy}$ is a morphism
 \beq
 H_{xy} \in \text{Hom}\left({\bf 1}\to \bigoplus_{m_1,m_2\ge 0} S^{m_1}(s_x)\otimes S^{m_2}(s_y)\right)\,.
 \label{eq:Hxy}
 	\eeq
 	where $S^m$ is the symmetrized product (see section \ref{sec:sym}), appropriate here if we are defining a theory with bosonic lattice variables.
 	
 We can then consider the total interaction, which is naturally a member of a Hom space:
\beq
H\in \text{Hom}\, ({\bf 1}\to {\bf A})\,,\qquad {\bf A}=\bigoplus_{(m_x)} \bigotimes_x S^{m_x}(s_x) \,,\label{eq:HHom}
\eeq
where $\bigoplus_{(m_x)}$ is over all sequences $(m_x)$ of non-negative integers, one for each lattice point $x$. We can restrict to sequences with only two non-zero elements if $H=\sum_{xy} H_{xy}$ is a sum of pairwise interactions. Sequences with more non-zero elements would arise if $H$ contains interactions between three, four etc. sites at a time.

In physics, we often consider the exponentiated interaction $e^H$. To understand the precise meaning of this in the categorical world, we first define the morphism $\mu:{\bf A}\otimes{\bf A}\rightarrow{\bf A}$ which projects each term ${\Big(\bigotimes_x S^{m_x}(s_x)\Big)\otimes\Big(\bigotimes_x S^{m_x'}(s_x)\Big)}$ in ${\bf A}\otimes{\bf A}$ down to $\left(\bigotimes_x S^{m_x+m_x'}(s_x)\right)$ in $\bf A$. We can then define $H^n:{\bf 1}\to{\bf A}$ recursively as
\begin{equation}
 H^n := \mu\circ (H\otimes H^{n-1})\,,\qquad \text{with}\qquad H^0 := \eta\,,
\end{equation}
where $\eta:\bf 1\rightarrow{\bf A}$ is the trivial embedding\footnote{\label{foot:int}The object $\bf A$, together with the morphisms $\mu$ and $\eta$,  is an example of an \emph{algebra internal} to a category. The morphism $\mu$ satisfies a form of associativity, and $\eta$ acts as the identity on $\mu$:
\begin{equation}
\mu\circ(\mu\otimes\text{id}_{\bf A}) = \mu\circ(\text{id}_{\bf A}\otimes\mu)\,,\qquad \mu\circ(\text{id}_{\bf A}\otimes\eta) = \text{id}_{\bf A}\,.
\end{equation}
This is an example of a very general idea known as \emph{internalization}, whereby algebraic structures can be extended from sets to more general objects in a category. A general discussion can be found at \url{https://ncatlab.org/nlab/show/internalization}.} of $\bf 1$ into $\bigotimes_x S^{0}(s_x)\approx {\bf 1}\in\bf A$. Leaving convergence issues aside, we can now define $e^H$ as formal power series expansion in tensor products:
\beq
e^H := \sum_{n=0}^\infty \frac 1{n!} H^n\,.
\eeq
The so defined $e^H$ is also a morphism from ${\bf 1}\to {\bf A}$ (although in general it will not be given as a sum of pairwise interactions even if $H$ had this form).

Finally we would like to define a path integral
\beq
Z = \prod_x \int {d s_x}\, e^H\,.
\eeq
Viewed from lattice site $x$, $e^H$ is a linear combination of morphisms from ${\bf 1}\to {\bf b}_m\otimes S^m( s_x )$, with some ${\bf b}_m$. The integral $\int d s_x $ must produce out of such a term a morphism ${\bf 1}\to {\bf b}_m $, acting in a linear fashion. 
This can be codified by declaring that the `categorical integral' is a morphism (independent of $x$ if the model is translationally invariant):
\beq
\int d \ba \in {\rm Hom}\Bigl(\bigoplus_{m\ge 0} S^m(\ba)\to {\bf 1}\Bigr)\,,
\eeq
and integration in $s_x$ consists in replacing $\ba$ with $s_x$ and composing with $e^H$. This generalizes Eq.~\reef{eq:intloops}.

Now suppose we integrated over all the lattice apart from $k$ sites, obtaining the incomplete `incomplete partition function' $Z_k$ (see section \ref{sec:oper}). Generalizing Eq.~\reef{eq:ZktoCat2}, these objects are naturally elements of 
\beq
\text{Hom} \Bigl({\bf 1}\to \bigoplus_{m_1\ldots m_k } S^{m_1} (s_{x_1})\otimes\ldots \otimes S^{m_k} (s_{x_k})\Bigr)\,.
\eeq
 They will by construction satisfy consistency conditions like Eq.~\reef{eq:consist}, relating $Z_k$ to $Z_{k-1}$.

With these definitions, the fully integrated partition function is just a number. We will now define correlation functions $\< s_{x_1} \ldots s_{x_k}\>$. Take any morphism $f$ from ${\bf 1}\to s_{x_1} \otimes \ldots \otimes s_{x_k}$. With the above definition we can compute the correlation function $\< f\>$, a number, defined as an integral:
\beq
\< f\>=Z^{-1} \prod_x \int d s_x\, \Bigl(f e^H\Bigr)\,.
\eeq
This operation being linear in $f$, there exists a morphism $C$ from $ s_{x_1} \otimes \ldots \otimes s_{x_k}\to {\bf 1}$ so that
\beq
\< f \> = C\circ f\,.
\eeq
Thinking of $\< f \>$ as a `contracted correlation function', we declare this morphism $C$ as the correlation function of spin objects themselves:
\beq
C =: \< s_{x_1} \ldots s_{x_k}\> \,.
\eeq

Notice that for a group symmetry we can compute both the correlation function and contracted correlation function by doing the usual integral. The above equations in components would take the form:
\beq
C^{I_1\ldots I_k} = \< s^{I_1}_{x_1} \ldots s^{I_k}_{x_k}\>\,,\qquad \< f\> =  \< f_{I_1\ldots I_k} s^{I_1}_{x_1} \ldots s^{I_k}_{x_k}\> = f_{I_1\dots I_k} C^{I_1\ldots I_k}\,.
\eeq
For a categorical symmetry this rewriting of course does not make sense, since we cannot access the indices. The correlation function $\< s_{x_1} \ldots s_{x_k}\> $ has to be defined via the dual one given above: it would not make sense to stick the symbol $s_{x_1} \ldots s_{x_k}$ under the categorical integral sign, since only morphisms can be integrated.

\subsection{Example: $O(n)$ loop model}

The simplest example is the 2d $O(n)$ loop model, with the following textbook definition (e.g.~\cite{DiFrancesco:1997nk}, section 7.4.6). On the honeycomb 2d lattice, put a unit $N$-component spin $\vec s_x$ on every lattice site $x$, and consider the partition function
\beq
Z=\prod_x  \int   d\vec s_x \prod_{\<xy\>}(1+K \vec s_x\cdot \vec s_y)
\eeq
So far $N$ is integer and the integrals over the unit sphere are understood in the usual sense.
The integrand can be written as $e^{H}$ with $H=\sum _{\<x y\>} H_{xy}$, $H_{xy}= \log(1+K \vec s_x\cdot \vec s_y)$. This unusual form of $H_{xy}$, which leads to a simple $e^H$, is chosen so that $Z$ has a very simple expansion in powers of $K$ (that the lattice has coordination number 3 helps as well). Namely, we have
\beq
Z = \sum_{\text{loop ensembles}} N^{\#(\text{loops})} K^\text{total length}\,,
\label{eq:ZN}
\eeq
where the sum is over `loop ensembles'---collections of non-self-intersecting loops on the lattice, weighted by the shown factor depending on the number of loops and the total length. In this form partition function can be analytically continued to a non-integer value $N\to n$. 

Correlation functions $\<s^{I_1}(x_1)\ldots s^{I_k}(x_k)\>$ themselves cannot be analytically continued, but expanding them in the basis of invariant tensors $T^{I_{1}\ldots I_{k}}$, expansions coefficients can be analytically continued. In terms of loops, these analytic continuations can be understood as probabilities of configurations containing, in addition to loops, fluctuating lines pairing points $x_1$\ldots $x_k$ as prescribed by the decomposition of the chosen invariant tensors $T^{I_{1}\ldots I_{k}}$ in a product of Kronecker $\delta$'s. These are sometimes called `defect correlation functions', as in Eq.~\reef{eq:DDDD}. 
All this is standard  \cite{Nienhuis2008,Jacobsen2012}. 

The $O(n)$ loop models have been studied for many years, however their symmetry has never been clarified. Physicists speak of $O(n)$ group symmetry even for non-integer $n$ when the group does not exist, and understand it as a recipe to extract from a generally nonsensical equation a part which continues to make sense when analytically continued to non-integer $n$. Results obtained by this recipe often can be checked independently, e.g.~by direct Monte Carlo simulations of loop ensembles. This hints that there must be some truth to the recipe. But it is still mysterious why using nonsensical basic building blocks (like tensors in vector spaces of non-integer dimensions, or irreducible representations of a group which does not exist) leads to a sensible final result.

We can now for the first time explain this basic mystery: the $O(n)$ loop model is an example of a model with a categorical symmetry. We can write its partition function in a way which makes no reference to the integer $N$ case and the subsequent analytic continuation, but directly for non-integer $n$. For this we put on every lattice point an object of Deligne's category $\Reptilde\, O(n)$. We call this object $s_x$ and choose all of them isomorphic: $s_x\approx [1]$. Recall that $[1]$ is a `single point' object of the category $\Rephat\, O(n)$, which is inherited by the category $\Reptilde\, O(n)$. This simple object is the categorical analogue of the fundamental representation. The partition function is defined as an integral
\beq
Z =  \prod_x \int ds_x \prod_{\<xy\>}(1+K H_{xy})\,,
\eeq
where $H_{xy}$ is the morphism from ${\bf 1}\to s_x\otimes s_y$ given by the string diagram $\includegraphics[trim=0 0em 0 0,scale=0.4]{fig-semicirc.pdf}$. (In \reef{eq:Hxy} we have $m_1=m_2=1$). The integral over $s_x$ is a linear combination of morphisms from $S^m([1])\to 1$, as in \reef{eq:intloops}. Because the hexagonal lattice has coordination number 3, and the interaction has a particularly simple form, the only terms in the integral which enter are $m=0$ and $m=2$. Let us choose $J_0=J_2=1$, i.e.
\beq
\int d[1] = 1+\includegraphics[trim=0 1.8em 0 0,scale=0.4,angle=180]{fig-semicirc.pdf}\,.
\eeq
When we evaluate the partition function using the rules of morphism compositions, we get an expression identical to \reef{eq:ZN} analytically continued $N\to n$.

We can also consider correlation functions, e.g.~$\<s_x s_y s_z s_t\>$. Let $f$ be a general morphism from ${\bf 1}\to s_x\otimes s_y \otimes s_z\otimes s_t$, which is a linear combination
\beq
f= f_1\, \includegraphics[trim=0 0em 0 0,scale=0.65]{fig-4pt1.pdf} +
f_2\,\includegraphics[trim=0 0em 0 0,scale=0.65]{fig-4pt2.pdf} +
f_3\, \includegraphics[trim=0 0em 0 0,scale=0.65]{fig-4pt3.pdf}\,. 
\label{eq:fmorph}
\eeq
With the above definitions we can compute the ``average'' $\< f \>$ as
\beq
\< f \> =Z^{-1} \prod_x \int ds_x\, f\, \prod_{\<xy\>}(1+K H_{xy})\,.
\eeq
Correlation function $C=\<s_x s_y s_z s_t\>$ is defined as a morphism from $s_x\otimes s_y \otimes s_z\otimes s_t \to {\bf 1}$ so that $\<f\>= f \circ C$. Let us expand $C$ as 
\beq
C = C_1\, \includegraphics[trim=0 1em 0 0,scale=0.65,angle=180]{fig-4pt1.pdf} +
C_2\,\includegraphics[trim=0 1em 0 0,scale=0.65,angle=180]{fig-4pt2.pdf} +
C_3\, \includegraphics[trim=0 1em 0 0,scale=0.65,angle=180]{fig-4pt3.pdf}\,. 
\eeq
A moment's thought shows that $C_i$ can be identified as probabilities of configurations containing lines joining points $x_i$. So the new definition is identical to the old definition of the defect correlation function. 

As a final remark, we note that the $O(n)$ loop model can be studied in any number of dimensions, not just in $d=2$. Categorical symmetry is valid in any $d$. In $d=3$,  the $O(n)$ loop model should have a critical point for any real $n$, and not just for $-2\le n\le 2$ as in 2d. The 3d case remains rather poorly studied compared to 2d, and it would be interesting to perform Monte Carlo simulations of $d=3$ models with non-integer $n$ to detect these critical points.\footnote{See \cite{Nahum:2014jaa} for interesting work about different 3d loop models.}

\subsection{Wilsonian renormalization}
\label{sec:RG}

The presence of symmetries in theories is particularly powerful when combined with the renormalization group (RG). Since group symmetries are preserved under RG, we can consider effective descriptions at long distances which realize the same symmetry as the microscopic Hamiltonian. This effective description may be a CFT if the theory is critical, or a theory of Goldstone bosons if one is in the phase of spontaneously broken continuous global symmetry. 

We will now explain that categorical symmetries are, like group symmetries, preserved under RG flows.
One application of this result is that the parameter $n$ characterizing the categorical symmetry does not renormalize: $n_{\text{IR}}=n_{\text{UV}}$. For group symmetries, one could argue by saying that $N$ cannot perform a discrete jump under a continuous RG transformation, when integrating out an infinitesimal momentum shell. Although this argument breaks down for categorical symmetries, we will show that nevertheless $n$ remains invariant under RG flows. This excludes a scenario where $n$ gets renormalized with integer values of $n$ being fixed points.\footnote{\label{note:paradox}While so far we are focussing on the $O(n)$ symmetry, the non-renormalization conclusion equally holds for other categorical symmetries defined below (section \ref{sec:other}), e.g. to the symmetry $S_q$ relevant for the Potts model. We are aware of one seemingly conflicting statement in the literature: Eq.~(10) of Ref.~\cite{LykkeJacobsen:1998olx} states that the parameter $q$ of the random-bond Potts model gets renormalized. It would be interesting to understand how this apparent contradiction gets resolved.}

Consider first RG-invariance of a group symmetry, for a system of spins $s$ sitting at sites of a lattice, with a Hamiltonian $H(s)$. A Wilson-Kadanoff RG transformation consists in splitting the lattice into cells and associating with each cell a spin $s'$, subject to a condition 
\beq
\sum_{\{s'\}} P(s',s)=1\,,
\label{eq:weight}
\eeq
where $P(s',s)$ is a weight factor, depending on site and cell spin configurations $\{s\}$ and $\{s'\}$.
Different choices of the weight factor $P(s',s)$ produce different RG transformations.
For some RG transformation the cell configuration $\{s'\}$ may be uniquely determined by the site configuration $\{s\}$ (as for the majority rule on the triangular lattice, or the decimation rule), in which case $P(s',s)$ is 1 for this configuration and zero otherwise. In more general situations cell configurations $\{s'\}$ are chosen with some probability depending on $\{s\}$, and $P(s',s)$ gives this probability distribution.
We then define the renormalized Hamiltonian $H(s')$ for the cell spin system by \cite{Niemeijer}
\beq
\exp H'(s')= \sum_{\{s\}} P(s',s) \exp H(s)\,.
\eeq
By \reef{eq:weight}, the partition function remains invariant: $\sum_{\{s'\}}\exp H'(s')=\sum_{\{s\}}\exp H(s)$.

Suppose now we have a group symmetry: a global symmetry group $G$ acting on spins $s$ such that the Hamiltonian $H(s)$ is invariant. We then impose an additional requirement that the weight factor $P(s',s)$ be invariant when $G$ acts simultaneously on $s$ and $s'$. Under this requirement, the renormalized Hamiltonian $H(s')$ will be invariant. This is what we mean by RG-invariance of a group symmetry.\footnote{One sometimes asks: what if we choose the weight factor $P(s',s)$ which is not $G$-invariant? Well, then $H'(s')$ will not be $G$-invariant either. Thus, a silly choice of RG transformation may hide manifest $G$-invariance of the lattice model. This is not surprising since any symmetry may be obscured by a bad choice of variables.} 

Let us now adapt the above argument to categorical symmetries. The spin configurations $\{s\}$ and $\{s'\}$ now consist of objects of Deligne category. The sum $\sum_{\{s\}}$ has to be replaced by the categorical integral
$
\prod_x \int ds_x\,,
$
and similarly for $s'$. The weight factor $P(s',s)$ has to be a morphism from 1 to tensor products of $s$ and $s'$: this condition replaces the $G$-invariance of $P(s',s)$ in the group symmetry case. The RG transformed Hamiltonian $H'(s')$ is, then, a morphism of the same Deligne category as the original one $H(s)$. Categorical symmetry is thus preserved under RG in exactly the same fashion as regular group symmetry.

 \section{QFTs with categorical symmetry}
 \label{sec:QFT}
 \subsection{Basic axioms}
 
 In the previous section we have focused intentionally on lattice models, to make the point that the categorical symmetry is a non-perturbatively meaningful concept. We will now discuss categorical symmetry in the context of continuum limit quantum field theories (QFTs). One way to obtain such QFTs would be to consider the aforementioned lattice models at or close to their critical points.
 
 Let $\fcy C$ be a braided tensor category (see appendix \ref{sec: category}). We will say that a (Euclidean) QFT has symmetry $\calC$ if: 
 \begin{enumerate} 
 	\item 
 	Local operators $\phi(x)$ are classified by objects ${\bf a}\in\fcy C$. We will abuse group symmetry terminology and say that `$\phi(x)$ transforms as ${\ba}$'. We denote this by object isomorphism $\phi(x)\approx\ba$, or more verbosely as $\phi({\bf a},x)$.\footnote{{\bf Note added in February 2023:} This terminology is the reverse of the one used for group theory symmetries. To wit, in a normal theory if $\phi_i(x)$ is a vector (say in $U(N)$), then we can introduce a polarization vector $v^i$ which is a covector, and work with the scalar operator $\phi(v,x)$. If we wanted to match this language, the $\ba$ should have been associated with the covector representation (that is, the representation of $v$) rather than the vector representation. We thank Cagin Yunus for drawing our attention to this terminological clash. When dealing with categorical symmetries, the only meaning of the phrase `$\phi(x)$ transforms as ${\ba}$' is that operator $\phi(\ba,x)$ is associated to an object $\ba$. The fact that in normal group theory this means that $\phi$ itself would transform in $\bar{\ba}$ is not actually relevant. So although the terminology is a bit unfortunate, nothing we say below is wrong. Also, this issue is less bothersome for the $O(N)$ symmetry where there is a natural identification of $\ba$ and $\bar{\ba}$.} Some operators, such as the identity operator and the stress--tensor, may transform trivially, in which case they are associated with the object ${\bf 1}$.
 	\item Correlation functions are morphisms in $\fcy C$. More specifically, we have
 	\begin{equation}\langle \phi_1({\bf a_1},x_1)\phi_2({\bf a_2},x_2)...\phi_n({\bf a_n},x_n)\rangle \in \text{Hom}({\bf a_1}\otimes{\bf a_2}\otimes...\otimes{\bf a_n}\rightarrow {\bf 1}).\end{equation}
 	\item Different orderings of operators in a correlator are related to each other through braiding. For instance, 
 	\begin{equation*}\langle \phi_2({\bf a_2},x_2)\phi_1({\bf a_1},x_1)\phi_3({\bf a_3},x_3)\dots\rangle  = \langle \phi_1({\bf a_1},x_1)\phi_2({\bf a_2},x_2)\phi_3({\bf a_3},x_3)\dots\rangle\circ\beta_{{\bf a}_2,{\bf a}_1} \end{equation*}
 	where $\beta_{{\bf a}_2,{\bf a}_1}$ is the braiding which maps ${\bf a}_2\otimes{\bf a}_1\rightarrow{\bf a}_1\otimes{\bf a}_2$.
 \end{enumerate}
 
 Some comments are in order for the third axiom. If we swap the order of operators twice, we return to the original correlator:
 \begin{equation}
 \langle \phi_1({\bf a_1},x_1)\phi_2({\bf a_2},x_2)\phi_3({\bf a_3},x_3)\dots\rangle = \langle \phi_1({\bf a_1},x_1)\phi_2({\bf a_2},x_2)\phi_3({\bf a_3},x_3)\dots\rangle\circ\beta_{{\bf a}_1,{\bf a}_2}\circ\beta_{{\bf a}_2,{\bf a}_1}.
 \end{equation}
 For this reason it is natural to restrict our attention to symmetric tensor categories, for which braiding twice always gives the identity:
 \begin{equation}
 \beta_{{\bf a}_1,{\bf a}_2}\circ\beta_{{\bf a}_2,{\bf a}_1} = \text{id}_{{\bf a}_1}\otimes\text{id}_{{\bf a}_2}.
 \end{equation}
Other branches of physics allow categories with more general braidings, such as when describing anyonic statistics and also in 2d CFTs. We will however restrict our discussion here to the symmetric case.

Although our axioms so far allow operators to transform as any object $\ba\in\fcy C$, given any object $\phi(\ba,x)$ and morphism $\iota:{\bf b}\rightarrow\ba$ embedding a simple object $\bf b$ in $\ba$, we can consider the operator
\beq
\tilde \phi({\bf b},x) \equiv \phi(\ba,x)\circ \iota\,,
\label{eq:subop}
\eeq whose correlation functions are defined as
\begin{equation}
\langle \tilde\phi({\bf b},x)\phi_1(\ba_1,x_1)...\phi_n(\ba_n,x_n)\rangle  \equiv \langle \phi(\ba,x)\phi_1(\ba_1,x_1)...\phi_n(\ba_n,x_n)\rangle\circ (\iota\otimes \text{id}_{\ba_1\otimes...\otimes\ba_n}).\end{equation}
So without loss of generality we can restrict to operators transforming as simple objects, just as for group symmetries operators transform in irreducible representations of the symmetry group.
 
%
Many notions in quantum field theory naturally extend to this more general setting. A simple example is the operator product expansion. Given operators $\phi_1(\ba_1,x_1)$ and $\phi_2(\ba_2,x_2)$ we expand
\begin{equation}\label{eq:OPE}
\phi_1(\ba_1,x)\phi_2(\ba_2,0) \longrightarrow \sum_{{\bf b}\in \ba_1\otimes\ba_2} \sum_k  \fcy O_k({\bf b},0)\circ C^{\fcy O_k}_{\phi_1\phi_2}(x)
\end{equation}
where $C^{\fcy O_k}_{\phi_1\phi_2}(x) \in \Hom(\ba_1\otimes\ba_2\rightarrow{\bf b})$. This is interpreted to mean that
\begin{equation}
\langle \phi_1(\ba_1,x_1)\phi_2(\ba_2,x_2)\dots\phi_n(\ba_n,x_n)\rangle \underset{x_1\rightarrow x_2}\longrightarrow \sum_{{\bf b}\in \ba_1\otimes\ba_2} \sum_k \langle\fcy O_k({\bf b},x_2)\dots\phi_n(\ba_n,x_n)\rangle \circ C^{\fcy O_k}_{\phi_1\phi_2}(x_1-x_2)\,.
\end{equation} 
 
Although so far we have described correlation functions, we can also think of states as being associated with objects of $\fcy C$. Given a bra $\bra{A,\bf a}$ and a ket $\ket{B,\bf b}$ the inner product is a morphism
\begin{equation}
\langle A,{\bf a}|B,{\bf b}\rangle \in\Hom(\ba\otimes{\bf b}\rightarrow{\bf 1}).
\end{equation}
 
{Correlation functions for a theory with categorical symmetry are morphisms in a category. To extract actually numbers from the model, quantities that could be measured or computed in a Monte Carlo simulation, we can contract correlators with morphisms, just as we did in \eqref{eq:measD} for the $O(n)$ case. For instance, given 
\begin{equation}\langle \phi_1({\bf a_1},x_1)\phi_2({\bf a_2},x_2)...\phi_n({\bf a_n},x_n)\rangle \in \text{Hom}({\bf a_1}\otimes{\bf a_2}\otimes...\otimes{\bf a_n}\rightarrow {\bf 1})\,,\end{equation}
we can compute for any ${f:{\bf 1}\rightarrow {\bf a_1}\otimes{\bf a_2}\otimes...\otimes{\bf a_n}}$ the quantity
\begin{equation}
\langle \phi_1({\bf a_1},x_1)\phi_2({\bf a_2},x_2)...\phi_n({\bf a_n},x_n)\rangle\circ f \in \mathbb C\,.
\end{equation}
As a consequence of {proposition \ref{pr:ndeg}}, if we compute this quantity for every such $f$, this suffices to reconstruct $\langle \phi_1({\bf a_1},x_1)\phi_2({\bf a_2},x_2)...\phi_n({\bf a_n},x_n)\rangle$. 

}

 \subsection{Example: the continuum free scalar $O(n)$ model}
 The theory of $N$ free scalars $\phi^1,....,\phi^N$ is $O(N)$ symmetric, with the field $\phi^I$ transforming in the fundamental representation of $O(N)$. We can generalize this to a free theory with a categorical symmetry $\Reptilde\,O(n)$ for any value of $n$. To do so we begin with a scalar operator $\phi(x)\approx\bf n.$\footnote{This is the object we so far denoted by $[1]$, the analogue of the fundamental representation.} We take the two-point function to be
 \begin{equation}
 \langle \phi(x_1)\phi(x_2) \rangle = \frac {\delta^{\bf n,\bf n}}{|x_1-x_2|^{2\Delta}}\,,\qquad \Delta = \frac{d-2}2\,,
 \end{equation}
 where $d$ is the spacetime dimension. Here $\delta^{\bf n,\bf n}$ is the analogue of the delta-tensor, the morphism from $\bf n\otimes \bf n\to {\bf 1}$ represented by the string diagram $\rotatebox{180}{\includegraphics[trim=0 1em 0 0,scale=0.4]{fig-semicirc.pdf}}$.\footnote{More abstractly, for any object $\ba$ one can define a dual object $\overline{\ba}$ and two morphisms $\delta_{\bf a,\overline{\bf a}}$, $\delta^{\overline{\bf a},\bf a}$
with some natural properties, making $\Reptilde\,O(n)$ a \emph{rigid} category, appendix \ref{sec:RIG}. The object $\bf n$ is self-dual.} 
 We then compute higher-point functions through Wick contractions. 
 For example, the four-point function is
 \begin{equation}\label{eq:4ptAbs}
 \langle\phi(x_1)\phi(x_2)\phi(x_3)\phi(x_4)\rangle = \frac{\delta^{\bf n,\bf n}\otimes\delta^{\bf n,\bf n}}{|x_1-x_2|^{2\Delta}|x_3-x_4|^{2\Delta}} + \frac{(\delta^{\bf n,n}\otimes\delta^{\bf n,\bf n})\circ\sigma_{(14)}}{|x_1-x_3|^{2\Delta}|x_2-x_4|^{2\Delta}} + \frac{(\delta^{\bf n,\bf n}\otimes\delta^{\bf n,\bf n})\circ\sigma_{(13)}}{|x_1-x_4|^{2\Delta}|x_2-x_3|^{2\Delta}}\,,
 \end{equation}
 where $\sigma_{(ij)}$ is the morphism ${\bf n}^{\otimes 4}\ra{\bf n}^{\otimes 4}$ which interchanges the $i^{\text{th}}$ and $j^{\text{th}}$ copy of $\bf n$. Below we will see how to obtain these correlation functions from the properly defined categorical path integral. 
 
 We can construct fields isomorphic to other objects in $\Reptilde\,O(n)$ as composites of $\phi(x)$. For instance, the singlet $\phi^2(x)\approx {\bf 1}$ is defined as the composite operator
 \begin{equation}
 \phi^2(x) = \lim_{y\ra0}\  \frac 1 n (\phi(x+y)\phi(x))\circ \delta_{\bf n,\bf n} - \frac 1 {|y|^{2\Delta}}\,,
 \end{equation}
 and the symmetric tensor $T(x)$ as 
 \begin{equation}
 T(x) =\left(\lim_{y\ra0}\  \phi(x+y)\phi(x)\right)\circ P_{\bf S}^{{\bf n},{\bf n}}\,,
 \end{equation}
 where $P^{\bf S}_{{\bf n},{\bf n}} : \bf S\rightarrow{\bf n}\otimes{\bf n}$ is the morphisms embedding the symmetric traceless representation $\bf S$ into ${\bf n}^{\otimes 2}$.

\subsection{Path integrals and perturbation theory}
In section \ref{sec:lattice} we have seen that lattice models with categorical symmetries can be described by a formal integration procedure. Given some interaction between objects living on the lattice, we could then integrate over the lattice variables to compute physical observables. By taking an increasingly fine lattice, we can define a continuum path-integral. 

Recall that for lattice models there was considerable freedom in defining the categorical integral, parametrized by arbitrary coefficients $\calJ_m$ in \reef{eq:intloops}. Physically, this freedom comes from the fact that the radial distribution of lattice variables may be arbitrary. As an example in the usual $O(N)$ model we can impose the condition that spins are confined to the unit sphere $s_x\cdot s_x = 1$, or any other condition. 

When defining the continuum limit Gaussian path integral, this ambiguity will be fixed by imposing the constraints that the basic integral $\int d\phi$ should be translationally invariant:
\begin{equation}\label{eq:varshift}
\int d\phi\, f(\phi) = \int d\phi\, f(\phi+\varphi)\,,
\end{equation}
and also scale invariant
\begin{equation}\label{eq:rescale}
\int d\phi\, f(\lambda\phi) = \lambda^{-n}\int d\phi\, f(\phi)\,.
\end{equation}
Constructing such an integral is the same problem as the one considered in dimensional regularization.
As established there \cite{Wilson:1972cf,Collins:1984xc}, requirements \reef{eq:varshift}, \reef{eq:rescale} and linearity fix the integral uniquely up to normalization. 

Along with this similarity, we would like to stress a difference of principle: in dimensional regularization studies, one sometimes \cite{Wilson:1972cf,Collins:1984xc} thinks of a vector with non-integer number of components as one with infinitely many components. This is not the point of view we are developing here. For us a vector with a non-integer number of components is an object in Deligne category. So, $f(\phi)$ in \reef{eq:varshift} should not be thought as a function but as a morphism in $\text{Hom}\left({\bf 1}\to \bigoplus_n  S^n(\phi)\right)$. When we substitute $\phi \to \phi+\varphi$ we produce by algebraic rules another morphism in $\text{Hom}\left({\bf 1}\to \bigoplus_{n,m}  S^n(\phi) \otimes S^m(\varphi)\right)$. The integral in \reef{eq:varshift} is a linear operation which maps morphisms to morphisms.

 We will use the usual normalization
\begin{equation}
\int d\phi\, \exp\left[-\frac 12(\phi\cdot\phi)\right] =  ({2\pi})^{n/2},
\end{equation}
which when $n$ is integer reduces to the standard integral formula. Here $\phi\cdot\phi$ is a notation for the morphism from ${\mathbf 1} \to \phi \otimes \phi$ corresponding to the string diagram \includegraphics[trim=0 0em 0 0,scale=0.4]{fig-semicirc.pdf}.

Then by the usual manipulations one derives the more general Gaussian integral
\begin{equation}\label{eq:genNInt}
\int d\phi_1\ldots d\phi_k\, \exp\left[-\frac 12 \sum_{i,j} A_{ij}(\phi_i\cdot\phi_j) + \sum_i J_i\cdot\phi_i\right] = \left(\frac{(2\pi)^k}{\det(A)}\right)^{n/2}\exp\left(\frac 12 \sum_{i,j} (A^{-1})_{ij}J_i\cdot J_j\right)\,,
\end{equation}
where $\phi_i$, $J_i $ are objects isomorphic to $\bf n$, and $A$ is a numerical matrix.\footnote{The reader may notice that Eq.~\reef{eq:genNInt} reduces to the Grassmann integration formula for $n = -1$. We will return to this relationship in section \ref{sec:SPN}.}

Taking the continuum limit, we are led to define the free $O(n)$ model by a Gaussian path integral:
\begin{multline}
Z[J] = \int \fcy D\phi\, \exp\left[-\int d^dx\ (\frac 12 \partial^\mu\phi(x)\cdot\partial_\mu\phi(x) + J(x)\cdot\phi(x))\right] \\= Z[0] \exp\left(\frac 12 \int d^dx\,d^dy\, \frac{C_d }{|x-y|^{d-2}}\, J(x)\cdot J(y)\right)\,,\qquad C_d=\frac{\Gamma(d/2-1)}{4\pi^{d/2}}\,.
\end{multline}
Formally taking derivatives with respect to the source $J(x)$ allows us to compute $\phi$ correlators, and these match the ones given in the previous section (up to rescaling of $\phi$). 
 
Having defined a free $O(n)$ path-integral, we can now introduce interaction terms, for instance a $(\phi\cdot\phi)^2$ interaction
\begin{equation}
Z_\lambda[J] = \int \fcy D\phi\, \exp\left[-\int d^dx\ \left(\partial^\mu\phi(x)\cdot\partial_\mu\phi(x) + \lambda (\phi\cdot\phi)^2+ J(x)\cdot\phi(x)\right)\right]\,.
\end{equation}
We can compute correlation functions order by order in $\lambda$ using the usual Feynman diagram expansion. Any Feynman diagram is a product of a spacetime dependent, $n$ independent quantity, and of an internal index dependence. The spacetime dependence is evaluated as usual, while the internal index dependence can be encoded by an $\Reptilde\,O(n)$ string diagram, by making the replacement of each quartic vertex as
\begin{equation}
\includegraphics[trim=0 0em 0 0,scale=0.5]{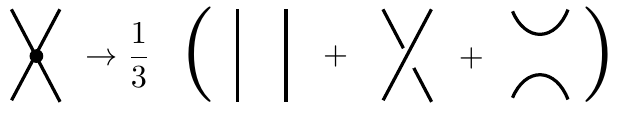}\,.
\end{equation}
One can then discuss renormalization, IR fixed points (such as Wilson-Fisher), etc., uniformly for non-integer and integer $n$.

Simplifying string diagrams, polynomial $n$-dependent factors will arise, which will be of course the same as in the usual intuitive approach to perturbation theory at non-integer $n$. The added value of the categorical construction comes mostly from the enhanced understanding of what exactly is being computed via formal manipulations at non-integer $n$. It also provides the appropriate language for describing non-perturbative completions of the perturbative calculations.

\subsection{Conserved currents}
\label{sec:current}
In a quantum field theory with a continuous global symmetry $G$, we typically expect there to exist a conserved current $J^\mu_a$ transforming in the adjoint representation $\mathfrak g\in\Rep\,G$. When the theory can be described by a local Lagrangian this statement is known as Noether's theorem. In this section we shall generalize Noether's theorem to theories with categorical symmetries. 

To begin we must consider what distinguishes representation categories of Lie groups from those of more general groups. An important feature of Lie representation theory is the existence of an adjoint representation $\mathfrak g\in\Rep\,G$. For any other representation $\ba\in\Rep\,G$ we have a (potentially trivial) action of $\mathfrak g$ on $\ba$,  Concretely, we could describe this action using the group generators $(T_i)^a_{\ b}$, where $i$ is an adjoint index and $a$ and $b$ are $\ba$ indices. More abstractly, we can describe the group generators as a morphism ${\tau_\ba:\mathfrak g\otimes\ba\rightarrow\ba}$.

The morphisms $\tau_\ba$ have a number of special properties. The first is \emph{naturality}, which states that for every morphism $f:\ba\rightarrow{\bf b}$ in $\Rep\,G$, the following diagram commutes:
\begin{equation}\label{eq:ctcon1}
\begin{tikzcd}
\mathfrak g\otimes\ba \arrow[r,"\text{id}_{\mathfrak g}\otimes f"] \arrow[d,"\tau_\ba"]
& \mathfrak g\otimes{\bf b}  \arrow[d, "\tau_{\bf b}"] \\
\ba \arrow[r, "f"]
& {\bf b}
\end{tikzcd}\end{equation}
In plain language this can be stated as the fact that morphisms $f:\ba\rightarrow{\bf b}$ commute with the action of Lie algebra generators, i.e.~they are invariant tensors.
 
The second special property is that the generators in $\ba\otimes{\bf b}$ are essentially the sum of the generators in $\ba$ and ${\bf b}$. In the abstract language this condition becomes 
\begin{equation}\label{eq:ctcon2}
\tau_{\bf a\otimes\bf b} = \tau_{\bf a}\otimes\text{id}_{\bf b} + (\text{id}_{\bf a}\otimes\tau_{\bf b})\circ(\beta_{\mathfrak g,\bf a}\otimes\text{id}_{\bf b})\,.
\end{equation}
(The second factor in the second term just reorders the tensor product from $\mathfrak g\otimes \bf b\otimes\bf c$ to $\bf b\otimes \mathfrak g\otimes\bf c$ so that $\mathfrak g$ stands next to $\bf c$ and can act on it.)
The final special property is that $\tau_{\mathfrak g}$ is antisymmetric:
\begin{equation}\label{eq:ctcon3}
\tau_{\mathfrak g}\circ\beta_{\mathfrak g,\mathfrak g} = -\tau_{\mathfrak g}\,.
\end{equation}

Although these conditions may look abstract, they have a simple diagrammatic description. We will denote the adjoint $\mathfrak g$ by a dashed line, and the morphism $\tau_\ba$ as a trivalent vertex:
\begin{equation}
\includegraphics[trim = 0 0 0 0, scale=0.5]{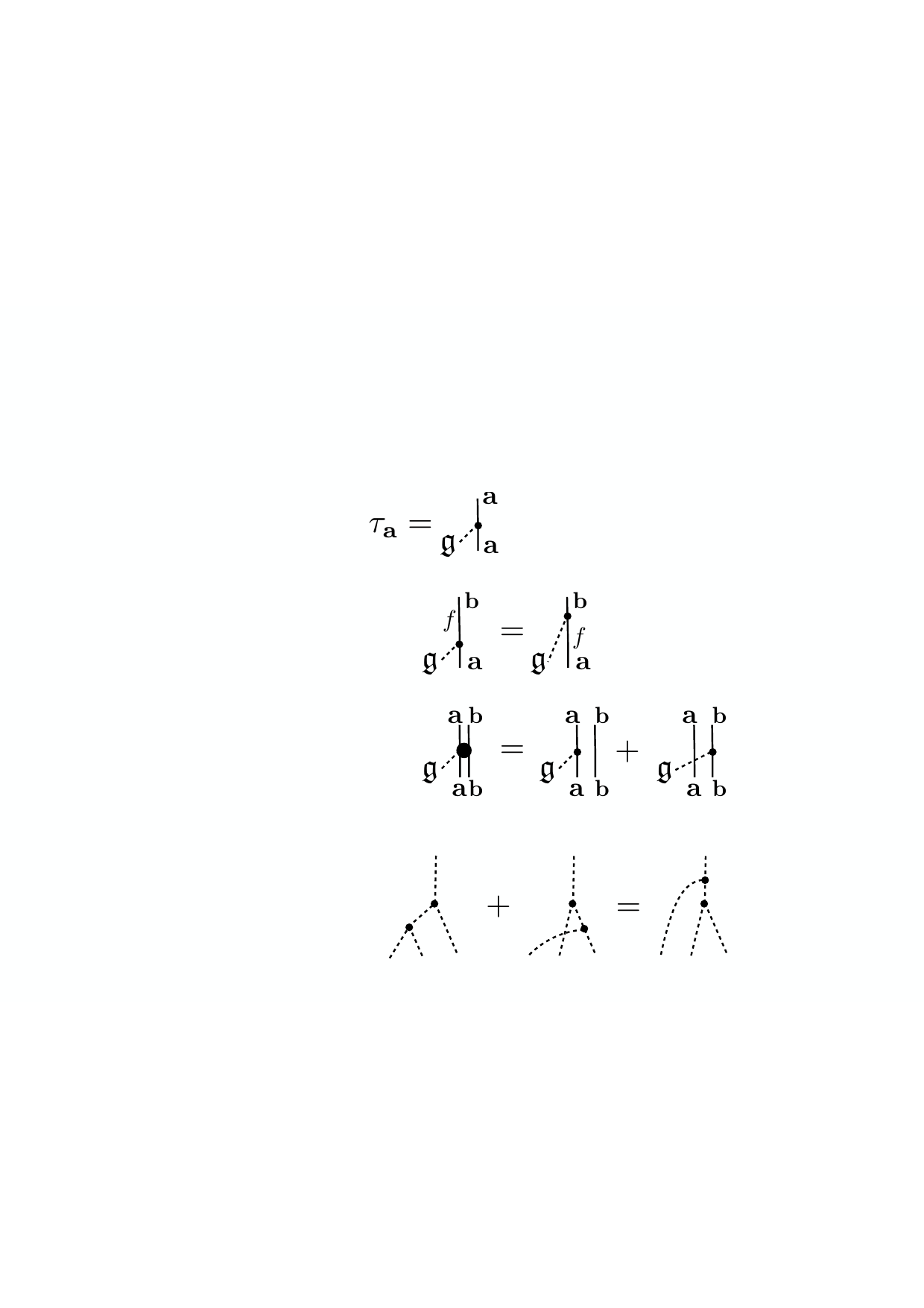}\,.
\end{equation}
The conditions \eqref{eq:ctcon1} and \eqref{eq:ctcon2} translate to:
\begin{equation}
\includegraphics[trim = 0 0 0 0, scale=0.5]{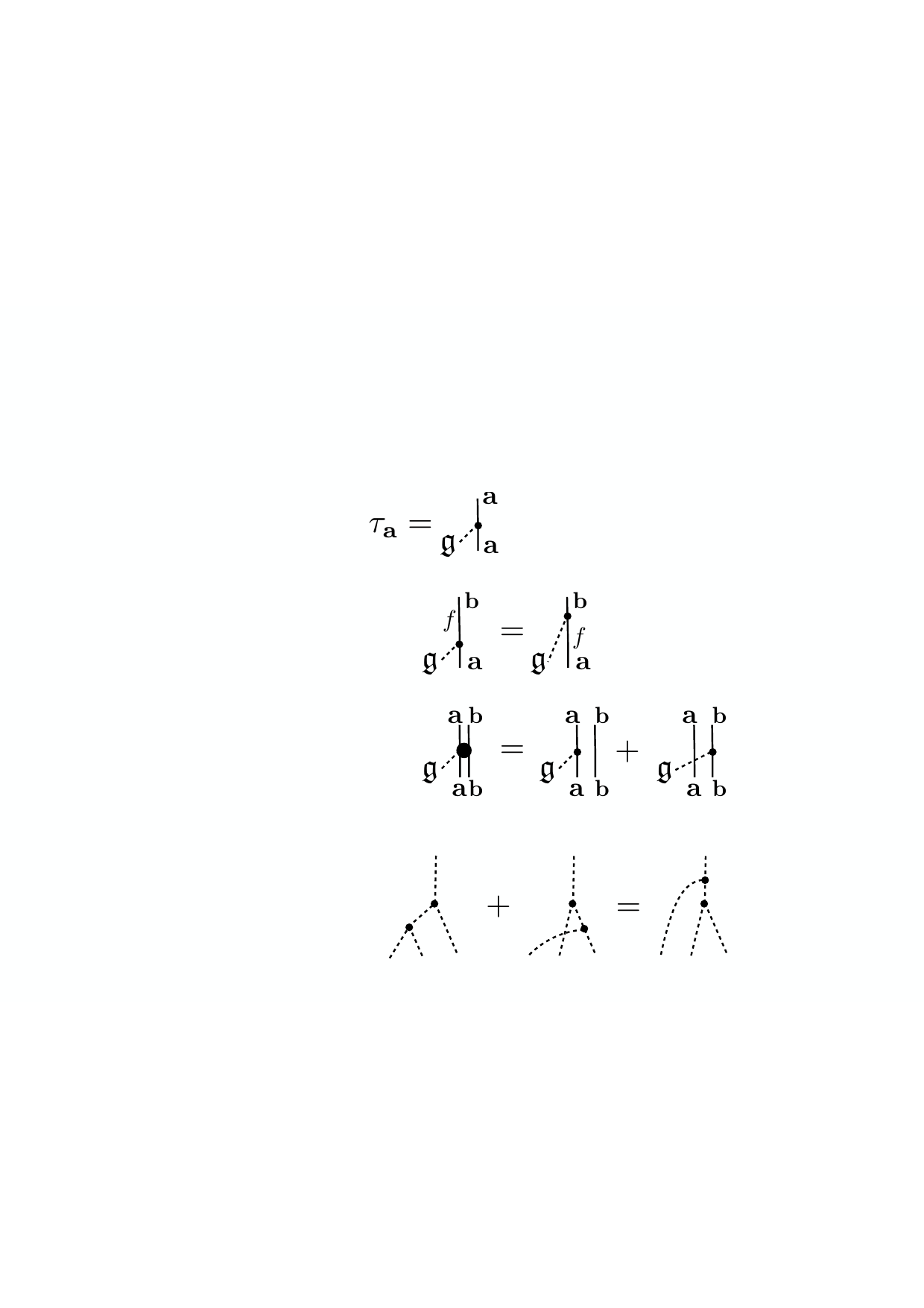}\,\qquad \raisebox{1.4em}{\text{and}}\qquad
\includegraphics[trim = 0 0 0 0, scale=0.5]{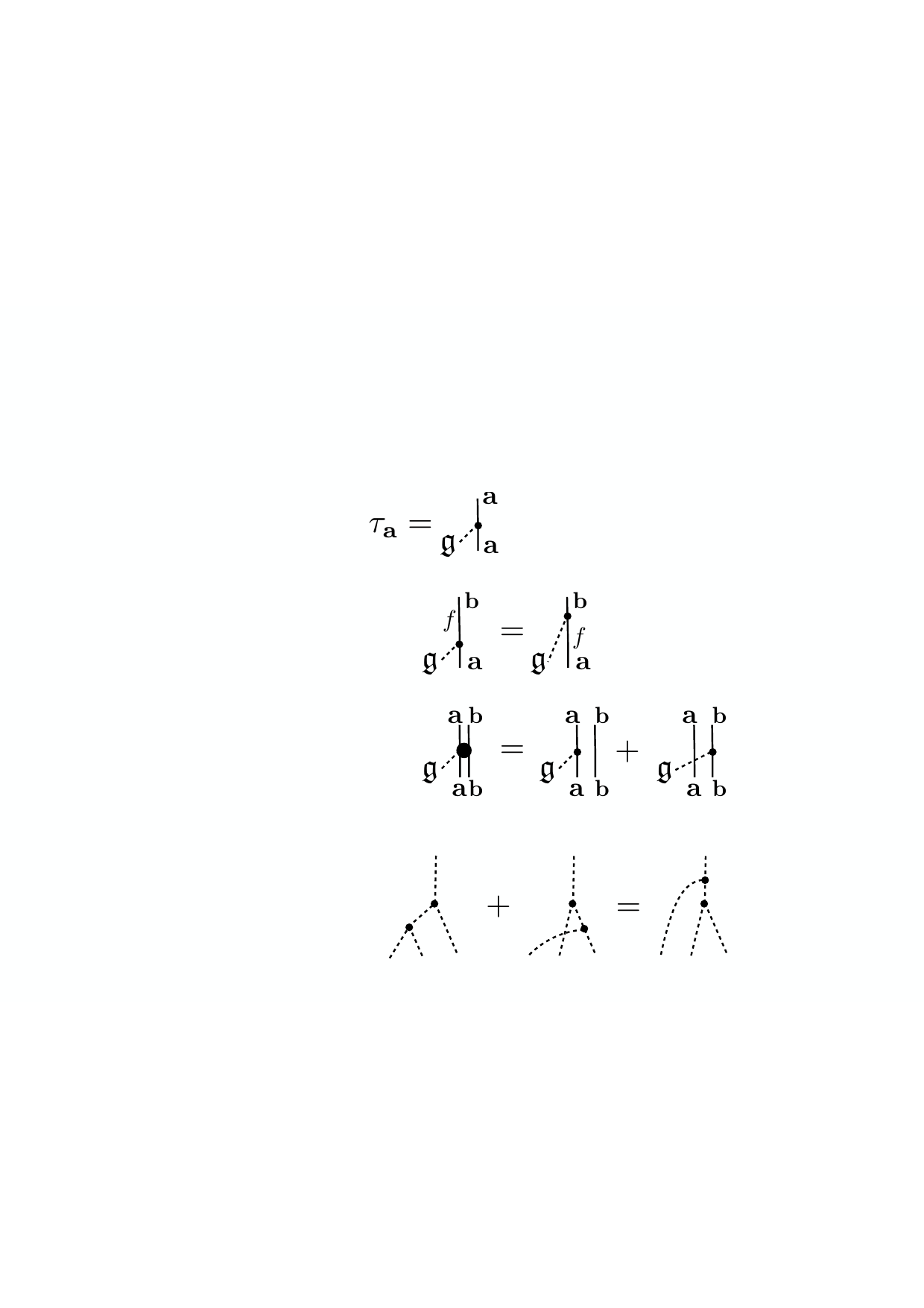}\,.
\end{equation}
In particular, if we apply these to $\tau_{\mathfrak g}:\mathfrak g\otimes\mathfrak g\rightarrow\mathfrak g$ we find that
\begin{equation}
\includegraphics[trim = 0 0 0 0, scale=0.5]{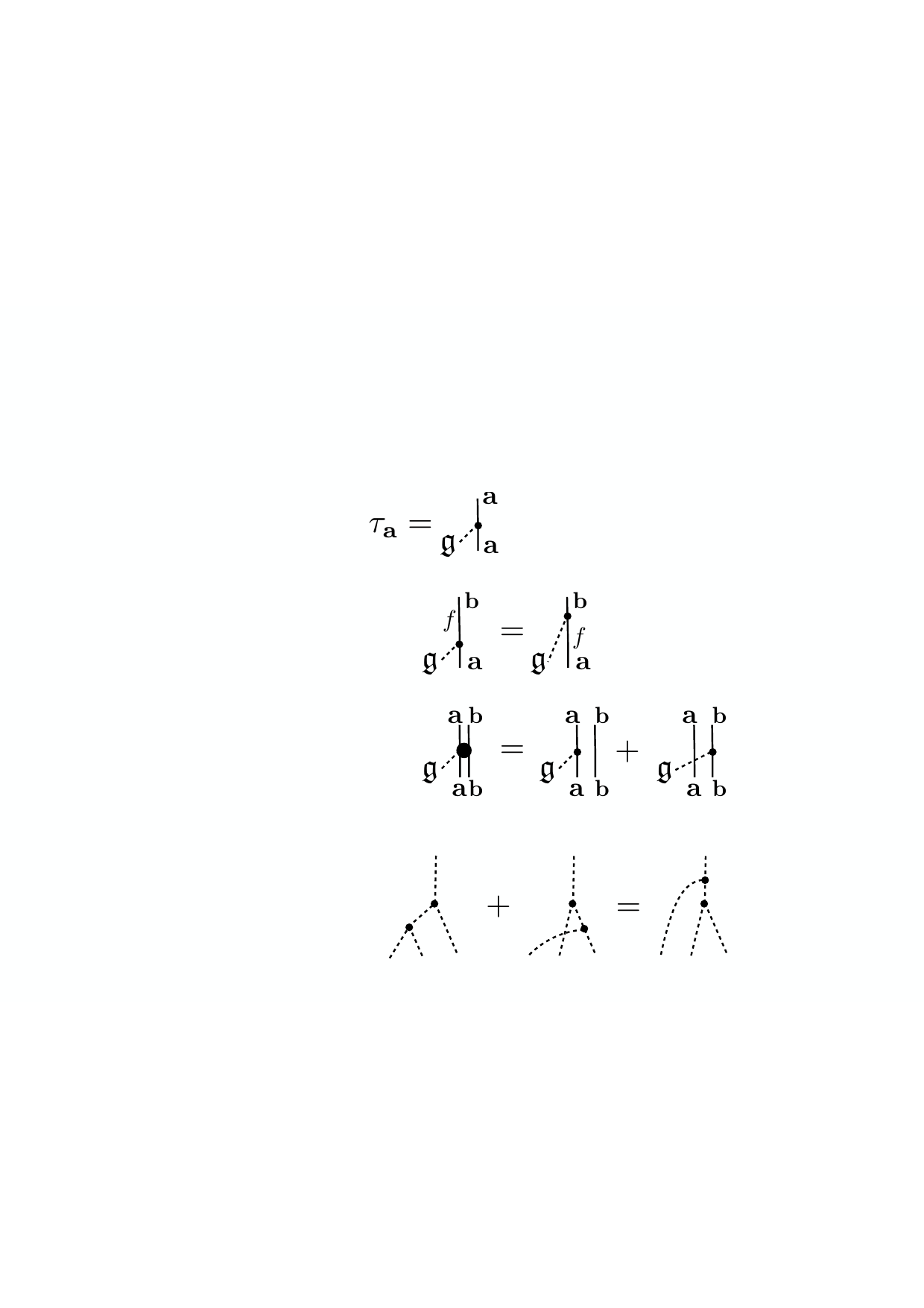}\,,
\end{equation}
which is the Jacobi identity. Combining this with \eqref{eq:ctcon3} we see that $\tau_{\mathfrak g}$ satisfies the usual axioms of a Lie bracket.\footnote{In category theory parlance $\mathfrak g$ is a Lie algebra \emph{internal} to the category $\fcy C$. This is another example of internalization, as discussed in footnote \ref{foot:int}. A discussion specific to the Lie algebraic case can be found on page 292 of \cite{etingof2016tensor}.
}

By analogy to the group theoretic case, we shall define an \emph{adjoint}\footnote{This concept has no relation to the notion of an adjoint functor.} in a symmetric tensor category $\fcy C$ to be an object $\mathfrak g\in\fcy C$ along with a morphisms $\tau_\ba:\mathfrak g\otimes\ba\rightarrow\ba$ for each $\ba\in\fcy C$ such that the conditions \eqref{eq:ctcon1}, \eqref{eq:ctcon2} and \eqref{eq:ctcon3} are satisfied, along with the technical condition:
\begin{equation}\label{eq:adjCon}
\text{If a morphism }f:\mathfrak g\rightarrow\mathfrak g \text{ satisfies } \tau_{\ba}\circ(f\otimes\text{id}_{\ba}) = 0 \text{ for every }\ba\in\fcy C\,,\text{ then } f = 0\,.
\end{equation} 
This condition rules out such trivial cases as allowing $\tau_\ba = 0$ for every $\ba$, or starting with an adjoint $(\mathfrak g,\tau)$ and constructing for any object $\ba$ a new adjoint $(\mathfrak g\oplus\ba,\tau \oplus 0)$. 
A categorical symmetry with an adjoint is called \emph{continuous}\footnote{Our terminology is natural from a physicists perspective. It is not related to a previously existing notion of a continuous category as the categorification of the notion of a continuous poset.}, otherwise it is \emph{discrete}.

We should note that there may be many adjoint objects in a continuous tensor category. In appendix \ref{sec:CONTCATS} however we show under some broad finiteness conditions that there are only a finite number of adjoint objects in a tensor category. In this case we prove there is a unique (up to unique isomorphism) \emph{maximal adjoint} $(\mathfrak m, T)$. This has the property that for any other adjoint $(\mathfrak g,\tau)$ there exists a unique morphism
\begin{equation}
\iota_{\mathfrak g}:\mathfrak g\rightarrow\mathfrak m \quad \text{with}\quad \tau_\ba = T_\ba\circ(\iota_{\mathfrak g}\otimes\text{id}_\ba)\,.
\end{equation}
As an example, consider a semisimple Lie group ${G\approx G_1\times\dots\times G_n}$ where each $G_i$ is a simple group. There is then an adjoint object $\mathfrak g_i\in\Rep\,G$ for each subgroup $G_i$, while the maximal adjoint object is their direct sum: $\mathfrak g \approx \mathfrak g_1\oplus\dots\oplus\mathfrak g_n$. 

In the category $\Rep\,O(N)$ the antisymmetric tensor $\bf A$ in ${\bf N}\otimes{\bf N}$ is an adjoint object. This extends to $\Reptilde\,O(n)$. We can diagrammatically represent $\tau_{\bf n}$ as 
\begin{equation}
\includegraphics[scale=0.45]{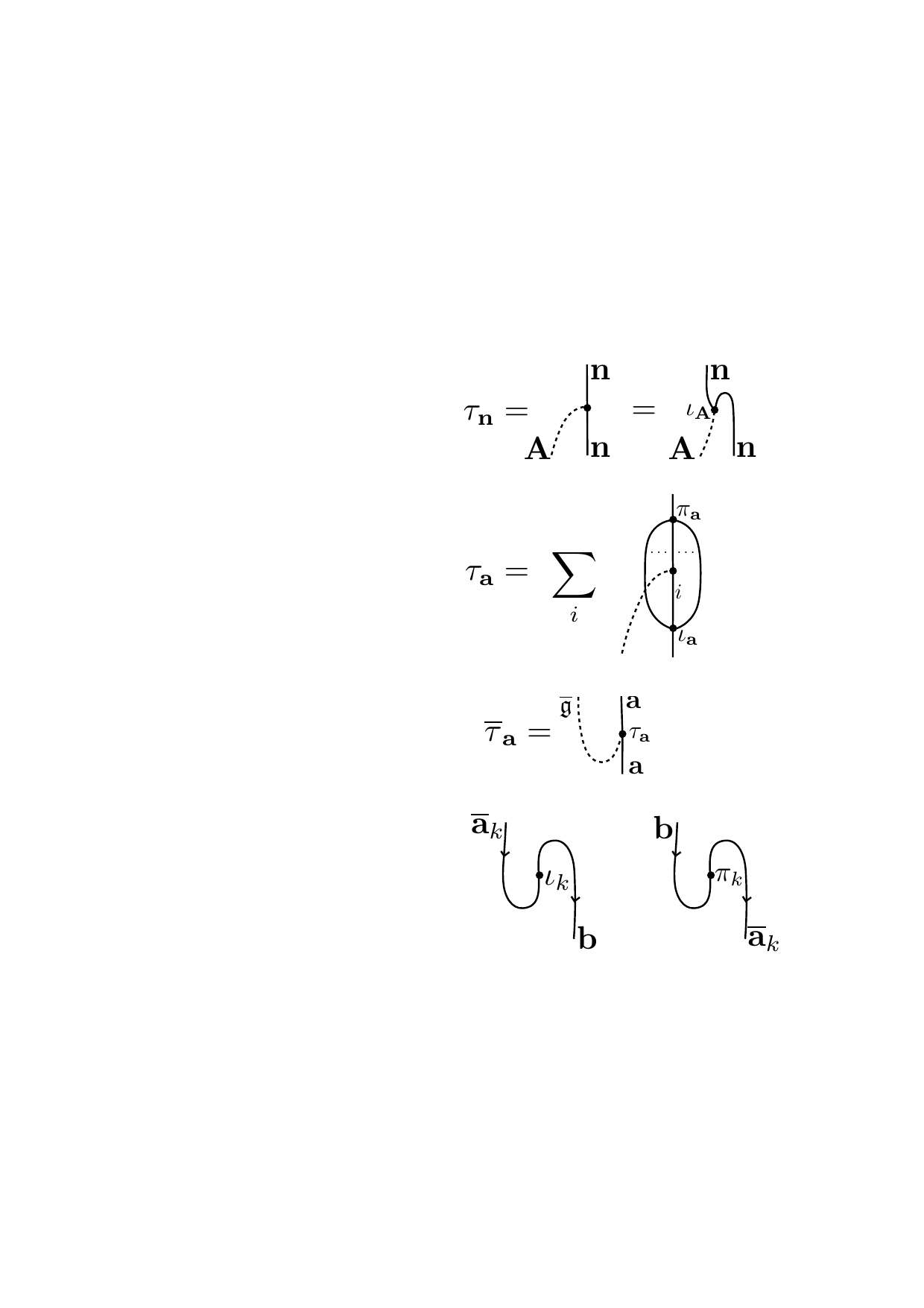}\,,
\end{equation}
where $\iota_{\bf A}$ is the morphism embedding $\bf A$ into ${\bf n}\otimes {\bf n}$.
For any simple object $\ba\in{\bf n}^{\otimes k}$ we can compute $\tau_{\ba}$ using:
\begin{equation}
\includegraphics[scale=0.5]{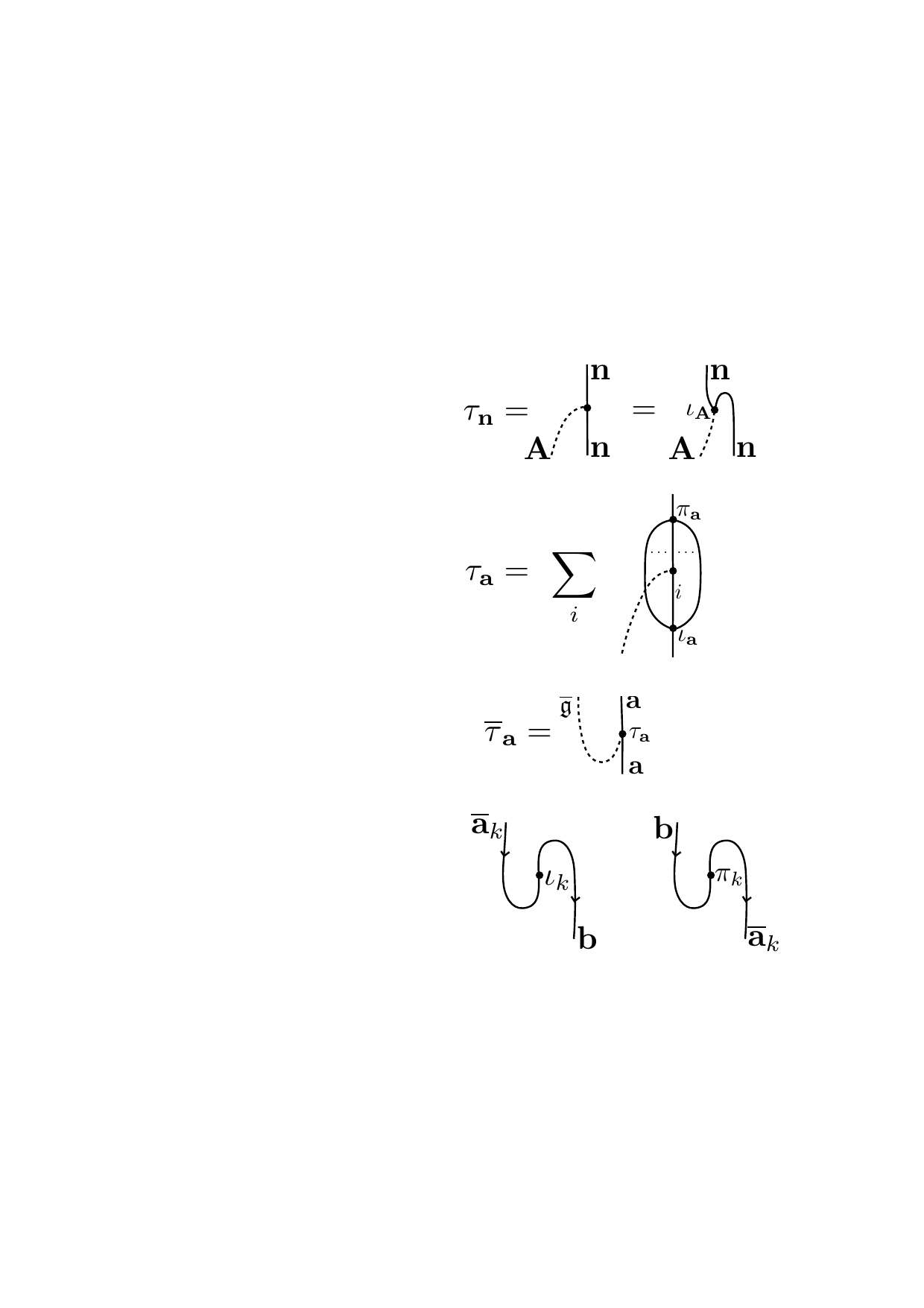}\,,
\end{equation}
where $\iota_{\ba},\pi_{\ba}$ are the embedding and projection morphism into/from ${\bf n}^{\otimes k}$ (this follows from properties \eqref{eq:ctcon1} and \eqref{eq:ctcon2}). This objects is actually the only adjoint object, and hence the maximal adjoint object, in $\Reptilde\,O(n)$, as we show in appendix \ref{sec:CONTCATS}.

An example of a discrete categorical symmetry is $\Reptilde\,S_n$, relevant for the analytic continuation of the Potts model (see section \ref{sec:other}).

We will find it useful to define the family of morphisms $\overline\tau_\ba:\ba\rightarrow\overline{\mathfrak{g}}\otimes\ba$ by:
\begin{equation}
 \includegraphics[scale=0.45]{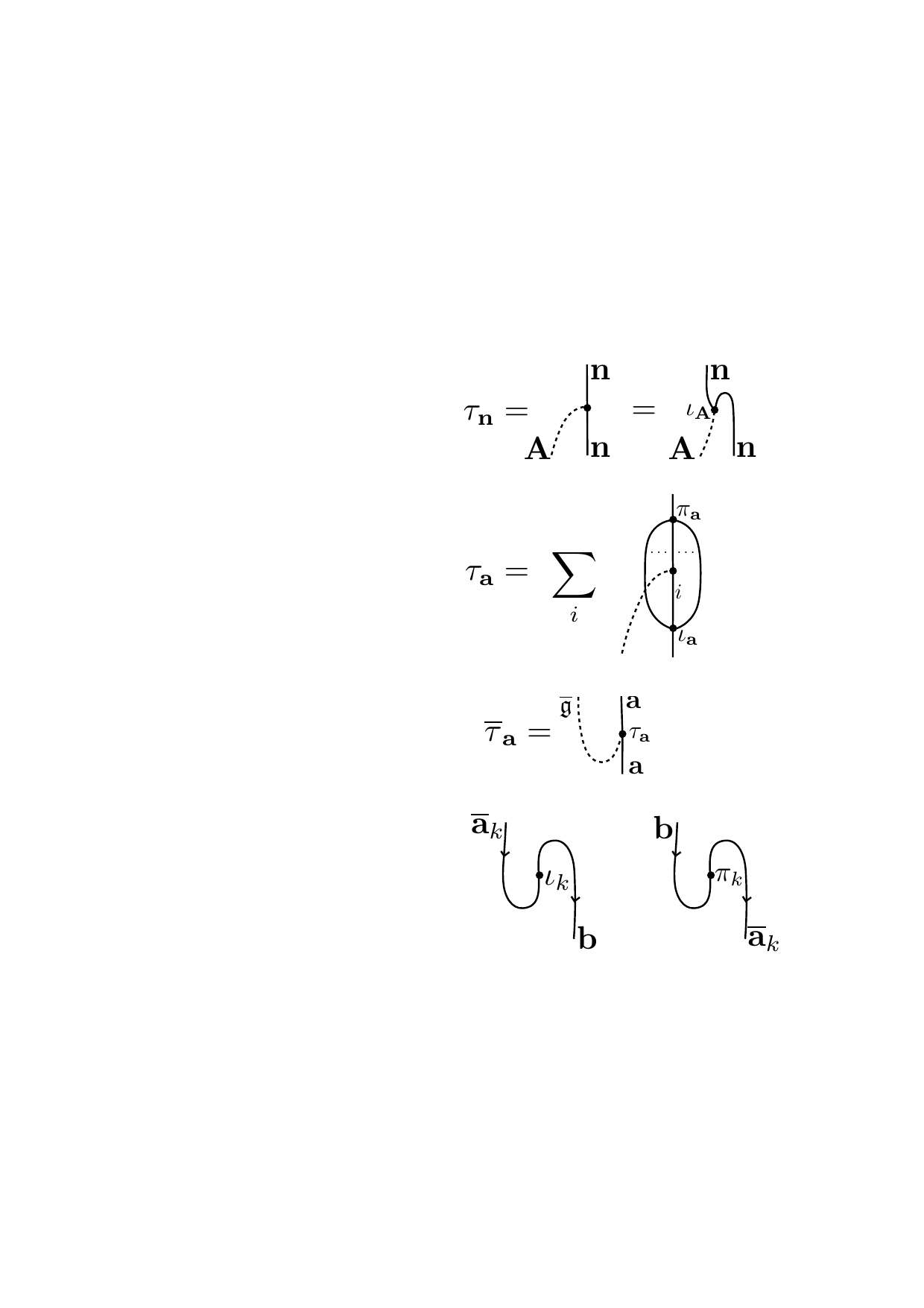} \raisebox{1.2em}{ $=(\text{id}_{\overline{\mathfrak g}} \otimes \tau_\ba)\circ(\delta_{\overline{\mathfrak g},\mathfrak g}\otimes\text{id}_\ba) \,.$}
\end{equation}
It is then not hard to check that $\overline{\mathfrak g}$ and $\overline\tau_\ba$ satisfy \eqref{eq:ctcon1}, \eqref{eq:ctcon2} and \eqref{eq:ctcon3}, but with all of the arrows and function compositions reversed. For this reason we can term $\overline{\mathfrak g}$ to be a \emph{coadjoint} object. For simple Lie algebras $\mathfrak g$ and $\overline{\mathfrak g}$ are isomorphic, but this is not true in general.

We can now extend the Noether argument to theories with a continuous categorical symmetry $\fcy C$. Consider a quantum field theory described by a Lagrangian $\fcy L[\phi_i]$, where the fields $\phi_i(x)\approx\ba_i$ for simple objects $\ba_i\in\fcy C$. Recall that in the categorical setting the Lagrangian is not a functional of the fields, but a morphism from ${\bf 1}$ to some tensor product of $\phi_i$'s. That this morphism originates in ${\bf 1}$ is the categorical version of the usual condition that Lagrangian must be a global symmetry singlet. Now consider varying each $\phi_i(x)$ by
\begin{equation}\label{eq:symvar}
\phi_i(x) \rightarrow \phi_i(x)+ (\alpha(x)\phi_i(x))\circ\overline\tau_{\ba_i}
\end{equation}
for some arbitrary infinitesimal field $\alpha(x)\approx\overline{\mathfrak g}$, where $(\mathfrak g,\tau)$ is an adjoint in $\fcy C$. Using \eqref{eq:ctcon1} and \eqref{eq:ctcon2} we find that for constant $\alpha$ the variation of the Lagrangian vanishes:
\begin{equation}
\fcy L[\phi_i(x)+ (\alpha\,\phi_i(x))\circ\overline\tau_{\ba_i}] = \fcy L[\phi_i(x)]\,.
\end{equation}
Because of this only terms proportional to $\partial_\mu\alpha$ can exist when we compute a more general variation:\footnote{For simplicity we assume that the Lagrangian involves only up to first order derivatives of $\phi_i(x)$.}
\begin{equation}
\fcy L[\phi_i]\rightarrow\fcy L[\phi_i] + \left[(\partial_\mu\alpha(x)\,\phi_i(x))\circ\overline\tau_{\ba_i})\otimes\frac{\partial\fcy L}{\partial(\partial_\mu\phi_i)}\right]\circ \delta_{\ba_i,\overline\ba_i}\,.
\end{equation}
Using standard path-integral manipulations, we conclude that
\begin{equation}
J^\mu(x) = \phi_i(x)\frac{\partial\fcy L}{\partial(\partial_\mu\phi_i)}\circ(\tau_{\ba_i}\otimes\text{id}_{\overline\ba_i})\circ(\text{id}_{\mathfrak g}\otimes\delta_{\ba_i,\overline\ba_i})
\end{equation}
is a conserved current transforming in the adjoint $\mathfrak g$. The operator $\partial_\mu J^\mu(x)$ acts on a local operator $\phi(x)\approx\ba$ as
\begin{equation}
\partial_\mu J^\mu(x)\,\phi(y) = -i\delta(x-y)\phi(y)\circ\tau_\ba\,.
\end{equation}

For $N>1$ the free $O(N)$ model has a conserved current
\begin{equation}
J^\mu_{ij}(x) = \phi_i(x)\partial^\mu\phi_j(x) - \phi_j(x)\partial^\mu\phi_i(x)
\end{equation}
Extending this operator to $\Reptilde\,O(n)$ is straightforward:
\begin{equation}\label{eq:consCur}
J^\mu(x) = \left(\lim_{y\ra0}\ \phi(x+y)\partial^\mu\phi(x)-\phi(x)\partial^\mu\phi(x+y)\right)\circ P_{\bf A}^{{\bf n},{\bf n}}\,,
\end{equation}
where $P_{\bf A}^{{\bf n},{\bf n}} : {\bf A}\rightarrow{\bf n}\otimes{\bf n} $ is the morphisms embedding the antisymmetric representation into ${\bf n}^{\otimes 2}$. Noether's theorem allows us to conclude that this current remains conserved in interacting $O(n)$ models. 

We have proved categorical Noether's theorem for Lagrangian theories, but it should remain true also for continuum limits of lattice models. For example continuous phase transitions of the loop $O(n)$ models should possess a conserved current operator of canonical scaling dimension $d-1$, transforming in the adjoint. This can be tested for the 2d $O(n)$ models with $n\in[-2,2]$, whose spectrum and multiplicity of states are exactly known from the Coulomb gas techniques \cite{diFrancesco:1987qf}. From there one can indeed see the presence of a spin 1, dimension 1 state of multiplicity $n(n-1)/2$ equal to the dimension of the adjoint, which can be identified with the conserved current. We will discuss this model further in section \ref{sec:pasha}.

\subsection{Explicit symmetry breaking}
In the previous section we described how to construct perturbative $O(n)$ models, by deforming the path-integral by $O(n)$ singlets. We often however wish to study models with less symmetry. For integer $N$ we can construct such models by perturbing with an operator which is not an $O(N)$ singlet. We wish to generalize this construction to categorical symmetries.
 
Consider a group $G$ with a subgroup $H$ embedded into $G$ via the group homomorphism $f:H\rightarrow G$. We can use this to define a functor $\fcy F_f : \Rep\,G\rightarrow\Rep\,H$ between the representation categories. Under this functor a representation $\rho: G \rightarrow GL(\mathbb C^k)$ of $G$ is mapped to the restricted representation of $H$, given by $\fcy F_f(\rho) = \rho\circ f: H\rightarrow GL(\mathbb C^k)$. 
 
Given a QFT with symmetry group $G$, we can use the functor $\fcy F_f$ to rewrite irreducible $G$-invariant operators as the sum of irreducible $H$-invariant operators. Some operators, while transforming non-trivially under $G$, will contain $H$ singlets. Perturbing by these singlets will then break the group $G$ down to the subgroup $H$. 
 
Let us now generalize this to QFTs with categorical symmetries. Given a theory with a symmetry $\fcy C$, the analogue of a ``subgroup'' is a category $\fcy D$ along with a functor $\fcy F:\fcy C\rightarrow\fcy D$. We can use the functor to rewrite our QFT as a theory with categorical symmetry $\fcy D$. Operators $\phi_i(x)$ in the QFT are translated to $\fcy F(\phi_i)(x)$, with correlation functions given by:
\begin{equation}\label{eq:brSymCor}
\langle \fcy F(\phi_1)(x_1)\dots\fcy F(\phi_n)(x_n) \rangle = \fcy F\left(\langle \phi_1(x_1)\dots\phi_n(x_n)\rangle \right)\,.
\end{equation}
In general $\fcy D$ will have a larger space of morphisms $\fcy F(\phi_1)\otimes...\otimes \fcy F(\phi_n)\rightarrow\bf 1$ than $\fcy C$, but since we began with a $\fcy C$-symmetric theory these additional morphisms cannot appear in $\eqref{eq:brSymCor}$.

A simple object $\ba\in\fcy C$ does not necessarily remain simple in $\fcy D$, but will instead split into a direct sum of simple objects:
\begin{equation}
\fcy F (\ba) \approx {\bf b}_1\oplus...\oplus{\bf b}_n\,.
\end{equation}
We can then represent any operator $\phi\approx\ba$ as the sum of operators which are simple under $\fcy D$ (see Eq.~\reef{eq:subop})
\begin{equation}
\psi_i(x) = \fcy F(\phi)(x)\circ\iota_i\,,
\end{equation}
where $\iota_i: {\bf b}_i \rightarrow \fcy F(\ba)$ are morphisms embedding ${\bf b}_i$ in $\fcy F(\ba)$. Suppose one of these objects, say ${\bf b}_1$, is isomorphic to the ${\bf 1}\in\fcy D$, i.e.~$\psi_1(x)$ is a $\calD$-singlet. Consider then the following deformation of our theory:
\begin{equation}
S\rightarrow S + \lambda \int d^dx\,  \psi_1(x)\,.
\end{equation}
This perturbation breaks $\fcy C$ symmetry, while preserving $\calD$. Morphisms which were previously forbidden by $\fcy C$ can now appear in correlators.

As an example, let us break $O(n)$ model down to the $O(n-1)$ model. Within $O(N)$, the subgroup of matrices which preserves a vector $E^I$ is $O(N-1)$. Without loss of generality we can take ${E^I = (1,0,...0)}$, and can think of it as new invariant tensor mapping ${\bf N}\rightarrow{\bf 1}$, which takes a vector $V^I$ and returns the first component $V\cdot E= V^1$. Diagrammatically we can write this as:
\begin{equation}\label{eq:bONE} E^I = 
\includegraphics[trim=0 1em 0 0,scale=0.5]{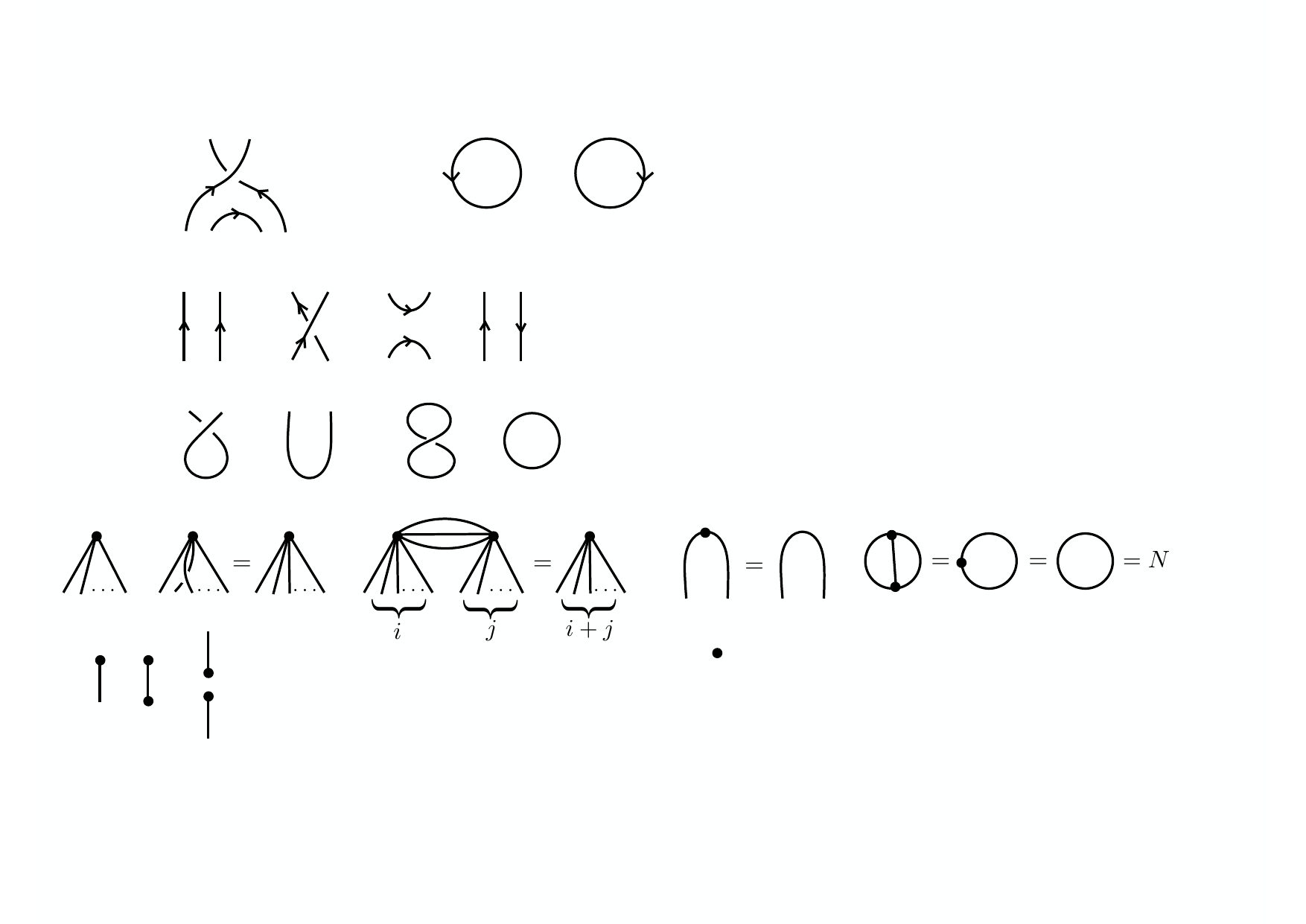} \,,\qquad E_IE^I = 1 = \includegraphics[trim=0 1em 0 0,scale=0.5]{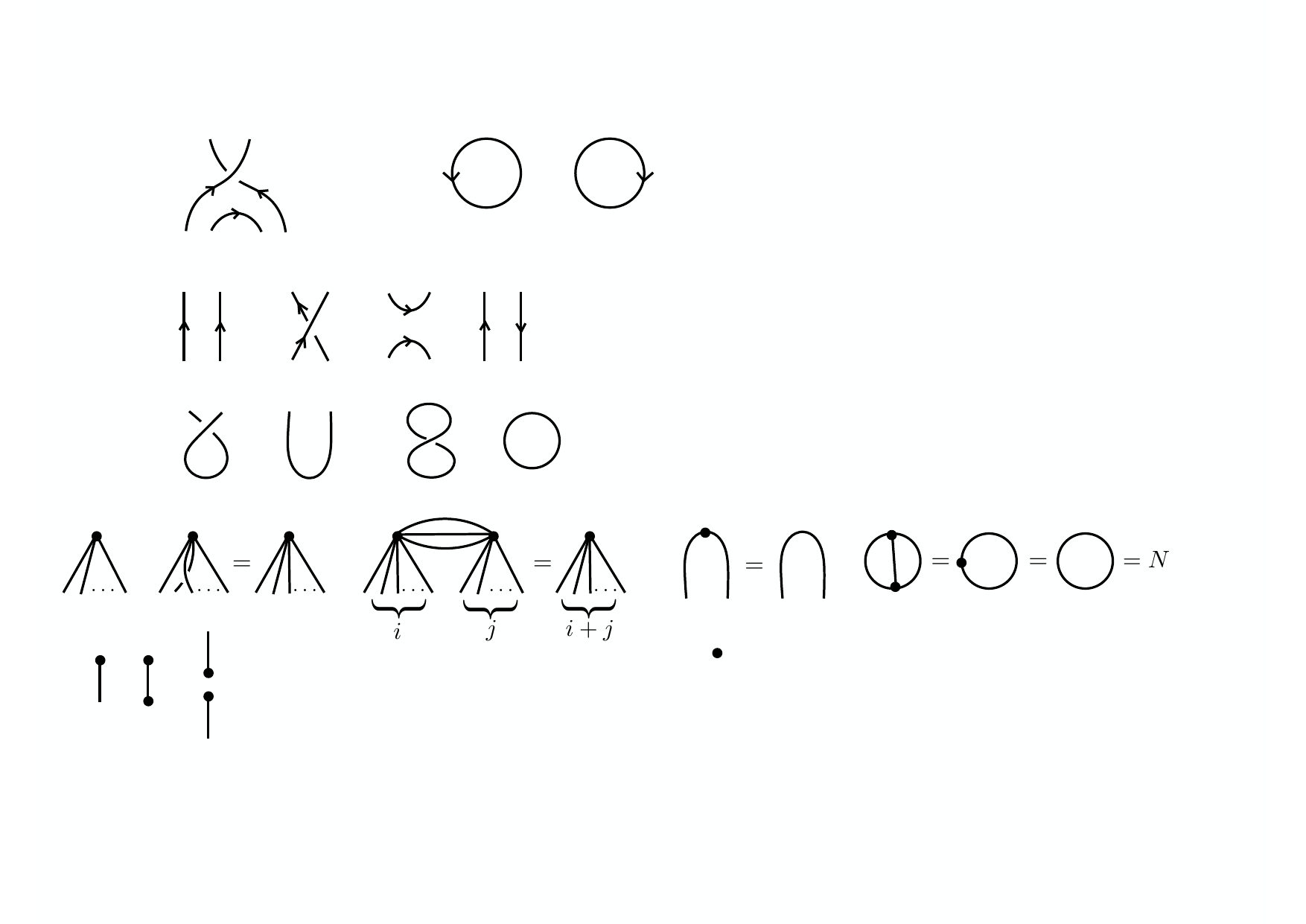}\,. \end{equation}
Although $E^I$ cannot be extended to non-integer $N$, these diagrams can be. We therefore define for any $n$ a new category $\Rephat\,O_E(n)$ where the morphisms are the string diagrams which can built from $\Rephat\,O(n)$ string diagrams supplemented by the new diagram \eqref{eq:bONE}. The inclusion functor $\hat{\fcy I}:\Rephat\,O(n)\rightarrow\Rephat\,O_E(n)$ takes any object $[k]\in\Rephat\,O(n)$ to the object $[k]\in\Rephat\,O_E(n)$, and any string diagram $[k_1]\rightarrow[k_2]$ in $\Rephat\,O(n)$ to the identical string diagram in $\Rephat\,O_E(n)$.

Following our procedure for constructing $\Reptilde\,O(n)$ case, we can take the Karoubi envelope and additive completion of $\Rephat\,O_E(n)$ to define a new semisimple category $\Reptilde\,O_E(n)$. The functor $\hat {\fcy I}$ lifts 
to a functor ${\tilde{\fcy I}:\Reptilde\,O(n)\rightarrow\Reptilde\,O_E(n)}$. Unlike in $\Reptilde\,O(n)$, the object ${\bf n}\in\Reptilde\,O_E(n)$ is not simple, because there are two linearly independent morphism ${\bf n}\rightarrow{\bf n}$. These can be used to define two indecomposable, mutually orthogonal projectors:
\begin{equation} P_{\bf 1} = \includegraphics[trim=0 3em 0 0,scale=0.4]{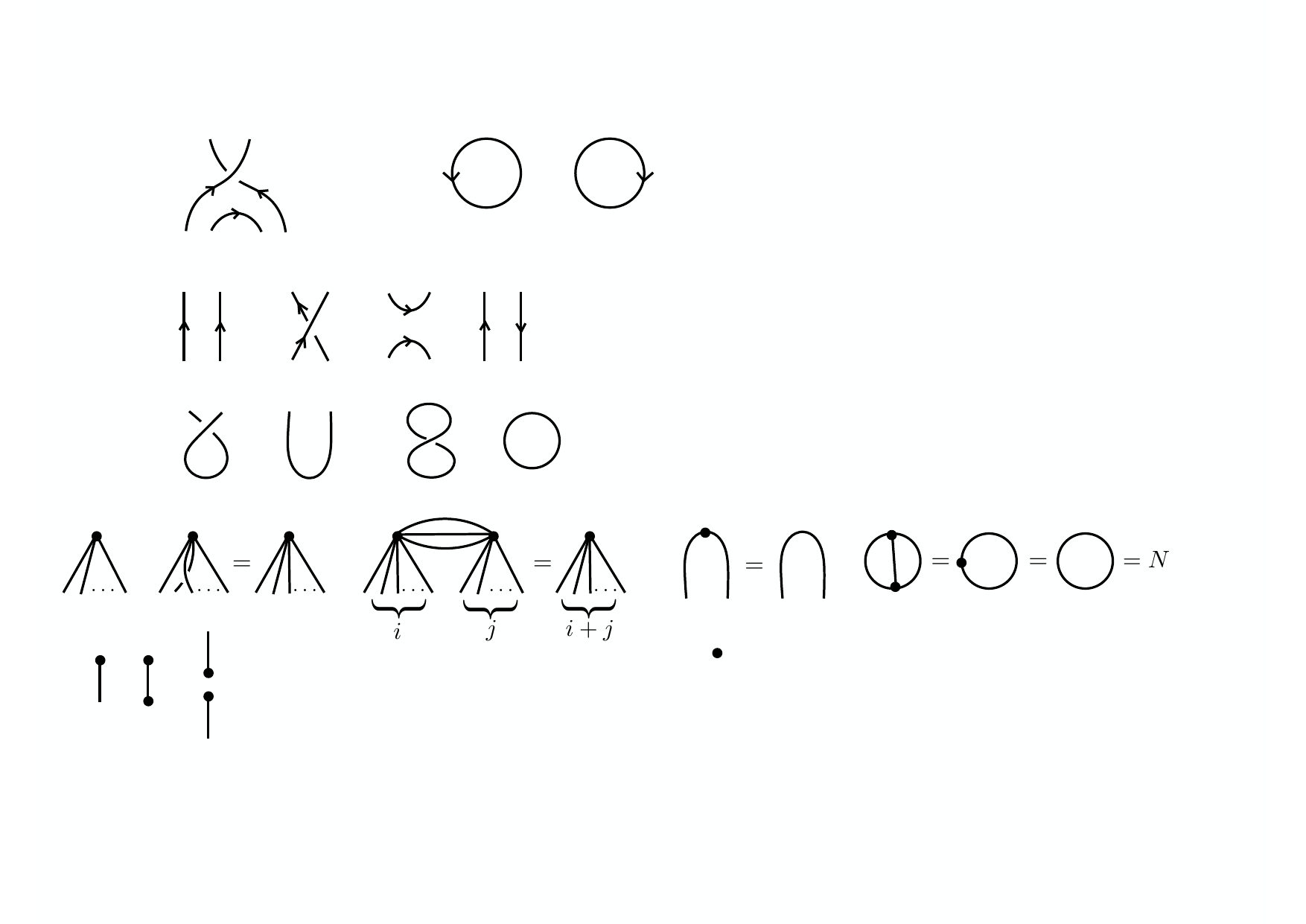}\,,\quad \text{tr}(P_{\bf 1}) = 1\,,\qquad P_{\bf {n-1}} = \text{id}_{\bf n}-P_{\bf 1}\,,\quad \text{tr}(P_{\bf {n-1}}) = n-1\,, \end{equation}
and so we can decompose ${\bf n} \approx {\bf 1} \oplus (\bf {n-1})$. 

It is now not hard to check that $\Reptilde\,O_E(n)$ and $\Reptilde\,O(n-1)$ are equivalent tensor categories, with an invertible functor $\fcy J: \Reptilde\,O_E(n)\rightarrow\Reptilde\,O(n-1)$ which takes the $\bf{n-1}$ in $\Reptilde\,O_E(n)$ to the $\bf{n-1}$ in $\Reptilde\,O(n-1)$. We have thus constructed a functor $\fcy J\circ\hat{\fcy I}$ mapping $\Reptilde\,O(n)$ to $\Reptilde\,O(n-1)$, and can use this to rewrite the $O(n)$ model in terms of $O(n-1)$ invariants. 

As another example, consider the subgroup $O(N-k)\times O(k)\subset O(N)$. In order to generalize this to non-integer $N$ and $k$, we must define the categorical analogue of the direct product of two groups.  Given any two categories $\fcy C$ and $\fcy D$, we define the product category $\fcy C\times\fcy D$ to be the category where the objects are pairs $(\ba,{\bf b})$ of objects $\ba\in\fcy C$ and ${\bf b}\in\fcy D$, and where the morphisms $(f,g):(\ba,{\bf b})\rightarrow({\bf c},{\bf d})$ are pairs of morphisms $f:\ba\rightarrow{\bf c}$ and $g:{\bf b}\rightarrow{\bf d}$ in $\fcy C$ and $\fcy D$ respectively. Given two groups $G_1$ and $G_2$, any simple representations of $G_1\times G_2$ is the tensor product of a simple representation of $G_1$ and $G_2$, and hence we find that the categories $\Rep\,(G_1\times G_2)$ and $(\Rep\,G_1)\times(\Rep\,G_2)$ are equivalent. We therefore define $\Reptilde\,(O(n-k)\times O(k))$ to be the product category $\Reptilde\,O(n-k)\times\Reptilde\,O(k)$.

Let us now define the functor $\hat{\fcy F}:\Rephat\,O(n)\rightarrow \Rephat\,O(n-k)\times\Rephat\,O(k)$ which takes the object ${[k]\in\Rephat\,O(n)}$ to the object $([k],[k])\in \Rephat\,O(n-k)\times\Rephat\,O(k)$, and takes string diagram $f:[k_1]\rightarrow[k_2]$ to the pair of string diagrams $(f,f):([k_1],[k_1])\rightarrow([k_2],[k_2])$. By taking the Karoubi envelope and additive completion of each category, it is not hard to verify that $\hat{\fcy F}$ lifts to a functor $\tilde{\fcy F}:\Reptilde\,O(n)\rightarrow \Reptilde\,O(n-k)\times\Reptilde\,O(k)$.

\subsection{Spontaneous symmetry breaking}
\label{sec:spont}

Symmetries in physics are often spontaneously broken. The textbook example of such a theory is the $O(N)$ vector model
\begin{equation}\label{eq:brokenON}
\fcy L = \frac 12 (\partial\phi^i)^2 - \frac 12 \mu^2 (\phi^i\phi_i)^2 + \frac 14 \lambda (\phi^i\phi_i)^2\,.
\end{equation}
For $\mu > 0$ the vector field $\phi^i(x)$ develops a vacuum expectation value:
\begin{equation}
\langle \phi^i(x)\rangle = v^i
\end{equation}
for some vector $v^i$. In this phase only an $O(N-1)$ subgroup is linearly realized, and the spectrum contains $N-1$ massless Goldstone bosons transforming in the fundamental of $O(N-1)$.

At first glance it seems impossible to describe spontaneous symmetry breaking using the categorical language. After all, correlation functions are always morphisms in the category $\Reptilde\,O(n)$, but in this seems to contradict what happens in the integer $N$ case, where only $O(N-1)$ symmetry is manifest. The key to this puzzle is to recall that the vacuum itself transforms non-trivially when a symmetry is spontaneously broken, and so we must compute expectation values with respect to a specific vacuum. 

To understand this in more detail, let us consider the $O(n)$ loop model on a finite lattice but with boundary conditions which are not $\Reptilde\,O(n)$ invariant. We can achieve this by first using the functor ${\tilde I:\Reptilde\,O(n)\rightarrow\Reptilde\,O(n-1)}$, so that the boundary points now transform in the ${\bf 1}\oplus({\bf n-1})$ of $\Reptilde\,O(n-1)$. Rather than integrating over each boundary site using the $\Reptilde\,O(n)$ invariant, we instead fix the boundary points to be in the $\bf 1$. To describe this graphically we can use $\Rephat\,O_E(n)$ from the previous section. 

Let us now compute the expectation value of some spin $s_{[i]}\approx\bf n$ in the bulk. Under $\Reptilde\,O(n-1)$ this splits into two representations, the $\vec s_{[i]}\approx{\bf n-1}$ and the $\hat s_{[i]} \approx{\bf 1}$. The expectation value of the former always vanishes due to $\Reptilde\,O(n-1)$ invariance, but $\langle\hat s_{[i]}\rangle$ will be in general non-zero. 

Now we can take the continuum limit while keep $\hat s_{[i]}$ far away from the boundary. Then there are two possibilities, depending on whether we are in the broken or unbroken phase. If we are in the unbroken phase then $\langle\hat s_{[i]}\rangle$ will go to zero in the continuum limit, and we find that full $\Reptilde\,O(n)$ invariance is restored. In the broken phase however, $\langle\hat s_{[i]}\rangle$ will limit to some constant but non-zero value. More generally, correlation functions will be $\Reptilde\,O(n-1)$ invariant (and also invariant under the Euclidean group), but not $\Reptilde\,O(n)$ invariant. The sole purpose of the non-invariant boundary condition is to pick out the correct vacuum for the theory.

We can also calculate correlators directly in the continuum using the path integral over \eqref{eq:brokenON}. To do so we again rewrite the $\Reptilde\,O(n)$ variables in terms of $\Reptilde\,O(n-1)$ variables using the functor $\tilde I$, under which $\phi(x)$ splits into a singlet $\varphi(x)\approx\bf 1$ and a vector $\Reptilde\,O(n-1)$ vector $\pi(x)\approx{\bf n-1}$. The potential will be minimized when $\varphi(x)$ takes some constant value $\varphi_0$. Introducing the field $\sigma(x) = \varphi(x) - \varphi_0$ and perturbing around $\sigma(x) = 0$, we can compute correlators in the broken phase. Here the choice of vacuum to perturb around determines the boundary conditions at infinity.

It is natural to define the dimension of a continuous categorical symmetry $\dim(\calC)$ as the dimension of its maximal adjoint object $\dim(\mathfrak{g}_\calC)$. The dimension of a discrete categorical symmetry is defined to be zero. We wish to conjecture the \emph{categorical Goldstone theorem}, which should go along these lines: If a theory with a continuous categorical symmetry $\calC$ is spontaneously broken to a category $\calD$, then long distance physics is described by a ``theory of a non-integer number of free massless Goldstone bosons''. In the categorical setting, the massless Goldstone fields should transform in a $\calD$ object of dimension $\dim(\calC)-\dim(\calD)$.

A beautiful example of this phenomenon can be found in Ref.~\cite{Jacobsen:2002wu}, which discussed the low-temperature dense phase of $O(n)$ model and argued that the presence of self-intersections drives the conformal fixed point to a Goldstone phase of spontaneously broken $O(n)$ symmetry. (Ref.~\cite{Jacobsen:2002wu} did not use the categorical language and treated non-integer $n$ symmetry at an intuitive level.)

\subsection{Conformally invariant theories}
\label{sec:CFT}
In this section we shall consider conformal field theories (CFTs) with categorical symmetries. Since these categorical symmetries are generalizations of global symmetries, we will see that the usual OPE expansions and crossing equations still hold in this more general context. 
 
We will restrict ourselves to CFTs satisfying the following technical assumptions:
\begin{enumerate}
	\item All scaling dimensions are real.
	\item For any $\Delta$ there are a finite number of states with scaling dimension up to $\Delta$.
 	\item The vacuum, with $\Delta = 0$, is the state of lowest conformal dimension and is unique.
 	\item The dilatation operator can be diagonalized.
 	\item The OPE expansion converges.
\end{enumerate}
Two-dimensional CFTs satisfying assumptions 1,2,3 are usually called `compact'. Here we consider general $d$. 

Assumption 5 is usually argued from unitarity/reflection positivity \cite{Pappadopulo:2012jk}, but may be reasonably expected to hold more generally (e.g. it holds for the 2d non-unitary minimal models). In the next section we will see that theories with categorical symmetries are in general non-unitary. 

Assumptions 2,3 are likely satisfied for all models of interest to statistical physics in $d>2$ and for most models in 2d.\footnote{In 2d, there do exist some statistical models with a continuous spectrum, see e.g. \cite{2006cond.mat.12037I} for an antiferromagnetic Potts model or \cite{vernier2015new} for a model of polymers, both of which are described by non-compact 2d CFTs. We thank Jesper Jacobsen for a discussion.}
For example, in 2d, the $O(n)$ loop models and the $S_n$ symmetric $n$-state ferromagnetic Potts models are known to have discrete (and real) spectrum for any $n$, integer or not \cite{diFrancesco:1987qf}. We expect the spectrum to remain discrete also in 3d.

Assumption 1 is a consequence of unitarity although it is satisfied for many non-unitary models as well. In general, non-unitary models may have complex spectrum (which will consist of complex-conjugate pairs if the model has an underlying real structure). Assumption 1 would then have to be relaxed, which is a rather trivial modification.

Assumption 4 is arguably the trickiest one. It is violated in logarithmic CFTs whose dilatation operator has Jordan blocks---a possibility for non-unitary theories. Logarithmic CFTs are more complicated than the usual ones and we will not consider them here (see e.g.~\cite{Hogervorst:2016itc}). 
The 2d $n$-state Potts models are logarithmic when $n$ belongs to a discrete sequences of Beraha numbers \cite{Couvreur:2017inl}, and these arguments may also extend to the $O(n)$ loop models \cite{Vasseur:2011fi}.\footnote{We thank Victor Gorbenko and Bernardo Zan for a discussion and for communicating to us further arguments suggesting that the 2d Potts and $O(n)$ model are in fact logarithmic for generic $n$.} In 3d, the $O(n)$ loop models and the $n$-state Potts models are logarithmic in the limit when $n$ approaches an integer \cite{Cardy:2013rqg}, and seemingly only in this case. For non-integer $n$, they could therefore be examples of models to which all our assumptions apply (except, perhaps, assumption 1 which as we said would be easy to relax).

 In the subsequent discussion we will stick to the above minimal set of assumptions.
 
Conformal primaries will transform as simple objects in $\fcy C$. Consider the space $V_{\Delta,\ba}$ of local operators $\phi(x)\approx\ba$ which have conformal dimension $\Delta$. Using both conformal and $\fcy C$ symmetries, operators in $V_{\Delta,\ba}$ have non-zero two-point functions only with operators in $V_{\Delta,\overline\ba}$. Given any basis $\phi_1(x),...,\phi_n(x)$ of $V_{\Delta,\ba}$ we can choose a basis of $\overline\phi_1(x),...,\overline\phi_n(x)$ of $V_{\Delta,\overline\ba}$ so that\footnote{To show that such a nice basis exists, we can start from the more general equation $
	\langle \phi_i(\ba,x)\psi_j(\overline\ba,y)\rangle = \frac{M_{ij}\delta^{\ba,\overline\ba}}{|x-y|^{2\Delta}}
$ for some finite matrix of coefficients $M_{ij}$. If $M_{ij}$ is degenerate then this means that there are fields for which all two-point correlators are zero, but due to our assumptions of OPE convergence this implies all of their correlators vanish. So without loss of generality we can take $M_{ij}$ to be non-degenerate, and then by applying $M^{-1}$ to our basis $\psi_1(x),...,\psi_n(x)$ would recover our nice basis. Furthermore, if $\ba \approx \overline\ba$ one can use Takagi's decomposition to find a basis such that $\phi_i=\overline \phi_i$, but we will not need this.} 
\begin{equation}
\langle \phi_i(x)\overline\phi_j(y) \rangle = \frac{\delta^{\ba,\overline\ba}\delta_{ij}}{|x-y|^{2\Delta}}\,.
\label{eq:nicebasis}
\end{equation}
In what follows we shall assume that such a basis has been chosen for each space of operators $V_{\Delta,\ba}$ and $V_{\Delta,\overline\ba}$.

Three-point functions are also fixed up to a finite number of coefficients, e.g. for scalars:
\begin{equation}\label{eq:3ptF}
\langle\phi_1(\ba_1,x_1)\phi_2(\ba_2,x_2)\phi_3(\ba_3,x_3)\rangle =\frac{F_{\phi_1\phi_2\phi_3}}{|x_{12}|^{\Delta_1+\Delta_2-\Delta_3}|x_{13}|^{\Delta_1+\Delta_3-\Delta_2}|x_{23}|^{\Delta_2+\Delta_3-\Delta_1}}
\end{equation}
for some morphism $F_{\phi_1\phi_2\phi_3}\in\Hom(\ba_1\otimes\ba_2\otimes\ba_3\rightarrow{\bf 1})$, where we define $x_{ij} = x_i - x_j$. We can also write a conformally invariant OPE
\begin{gather}
\phi_1(\ba_1,x_1)\phi_2(\ba_2,x_2) \supset F^{\overline\phi_3}_{\phi_1\phi_2} C_{123}(x_{12},\partial_2) 
{\overline\phi_3(\overline\ba_3},x_2)\,,\\
F^{\overline\phi_3}_{\phi_1\phi_2} = (F_{\phi_1\phi_2\phi_3}\otimes\text{id}_{\overline\ba_3})\circ(\text{id}_{\ba_1\otimes\ba_2}\otimes\delta_{\ba_3,\overline\ba_3})\,,
\end{gather}
where $C_{123}(x_{12},\partial_2)$ is the usual differential operator appearing in OPEs of scalars in CFTs.

When computing higher-point correlators we can use the OPE expansion multiple times, which will require us to compose morphisms of the form $F_{\phi_i\phi_j}^{\phi_k}$ together many times. We will find it convenient to use string diagrams to keep track of these compositions. To this end, let us introduce the diagrams: 
\begin{equation}
\delta^{{\bf b},\overline{\bf b}} = \includegraphics[trim=0 2em 0 0,scale=0.5]{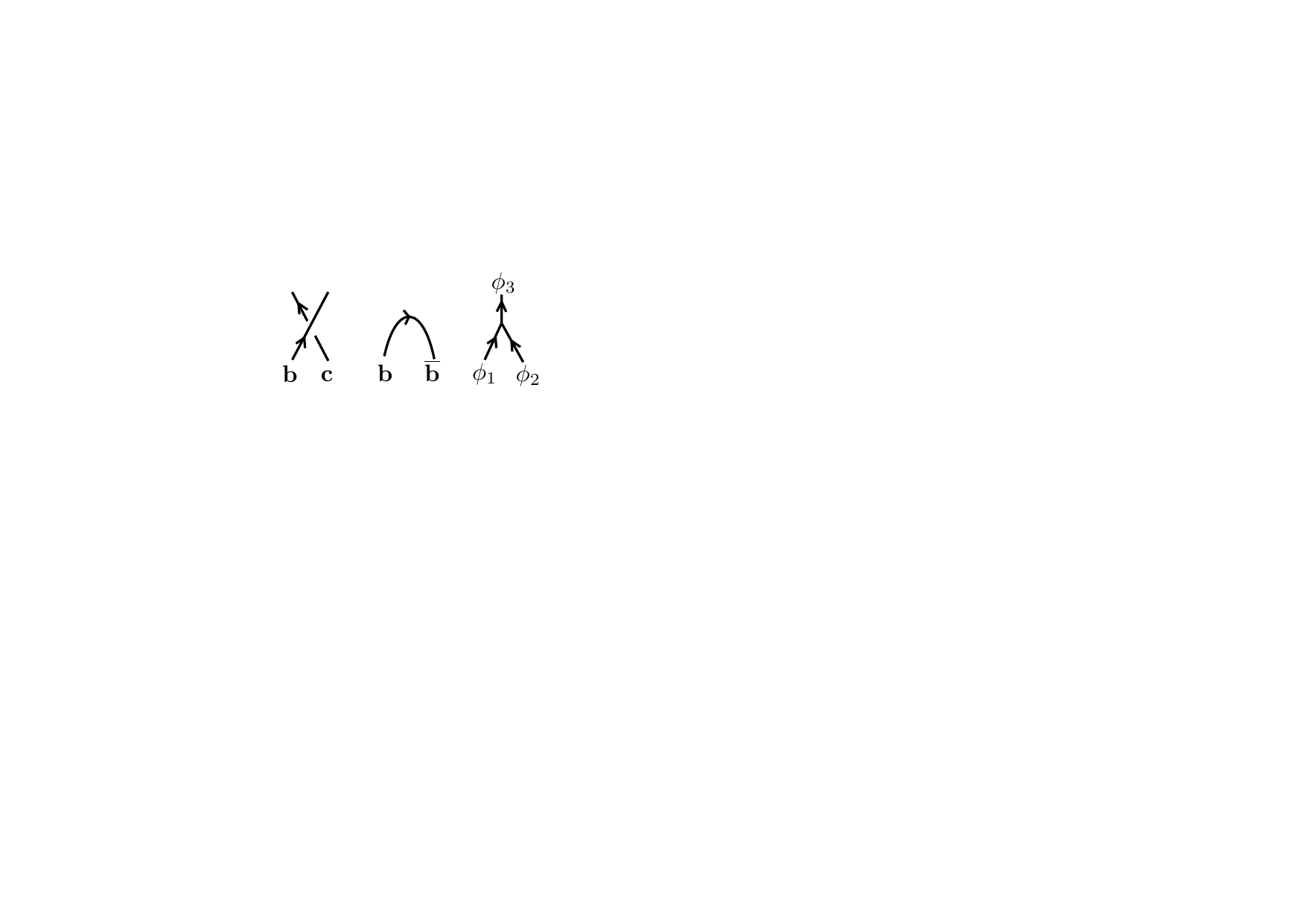}\,,\qquad \beta_{{\bf b},{\bf c}} = \includegraphics[trim=0 3em 0 0,scale=0.5]{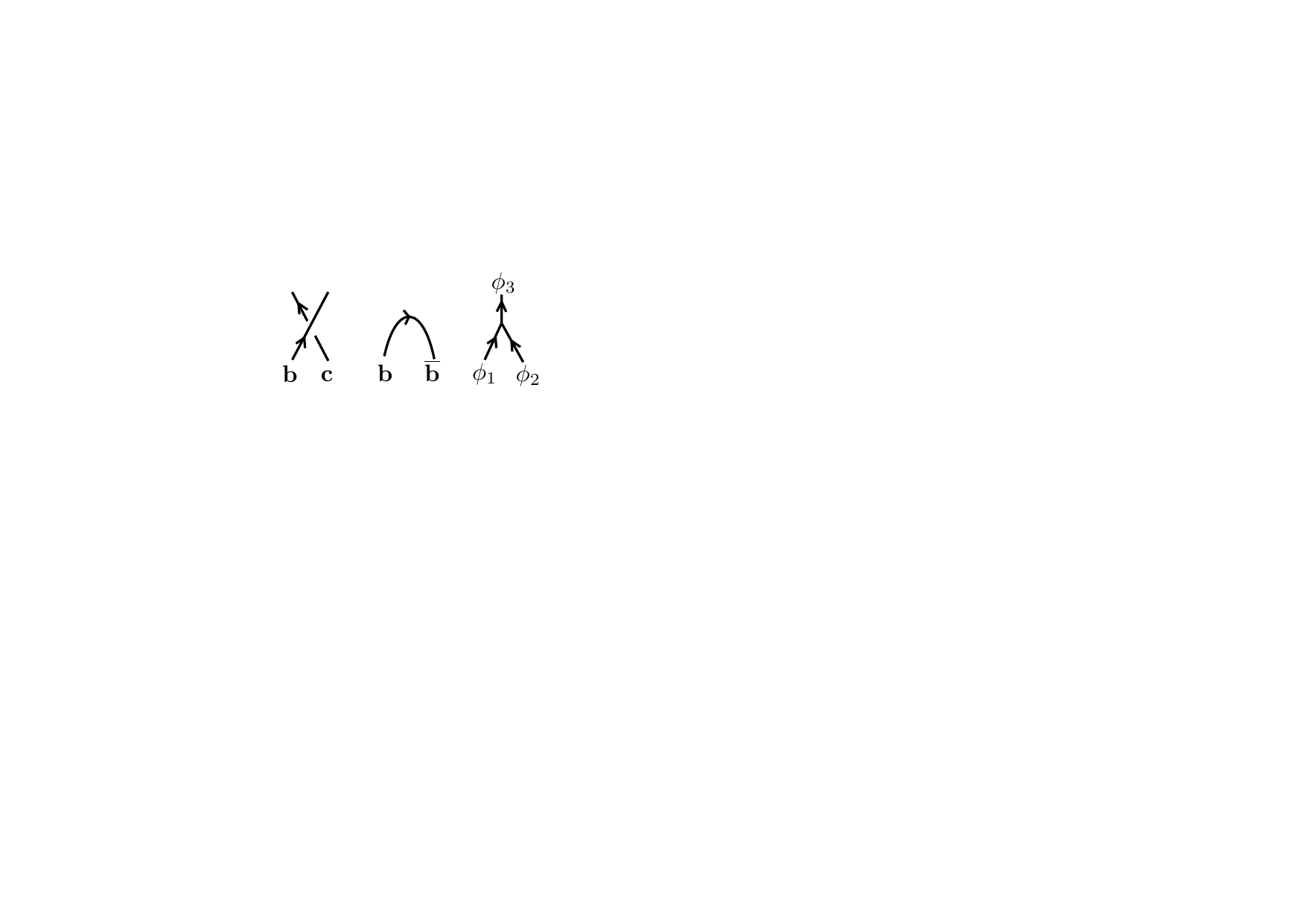}\,,\qquad F_{\phi_1\phi_2}^{\phi_3} = \includegraphics[trim=0 3.5em 0 0,scale=0.5]{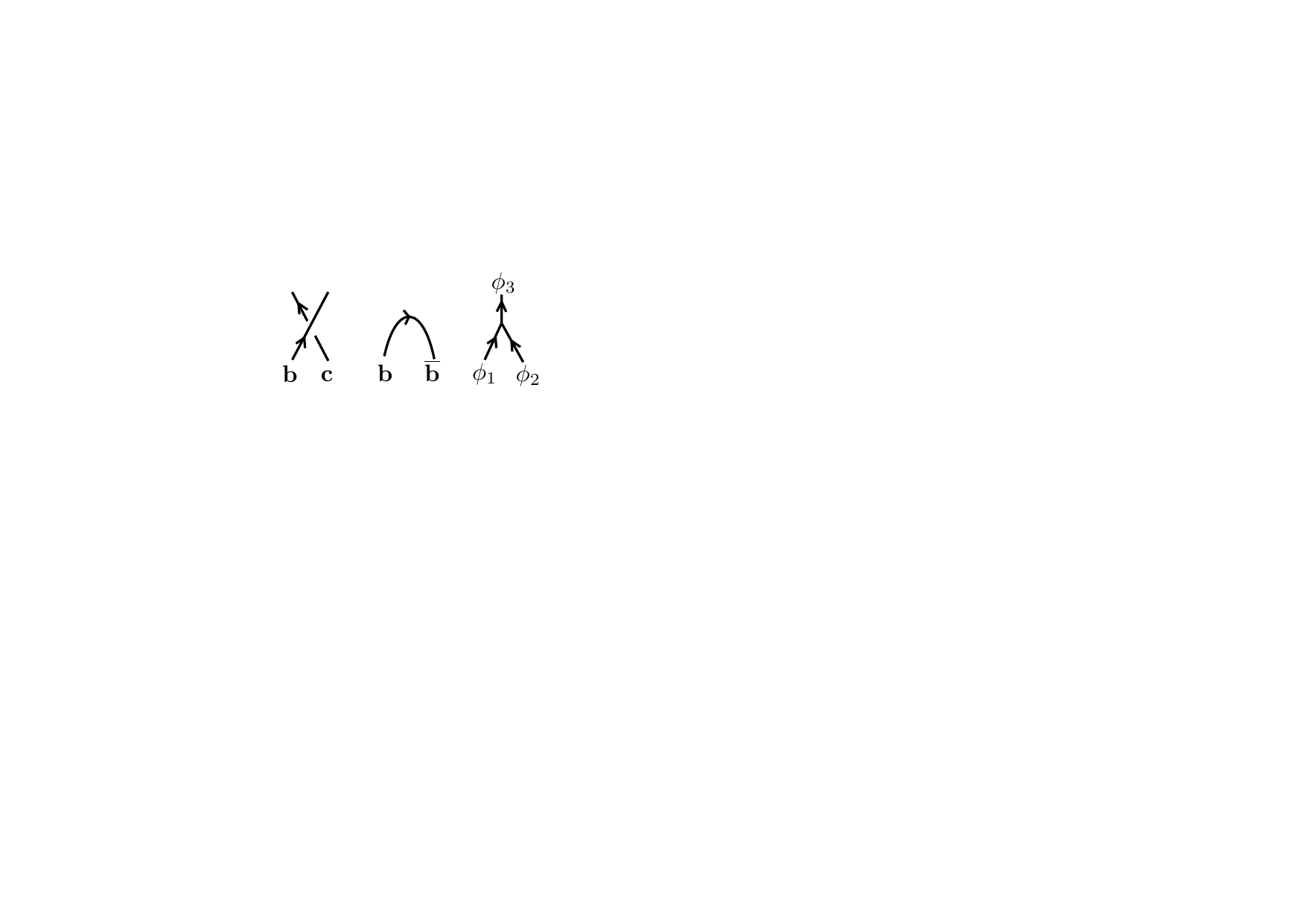}\,.
\end{equation}\\
We can now consider taking the OPE expansion in the simplest possible four-point function:
$$\langle \phi(x_1)\phi(x_2)\phi(x_3)\phi(x_4)\rangle,$$
all four operators being identical scalars $\phi(x) \approx \ba$. Using the OPE expansion as $x_1\rightarrow x_2$ we find that
\begin{equation}\begin{split}
\langle \phi(x_1)\phi(x_2)\phi(x_3)\phi(x_4)\rangle &= \frac 1{x_{12}^{2\Delta}x_{34}^{2\Delta}}\sum_{{\bf b}} \sum_{\fcy O_k\approx{\bf b}} g_{\fcy O_k}(u,v)\,\left [\delta^{{\bf b},{\overline{\bf b}}}\circ (F_{\phi\phi}^{\fcy O_k}\otimes F_{\phi\phi}^{\overline{\fcy O_k}})\right]\,\\
&= \frac 1{x_{12}^{2\Delta}x_{34}^{2\Delta}}\sum_{{\bf b}}\,\sum_{\fcy O_k\approx{\bf b}} g_{\fcy O_k}(u,v)\,\includegraphics[trim=0 1.5em 0 0,scale=0.8]{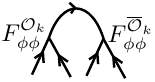}\,.
\end{split}\end{equation}
where $\sum_{\bf b}$ is over such $\bf b$ that both $\bf b$ and $\overline {\bf b}$ are contained in ${\bf a}\times {\bf a}$, and $g_{\fcy O_k}(u,v)$ are the conformal blocks depending on the usual cross-ratios.

But we can instead exchange $x_1\leftrightarrow x_3$ and then take the OPE as $x_2\rightarrow x_3$ to find that:
\begin{equation}\begin{split}
\langle \phi(x_1)\phi(x_2)\phi(x_3)\phi(x_4)\rangle &= \langle \phi(x_3)\phi(x_2)\phi(x_1)\phi(x_4)\rangle \circ \sigma_{13}\\
&= \frac 1{x_{23}^{2\Delta}x_{14}^{2\Delta}}\sum_{{\bf b}} \sum_{\fcy O_k\in{\bf b}} g_{\fcy O_k}\left( v,u \right)\left[ \delta^{{\bf b},{\overline{\bf b}}}\circ (F_{\phi\phi}^{\fcy O_k}\otimes F_{\phi\phi}^{\overline{\fcy O_k}})\circ \sigma_{13}\right] \\
&= \frac 1{x_{23}^{2\Delta}x_{14}^{2\Delta}}\sum_{{\bf b}} \sum_{\fcy O_k\in{\bf b}} g_{\fcy O_k}\left(v,u\right)\,\includegraphics[trim=0 2em 0 0,scale=0.63]{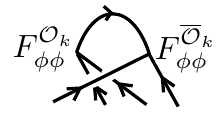}\,.
\end{split}\end{equation}
Here $\sigma_{13}= (\beta_{\ba,\ba}\otimes\text{id}_\ba\otimes\text{id}_\ba)\circ (\text{id}_\ba\otimes\text{id}_\ba\otimes \beta_{\ba,\ba})\circ (\beta_{\ba,\ba}\otimes\text{id}_\ba\otimes\text{id}_\ba)$ is the morphism interchanging the first and the third factors in ${\bf a}^{\otimes 4}$.

Equating these two expressions gives us the crossing equation:
\begin{equation}\label{eq:CrossAb}
v^\Delta \sum_{{\bf b}} \sum_{\fcy O_k\approx{\bf b}} g_{\fcy O_k}(u,v) 
\left [\delta^{{\bf b},{\overline{\bf b}}}\circ (F_{\phi\phi}^{\fcy O_k}\otimes F_{\phi\phi}^{\overline{\fcy O_k}})\right ]\\
= u^\Delta\sum_{{\bf b}} \sum_{\fcy O_k\approx\bf b} g_{\fcy O_k}\left(v,u\right)\left[ \delta^{{\bf b},{\overline{\bf b}}}\circ (F_{\phi\phi}^{\fcy O_k}\otimes F_{\phi\phi}^{\overline{\fcy O_k}})\circ\sigma_{13}\right]\,.
\end{equation}
 This form of the crossing equation may look rather abstract as an equality between two abstract morphisms in a tensor category. By picking a basis of morphisms $f_{1},...,f_{m}$ for the space $\Hom(\ba^{\otimes 4}\rightarrow{\bf 1})$ we can rewrite \eqref{eq:CrossAb} as a series of $m$ equations. Alternatively, we can choose a basis of morphisms $h_1,...,h_m$ for the space $\Hom({\bf 1}\rightarrow\ba^{\otimes 4})$ and impose the conditions
 \begin{equation}\label{eq:CrossCon}
 v^\Delta \sum_{{\bf b}} \sum_{\fcy O_k\approx{\bf b}} g_{\fcy O_k}(u,v) \left [\delta^{{\bf b},{\overline{\bf b}}}\circ (F_{\phi\phi}^{\fcy O_k}\otimes F_{\phi\phi}^{\overline{\fcy O_k}}) \circ h_l \right ]\\
 = u^\Delta\sum_{{\bf b}} \sum_{\fcy O_k\approx\bf b} g_{\fcy O_k}\left(v,u\right)\left[ \delta^{{\bf b},{\overline{\bf b}}}\circ (F_{\phi\phi}^{\fcy O_k}\otimes F_{\phi\phi}^{\overline{\fcy O_k}})\circ\sigma_{13}\circ h_l\right]\,.
 \end{equation}
for each $h_l\in\Hom({\bf 1}\rightarrow\ba^{\otimes 4})$. These two methods are equivalent, as $\Hom({\bf 1}\rightarrow\ba^{\otimes 4})$ is isomorphic to the space of linear functionals on $\Hom(\ba^{\otimes 4}\rightarrow{\bf 1})$, see proposition \ref{pr:ndeg}.

As an illustration of the introduced language, we will now prove the following result. We state it in the general setting of categorical symmetries, but it might be a new result even for ordinary, group symmetries.

\begin{thm}\label{thm:AllObs} {\rm(``Completeness of the global symmetry spectrum'')} If a CFT contains operators $\phi_1\approx\ba_1$ and $\phi_2\approx\ba_2$, then there must be operators transforming in every ${\bf B} \in \ba_1\otimes\ba_2$\,.
\end{thm}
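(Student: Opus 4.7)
The strategy is to study the four-point function
\begin{equation}
G(x_1,\dots,x_4) = \langle \phi_1(x_1)\phi_2(x_2)\overline{\phi_2}(x_3)\overline{\phi_1}(x_4) \rangle \in \Hom(\ba_1\otimes\ba_2\otimes\overline{\ba_2}\otimes\overline{\ba_1} \to {\bf 1})
\end{equation}
and compare its $s$-channel and $t$-channel OPE expansions. By rigidity of $\fcy C$ (appendix \ref{sec:RIG}), composing with the evaluation morphisms gives a natural isomorphism $\Hom(\ba_1\otimes\ba_2\otimes\overline{\ba_2}\otimes\overline{\ba_1} \to {\bf 1}) \cong \Hom(\ba_1\otimes\ba_2 \to \ba_1\otimes\ba_2)$; write $\tilde G(x_i)$ for the image of $G$. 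Fix embedding and projection morphisms $\iota_{\bf B}:{\bf B}\to\ba_1\otimes\ba_2$ and $\pi_{\bf B}:\ba_1\otimes\ba_2\to{\bf B}$ with $\pi_{\bf B}\circ\iota_{\bf B}=\text{id}_{\bf B}$ (available since ${\bf B}$ is a simple summand of $\ba_1\otimes\ba_2$). The proof will compute $\pi_{\bf B}\circ\tilde G(x_i)\circ\iota_{\bf B}\in\Hom({\bf B}\to{\bf B})$ in two ways.

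First I would expand $G$ via the OPE \reef{eq:OPE} in the $s$-channel ($x_1\to x_2$, $x_3\to x_4$). Each term is labelled by a simple ${\bf b}\in\ba_1\otimes\ba_2$ and a primary $\calO_k\approx{\bf b}$, and the associated morphism in $\Hom(\ba_1\otimes\ba_2\otimes\overline{\ba_2}\otimes\overline{\ba_1} \to {\bf 1})$ factors through ${\bf b}\otimes\overline{\bf b}$. Under the rigidity isomorphism each term of $\tilde G$ factors through the corresponding ${\bf b}$. Assuming for contradiction that no operators transforming in ${\bf B}$ appear in the $\phi_1\times\phi_2$ OPE, every contributing ${\bf b}$ is non-isomorphic to ${\bf B}$, and Schur's lemma then forces $\pi_{\bf B}\circ\tilde G\circ\iota_{\bf B}$ to vanish identically as a function of the $x_i$.

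Next I would extract the leading behaviour in the $t$-channel limit $x_1\to x_4$, $x_2\to x_3$. After using the braiding to put the paired operators adjacent, the identity contributions in the two OPEs $\phi_i\times\overline{\phi_i}$ give $\delta^{\ba_i,\overline{\ba_i}}/|x_{i,j}|^{2\Delta_i}$; the resulting morphism in $\Hom(\ba_1\otimes\ba_2\otimes\overline{\ba_2}\otimes\overline{\ba_1} \to {\bf 1})$ is the double evaluation, which by construction corresponds under the rigidity isomorphism precisely to $\text{id}_{\ba_1\otimes\ba_2}$. Consequently
\begin{equation}
\pi_{\bf B}\circ\tilde G(x_i)\circ\iota_{\bf B} \,=\, \frac{\text{id}_{\bf B}}{|x_{14}|^{2\Delta_1}|x_{23}|^{2\Delta_2}} + (\text{subleading in the $t$-channel limit})\,,
\end{equation}
and $\text{id}_{\bf B}\neq 0$ because ${\bf B}$ is a nonzero object. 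This contradicts the identical vanishing derived from the $s$-channel, so operators transforming in ${\bf B}$ must exist.

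The main obstacle is justifying that the $s$- and $t$-channel expansions really represent the same morphism-valued function of the $x_i$: this is the standard OPE-convergence / crossing argument, relying on assumption 5 of section \ref{sec:CFT} together with analytic continuation in the cross ratios, and is what underlies the passage between the two computations of $\pi_{\bf B}\circ\tilde G\circ\iota_{\bf B}$. A secondary subtlety is multiplicity: if ${\bf B}$ occurs more than once in $\ba_1\otimes\ba_2$, the splitting $(\iota_{\bf B},\pi_{\bf B})$ is not unique, but any splitting onto a single simple summand yields the same contradiction and the argument is unchanged.
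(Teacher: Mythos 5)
Your argument is correct and is essentially the paper's proof repackaged: comparing the $\phi_1\times\phi_2$ OPE channel against the identity-dominated channel, and localizing on the simple summand ${\bf B}$ to force a non-zero leading coefficient, is exactly what the paper does by contracting the crossing relation with the morphism $h_{\bf B}$. The only difference is cosmetic -- you use the rigidity isomorphism $\Hom(\ba_1\otimes\ba_2\otimes\overline{\ba_2}\otimes\overline{\ba_1}\to{\bf 1})\cong\Hom(\ba_1\otimes\ba_2\to\ba_1\otimes\ba_2)$ and conjugate by $\iota_{\bf B},\pi_{\bf B}$ to land on $\mathrm{id}_{\bf B}\neq 0$, whereas the paper closes the loop on ${\bf B}$ and invokes $\dim({\bf B})\neq 0$ (proposition \ref{pr:zeroDim}) -- so the content, including the reliance on OPE convergence and the treatment of multiplicity, is the same.
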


\begin{proof} We will restrict ourselves to the case where both $\phi_1$ and $\phi_2$ are scalars, as the generalization to spinning operators is straightforward. Consider the four-point function 
\begin{equation}
\langle\phi_1(x_1)\overline\phi_1(x_2)\phi_2(x_3)\overline\phi_2(x_4)\rangle\,.
\end{equation}
Performing the OPE in the limit $x_1\to x_2$, the leading singular term is proportional to the morphism $\delta^{\ba_1,\overline\ba_1}\otimes\delta^{\ba_2,\overline\ba_2}$.
If we instead perform the OPE expansion between the first and third points, we find an infinite sum of terms 
of the form
\begin{equation}\label{eq:x13Exp}
 \sum_{{\bf b}\in\ba_1\otimes\ba_2}\sum_{\fcy O_k\approx\bf b}  \alpha_{\fcy O_k}\left(x_i\right)
 \raisebox{-1.5em}{\includegraphics[trim=0 0 0 0,scale=0.6]{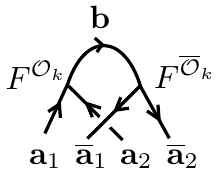}}\,,
 \end{equation}
 where the shown morphism diagrams can be written formally as
 $\delta^{{\bf b},{\overline{\bf b}}}\circ (F^{\fcy O_k({\bf b})}_{{\bf a}_1{\bf a}_2}\otimes F^{\overline{\fcy O}_k({\overline{\bf b}})}_{\overline{\bf a}_1\overline{\bf a}_2})\circ(\text{id}_{\ba_1}\otimes\beta_{\overline\ba_1,\ba_2}\otimes\text{id}_{\overline\ba_2})$. The coordinate-dependent coefficients $\alpha_{\fcy O_k}$ can be expressed via conformal blocks but this is unimportant for the present argument.

Since \eqref{eq:x13Exp} must somehow reproduce the OPE expansion in the limit $x_1\to x_2$, we conclude that there must exist a representation 
\begin{equation}\label{eq:t1CatEq}
\delta^{\ba_1,\overline\ba_1}\otimes\delta^{\ba_2,\overline\ba_2} = 
\raisebox{-1.5em}{\includegraphics[scale=0.6]{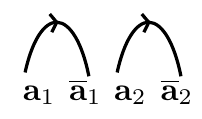}}= 
\sum_{{\bf b}\in\ba_1\otimes\ba_2}\sum_{\fcy O_k\approx\bf b}\beta_{\fcy O_k} \raisebox{-1.5em}{\includegraphics[scale=0.6]{fig-Th2.pdf}}
\end{equation}
for some numerical coefficients $\beta_{\fcy O_k}$. The theorem now reduces to a category theoretic statement, that for any ${\bf b}\in \ba_1\otimes\ba_2$, $\beta_{\fcy O_k}$ must be non-zero for some $\fcy O_k\approx\bf b$ for this representation to hold. 

To prove this last claim, consider any particular simple $\bf B$ which appears in $\ba_1\otimes\ba_2$. This means that there are projection and embedding morphism $\pi:\ba_1\otimes\ba_2\to{\bf B}$ and $\iota:{\bf B}\rightarrow\ba_1\otimes\ba_2$, written diagrammatically as:
\begin{equation}
\pi = \raisebox{-1.5em}{\includegraphics[trim=0 0 0 0,scale=0.6]{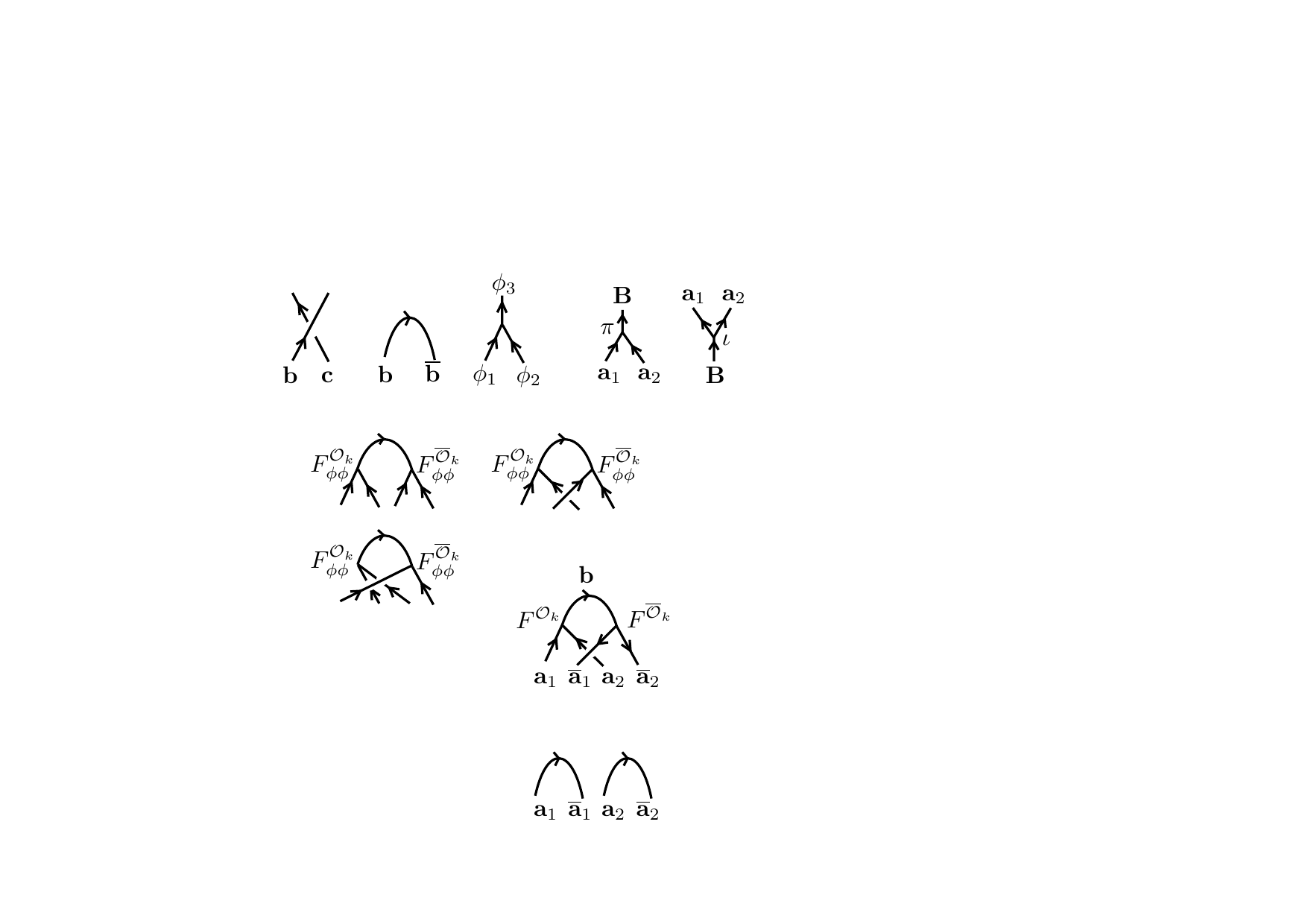}}\,,\qquad \iota = \raisebox{-1.5em}{\includegraphics[trim=0 0 0 0,scale=0.6]{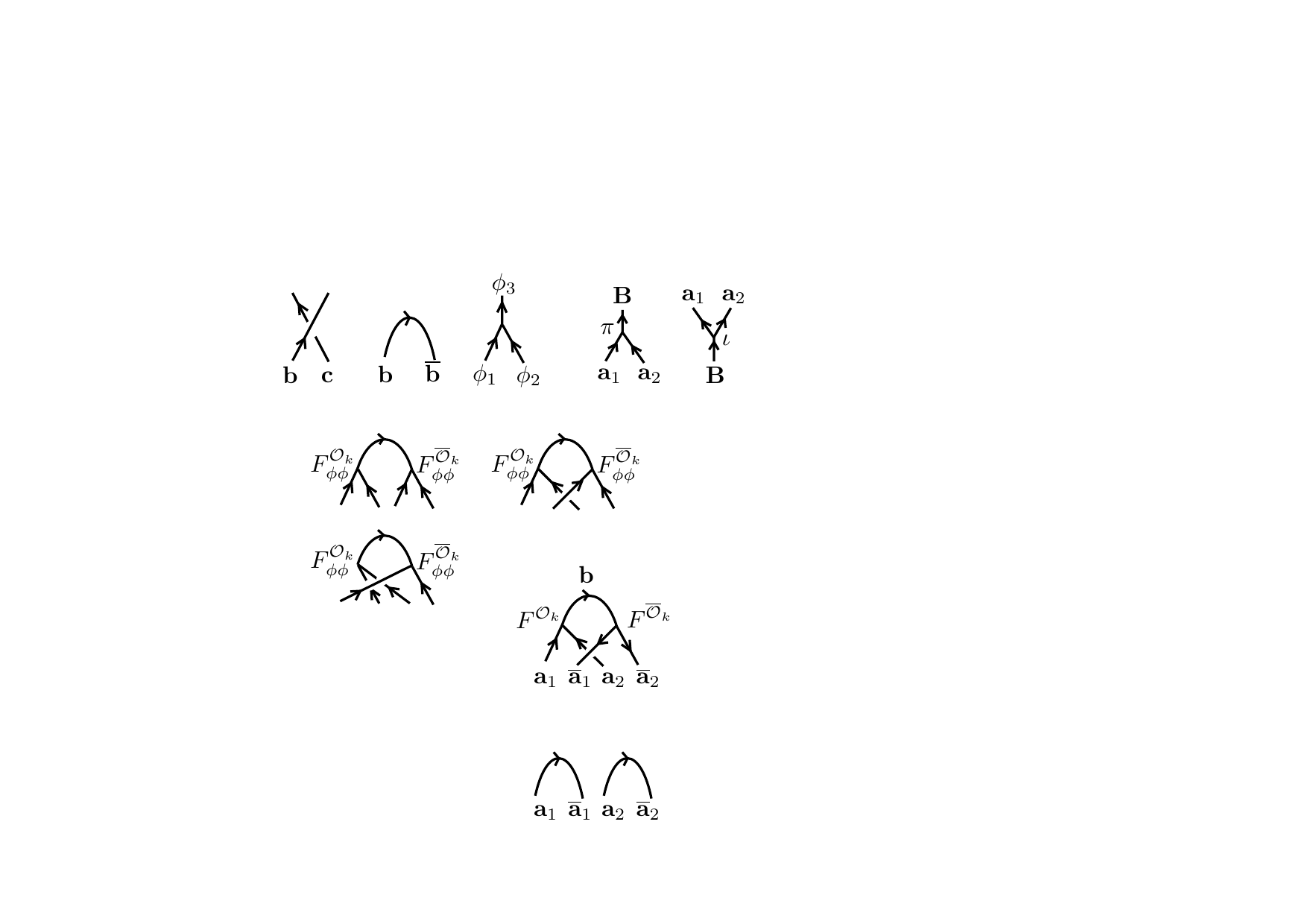}}\,.
\end{equation}
Let us now define the morphism $h_{\bf B}:{\bf 1}\rightarrow \ba^{\otimes 4}$:
 \begin{equation}
 h_{\bf B} = \raisebox{-1.5em}{\includegraphics[scale=0.6]{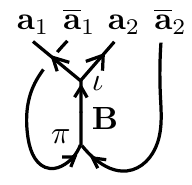}}\,.
\end{equation}
Composing both sides of \eqref{eq:t1CatEq} with $h_{\bf B}$, we find on the l.h.s.
\begin{equation}
 \raisebox{-2.7em}{\includegraphics[scale=0.6]{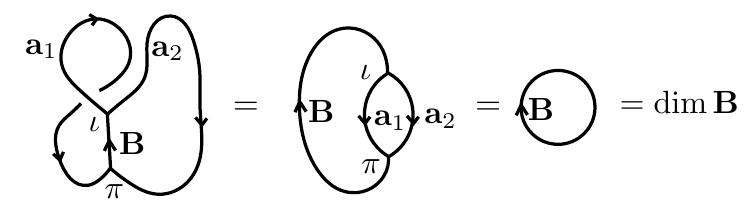}}\,,
 \end{equation}
 where we used among other things that $\pi\circ\iota = \text{id}_{\bf B}$. By proposition \ref{pr:zeroDim},
 all simple objects have non-zero dimensions, and thus the l.h.s. is non-zero.
 
On the r.h.s. we instead find a sum of terms like:
\begin{equation}
 \includegraphics[trim=0 3em 0 0,scale=0.6]{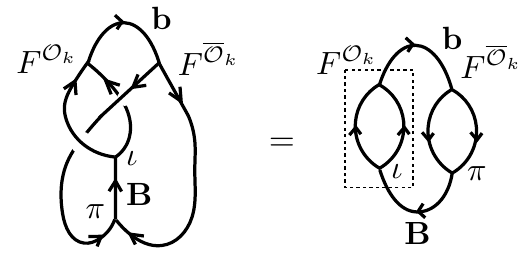}
\end{equation}\\
weighted with the numerical $\beta_{\calO_k}$ coefficients.
Now note that the sub-diagram in the r.h.s.~surrounded by a dotted box
is a morphism $F^{\fcy O_k}_{\phi_1\phi_2}\circ\iota$ between the simple objects $\bf b$ and $\bf B$. This morphism vanishes unless $\bf b\approx\bf B$. We conclude that to match the non-zero l.h.s., the r.h.s.~must contain at least one non-zero term corresponding to some $\fcy O_k\approx\bf B$. This completes the proof.
\end{proof}

We say that an object $\bf g$ \emph{generates} $\fcy C$ if any simple object in $\fcy C$ appears in ${\bf g}^{\otimes k}$ for sufficiently large $k$. Decompose such a $\bf g$ into simple objects: ${{\bf g}\approx \ba_1\oplus...\oplus\ba_n}$.
If a CFT with symmetry $\calC$ has operators $\phi_i\approx\ba_i$ for every $\ba_i$ then, by theorem \ref{thm:AllObs}, \emph{it contains a local operator $\phi(x)\approx\ba$ for every simple object $\ba\in\fcy C$.} We then say that the categorical symmetry is \emph{faithful}. E.g. since both the free and interacting $O(n)$ models contain operators isomorphic to $\bf n$, and since $\bf n$ generates $\Reptilde\,O(n)$, we conclude that $\Reptilde\,O(n)$ is a faithful symmetry of these theories.

For any faithful (in the usual sense) representation $\ba$ of a compact group $G$, $\ba$ and $\overline\ba$ together generate the category $\Rep\,G$. (See \cite{Levy:2003my} for a simple analytic proof and \cite{milneAGS} for a more advanced discussion.) Combining this with theorem \ref{thm:AllObs}, we conclude that given a CFT with a group symmetry $G$, if there is a local operator which transforms faithfully under $G$, then operators transforming under arbitrary representations of $G$ must occur in the spectrum, and $\Rep\,G$ is faithful.

Let us now consider the case where $\fcy C$ is not faithful. We can then define a set $D$ which contains the simple objects which are isomorphic to local operators. By theorem \ref{thm:AllObs} the tensor product of any two objects in $D$ will be the direct sum of objects in $D$, and so we see that the objects in $D$ and their direct sums form a subcategory $\fcy D\subset\fcy C$. By construction, $\fcy D$ is a faithful symmetry of our theory. For this reason we can always restrict to theories with faithful symmetries with no loss of generality.

\subsection{Reality and unitarity}
\label{sec:unitarity}
Before defining unitarity, or rather its Euclidean counterpart reflection positivity, let us consider the simpler notion of complex conjugation. Given any representation $\rho:G\rightarrow GL(\mathbb C^k)$ we can define a conjugate representation by simply taking the complex conjugate of $\rho$. In more abstract language, this associates to each representation $\ba\in\Rep\,G$ a conjugate representation $\ba^* \approx \overline\ba$. Furthermore for any invariant tensor $f:\ba\rightarrow{\bf b}$ we can complex conjugate it to construct a tensor $f^*:\ba^*\rightarrow{\bf b}^*$. The braiding is trivially preserved by this:
\begin{equation}
\beta^*_{\ba,{\bf b}} = \beta_{\ba^*,{\bf b}^*}
\end{equation}
 
We would like to extend this definition to any symmetric tensor category $\fcy C$. Let us define a \emph{conjugation} $*$ as a anti-linear braided monoidal functor $*:\fcy C\rightarrow\fcy C$ which satisfies $** = \text{id}_{\fcy C}$, and for which $\ba^*$ is dual to $\ba$.\footnote{While $\ba^*$ is isomorphic to $\overline\ba$, it is not necessarily equal to $\overline\ba$, and for this reason we shall use $\ba^*$ rather than $\overline\ba$ when referring to the dual of an object in a category with conjugation. Indeed $\overline \ba$ for simple objects is defined up to a rescaling by a complex number, and $\ba^*$ is a representative in the isomorphism class for which \eqref{eq:braidDelta} is true. This may seem overly pedantic, but it helps us avoid confusion between $\delta^{\ba,\ba^*}$ (which we require to satisfy \eqref{eq:braidDelta}), and $\delta^{\ba,\overline\ba}$ (which we do not).} As we show in proposition \ref{pr:realCap}, for every simple object $\ba\in\fcy C$ there exist cap and cocap maps satisfying
\begin{equation}\label{eq:braidDelta}
(\delta^{\ba^*,\ba})^* = \delta^{\ba,\ba^*}\,,\quad(\delta_{\ba,\ba^*})^* = \delta_{\ba^*,\ba}\,.
\end{equation}
 
As in \cite{Gorbenko:2018ncu}, we will say that a quantum field theory is \emph{real} if there exists a map $*$ acting on local operators:
\begin{equation*}
(\phi(\ba,x))^* = \phi^*(\ba^*,x)
\end{equation*}
which is involutive
\begin{equation}\phi^{**}(x) = \phi(x),\end{equation}
and such that 
\begin{equation}
(\langle \phi_1(x_1)...\phi_n(x_n)\rangle)^* = \langle \phi_1^*(x_1)...\phi_n^*(x_n)\rangle.
\end{equation}
 
Let us first consider the implications of reality for two-point functions. As in the previous section can we consider the space $V_{\Delta,\ba}$ of local operators $\phi(x)\approx\ba$ with conformal dimension $\Delta$. Given any basis $\phi_1,...,\phi_n$ of $V_{\Delta,\ba}$ we can then compute:
\begin{equation}\label{eq:2ptconj}
\langle \phi_i(x)\phi_j^*(y)\rangle = \frac{M_{ij}}{|x-y|^{2\Delta}}\delta^{\ba,\ba^*}
\end{equation}
for some matrix $M_{ij}$. By conjugating both sides of this equation we find that
\begin{equation}\label{eq:tpConj}\begin{split}
\langle \phi_i^*(x)\phi_j(y)\rangle = \frac{M_{ij}^*}{|x-y|^{2\Delta}}\delta^{\ba,\ba^*}\,,
\end{split}\end{equation}
while by interchanging the two operators we find that:
\begin{equation}\label{eq:tpBraid}
\langle \phi_i^*(x)\phi_j(y)\rangle = \langle \phi_j(y)\phi_i^*(x)\rangle \circ\beta_{\ba^*,\ba} = \frac{M_{ji}}{|x-y|^{2\Delta}}\delta^{\ba^*,\ba}\,,
\end{equation}
and so we can conclude that $M_{ij}$ is Hermitian. We can therefore always choose a basis of $V_{\Delta,\ba}$ such that
\begin{equation}\label{eq:conjbasis}
\langle \phi_i(x)\phi_j^*(y)\rangle = \frac{\delta_{ij} \fcy N_{\phi_i}}{|x-y|^{2\Delta}}\delta^{\ba,\ba^*} \text{ with } \fcy N_{\phi_i} = \pm1\,.
\end{equation}
From now on we shall assume that such a basis has been chosen for each $V_{\Delta,\ba}$.

We should note that despite superficial similarities, \eqref{eq:conjbasis} is quite distinct from the basis \eqref{eq:nicebasis} used in the previous section. In a general CFT there is no relationship between correlators involving $\phi_i(x)$ and those of $\overline\phi_i(x)$. Furthermore the linear map we constructed between $V_{\Delta,\ba}$ and $V_{\Delta,\overline\ba}$ depended on our choice of basis for $V_{\Delta,\ba}$. By contrast, in real CFT conjugation gives us a natural mapping between $V_{\Delta,\ba}$ and $V_{\Delta,\overline\ba}$ and this turns $V_{\Delta,\ba}$ into an indefinite Hilbert space. Conjugation also relates correlators of $\phi(x)$ and $\phi^*(x)$ with any number of operators.

Let us now consider the constraints imposed by reality on three-point functions. Consider a $\fcy C$ singlet $\Phi_l(x)$ with spin $l$. We have
\begin{equation}\label{eq:3ptUnit}
\langle \phi^*(x)\phi(y)\Phi_l(z) \rangle = \delta^{\ba^*,\ba}\, f_{\phi^*\phi\Phi}\, S_l(x,y,z)
\end{equation}
where $S_l(x,y,z)$ is a real function of $x,y,z$ which is completely fixed by conformal invariance. In particular, it satisfies the crossing property
\begin{equation}
S_l(y,x,z) = (-1)^lS_l(x,y,z).
\end{equation}
{Then, by the same logic we used to constrain $\fcy N_\phi$, we find a relation:
\beq
(f_{\phi^*\phi\Phi})^*=(-1)^l  f_{\phi^*\phi\Phi^*}\,.
\eeq 
Suppose further that $\Phi_l^*=\Phi_l$ (such fields are called \emph{real}).} Then the previous equation implies that $f_{\phi^*\phi\Phi}$ is real if $l$ is even and imaginary if $l$ is odd.

Having defined what it means for a theory to be real, let us move on to unitarity. In Euclidean signature this manifests as reflection positivity, which for a regular quantum field theory states that
\begin{equation}\label{eq:Uni}
\langle \phi_n^*(-\tau_n,\vec x_n)\ldots \phi_1^*(-\tau_1,\vec x_1)\phi_1(\tau_1,\vec x_1)\ldots \phi_n(\tau_n,\vec x_n) \rangle \geq 0.
\end{equation}
for any positive $\tau_i$ and any $\vec x_i$.\footnote{In fact, full reflection positivity is a stronger condition, which involves integrating Eq.~\reef{eq:Uni} with reflection-symmetric test functions, and also considering linear combinations of $(m+n)$-point functions \cite{osterwalder1973,osterwalder1975}. Here we will just consider the partial case \reef{eq:Uni} for simplicity.} 

For fields with non-trivial spin, the $SO(d)$ indices of fields in the l.h.s. of \reef{eq:Uni} have to be contracted with external polarization tensors in the conjugation-reflection-symmetric way. E.g.~a certain $(\phi_i)_\mu$ is contracted with $\xi^\mu$ then the corresponding $(\phi^*_i)_\mu$ has to be contracted with $(\theta \xi^*)^\mu$ where $\theta=\text{diag}(-1,1,\ldots,1)$ is the reflection. Positivity should then hold for all possible such contractions.

If some fields in the l.h.s.~of \reef{eq:Uni} transform in non-trivial global symmetry representations, those indices should also be contracted with external global symmetry tensors, in a conjugation-symmetric way.
	
When generalizing reflection positivity to quantum field theory with a categorical symmetry, we have to think how to implement the latter property. We do this by requiring the positivity condition
\begin{equation}
\langle \phi_n^*(-\tau_n,\vec x_n)\ldots\phi_1^*(-\tau_1,\vec x_1)\phi_1(\tau_1,\vec x_1)\ldots\phi_n(\tau_n,\vec x_n) \rangle \circ U \geq 0
\end{equation}
{where $U\in \text{Hom}({\bf 1}\to \ba_n^*\otimes\ldots \ba_1^*\otimes \ba_1\otimes\ldots \ba_n)$ is any \emph{conjugation-reflection-symmetric} morphism, i.e. one satisfying $U^*=R U$ 
	where $R$ is the morphism from $\ba_n \otimes\ldots \ba_1\otimes \ba_1^*\otimes\ldots \ba_n^*$ to the same tensor product in the opposite order ($\ba_n^*\otimes\ldots \ba_1^*\otimes \ba_1\otimes\ldots \ba_n$)
which just connects reflection-symmetrically the tensor product factors by identities ($\ba_i$ to $\ba_i$, $\ba_i^*$ to $\ba_i^*$). One example of such $U$ can be constructed using the cocap maps:
\begin{equation}
U = \includegraphics[trim=0 3em 0 0,scale=0.4]{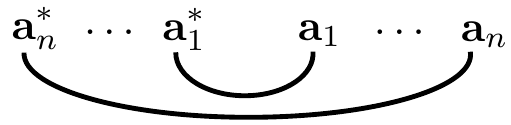}\,;
\end{equation}\\[-0.5em]
many others can be constructed by decomposing the tensor product $\ba_1\otimes\ldots \ba_n$ into simple objects. }

Restricting our attention to singlets, we can use the simplest definition of reflection positivity \reef{eq:Uni}, which in particular requires that, for any primary scalar $\phi\approx\bf 1$:
\begin{equation}\label{eq:singUnit}
\langle \phi^*(x)\phi(y)\rangle = \frac{\fcy N_\phi}{|x-y|^{2\Delta}} \text{ with } \ \fcy N_\phi > 0\,.
\end{equation}
This generalizes for non-scalar singlets, whose two-point function includes a conformally invariant tensor structure. Imposing reflection positivity for descendants, we obtain the usual unitarity bounds on scaling dimensions.

With these natural definitions, it turns out that none of the CFTs with tensor categorical symmetries can be reflection positive, unless it's an ordinary group symmetry. Moreover, the lack of reflection positivity manifests itself even in the singlet sector. More precisely we have the following theorem:

\begin{thm}\label{thm:nonUnitary} {\rm (``Lack of unitarity'')} If a real CFT has a faithful tensor categorical symmetry $\fcy C$, has a reflection-positive singlet sector, and satisfies the technical conditions listed at the beginning of the previous section, then $\fcy C$ is equivalent to $\Rep\,G$ for some group $G$.
\end{thm}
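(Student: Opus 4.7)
This theorem is the CFT incarnation of Deligne's theorem \cite{Deligne}: a symmetric tensor category satisfying appropriate finiteness hypotheses is equivalent to $\Rep\,G$ for a compact group $G$ if and only if every simple object has \emph{positive} categorical dimension. I would therefore reduce the proof to the claim that reflection positivity of the singlet sector forces $\dim(\ba)>0$ for every simple $\ba\in\fcy C$, and then invoke Tannakian reconstruction. That this is the right mechanism is suggested by the fact that in $\Reptilde\,O(n)$ for non-integer $n$ the antisymmetric projectors $P_{\bA^k}$ of \reef{eq:PAk} acquire negative dimension for sufficiently large $k$, so Deligne's criterion is violated exactly for the genuinely categorical symmetries.

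\paragraph{Step 1: a singlet operator whose norm detects $\dim(\ba)$.}
Faithfulness provides, for each simple $\ba$, a primary $\phi\approx\ba$ with $\langle\phi(x)\phi^*(y)\rangle = \mathcal{N}_\phi\,\delta^{\ba,\ba^*}/|x-y|^{2\Delta}$ and $\mathcal{N}_\phi=\pm 1$. I would form the $\fcy C$-singlet bilocal
\[
\mathcal{O}(x,y) \;=\; \bigl[\phi^*(y)\otimes\phi(x)\bigr]\circ\delta_{\ba,\ba^*}
\]
and compute its two-point function by contracting the four-point function $\langle\phi^*\phi\phi^*\phi\rangle$ with $\delta_{\ba,\ba^*}\otimes\delta_{\ba,\ba^*}$. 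In the generalized-free limit the two admissible $\phi\phi^*$ Wick pairings produce categorical string diagrams with different numbers of closed loops: one pairing closes two loops and contributes $\mathcal{N}_\phi^2\,\dim(\ba)^2$ times a spacetime factor, the other closes a single loop and contributes $\mathcal{N}_\phi^2\,\dim(\ba)$ times a different spacetime factor. In the interacting theory these correspond to identity exchange in two inequivalent OPE channels of the singlet-contracted four-point function, and the respective categorical coefficients differ by exactly one factor of $\dim(\ba)$.

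\paragraph{Step 2: positivity and Tannakian reconstruction.}
Subtracting the two-loop disconnected contribution -- equivalently, normal-ordering so that $\mathcal{O}$ is orthogonal to the vacuum, or isolating the cross-channel identity block -- leaves the single-loop piece as the dominant coefficient in an appropriate kinematic limit. Singlet reflection positivity applied to this piece gives $\mathcal{N}_\phi^2\,\dim(\ba)\ge 0$, and proposition \ref{pr:zeroDim} (simple objects in a semisimple category have nonzero dimension) upgrades this to $\dim(\ba)>0$ for every simple $\ba$. The assumed finite multiplicity below any $\Delta$ then translates into a subexponential growth bound on $\ba^{\otimes k}$, so Deligne's theorem applies: $\fcy C\approx\Rep\,G$ for a unique affine group scheme $G$, and the conjugation $*$ together with reflection positivity on singlets selects a compact real form.

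\paragraph{Main obstacle.}
The delicate point is extracting the single-loop coefficient cleanly in a genuinely interacting CFT, where the normal-ordered $\mathcal{O}$ contains admixtures of many intermediate singlet primaries and the generalized-free ``Wick'' decomposition is only schematic. I would handle this by decomposing the singlet-contracted four-point function channel by channel into conformal blocks, observing that the identity blocks in the direct and crossed channels carry the distinct categorical weights $\dim(\ba)^2$ and $\dim(\ba)$ computed above, and then using reflection positivity of the radial-quantization inner product on the singlet subspace to read off positivity of each block coefficient independently. A secondary concern is deriving the subexponential-growth hypothesis of Deligne's theorem from the CFT finiteness axioms themselves, rather than assuming it.
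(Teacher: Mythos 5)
Your proposal matches the paper's proof in essence: both compare the identity exchange in the two OPE channels of the singlet-contracted four-point function $\langle\phi\phi^*\phi\phi^*\rangle$, exploit the asymmetry between the factors $\dim(\ba)$ and $\dim(\ba)^2$ they carry after contraction with a singlet morphism, combine reflection positivity of singlet block coefficients with faithfulness (hence Theorem~\ref{thm:AllObs}) to force $\dim(\ba)>0$ for every simple $\ba$, and then invoke Deligne's theorem (Proposition~\ref{pr:CRepG}). Your ``main obstacle'' resolution --- channel-by-channel conformal block decomposition with positivity of each coefficient, rather than a single dominating kinematic limit --- is precisely the route taken by the paper's Proposition~\ref{pr:negDimCFT}; and your secondary concern about subexponential growth is moot, since once all categorical dimensions are positive they are forced to be integers (section~\ref{sec:PosCat}) and $\mathrm{length}(\ba^{\otimes k})\le\dim(\ba)^k$ follows automatically.
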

\noindent This follows by combining Theorem \ref{thm:AllObs} with the following two results:
\begin{prop}\label{pr:negDimCFT}
	Let $\phi(\ba,x)$ be an operator in a real CFT such that $\mathrm{dim}(\ba)<0$. Then the $\phi \times \phi^*$ OPE contains an operator $\Phi(x)$ which transforms trivially under $\fcy C$ and which violates reflection positivity (either because $\fcy N_\Phi < 0$ or because $\Delta_\Phi$ violates unitarity bounds). 
\end{prop}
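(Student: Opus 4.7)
The plan is to construct an explicit singlet operator $\Phi$ in the $\phi\times\phi^*$ OPE and show that its normalization constant $\fcy N_\Phi$ carries the sign of $\dim(\ba)$, so that $\dim(\ba)<0$ forces $\fcy N_\Phi<0$, in direct violation of the singlet reflection-positivity condition \eqref{eq:singUnit}.

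First I would identify the candidate $\Phi$. Since $\ba\otimes\ba^*\supset{\bf 1}$ by rigidity, the singlet channel of the $\phi\times\phi^*$ OPE is always populated,
\[
\phi(\ba,y)\,\phi^*(\ba^*,0) \;\supset\; \delta^{\ba,\ba^*}\Bigl[\tfrac{\fcy N_\phi}{|y|^{2\Delta_\phi}}{\bf 1}\;+\;\sum_{\Phi\ne{\bf 1}}\tfrac{c_\Phi}{|y|^{2\Delta_\phi-\Delta_\Phi}}\Phi(0)\;+\;\dots\Bigr],
\]
and an operational definition of the lowest-dimensional non-identity singlet is the regulated composite
\[
\Phi(x)\;=\;\lim_{y\to x}\Bigl[\,(\phi(y)\phi^*(x))\circ\delta^{\ba,\ba^*}\;-\;\tfrac{\fcy N_\phi\dim(\ba)}{|y-x|^{2\Delta_\phi}}\,\Bigr],
\]
which in the free limit reduces to the normal-ordered product $:\!\phi\phi^*\!:$. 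If $\Delta_\Phi$ violates the scalar unitarity bound, we are already done; otherwise I proceed to compute $\fcy N_\Phi$.

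Second, I would compute $\fcy N_\Phi$ by analyzing the four-point function $G=\langle\phi(x_1)\phi^*(x_2)\phi(x_3)\phi^*(x_4)\rangle$ in two different ways. Expanded via the $s$-channel OPE, the $\Phi$ exchange contributes
\[
(f_\Phi^2/\fcy N_\Phi)\,g_\Phi(u,v)\;\cdot\;\delta^{\ba,\ba^*}_{12}\otimes\delta^{\ba,\ba^*}_{34}
\]
up to a standard kinematic prefactor. Evaluated alternatively by a Wick-type pairing connecting $\phi(x_1)$ with $\phi^*(x_4)$ and $\phi^*(x_2)$ with $\phi(x_3)$ (equivalently, the $u$-channel identity exchange), the contribution has morphism structure $\delta^{\ba,\ba^*}_{14}\otimes\delta^{\ba^*,\ba}_{23}$ with coefficient $\fcy N_\phi^2$. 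The crucial Brauer-loop identity is that composing this crossed morphism with the dual of the direct one, namely $(\delta_{\ba,\ba^*}\otimes\delta_{\ba,\ba^*})$, traces out a single closed $\ba$-loop of value $\dim(\ba)$, while the analogous self-composition of the direct morphism yields $\dim(\ba)^2$. Matching the two OPE representations of $G$ and keeping track of these Brauer factors yields
\[
\fcy N_\Phi \;=\; (\text{positive kinematic factor})\cdot\fcy N_\phi^2\cdot\dim(\ba).
\]
In the free-field limit this follows at once from a single connected Wick contraction, whose internal-symmetry combinatorics form exactly one closed $\ba$-loop weighted by $\dim(\ba)$.

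Since $\fcy N_\phi^2>0$ automatically, the formula gives $\mathrm{sign}(\fcy N_\Phi)=\mathrm{sign}(\dim(\ba))$, so $\dim(\ba)<0$ implies $\fcy N_\Phi<0$ and \eqref{eq:singUnit} fails. The main obstacle is to control, outside the free case, the decoupling of the categorical loop factor $\dim(\ba)$ from the conformal-block/kinematic factors and to guarantee that the composite $\Phi$ survives as a genuine primary after identity subtraction. The natural tool is OPE associativity together with the fact that categorical symmetry forces every correlator to factorize into a kinematic piece times a Brauer-diagram piece; the latter always carries the characteristic loop factor $\dim(\ba)$, so the sign correlation between $\fcy N_\Phi$ and $\dim(\ba)$ persists in the fully interacting setting.
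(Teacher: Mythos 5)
Your intuition for why $\dim(\ba)$ enters with the right sign is correct: a closed Brauer $\ba$-loop appears when you compare the identity contribution in one OPE channel against the singlet contributions in another. This is also the mechanism underlying the paper's proof. However, the specific route you propose has a genuine gap that the paper avoids by design.

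The problem is the claim that you can ``match the two OPE representations of $G$'' and extract a closed formula $\fcy N_\Phi = (\text{positive kinematic factor})\cdot\fcy N_\phi^2\cdot\dim(\ba)$ for a \emph{single} singlet $\Phi$. In an interacting CFT, the crossed channel does not have a single dominant singlet: it is an infinite sum over all singlet primaries $\fcy O_k$, each contributing $\fcy N_{\fcy O_k}\dim(\ba)^2 |f_{\phi\phi^*\fcy O_k}|^2 g_{\fcy O_k}(u,v)$. Crossing constrains that whole sum, not any one term, so there is no derivation that pins down any individual $\fcy N_\Phi$. Your regularized composite $\Phi=\lim_{y\to x}[(\phi\phi^*)\circ\delta^{\ba,\ba^*}-\dots]$ is not in general a single primary either (it is some linear combination of primaries and descendants), so a formula for ``its'' $\fcy N_\Phi$ is not well posed. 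The appeal to ``categorical factorization'' at the end does not rescue this; the factorization is true of each individual correlator, but does not imply the term-by-term sign identity you need.

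The statement being proved is an \emph{existence} statement, and the paper's proof is structured accordingly: it contracts the crossing equality (a morphism identity) with a specific morphism ${\bf 1}\to\ba\otimes\ba^*\otimes\ba\otimes\ba^*$ so that the LHS (identity exchange, $x_2\to x_1$) becomes $\fcy N_\phi^2\dim(\ba)<0$ — a single loop — while on the RHS only singlet exchanges survive and each contributes $\fcy N_{\fcy O_k}\dim(\ba)^2|f_{\phi\phi^*\fcy O_k}|^2$ times a conformal block. Then one restricts to a reflection-positive kinematic configuration, where blocks of operators above the unitarity bound are positive. The conclusion is reached by contradiction: if every $\fcy N_{\fcy O_k}\ge 0$ and every $\Delta_{\fcy O_k}$ is above the bound, the RHS is a sum of nonnegative terms and cannot equal a strictly negative number. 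The two missing pieces in your proposal are precisely these: (i) the explicit morphism contraction that converts the crossing \emph{morphism} equation into a scalar inequality, and (ii) the use of positivity of conformal blocks (not free-field Wick counting) to argue that some $\fcy O_k$ must break the positive sum. Without these, the claimed per-operator formula for $\fcy N_\Phi$ is an overreach and the argument does not close.
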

\begin{prop}\label{pr:CRepG}
 	If a symmetric tensor category $\fcy C$ contains no negative dimensional objects, then it is equivalent to $\Rep\,G$ for some group $G$.
\end{prop}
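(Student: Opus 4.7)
The plan is to invoke Deligne's super-Tannakian reconstruction theorem \cite{Deligne} and then use the positivity of dimensions to rule out the honest super-group case. This reduces a representation-theoretic classification problem to two well-separated pieces: a deep existence statement (the fiber functor) and a short combinatorial observation (positivity forces triviality of the parity).

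First I would appeal to Deligne's theorem. Under the implicit finiteness hypotheses of our setting---semisimplicity (guaranteed for $n \notin \bZ$ in our examples), finite-dimensional Hom spaces, and moderate growth of the lengths of tensor powers---every symmetric tensor category $\fcy C$ over $\bC$ admits a symmetric monoidal exact faithful fiber functor $\omega: \fcy C \to \mathrm{sVec}$ into finite-dimensional super-vector spaces. Combined with classical Tannakian reconstruction, this produces an equivalence $\fcy C \approx \Rep(G,\epsilon)$, where $G$ is an affine (pro-)algebraic super-group scheme and $\epsilon \in G(\bC)$ is a central involution whose action on each object implements its $\bZ/2$-grading.

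Next I would use the no-negative-dimension hypothesis to descend from $\mathrm{sVec}$ to $\mathrm{Vec}$. Because $\epsilon$ is central, for every object $\ba \in \Rep(G,\epsilon)$ the parity decomposition $\ba = \ba_0 \oplus \ba_1$ into the $\pm 1$-eigenspaces of $\rho(\epsilon)$ is automatically a decomposition into sub-objects of $\fcy C$. The categorical dimension of the odd summand $\ba_1$ is the negative of its ordinary linear dimension, hence strictly negative whenever $\ba_1 \neq 0$. By hypothesis $\fcy C$ contains no negative-dimensional objects, so $\ba_1 = 0$ for every $\ba$. Therefore $\omega$ factors through the inclusion $\mathrm{Vec} \hookrightarrow \mathrm{sVec}$, the super-group $(G,\epsilon)$ is in fact ordinary (equivalently, $\epsilon$ acts trivially on every representation, so one may quotient by it), and the classical Tannakian reconstruction yields $\fcy C \approx \Rep\,G$ for an ordinary affine group scheme $G$.

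The main obstacle in this plan is Step 1, which is genuinely deep: constructing the super-fiber functor $\omega$ occupies the bulk of \cite{Deligne}, relying on a careful analysis of Schur functors and on the fact that sub-exponential growth of tensor powers in characteristic zero forces semisimplicity and the existence of $\omega$. For a reader-friendly exposition one essentially has to invoke this as a black box. Step 2, by contrast, is a two-line argument once the super-Tannakian framework is available; the positivity hypothesis is precisely the feature that excludes the super case and reduces the classification to ordinary representation theory of a group---exactly the statement we wanted.
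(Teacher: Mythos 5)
Your route is fundamentally the same as the paper's: invoke Deligne's super-Tannakian classification (Appendix~\ref{sec:Deligne}) and then use positivity to kill the super part. Your Step~2 is correct and clean: since $\epsilon$ is central, each object decomposes as $\ba_0\oplus\ba_1$ into sub-objects, the odd summand has negative categorical dimension, so positivity forces $\ba_1=0$ and the fiber functor factors through $\mathrm{Vec}\hookrightarrow\mathrm{sVec}$, giving $\fcy C\approx\Rep\,G$.

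The gap is in Step~1, and it is not cosmetic. You list ``moderate growth of the lengths of tensor powers'' among the ``implicit finiteness hypotheses of our setting,'' but subexponential growth is \emph{not} one of the proposition's hypotheses. The only assumptions are the background axioms of a tensor category ($\bC$-linear, semisimple, rigid, monoidal, ${\bf 1}$ simple) plus the absence of negative-dimensional objects. Deligne's theorem genuinely needs the growth bound, and in this setting the growth bound has to be \emph{earned from positivity}, which is a second, separate use of the hypothesis that your proof never makes. The missing argument (Appendix~\ref{sec:PosCat} and~\ref{sec:Deligne} of the paper) is short: if $\dim(\ba)=\alpha$ were not a non-negative integer, the dimension formula
\begin{equation*}
\dim\bigl(\Lambda^k(\ba)\bigr)=\frac{\alpha(\alpha-1)\cdots(\alpha-k+1)}{k!}
\end{equation*}
would eventually become negative, contradicting positivity; hence all dimensions are non-negative integers, hence $\mathrm{length}(\ba^{\otimes k})\leq\dim(\ba)^k$, which is subexponential growth. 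Without this chain, you are asserting the existence of the super-fiber functor on a hypothesis the proposition never granted you; inserting it makes your argument complete and identical in substance to the one the paper relies on.
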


\begin{proof}
	Proposition \ref{pr:CRepG} is a consequence of a theorem by Deligne, which classifies all symmetric tensor categories satisfying a certain finiteness condition, see Appendix \ref{sec:Deligne}.
	
	Let us prove proposition \ref{pr:negDimCFT} for $\phi$ a primary scalar; see below for the changes needed otherwise. Consider the correlator $\langle\phi(x_1)\phi^*(x_2)\phi(x_3)\phi^*(x_4)\rangle$. 
Like in our proof of Theorem \ref{thm:AllObs}, the short-distance limit of the OPE expansion in the $2\to1$ channel is dominated by the unit operator contribution
\beq
\fcy N_\phi^2 |x_{12}|^{-2\Delta_\phi}|x_{34}|^{-2\Delta_\phi} \delta^{\ba,\ba^*}\otimes\delta^{\ba,\ba^*}\,. 
\eeq
This must be reproduced by an infinite sum of terms from the $2\to 3$ channel, of the form
\beq
\sum_{{\bf b}\in\ba\otimes\ba^*}\sum_{\fcy O_k\approx\bf b} \fcy N_{\fcy O_k} \includegraphics[trim=0 3em 0 0,scale=0.5]{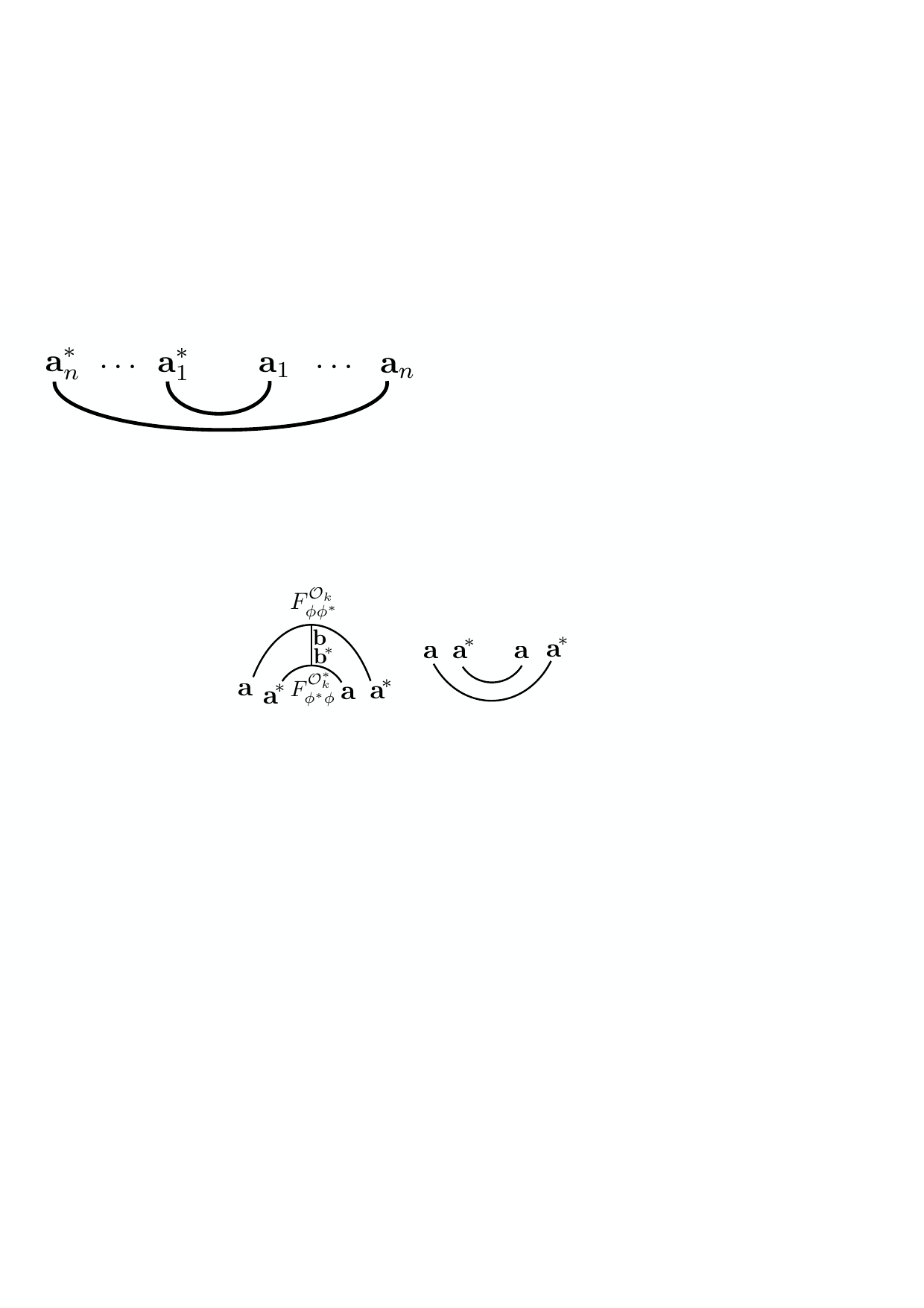} \times \text{(conformal blocks)}\,,
\eeq
where the diagram represents the morphism $\delta^{\bf b,\bf b^*}\circ(F_{\phi\phi^*}^{\fcy O_k}\otimes F_{\phi^*\phi}^{{\fcy O_k}^*})$. We then contract both parts of the crossing equation with the morphism 
\begin{equation}
\includegraphics[trim=0 2em 0 0,scale=0.4]{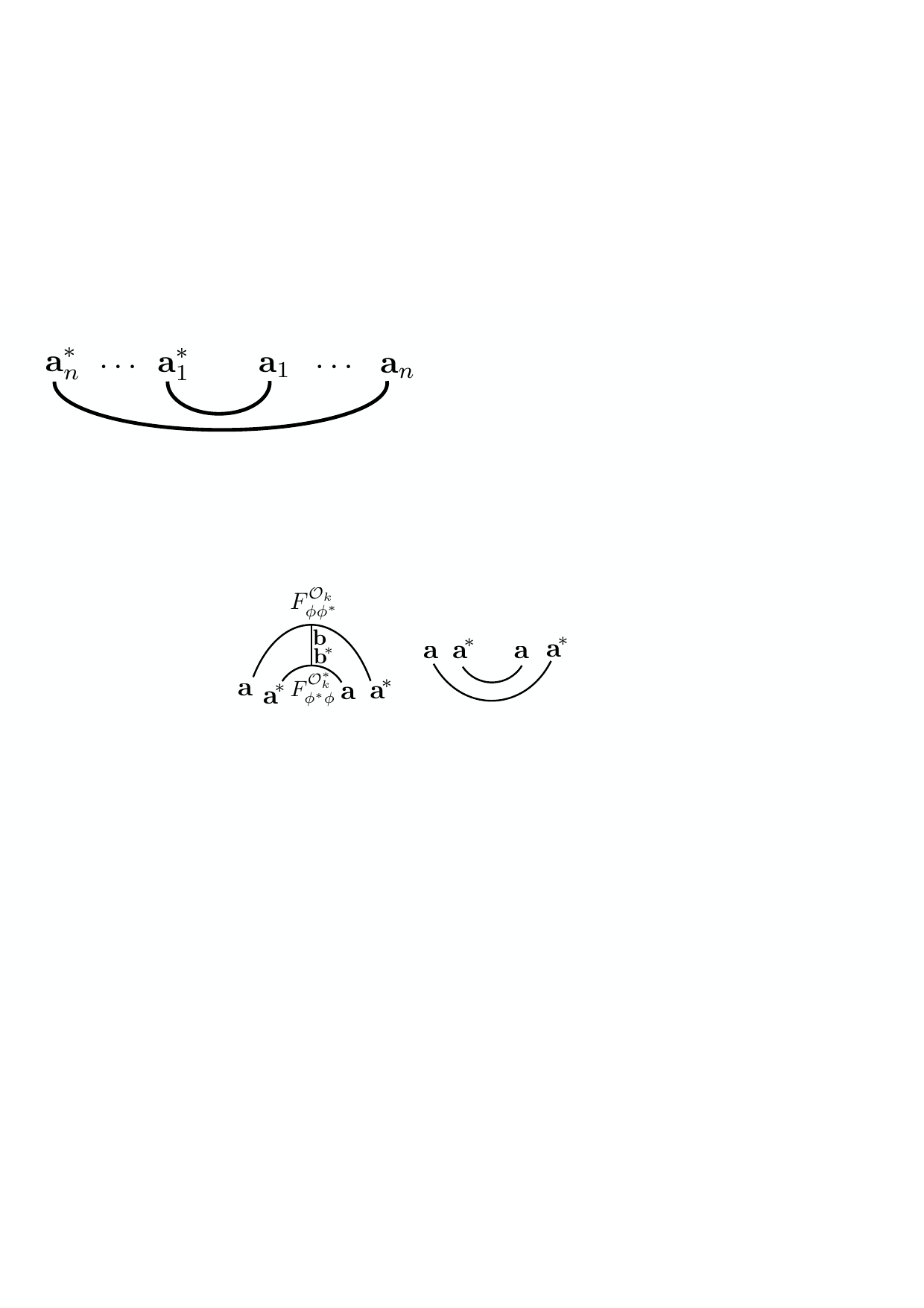} \in \text{Hom}({\bf 1}\to \ba\otimes \ba^*\otimes \ba\otimes \ba^*)\,.
\end{equation}
In the l.h.s.~we get the leading $x_2\to x_1$ singularity times $\calN_\phi^2 \dim(\ba)<0$. 
In the r.h.s.~the contraction projects on the ${\bf b}\approx {\bf 1}$ part since otherwise $F_{\phi\phi^*}^{\calO_k} \circ \delta_{\ba,\ba^*}=0$. The ${\bf b}\approx {\bf 1}$ terms contribute $\fcy N_{\fcy O_k} \dim(\ba)^2 |f_{\phi \phi^* \calO_k}|^2 $ times conformal blocks. 
Consider a reflection positive kinematic configuration (e.g.~all 4 points along a line at $-1,-z,z,1$). As is well known, conformal blocks in such configurations are positive \emph{provided that the primary $\calO_k$ is above the unitary bounds.} If we assume in addition that $\fcy N_{\fcy O_k}\ge0$ then all terms in the r.h.s. are positive, in manifest disagreement with the negative sign of the l.h.s. Thus there must be a singlet $\fcy O_k$ for which either $\fcy N_{\fcy O_k}<0$, or the dimension is below the unitarity bounds.

The argument still works if $\phi$ is not a scalar, nor does it have to be a primary. We consider the same correlator, but now we have to choose polarizations. We pick some arbitrary identical polarization for both $\phi$'s. We then assign the reflected polarization to both $\phi^*$'s. With this choice, the sign of the leading singularity in the $x_2\to x_1$ channel will be as for scalars, controlled by $\dim(\ba)$, hence negative. In the $x_2\to x_3$ channel, instead of thinking in terms of conformal blocks, we observe that the contribution of any positive-norm state is positive. Hence a negative-norm state must exist, which can be either a primary $\fcy N_{\fcy O_k}<0$, or a descendant (if the primary is below unitarity). 
\end{proof}

Theorem \ref{thm:nonUnitary} allows us to disprove unitarity for many theories. For instance any $O(n)$ model for non-integer $n$ is necessarily non-unitary (clearly, the corresponding category, having objects of non-integer and negative dimensions, cannot be equivalent to $\Rep\,G$). Unfortunately, this implies that we cannot use the most robust numerical bootstrap techniques to rigorously study phase transitions in these models, as these require positivity conditions on squared OPE coefficients and operator norms. For sufficiently large $n$, the unitary violations need only occur for operators with large conformal dimensions, and so standard numerical methods could potentially give reasonable, albeit non-rigorous, results.\footnote{Such attempts were made in \cite{Shimada:2015gda} to study the non-integer $O(n)$ models, but it is unclear whether the non-unitarity was sufficiently small to allow this.} Alternatively, the truncation method of Gliozzi \cite{Gliozzi:2013ysa} does not rely on positivity but is less systematic. See \cite{Poland:2018epd}, section VIII, for a detailed discussion.

A particularly simple case to consider is the free $O(n)$ model. Unitary violations in this theory have previously been argued for in \cite{Maldacena:2011jn}, section 5.5. They considered the norm of, in their language, the operator:
\begin{equation}\label{eq:MZOp}
\fcy O_k = \delta_{I_1J_1}...\delta_{I_kJ_k}\phi^{[I_1}(x)\partial\phi^{I_2}(x)\dots\partial^k\phi^{I_k]}(x)\phi^{[J_1}(x)\partial\phi^{J_2}(x)\dots\partial^k\phi^{J_k]}(x)\,.
\end{equation}
To make sense of this operator they simply analytically continue the integer $n$ computation, finding that
\begin{equation}\label{eq:MZNorm}
\langle \fcy O_k(x_1)\fcy O_k(x_2) \rangle =  n(n-1)...(n-k+1)\frac{\fcy N(k)}{|x_1-x_2|^{2\Delta_k}}\,,
\end{equation}
where $\fcy N(k)$ is some positive function of $k$ and $\Delta_k = k(k+d-3)$. If $k = \lceil n\rceil +1$ and $n$ is not an integer then the state has negative norm.

Using our categorical language, we can make the arguments of \cite{Maldacena:2011jn} more precise. This allows us to eliminate some potential doubts in its validity. E.g., one could wonder whether the operator \eqref{eq:MZOp} is really well defined, or required by the theory, or whether other analytic continuations could be possible. To properly define the operator $\fcy O_k$ we can first construct
\begin{equation}
A_k(x) = \phi(x)\partial\phi(x)\dots\partial^k\phi(x)\circ P_{{\bf A}^k} \,,
\end{equation}
where $P_{{\bf A}^k}$ is the projector defined in \eqref{eq:PAk}, projecting onto the simple object ${\bf A}^k$ with
\begin{equation}
\mathrm{dim}({\bf A}^{k}) = \frac{n(n-1)...(n-k+1)}{k!}\,.
\end{equation}
We now define 
\begin{equation}
\fcy O_k = A_k(x)A_k(x) \circ \gamma\,,\qquad\gamma = \raisebox{-0.5em}{\includegraphics[scale=0.5]{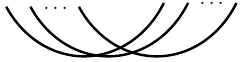}}\,,
\end{equation}
which gives us a precise definition of \eqref{eq:MZOp} without the need for analytic continuation. One can then check \eqref{eq:MZNorm} using string diagram manipulations. For $k= \lceil n\rceil+1$ the object ${\bf A}^k$ has negative dimension. So by proposition \ref{pr:negDimCFT} we find that the OPE of $A_{\lceil n\rceil+1}$ with itself contains a singlet of negative norm, and $\fcy O_{\lceil n\rceil+1}$ is precisely such an operator.

\section{Why this story is not entirely algebraic}
\label{sec:pasha}
Much of what we have written above about Deligne categories may create an impression that the whole theory is about taking some polynomials and interpolating them from integers to reals. Since Deligne categories are algebraic objects, one may ask if there is an interesting interplay between them and structures of an analytic nature.\footnote{We are grateful to Pavel Etingof for this question which prompted us to write this section.} The answer to this question is yes. This may be rather obvious to physicists, so this sketchy and by no means exhaustive section is mainly intended for mathematicians. Many more examples of non-trivial interplay can be given.

This interesting interplay is bound to appear when one considers lattice models or quantum field theories with categorical symmetries and computes observables in these theories. As a first example, consider a lattice model with $O(n)$ categorical symmetry, of the type discussed in section \ref{sec:lattice}. Assuming for simplicity we are on a 2d square lattice, we can e.g.~consider the transfer matrix of this model on a cylinder of circumference $k$ and of a unit height, which is a particular endomorphism $T$ of the object $[k]$: $T\in \text{End}([k])=\text{Hom}([k]\to [k])$. The exact form of $T$ depends on the lattice model we are considering. Endomorphisms of $[k]$ act by composition on $\text{Hom}(\ba \to [k])$ for any simple object $\ba$ and we can ask what are the eigenvalues $\lambda_i$ of the transfer matrix $T$ acting in this way. These eigenvalues are analogues of energy levels. Categorical symmetry tells us that the eigenvalues are classified by simple objects of the associated Deligne category, but the numerical values of the eigenvalues depend on the model. Using the Brauer algebra, we find that matrix elements of $T$ will be polynomials in $n$, and so the eigenvalues will be algebraic functions of $n$. The transfer matrix of the same model on a cylinder of height $L$ will be $T^L$, the $L$-th power of $T$, with eigenvalues $\lambda_i^L$. We can also consider the trace of $T^L$ which will be a linear combination of these eigenvalues times the dimensions of the corresponding simple objects $\ba$, and is known as the torus partition function of the lattice model.

Physically, the most interesting lattice quantities are those which appear in the thermodynamic limit, i.e.~on an infinite lattice (also known as the continuum limit, if we send the lattice spacing to zero keeping the volume fixed). If the parameters of the lattice models are tuned to a second-order phase transition, this limit will be described by a conformal field theory. Scaling dimensions of CFT operators are related to logarithms of the transfer matrix eigenvalues. A classic example is the continuum limits of $O(n)$ models, which exist for $n\in [-2,2]$ on two dimensional lattices, while in 3d it should exist for arbitrarily large $n$. The torus partition function of the $O(n)$ model CFTs in 2d was computed many years ago using Coulomb gas methods \cite{diFrancesco:1987qf}. This is a non-trivial modular-invariant analytic function of the torus parameter, which depends continuously on $n$, which has the schematic form:
\beq
Z(n,q,\bar q) = \sum_{i=0}^\infty M_i(n) \chi(h_i(n),q)\chi(\bar h_i(n),\bar q) \,.
\eeq
Here $q,\bar q$ are the torus modular parameters, $\chi(h,q)$ are the Virasoro characters, $h_i(n)$ and $\bar h_i(n)$ are the holomorphic and antiholomorphic weights of the Virasoro primary fields, and $M_i(n)$ are their `multiplicities'. By consistency with the categorical $O(n)$ symmetry, multiplicities $M_i(n)$ are  linear combinations of dimensions of simple $O(n)$ objects with positive integer coefficients.\footnote{There was some uncertainty in \cite{diFrancesco:1987qf} whether this property holds for their partition function, but a more careful check \cite{Bernardo} shows that it does, see also note 19 in \cite{Gorbenko:2018dtm}. This property is related to the semisimplicity of the Deligne property, i.e.~that every object is a direct sum of an integer number of simple objects.} On the other hand, the Virasoro weights and the central charge of the theory are more complicated, non-algebraic functions of $n$ (they are rational functions of the Coulomb gas coupling $g$ which depends non-algebraically on $n$). An analogous story holds for the $q$-state Potts models, which have a second-order phase transition for $q\le q_c$ where $q_c=4$ in 2d, while in 3d it is believed that $q_c\approx 2.45$ although it is not known exactly (see \cite{Gorbenko:2018ncu}). Scaling dimensions have a branch point at $q_c$, as another sign of non-trivial analytic structure, and interesting physics can be described by analytically continuing beyond this branch point \cite{Gorbenko:2018dtm}. 

In three dimensions, scaling dimensions and OPE coefficients of the critical $O(N)$ model have been studied by conformal bootstrap methods \cite{Kos:2015mba} (see \cite{Poland:2018epd} for review). That these data can be analytically continued in $N$ in a way which correspond to categorically symmetric CFTs is a highly non-trivial fact. Little is known about these analytic continuations in 3d, and they are bound to exhibit a highly non-trivial dependence on $N$.

\section{Other Deligne categories}
\label{sec:other}
So far we have considered the category $\Reptilde\,O(n)$, which extends the representation of the group $O(N)$ to non-integer values. We will now discuss analogous $U(N)$, $Sp(N)$ and $S_N$ constructions, also due to Deligne. 
 
Each construction begins by considering invariants in the category $\Rep\,G_N$ for some family of groups $G_N$ depending on a discrete parameter $N$. The task is to find ``fundamental invariants'' from which all other invariants can be built through the braiding and tensor product. For instance, in $\Rep\,O(N)$ all invariants can be constructed from the tensor $\delta^{IJ}$. We then represent these invariants using string diagrams. To compose these diagrams we stack them horizontally, and simplify the diagrams using rules which are dependent on a parameter $N$. These diagrams form a category $\Rephat\,G_N$, with a functor $\fcy F:\Rephat\,G_N\rightarrow\Rep\,G_N$ translating the string diagrams back to invariant tensors.
 
 Unlike $\Rep\,G_N$, the category $\Rephat\,G_n$ makes sense even for non-integer values $n$. We can compute idempotent morphisms, dimensions, and other representation theoretic data in $\Rephat\,G_n$ for any value of $n$.
 
 Unlike in $\Rep\,G_N$, in $\Rephat\,G_n$ we cannot in general take direct sums of objects, and we cannot decompose objects into simple objects. We can fix this by taking the Karoubi envelope and additive completion of $\Rephat\,G_n$, constructing a new category $\Reptilde\,G_n$. At integer values $\Rephat\,G_N$ will differ from $\Rep\,G_N$ due to the presence of null objects, and by quotienting these we can recover the original category $\Rep\,G_N$.

 \subsection{$\Reptilde\,U(n)$}
 We will begin by describing the category $\Rephat\,U(n)$.\footnote{While physicists work with representations of $U(N)$, mathematicians instead usually consider holomorphic representations of $GL(N,\mathbb C)$. Since $GL(N,\mathbb C)$ is the complexification of $U(N)$, these are equivalent notions.} The objects in this category are finite strings $[s]$ of the characters $+$ and $-$, which we can think of diagrammatically as labeled points, e.g.~$[s]=[+-+--]$.
 
 The morphisms in $\Rephat\,U(n)$ are linear combinations of string diagrams. Each string diagram from $[s_1]\rightarrow[s_2]$ consists of arrows connecting characters from $[s_1]$ and $[s_2]$ pairwise, which have to start/end at $+/-$ in $[s_1]$ or at $-/+$ in $[s_2]$. We do not show pluses and minuses in the diagram as they can be reconstructed from the arrow directions. E.g., the following string diagram is a morphism from $[++-+]\rightarrow[++]$:
 \begin{equation}
\includegraphics[trim=0 0 0 0,scale=0.5]{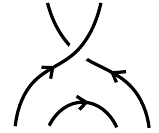}\,.\end{equation}
 We compose diagrams in the same way as for the Brauer algebra, by deleting the midpoints and replacing circles by $n$:
 \begin{equation}
 \includegraphics[trim=0 1.7em 0 0,scale=0.4]{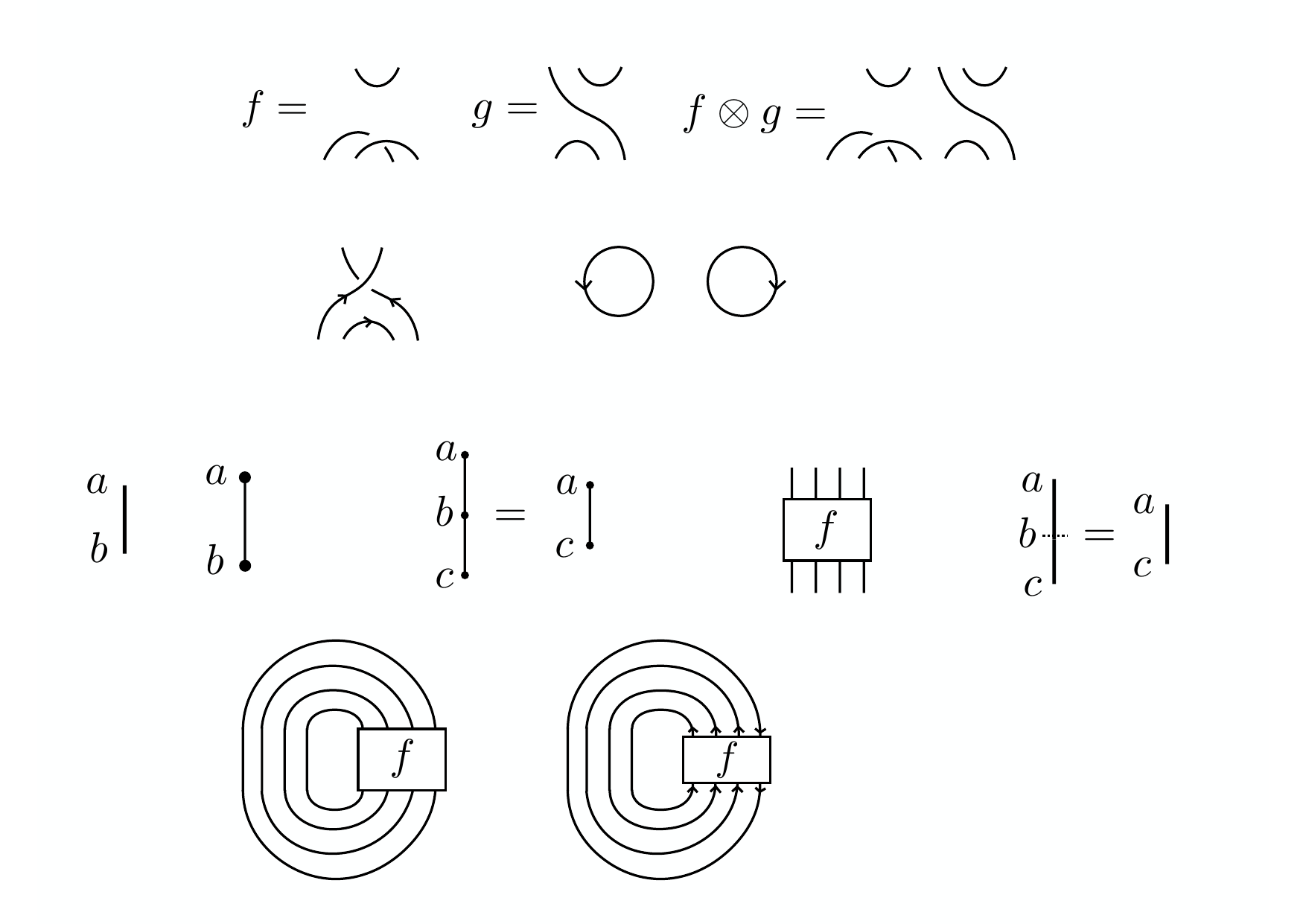}=  \includegraphics[trim=0 1.7em 0 0,scale=0.4]{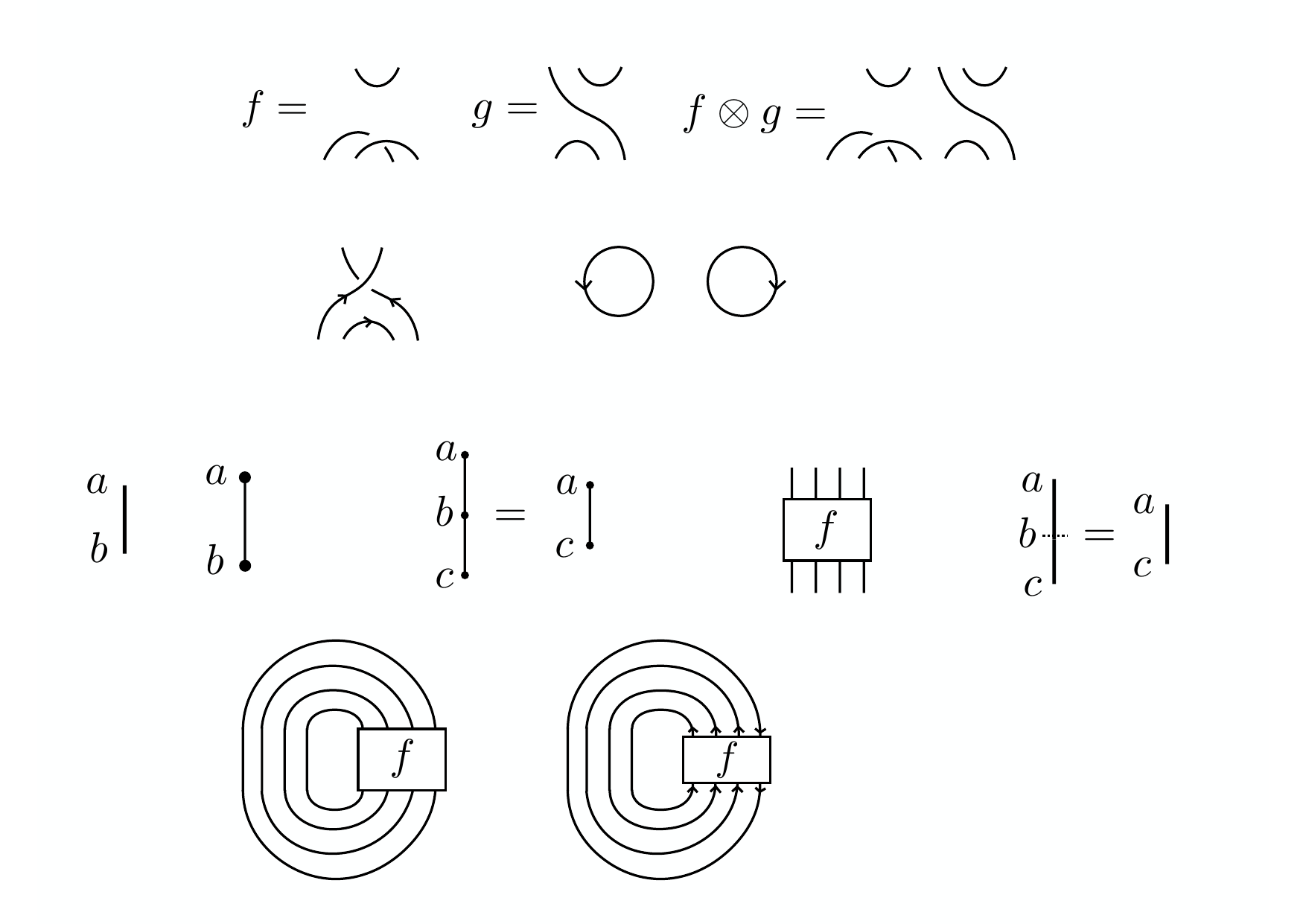}= n\,.
 \end{equation}\\[-0.5em]
 The algebra defined by these rules is called the walled Brauer algebra. 
 
 To relate the Brauer algebra to $\Rep\,U(N)$, we note that for integer $N$ there is a functor $\fcy F:\Rephat\,U(N)\rightarrow\Rep\,U(N)$. This maps objects $[s]$ into tensor products of the fundamental $\bf N$ and $\overline{\bf N}$ of $U(N)$, and string diagrams to $U(N)$ invariants. For instance:
 \begin{equation}
 \fcy F([+-++-]) \approx {\bf N}\otimes\overline{\bf N}\otimes{\bf N}^{\otimes 2}\otimes\overline{\bf N}\,.
 \end{equation}
 
 Much like in $\Rephat\,O(n)$, in $\Rephat\,U(n)$ we can decompose $\text{id}_{[s]}$ as the sum of mutually orthogonal idempotents. For instance, in $\text{Hom}([++]\rightarrow[++])$ there are two such idempotents, one symmetric and one antisymmetric:
 \begin{equation}
 P_{\bf S} = \frac 12\Bigl(\,\includegraphics[trim=0 1.7em 0 0,scale=0.4]{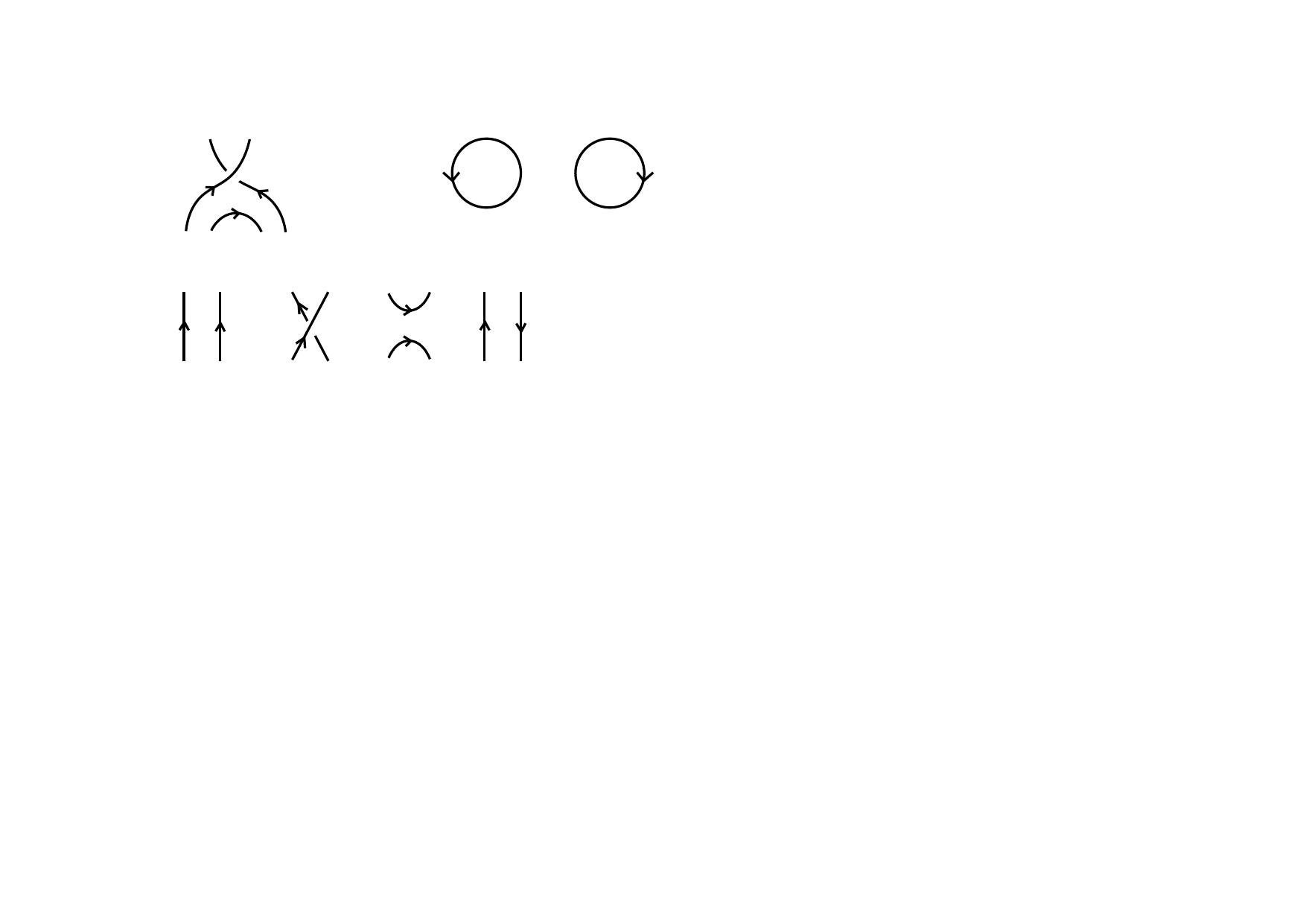}+
 \includegraphics[trim=0 1.7em 0 0,scale=0.4]{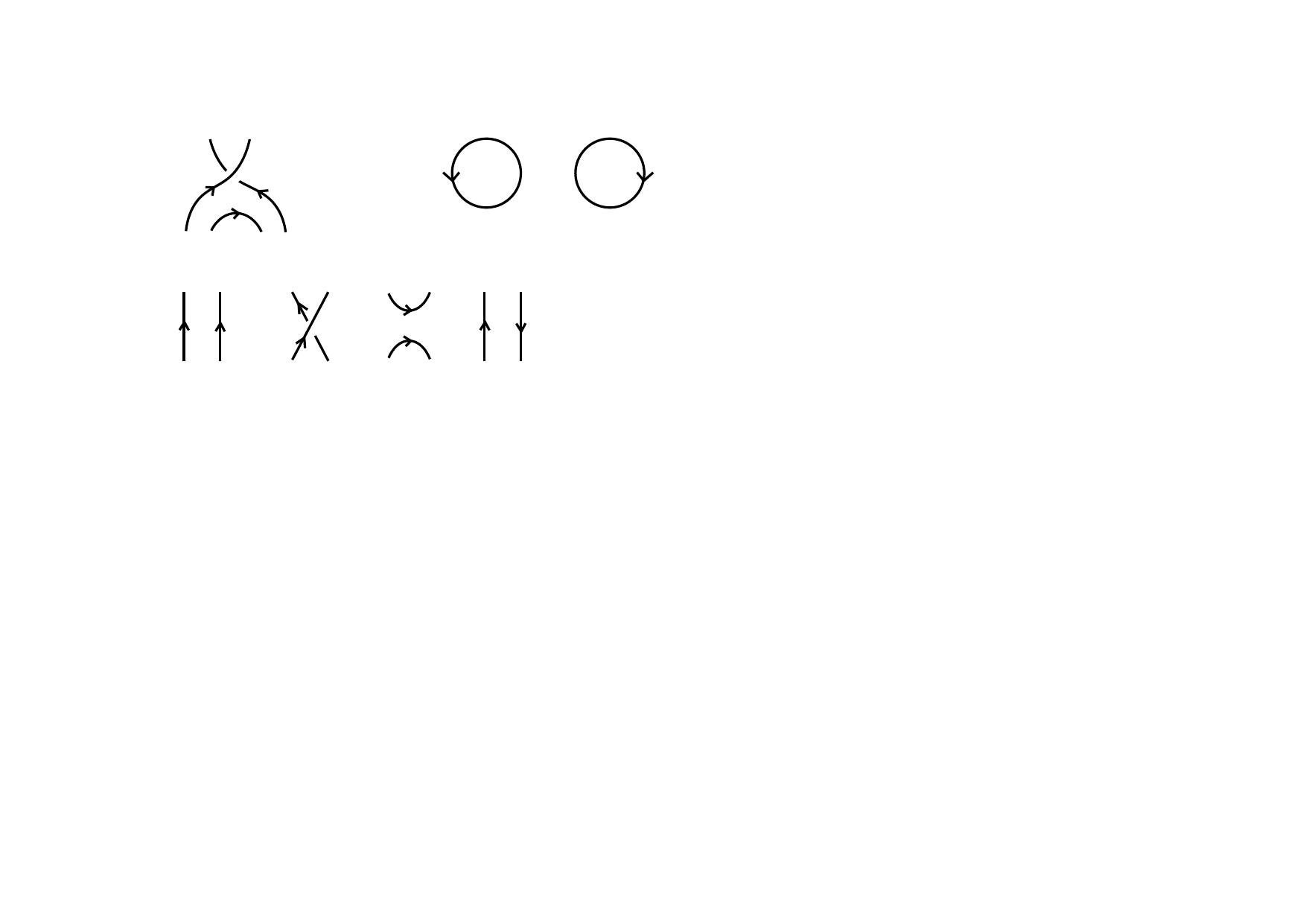}\,\Bigr)\,,\qquad P_{\bf A} = \frac 12\Bigl(\,\includegraphics[trim=0 1.7em 0 0,scale=0.4]{fig-PS1.pdf}-
 \includegraphics[trim=0 1.7em 0 0,scale=0.4]{fig-PS2.pdf}\,\Bigr)\,.
 \end{equation}
 Likewise, in $\text{Hom}([+-]\rightarrow[+-])$ there are also two idempotents:
 \begin{equation}
 P_{\bf 1} = \frac 1n\,
 \includegraphics[trim=0 1.7em 0 0,scale=0.4]{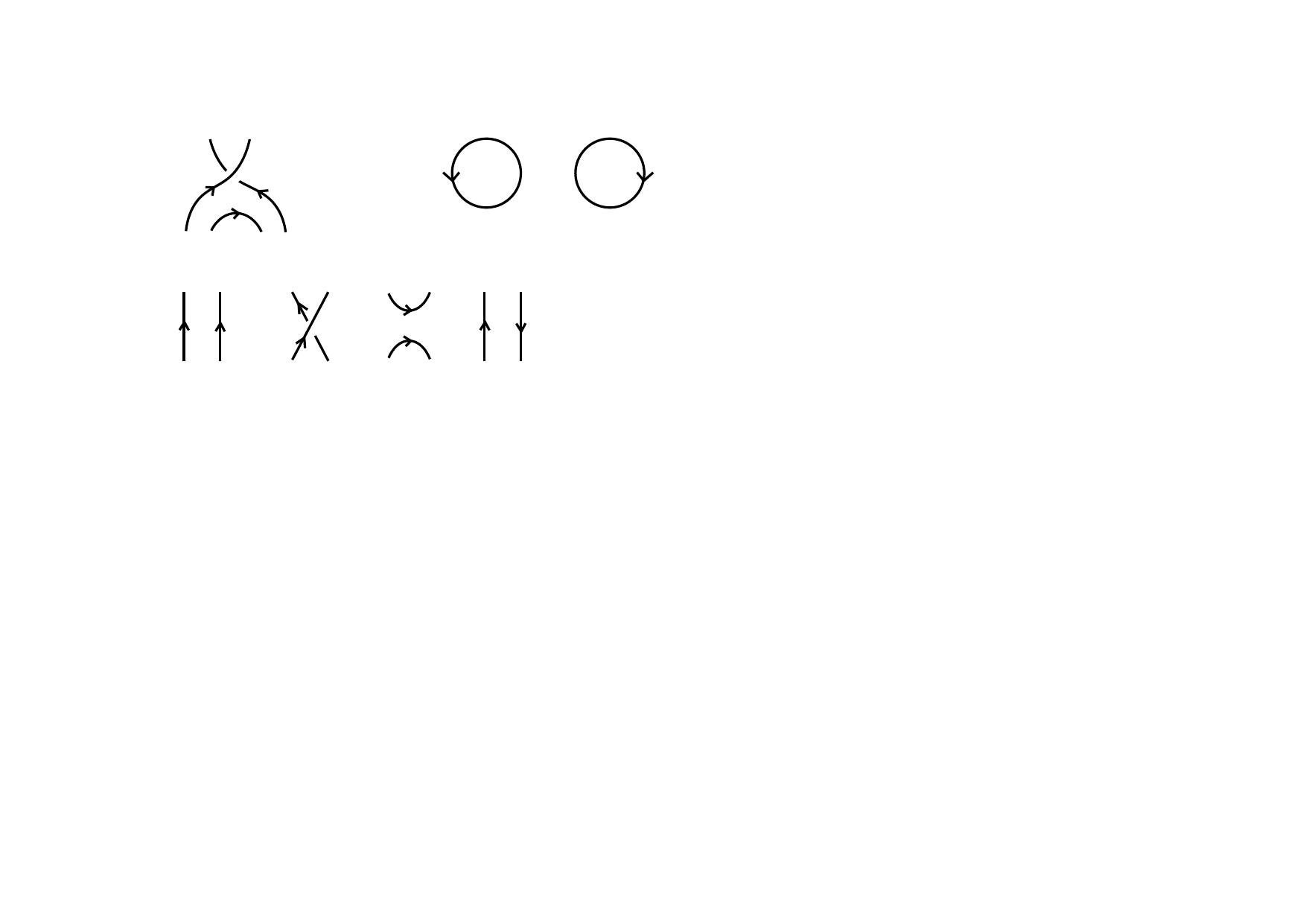}\,,\qquad P_{\bf a} = \includegraphics[trim=0 1.7em 0 0,scale=0.4]{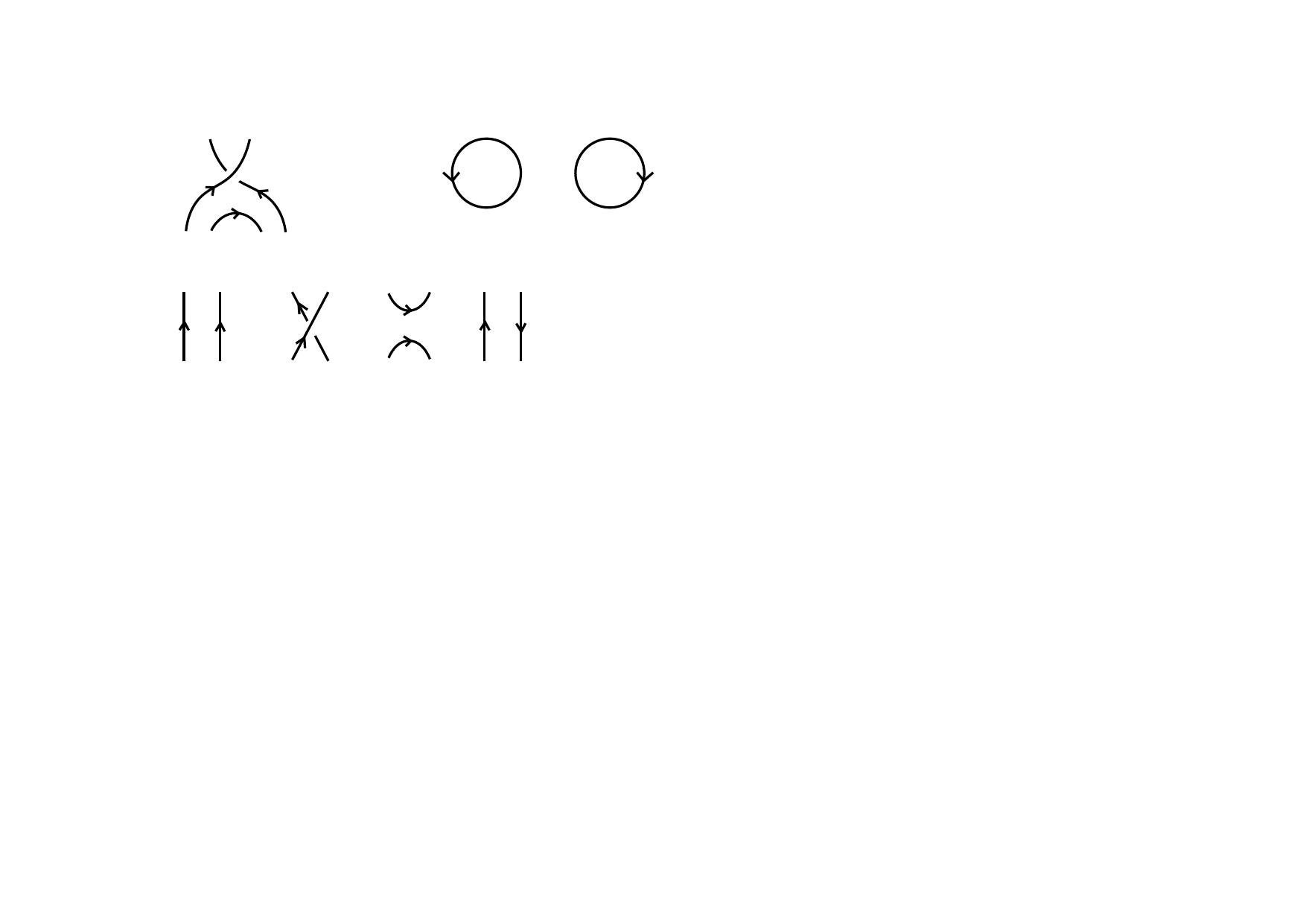}-\frac 1n\,
 \includegraphics[trim=0 1.7em 0 0,scale=0.4]{fig-Pa2.pdf}\,.
 \end{equation}
 We can define the trace of a diagram $f:[s]\rightarrow[s]$ by the analogue of Eq.~\reef{eq:traceo(n)}:
 \begin{equation}
 \label{eq:traceu(n)}
 \text{tr}(f)= \raisebox{-4em}{\includegraphics[scale=0.5]{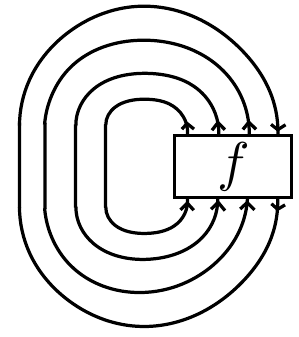}}\,,\end{equation}
 (connecting outgoing to the corresponding incoming arrow on the other side)
 and we then find that
 \begin{equation}
 \text{tr}(P_{\bf S}) = \frac{n(n+1)}2\,,\quad \text{Tr}(P_{\bf A}) = \frac{n(n-1)}2\,,\quad \text{tr}(P_{\bf 1}) = 1\,,\quad \text{tr}(P_{\bf a}) = n^2-1\,,
 \end{equation}
 reproducing for integer $n$ the dimensions of the corresponding representations in $\Rep\,U(n)$. After taking the Karoubi envelope and additive completion, we find that each of these idempotents corresponds to a simple object in $\Reptilde\,U(n)$.
 
 Finding the simple objects in $\Reptilde\,U(n)$ and computing their dimensions is now just an exercise in combinatorics. Many examples of diagrammatic computations for $U(N)$ can be found in chapter 9 of \cite{Cvitanovic:2008zz}, and these extend to the non-integer case without change. More generally, the simple objects in $\Reptilde\,U(n)$ can be labeled by pairs of partitions $\lambda = (\lambda_1,...,\lambda_r)$ and $\mu = (\mu_1,...,\mu_s)$. The dimensions of this simple object, which we can denote by ${\bf x}_{\lambda\mu}$, is given by the interpolation of the Weyl dimension formula:
 \begin{equation}\begin{split}
 \text{dim}({\bf x}_{\lambda\mu}) &= d_\lambda d_\mu \prod_{i = 1}^r\frac{(1-i-s+n)_{\lambda_i}}{(1-i+r)_{\lambda_i}}\prod_{i = 1}^s\frac{(1-i-r+n)_{\mu_i}}{(1-i+s)_{\mu_i}} \prod_{i = 1}^r\prod_{j = 1}^s \frac{n+1+\lambda_i+\mu_j-i-j}{n+1-i-j}\,, \\
 d_\lambda& = \prod_{1\leq i<j\leq r} \frac{\lambda_i-\lambda_j+j-i}{j-i}\,.
 \end{split}\end{equation}

 \subsection{$\Reptilde\,Sp(n)$ and negative dimensions}
 \label{sec:SPN}
 The group $O(N)$ is defined as the group of matrices preserving the unit matrix $\delta^{ab}$ which is symmetric. The group $Sp(N)$ is the group that preserves the antisymmetric symplectic matrix $\Omega^{ab}$. At the level of string diagrams this means that:
 \begin{equation}\label{eq:braidFlip}
 O(N):\, \includegraphics[trim=0 1.5em 0 0,scale=0.5]{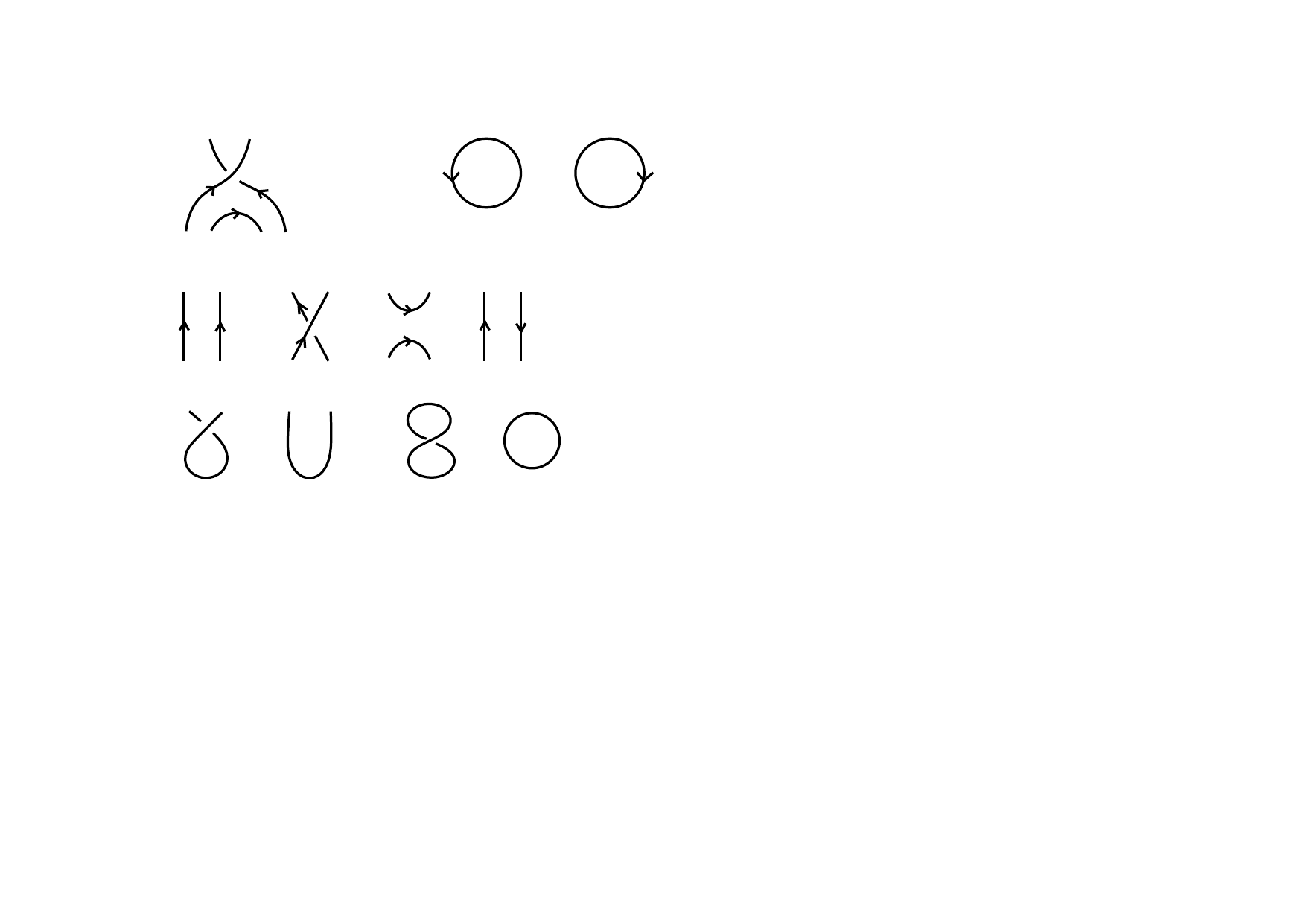} = 
 \includegraphics[trim=0 1.5em 0 0,scale=0.5]{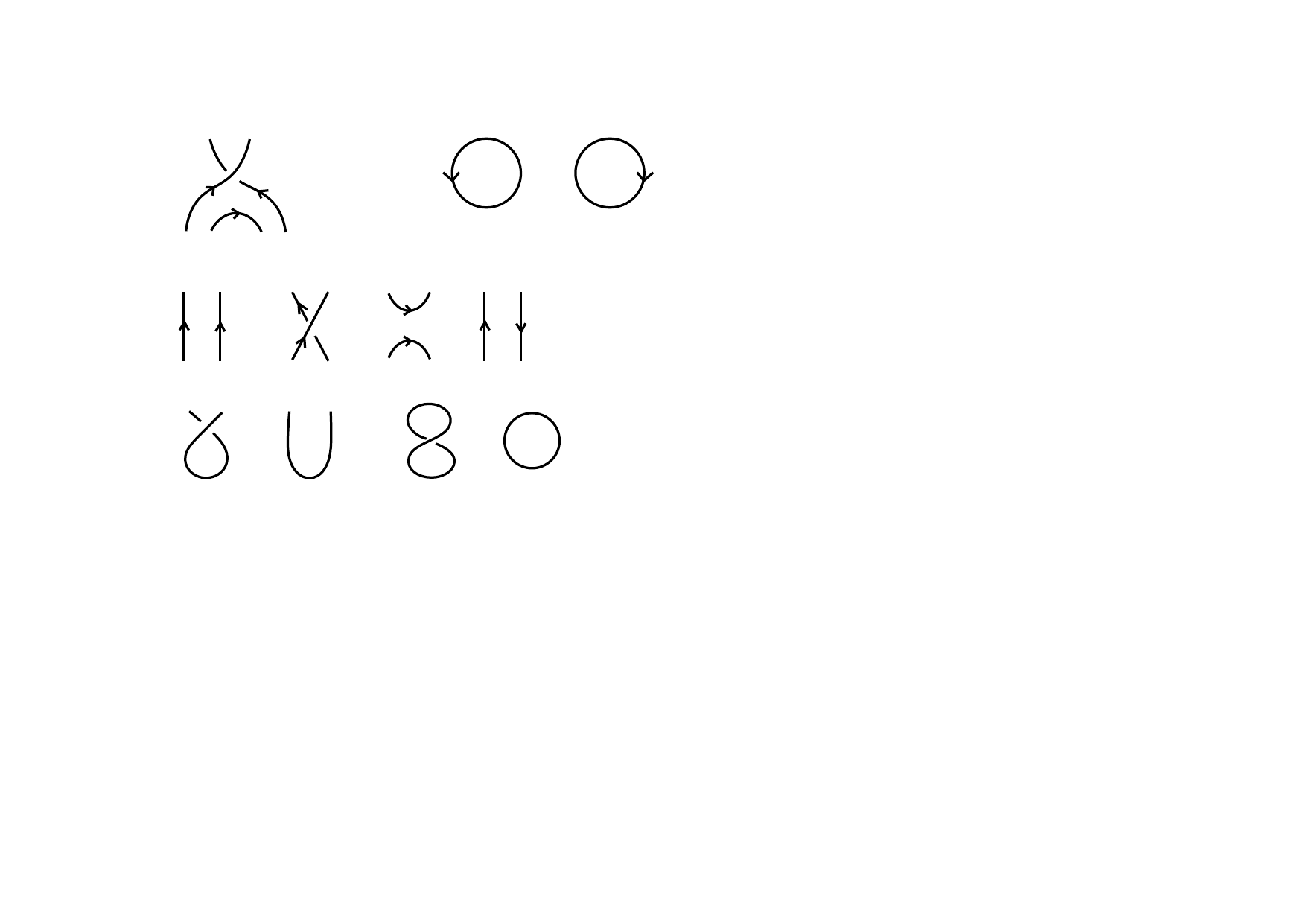}\,,\qquad Sp(N):\, \includegraphics[trim=0 1.5em 0 0,scale=0.5]{fig-br1.pdf}   = -\,\includegraphics[trim=0 1.5em 0 0,scale=0.5]{fig-br2.pdf}\,.
 \end{equation}\\[-0.5em]
 Indeed the l.h.s. diagrams are interpreted as the invariant tensor $\delta_{ab}$, resp.~$\Omega_{ab}$, contracted with the braiding $\includegraphics[trim=0 0.5em 0 0,scale=0.5]{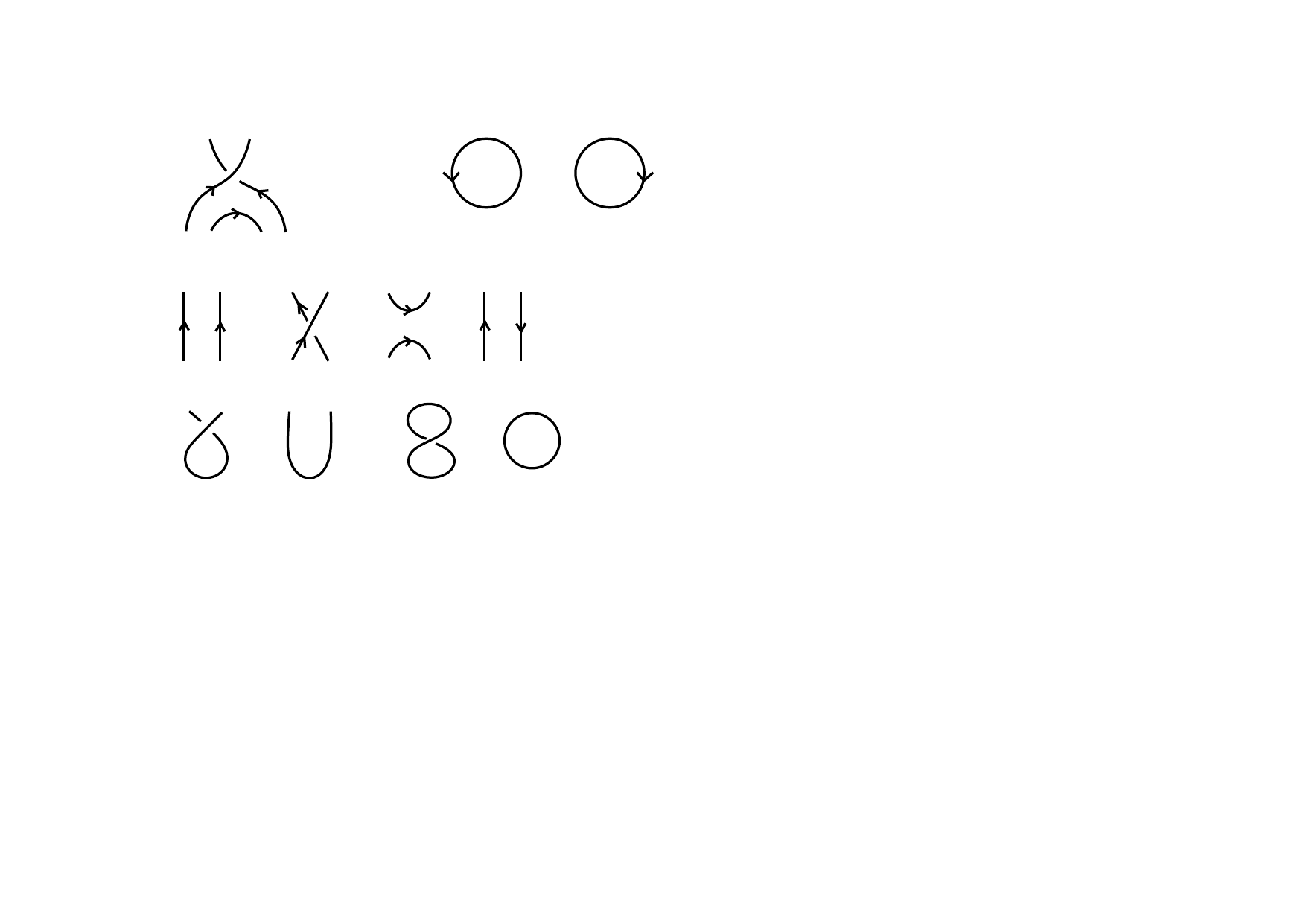}=\delta_a^{b'}\delta_b^{a'}$ and the sign on the right reflects symmetry/antisymmetry of the invariant tensor.
 
 A minus sign of the same origin then affects the value of the circle, because in $Sp(N)$
 \begin{equation} \Omega_{ab}\Omega^{ab}  = - N\,.
 \end{equation}
 In terms of string diagrams we have:
 \begin{equation}\label{eq:dimFlip}
 O(N):\, \includegraphics[trim=0 1.5em 0 0,scale=0.5]{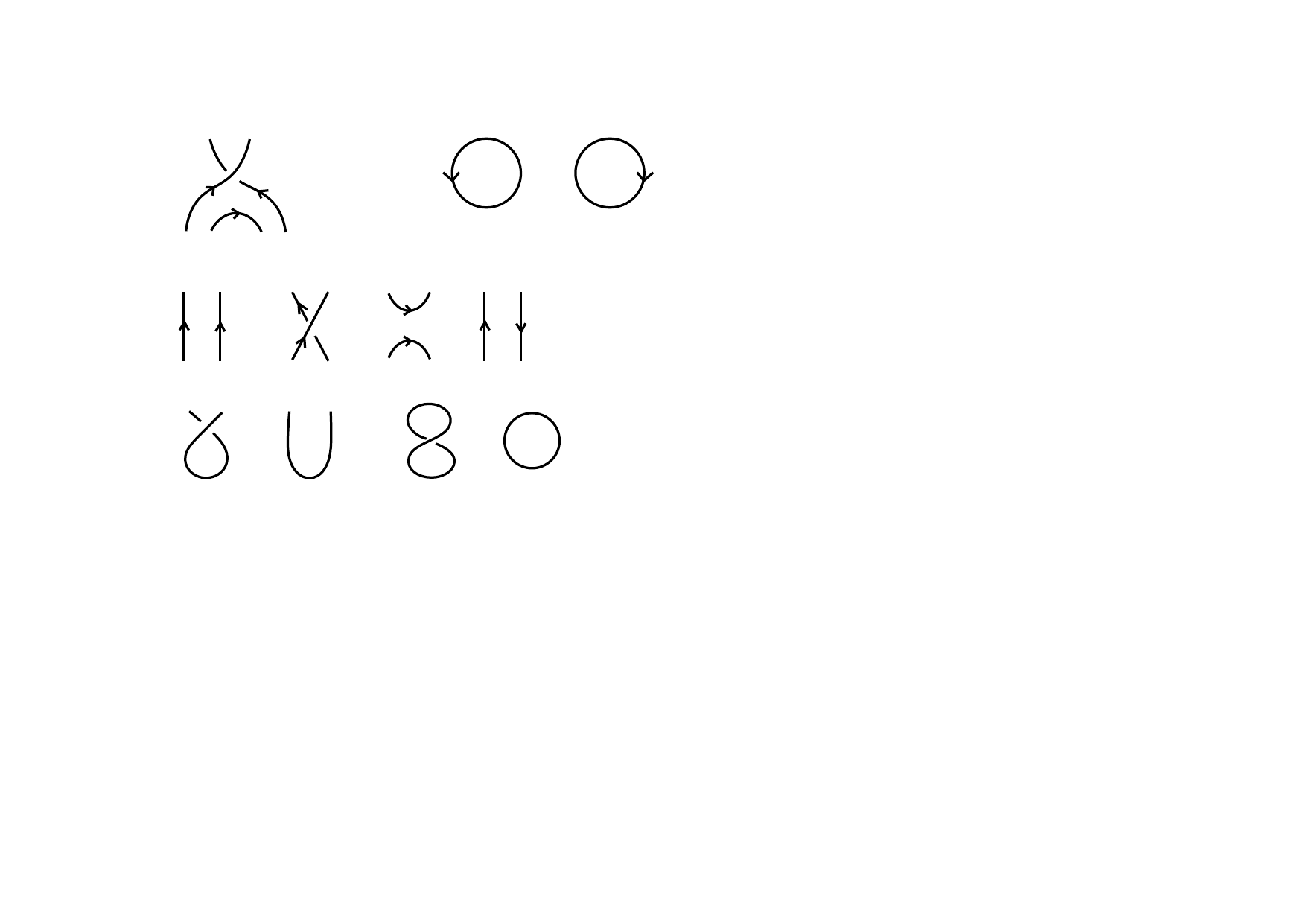} = \includegraphics[trim=0 2em 0 0,scale=0.5]{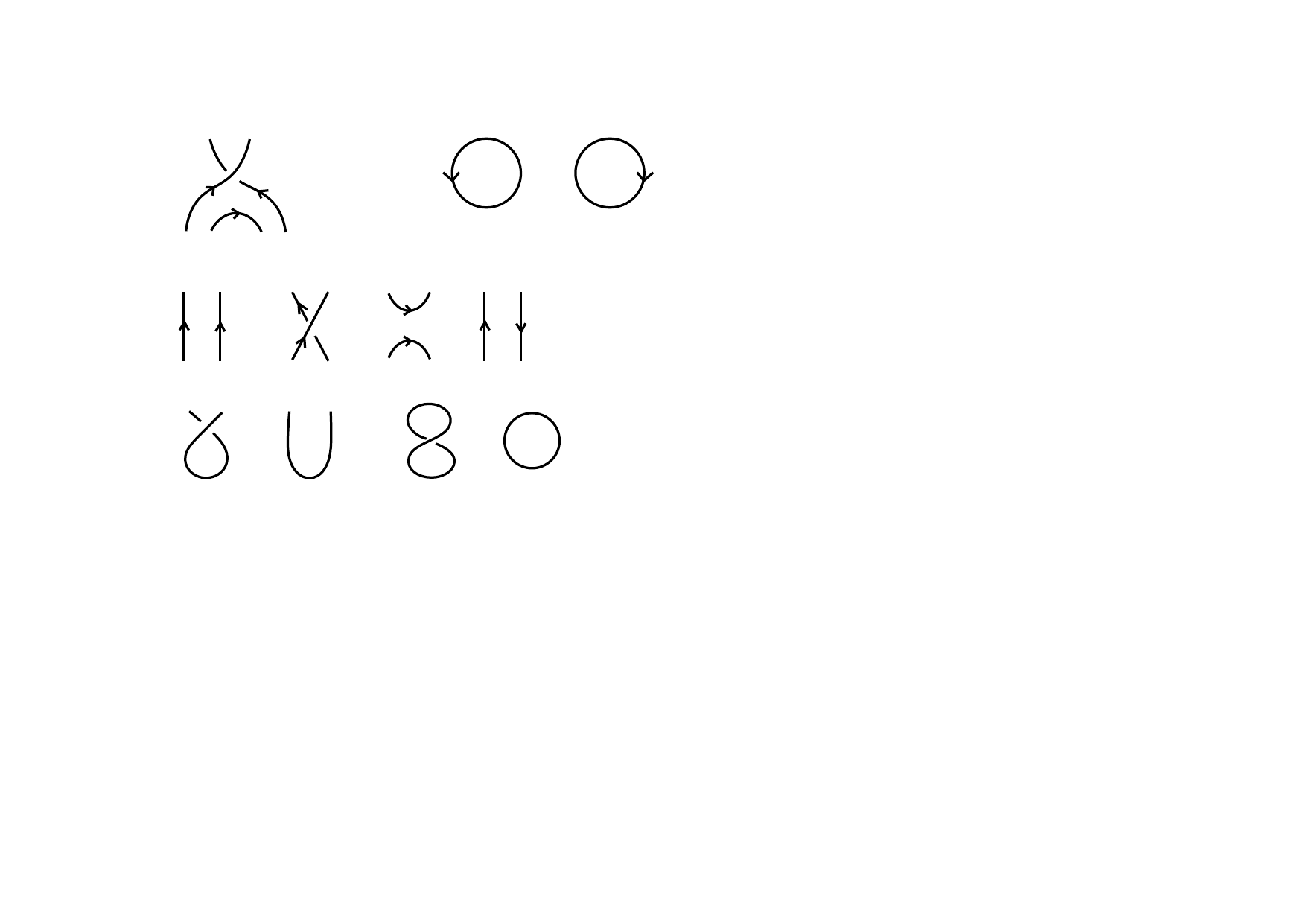} = N
 \,,\qquad Sp(N):\, \includegraphics[trim=0 1.5em 0 0,scale=0.5]{fig-br4.pdf} = - \includegraphics[trim=0 2em 0 0,scale=0.5]{fig-br3.pdf} = -N\,
 \,.
 \end{equation}
More generally, one can show that the algebra of string diagrams of $Sp(N)$ can be obtained from that of $O(N)$ by flipping the signs of $N$ and of the braiding: $N\to -N$, $\includegraphics[trim=0 0.5em 0 0,scale=0.5]{fig-braiding.pdf}\to - \includegraphics[trim=0 0.5em 0 0,scale=0.5]{fig-braiding.pdf}$. We now extend this construction to non-integer $n$ to construct $\Rephat\,Sp(n)$. For this we can start with $\Rephat\,O(-n)$ and then flip the sign of the braiding. More formally, we can write this as a functor $\hat{\fcy G}: \Rephat\,O(-n)\rightarrow\Rephat\,Sp(n)$ which takes:
 \begin{equation}
 \hat{\fcy F}(\includegraphics[trim=0 0.5em 0 0,scale=0.5]{fig-braiding.pdf}) = -\includegraphics[trim=0 0.5em 0 0,scale=0.5]{fig-braiding.pdf}\ ,
 \end{equation}
 but otherwise leaves the diagrams unchanged. The inverse functor $\hat{\fcy F}^{-1}:\Rephat\,Sp(n)\rightarrow\Rephat\,O(-n)$ acts in exactly the same way, and we conclude that $\Rephat\,O(-n)$ and $\Rephat\,Sp(n)$ are equivalent tensor categories.

 To define the trace on $\Rephat\,Sp(n)$ we will require that
 \begin{equation}\text{tr}(\text{id}_{[1]}) = n\,.\end{equation}
 Because of the additional minus sign in \eqref{eq:dimFlip}, the trace of a general morphism $f:[k]\rightarrow[k]$ is defined as
 \begin{equation}
 \text{tr}(f) = 
 (-1)^k  \includegraphics[trim=0 5.8em 0 0,scale=0.4]{fig-trace.pdf} = (-1)^k\text{tr}\left(\hat{\fcy F}^{-1}(f)\right)\,.
 \end{equation}\\[0.5em]
 We can now find all the idempotents in $\Rephat\,Sp(n)$ and compute their dimensions. Taking the Karoubi envelope and additive completion gives us $\Reptilde\,Sp(n)$.
 
 The Karoubi envelope and additive completion do not make use of the trace, and so we can lift $\hat{\fcy F}$ to a functor $\fcy F:\Reptilde\,O(-n)\rightarrow\Reptilde\,Sp(n)$. Therefore $\Reptilde\,O(-n)$ and $\Reptilde\,Sp(n)$ are equivalent tensor categories, differing only in some minus signs appearing in the braiding and the trace.
 
 To keep track of these minus signs, let us define for any simple object $\ba\in\Reptilde\,O(n)$ the $\mathbb Z_2$ grading:
 \begin{equation}
 Z_\ba = \begin{cases} 0 & \text{ if } \ba \in{\bf n}^{\otimes k} \text{ for some even }k\,, \\
 1 & \text{ if } \ba \in{\bf n}^{\otimes k} \text{ for some odd }k \,.
 \end{cases}
 \end{equation}
 In $\Rep\,O(N)$ this $\mathbb Z_2$ grading corresponds to the action of the $\mathbb Z_2$ center of $O(N)$ on the representation. We can then compute
 \begin{equation}\label{eq:FSigns}
 \fcy F(\beta_{{\bf a},{\bf b}}) = (-1)^{Z_{\bf a}Z_{\bf b}}\beta_{\fcy F(\ba),\fcy F({\bf b})}\,,\qquad \tr\left(\fcy F(f)\right) = (-1)^{Z_\ba}\tr(f)\,,\qquad \text{dim}\left(\fcy F(\ba)\right) = (-1)^{Z_\ba}\text{dim}(\ba)\,,
 \end{equation}
 where $\beta_{\ba,{\bf b}}:\ba\otimes{\bf b}\rightarrow{\bf b}\otimes\ba$ denotes the braiding, and where in each of these equations the left-hand side is computed in $\Reptilde\,Sp(n)$ and the right-hand side in $\Reptilde\,O(-n)$. More concretely, \eqref{eq:FSigns} means that for any $\Reptilde\,O(-n)$ morphism there is an equivalent $\Reptilde\,Sp(n)$ morphism where the symmetrizations and antisymmetrizations are swapped. A similar relationship holds between $\Reptilde\,U(n)$ and $\Reptilde\,U(-n)$.
 
 Relationships between representations of $Sp(N)$ and analytic continuations of $O(N)$ have been noticed many times in the literature, e.g.~\cite{king_1971,Penrose1971ApplicationsON,Parisi:1979ka,Cvitanovic:1982bq}. In particular theories of $N$ scalars with $O(N)$ symmetry are known to analytically continue at negative values of $N$ to theories of Grassmannian scalars with $Sp(N)$ symmetry, as studied by \cite{LeClair:2006kb,LeClair_2007} in a condensed matter setting and by \cite{Anninos:2011ui} in the context of $dS/CFT.$ The match between the $O(-1)$ and Grassmann path-integral in \eqref{eq:genNInt} is a simple case of this more general relationship.
 
 The categorical perspective makes it clear that any QFT with $\Reptilde\,O(-n)$ symmetry is equivalent to a $\Reptilde\,Sp(n)$ theory. The functor $\fcy F$ translate operators and morphisms from one language to the other. A bosonic operator $\phi$ with $Z_\phi = 1$ becomes a fermionic operator in the $\Reptilde\,Sp(n)$ theory, to compensate for the flipped braiding in \eqref{eq:FSigns}.

 \subsection{$\Reptilde\,S_n$}
 As a final task we shall construct Deligne categories $\Reptilde\, S_n$ interpolating representations of the symmetric group $S_N$. These categories describe the symmetry of the $n$-state Potts model for non-integer values of $n$. They also describe the symmetry of replicated theories. In particular, the replica trick relates the $n\rightarrow0$ limit of a deformed replicated theory to disorder averaging.
While all the other categorical symmetries discussed so far are continuous, this one is discrete (see 
section \ref{sec:current}).

 We can think of $S_N$ as the subgroup of $O(N)$ preserving the tensors
 \begin{equation}
 T_{I_1...I_k} = \begin{cases} 1 &\text{if } I_1 = I_2 = ... = I_k\,, \\ 0 & \text{otherwise}. \end{cases}
 \end{equation}
 for $k\ge 1$. (In particular the $k=1$ tensor $T_I=1$ for any index $I$.) These tensors are fully symmetric and satisfy the composition rules:
 \begin{equation}\label{eq:SNcomp}
 T_{I_1...I_kI_{k+1}...I_i}T_{J_1...J_kJ_{k+1}...J_j}\delta^{I_1J_1}...\delta^{I_kJ_k} = T_{I_{k+1}...I_i J_{k+1}...J_j}\,.
 \end{equation}
 To describe them diagrammatically, we start with the Brauer algebra, and introduce additional vertices:
 \begin{equation}
 T_{I_1...I_k} = \includegraphics[trim=0 1em 0 0,scale=0.4]{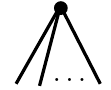}\quad(k\text{ legs})\,.
 \end{equation}
 Symmetry translates to the equation:
 \begin{equation}
\includegraphics[trim=0 1em 0 0,scale=0.4]{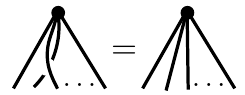}\quad(\text{for any pair of legs})\,,
 \end{equation}
 while \eqref{eq:SNcomp} becomes
 \begin{equation}
 \raisebox{-1.5em}{\includegraphics[trim=0 0 0 0,scale=0.4]{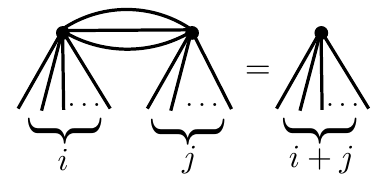}}\,.
 \end{equation}
 The final rule we need is that $ T^{IJ} = \delta^{IJ}$ which is encoded as
 \begin{equation}
  \raisebox{-0.2em}{\includegraphics[trim=0 0 0 0,scale=0.4]{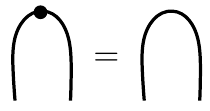}}\,.
 \end{equation}
 
 With these rules we perform any manipulations involving the $T_{I_1...I_k}$ tensors diagrammatically, with $N$ entering only as the value of the circle. As a simple example,
 \begin{equation}
  \includegraphics[trim=0 1.9em 0 0,scale=0.65]{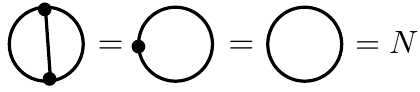}\,.
 \end{equation}
 We can then extend the algebra to continuous values of $n$, defining the category $\Rephat\,S_n$ to be the category of such string diagrams. The strings diagrams from $[k]\rightarrow[k]$ form an algebra known as the partition algebra $P_k(n)$, which has been studied in \cite{doi:10.1142/0983,doi:10.1142/S0218216594000071,Jones1993ThePM,HALVERSON2005869}. Much like how $\Rephat\,O(n)$ packages all of the Brauer algebras into a single algebraic structure, $\Rephat\,S_n$ packages all of the partition algebras together.
 
 As is usual, we can now compute any idempotents and their dimensions. Because any string diagram in $\Rephat\,O(n)$ is also a string diagram in $\Rephat\,S_n$, we see that there is a faithful\footnote{\label{note:faithful}I.e.~one which does not map any morphism to zero.} functor $\Rephat\,O(n)\rightarrow\Rephat\,S_n$. This is the categorical analogue of the fact that $S_N$ is a subgroup of $O(N)$, so that any $O(N)$ invariant tensors is automatically $S_N$ invariant. Due to the additional diagrams, indecomposable idempotents in $\Rephat\,O(n)$ will become decomposable in $\Rephat\,S_n$. For instance, the identity $\text{id}_{[1]}$ now splits
 \begin{equation}
 \text{id}_{[1]} = P_{\bf 1}+P_{\bf f}\,,\qquad P_{\bf 1} = \frac 1 n\ \raisebox{-1.3em}{\includegraphics[trim=0 0em 0 0,scale=0.4]{fig-EP.pdf}}\,,\qquad P_{\bf f} = \text{id}_{[1]} - P_{\bf 1}\,.
 \end{equation}
 
\subsection{Other families of categories}
\label{sec:otherfamilies}
We constructed $\Rephat\,S_n$ from $\Rephat\,O(n)$ by introducing new vertices into the string diagrams, along with new rules which allow us to eliminate any loops in the diagrams. This procedure can be performed very generally. Beginning with $\Rephat\,O(n)$ (or more generally with $\Rephat\,U(n)$), we can introduce new vertices into our string diagrams. We must then give rules telling us how to compose these diagrams, and in particular these tell us how to turn any closed diagram into a number. There are many possible diagrammatic rules one could consider, opening up a vast and largely unexplored world of possible categories. Here we will note some families which may be of interest in physical applications.
 
If we start with a theory with a symmetry $G$ and replicate it $N$ times, then each copy of the theory individually has a $G$ symmetry. The full symmetry of the replicated theory is then $S_N\ltimes G^N$. A construction due to Knop \cite{2006math5126K,2006math10552K} allows us to define a family of categories $\Reptilde\,S_n\ltimes G^n$, generalizing Deligne's construction of $\Reptilde\,G^n$. These categories describe the symmetries of theories such as the cubic fixed point, where $G \approx \mathbb Z_2$, and more generally the $MN$ models where $G = O(M)$. They are also relevant for the replicated theories arising when applying the replica trick to disordered systems, when the original theory has symmetry $G$.
 
All above-mentioned continuous families of categories interpolate infinite series of groups.
 
Notice however that \emph{not every} infinite family of groups admits a corresponding interpolating continuous family of Deligne categories. A necessary condition is that the invariant tensors have the same number of legs for all groups in the family. E.g.~$SO(N)$, compared to $O(N)$, has an additional invariant $N$-leg $\eps$-tensor, and this makes interpolation impossible.\footnote{If in spite of this argument, someone does want to try to define $SO(n)$, here is a preview of the kind of difficulties they will have to fight against. For each $k$ one would have to define an antisymmetric morphism $\eps:{\bf n}^{\otimes k}\rightarrow 1$. One would then have to somehow make $\eps_k$ vanish at any integer $n$ except at $n=k$. We don't have a no-go theorem that this is impossible, but this certainly does not look easy and we are not aware of mathematical literature accomplishing this.} A similar reason precludes interpolation for $SU(N)$ and $\bZ_N$. So it does not make sense to speak about $SO(N)$, $SU(N)$, and $\bZ_N$ symmetries for non-integer $N$. 
  
 On the other hand, one may also be wondering if there exist continuous families which interpolate \emph{finite} families of groups. There does not seem to be a fundamental reason precluding this, and attempts have been made to place the exceptional groups into a continuous family of categories. Many authors have noticed relationships between exceptional groups and other Lie group of low rank, the ``magic square'' construction of Freudenthal and Tits, and the ``magic triangle'' constructions of Cvitanovic. A detailed review can be found in \cite{Cvitanovic:2008zz}. These relationships led Deligne to conjecture a number of families of categories interpolating exceptional group \cite{DeligneLie,DeligneGross}, known as the $F_4$, $E_6$, $E_7$ and $E_8$ series. The $E_8$ series in particular contains all of the exceptional Lie algebras in series
 \begin{equation}
 \mathfrak{a}_1 \subset \mathfrak{a}_2 \subset \mathfrak{d}_4\subset \mathfrak{f}_4\subset \mathfrak{e}_6\subset \mathfrak{e}_7\subset \mathfrak{e}_8 \,.
 \end{equation}
 We note in passing that this series has appeared in a number of physical contexts, the $q$-state Potts model \cite{Dorey:2002tf}, in the theory of 2d chiral algebras associated to $4d$ $\fcy N = 2$ theories \cite{Beem:2013sza,Lemos:2015orc}, in $F$-theory \cite{Grassi:2000we,Shimizu:2016lbw}, and in the study of $2d$ RCFTs \cite{Mathur:1988na}. The $E_8$ series, along with Deligne series for the classical groups, have been further conjectured to combine into a two parameter family of tensor categories called the Vogel plane, introduced by Vogel in unpublished work \cite{Vogel99theuniversal}. The Vogel plane can be used to derive universal formulas for the representations of any simple Lie algebras, see e.g.~\cite{LANDSBERG2006379} and references therein.
 
 The difficulty with verifying these conjectures arises due to two possible issues:
 \begin{enumerate}
 	\item Are the rules complete, that is, do they allow us to reduce any closed diagram to numerical value?
 	\item Are the rules consistent, so that evaluating the same diagram in different ways gives the same answer?
 \end{enumerate}
 Unpublished work by Thurston \cite{dthurston} shows that the consistency of the $F_4$ and $E_6$ series requires a certain polynomial equation to be satisfied, and so these series can only exist at discrete values of the parameter. It is an open question whether or not similar issues occur for the $E_7$ and $E_8$ series or the Vogel plane. 
 
 Finally, we note that the representation theory of more general algebraic objects can be generalized to the categorical setting, including certain Lie superalgebras and affine Lie algebras \cite{ETINGOF2014,ETINGOF2016473,pakharev2019weylkac}. By generalizing the Schur-Weyl duality to $\Reptilde\,S_n$, one can even make sense of a ``non-integer tensor power'' of an object in a tensor category \cite{ETINGOF2014,10.1093/imrn/rnu214}.\footnote{This theory uses the notion of a unital vector space (vector space with a distinguished non-zero vector), for which one can define a non-integer tensor power as an object in the category $\Reptilde\,S_n$, and think of it as a vector space of non-integer dimension. Now, several times in this article we emphatically stated that vector spaces of non-integer dimensions do not exist. What we meant by that is that they don't ``exist'' in the usual physical sense, as sets of vectors which can be expanded in a basis etc. The above-mentioned constructions allow to give abstract algebraic meaning to some operations with ``non-integer dimensional vector spaces'', but they should not be viewed as to imply that such objects are on the same footing as concrete finite-dimensional vector spaces from linear algebra textbooks.}


 \section{Discussion and conclusions}
 \label{sec:conclusions}
 
As we have explained in this work, Deligne categories put on firm footing the notion of analytically continued symmetries usually considered in physics at an intuitive level. 
They are the algebraic structures which replace groups from the textbook definition of symmetry.
Looking back at section \ref{sec:need}, we can see how categories fulfil the symmetry wish list: $(1')$ simple objects replace irreducible representations;  ($2'$) morphism replace invariant tensors as the algebraic concept classifying correlators and transfer matrices; ($3'$) categorical symmetries are preserved under the Wilsonian RG (section \ref{sec:RG}), and thus can be used to classify universality classes with non-integer $n$. 

We thus dispelled a lot of conceptual fog, and explained the true meaning of computations in theories with analytically continued symmetries. Interestingly, results of computations performed in the usual intuitive way remain correct, when reinterpreted categorically. In particular one can eschew such voodoo notions as spaces of non-integer dimensions in the intermediate steps. The readers who have done such computations in the past no longer have to lose any sleep about the validity of their final answers.
 
We have illustrated how the categorical language works in many situations of interest to quantum field theory: perturbation theory, conserved currents, explicit and spontaneous symmetry breaking etc. 
We have also developed a theory of continuous categorical symmetries, and conjectured a natural categorical generalization of the Goldstone theorem.

We hasten to add that information provided by any symmetry, and categorical symmetry in particular, is mostly qualitative in nature. Let us take the critical point of the 3d $O(n)$ loop model as an example. Symmetry implies that critical exponents will be the same for different lattices and different lattice actions with the same symmetry. It also predicts the existence of the conserved current operator transforming as the adjoint object of the associated Deligne category and having scaling dimension $d-1=2$. But symmetry does not by itself fix values of other critical exponents. Such quantitative predictions would require explicit calculational techniques such as the RG or the conformal bootstrap.

Of course, categories nowadays appear in many branches of physics, e.g.~fusion categories are the language of topological QFT and of anyonic physics (excitations of topological states of matter). Our work adds another example where categories provide the needed language. Deligne categories have symmetric braiding and exhibit superexponential growth of the number of simple objects (see appendix \ref{sec:Deligne}). This makes them rather different from fusion categories, which have finitely many objects and non-trivial braiding.

In this paper we focused on global symmetries, like $O(n)$ with non-integer $n$. Such symmetries can arise both in perturbative context, and non-perturbatively (via loop models). Clearly, the language of Deligne categories would also work for spacetime symmetries, like $O(d)$ with non-integer $d$. Field theories in non-integer dimensions have been considered in physics since the seminal work of Wilson and Fisher \cite{Wilson:1971dc,Wilson:1972cf}, mostly in perturbation theory. Unlike for $O(n)$, it is not yet clear if these theories can be defined non-perturbatively (see \cite{Hogervorst:2014rta} for one attempt). This was one of the reason stopping us from presenting the corresponding theory (the other being the length of this paper).  We mention just some interesting parallels between the $O(n)$ and $O(d)$ stories. Theories in non-integer spacetime dimension should violate unitarity, as has already been discussed for free theories and in the $4-\epsilon$ expansion \cite{Hogervorst:2014rta,Hogervorst:2015akt}. Parity-violating theories should not allow analytic continuation in dimension, since there is no Deligne category for non-integer $SO(d)$ (section \ref{sec:otherfamilies}).

Conformally invariant theories may be easier to make sense for non-integer spacetime dimensions \cite{El-Showk:2013nia}. Indeed, the CFT four point function depends on two cross ratios for whatever the number of dimensions $d\ge 2$. Conformal blocks can be also analytically continued in $d$. Perhaps the language of Deligne categories will turn out useful in this setting, as was hypothesized in \cite{Isachenkov:2017qgn}.

Categorical language may turn out particularly useful when considering the analytic continuation of fermionic and supersymmetric theories to non-integer dimensions. As one potential application, consider the Gross-Neveu-Yukawa model. It is believed that this model with ``half'' a Majorana fermion in the $4-\epsilon$ expansion can be analytically continued to the $\fcy N = 1$ super-Ising model in 3d \cite{ThomasSem,Fei:2016sgs}. This is evidenced by the relation $\Delta_\psi = \Delta_\phi + \frac12$ among the scaling dimensions, checked up to $O(\epsilon^2)$ \cite{Fei:2016sgs} and suggestive of a supersymmetry. 
Whether there is a supersymmetry actually underlying this relation is however obscure, as both the number of spacetime dimensions and number of fermions is non-integer.\footnote{We are grateful to Pavel Etingof for forwarding to us the unpublished construction of categories involving spinorial representations by Pierre Deligne \cite{Deligne-letter}.}  Our categorical language should allow one to make precise the sense in which the model is supersymmetric, and in particular to prove the above relation to all orders in $\epsilon$.

We finish with a list of some of the open problems and questions to consider:
\begin{itemize}
\item Can one prove a categorical version of the Goldstone theorem that we conjectured, and develop an effective theory of categorical Goldstone bosons?

\item Can categorical symmetries be gauged? 

\item How do categorical symmetries interact with anomalies? More specific, massless QCD with $N_f$ massless Dirac fermions naively has a $U(N_f)\times U(N_f)$ symmetry, but this is broken to $SU(N_f)\times SU(N_f)\times U(1)$ by the axial anomaly. As mentioned, $SU(N)$ does not admit a continuation to non-integer $N$. Taken at face value, this suggests that massless QCD cannot be extended to non-integer $N_f$. On the other hand, there seems to be no obstruction for the extension of massive QCD with its $U(N_f)$ symmetry.

\item Can one formulate categorical symmetries in terms of topological surface operators? This seems challenging, especially for the discrete case.
 
\item What is the resolution of the apparent paradox mentioned in footnote \ref{note:paradox}?
\end{itemize}

 \section*{Acknowledgements}
 
 D.B.~would like to thank Scott Morrison, who first introduced him to the topic of tensor categories, and to Angus Gruen and Yifan Wang for useful discussions.
 
 S.R.~thanks Alexei Borodin, Matthijs Hogervorst, and Maxim Kontsevich from whom he first heard about the Deligne categories, as well as Pavel Etingof, Dennis Gaitsgory, Bernard Nienhuis, Yan Soibelman for related discussions. S.R.~is also grateful to Victor Gorbenko and Bernardo Zan for an earlier collaboration on the Potts model, which led to this project, and to Jesper Jacobsen and Hubert Saleur for many discussions about loop and cluster models in general.
 
 We thank Malek Abdesselam, John Cardy, Pavel Etingof, Victor Gorbenko, Misha Isachenkov, Hubert Saleur, and Bernardo Zan for comments on the draft. We thank Christopher Ryba for carefully reading the draft and pointing out misprints and inaccuracies.
 
 D.B.~is supported in part by the General Sir John Monash Foundation, and in part by Simon Foundation Grant No.~488653. S.R.~is supported by the Simons Foundation grant 488655 (Simons Collaboration on the Nonperturbative Bootstrap), and by Mitsubishi Heavy Industries as an ENS-MHI Chair holder.

 \appendix

 \section{Tensor categories}
 \label{sec: category}
 
 The purpose of this appendix is to allow interested reader to quickly pick up some information about category theory, before plunging into mathematical literature. Let us mention some useful resources. The classic book introducing category theory is \cite{maclane:71}; a modern textbook introduction to the basics can be found in \cite{leinster_2014}. Introductions to the theory of tensor categories, aimed at mathematicians, can be found in \cite{etingof2016tensor} and \cite{turaev2017monoidal}. More basic introductions can be found in the papers \cite{2009arXiv0903.0340B} and \cite{2009arXiv0905.3010C}. See \cite{2009arXiv0908.3347S} for a survey of string diagrams, reviewing the various different types of monoidal categories and their associated diagrams.
 
 \subsection{Basic definitions}
 Broadly speaking, a tensor category is a category which abstracts the notion of vector spaces and linear operators between them. This involves defining a number of structures, each of which corresponds to structures appearing the vector spaces:
 
 \begin{enumerate}
 	\item A \emph{category} abstracts the notion of function composition,
 	\item A \emph{monoidal} category is a category with a notion of a tensor product $\otimes$.
 	\item A \emph{braiding} on a monoidal category gives us morphisms from $\beta_{\bf a,b}: \bf a\otimes b\rightarrow b\otimes \bf a$. A \emph{symmetric} braiding is one such that performing the braiding twice gives us the identity..
 	\item A \emph{rigid} category is a category where every object $\bf a$ has a dual $\overline{\bf a}$.
 	\item A $\mathbb C$-\emph{linear} category is one for which the set of morphisms is a vector space and for which morphism composition is linear.
 	\item An \emph{semisimple} category is one where we have a direct sum $\oplus$, and for which objects can be decomposed into a finite sum of \emph{simple} objects.
 \end{enumerate}
 A \emph{tensor category} is a $\mathbb C$-linear semisimple rigid monoidal category, where $\bf 1$ is simple. A \emph{symmetric} tensor category is a tensor category with a symmetric braiding. We should note that there is no canonical definition of tensor category in the literature. Our definition is a stronger condition than that given in Chapter 4 of \cite{etingof2016tensor}, where for convenience we require our categories to be semisimple rather than abelian. 
 We will give more detailed explanations of each of these terms below. 
 
 \subsubsection{Categories}
 A \emph{category} $\fcy C$ consists of:
 \begin{enumerate}
 	\item A collection of \emph{objects}, ${\bf a},{\bf b},\ldots $
 	\item For any objects ${\bf a}, {\bf b}$, a set $\Hom({\bf a}\to{\bf b})$ of \emph{morphisms}, denoted by arrows $f:{\bf a}\rightarrow{\bf b}.$
 	\item Given morphisms $f:{\bf a}\rightarrow{\bf b}$ and $g:{\bf b}\rightarrow{\bf c}$ there is a way to compose them $\circ$, to create a new morphism $(g\circ f):{\bf a}\rightarrow{\bf c}.$
 	\item For every object ${\bf a}$ there is an \emph{identity} morphism $\text{id}_{\bf a}:{\bf a}\rightarrow{\bf a}$.
 \end{enumerate}
 Morphism composition is associative, and the identity morphisms composes trivially:
 \begin{equation}(f\circ g)\circ h = f\circ(g\circ h)\,,\quad \text{ and }\ f\circ\text{id}_{\bf a} = f = \text{id}_{\bf a}\circ f.\end{equation}
 
 A simple example of a category is $\mathsf{Vec}(\mathbb C)$, the category of complex vector spaces. The objects are the vector spaces $\mathbb C^n$ for $n = 1,2,3\dots$ and the morphisms between $\mathbb C^n\rightarrow \mathbb C^m$ are $m\times n$ matrices. We compose these morphisms using matrix multiplication. The identity morphisms on $\mathbb C^n$ is just the $n\times n$ identity matrix.
 
 The finite-dimensional complex representations of a group $G$ form a category $\Rep\,G$, where the objects are representations and the morphisms are covariant maps between representations. 
 
 A morphism $f:{\bf a}\rightarrow {\bf b}$ is an \emph{isomorphism} there exists an inverse morphisms $f^{-1}:{\bf b}\rightarrow{\bf a}$ such that $f^{-1}\circ f = \text{id}_{\bf a}$ and $f\circ f^{-1} = \text{id}_{\bf b}$. Two objects are \emph{isomorphic} if there is an isomorphism between them.
 
 A \emph{functor} generalizes the notion of group homomorphisms to categories. More precisely, given two categories $\fcy C$ and $\fcy D$, a functor $F:\fcy C\rightarrow \fcy D$ is a associates to every object $\bf a\in\fcy C$ an object $F(\bf a)\in\fcy D$ and every morphism $f:\bf a\rightarrow\bf b$ a morphisms $F(f):F(\bf a)\rightarrow F(\bf b)$, such that
 \begin{equation}F(f\circ g) = F(f)\circ F(g)\,,\quad\text{ and }F(\text{id}_{\bf a}) = \text{id}_{F({\bf a})}.\end{equation}
 
 As an example, given two groups $H\subset G$ there is a functor $F:\Rep(G)\rightarrow\Rep(H)$. This functor takes representation ${\bf a}\in \Rep(G)$ and turns it into the restricted representation $F({\bf a)}\in\Rep(H)$. Since any morphism $f:{\bf a}\rightarrow{\bf b}$ preserves $G$, they also preserve the restricted representation on the subgroup $H$ and so $F(f)$ is a morphism in $\Rep(H).$ 

 A functor $F:\fcy C\rightarrow\fcy D$ is $\emph{faithful}$ if for every $\ba$ and $\bf b$ in $\fcy C$ the function $F:\Hom(\ba\rightarrow{\bf b})\rightarrow\Hom\left(F(\ba)\rightarrow F({\bf b})\right)$ is injective. The functor is \emph{full} if for every $\ba$ and $\bf b$ in $\fcy C$ the function $F:\Hom(\ba\rightarrow{\bf b})\rightarrow\Hom\left(F(\ba)\rightarrow F({\bf b})\right)$ is surjective.

 \subsubsection{Linear categories}
 \label{sec:LINCAT}
 A \emph{linear} category is one for which the sets of morphisms $\Hom({\bf a}\rightarrow{\bf b})$ are vector spaces (for our purposes, over $\mathbb C$) and for which morphism composition is bilinear. The category $\mathsf{Mat}(\mathbb C)$ is clearly linear, as is $\Rep\,G.$
 
 A linear category $\fcy C$ is \emph{additive} if there is some abstract notion of a direct sum $\oplus$. More precisely, for a pair of objects ${\bf a}$ and ${\bf b}$, the \emph{direct sum}, if it exists, is defined to be an object ${\bf a}\oplus{\bf b}$ such that:
 \begin{enumerate}
 	\item There exists \emph{embedding} morphisms $\iota_1:{\bf a}\rightarrow {\bf a}\oplus{\bf b}$ and $\iota_2:{\bf b}\rightarrow {\bf a}\oplus{\bf b}$
 	\item There exists \emph{projection} morphisms $\pi_1:{\bf a}\oplus{\bf b}\rightarrow{\bf a}$ and $\pi_2:{\bf a}\oplus{\bf b}\rightarrow{\bf b}$
 	\item These maps satisfy the equations
 	\begin{equation}\pi_1\circ \iota_1 = \text{id}_{\bf a}\,,\quad \pi_2\circ \iota_2 = \text{id}_{\bf b}\,,\quad \iota_1\circ \pi_1 + \iota_2\circ \pi_2 = \text{id}_{\bf a\oplus \bf b}.
 	\end{equation}
 \end{enumerate}
 If such an object exists it is unique up to unique isomorphism (see section 8.2 of \cite{maclane:71}).
 
 This definition of a direct sum may look a little abstract, so let us unpack it for the case of  $\mathsf{Mat}(\mathbb C)$. In this category the direct sum of $\mathbb C^m$ and $\mathbb C^n$ is the object $\mathbb C^m \oplus \mathbb C^n = \mathbb C^{m+n}$. In this case the embedding and projection morphisms are
 \begin{equation}\begin{aligned}
 \iota_1 &= \begin{pmatrix} I_{m\times m} \\ 0_{n\times m}\end{pmatrix}\,,&\quad \iota_2 &= \begin{pmatrix} 0_{m\times n} \\ I_{n\times n}\end{pmatrix} \\
 \pi_1 &= \begin{pmatrix} I_{m\times m} &  0_{m\times n}\end{pmatrix}\,,&\quad \pi_2 &= \begin{pmatrix} 0_{n\times m} &  I_{n\times n}\end{pmatrix},\\
 \end{aligned}\end{equation}
 where $I_{m\times m}$ is the $m\times m$ identity matrix and $0_{n\times m}$ is the $n\times m$ matrix where all entries are $0$. 
 
 A linear\footnote{This definition, and the definition of semisimplicity which follows, can be extended to more general categories where the $\Hom$-sets are merely required to be abelian groups rather than vector spaces. For simplicity we will only consider linear categories; more general definitions can be found in Chapter 1 of \cite{etingof2016tensor}.} category is additive if for each pair of objects $\bf a$ and $\bf b$ there exists a direct sum $\bf a\oplus\bf b$, and if there exists a zero object ${\bf 0}\in\fcy C$ such that $\text{Hom}({\bf 0}\rightarrow{\bf 0}) = 0$.
 
 Given a linear category $\fcy C$, one can always construct a new category $\fcy C^{\text{add}}$ which is additive by formally adding objects ${\bf a}\oplus{\bf b}$ and ${\bf a}\oplus{\bf b}\oplus{\bf c}$ and so on to $\fcy C$. This procedure is the \emph{additive completion} we made use of to construct $\Reptilde\,O(n)$ in section \ref{sec:reptON}. It is a very general construction, and is detailed in section 16.2 of \cite{turaev2017monoidal}. The upshot is that we do not lose much generality by restricting ourself to additive categories.

 An additive category $\fcy C$ is \emph{semisimple} if:
  \begin{enumerate}
 	\item There exists a set of \emph{simple objects} $\ba_i$ such that $\Hom(\ba_i\rightarrow\ba_i)$ is a one-dimensional vector space, and $\Hom(\ba_i\rightarrow\ba_j)$ for $i\ne j$ contains only the $0$ morphism
 	\item Every object in $\fcy C$ is the direct sum of simple objects
 \end{enumerate}

 Since for a simple object $\ba$ the space $\Hom(\ba\rightarrow\ba)$ is one dimensional every morphism $f:\ba\rightarrow\ba$ is of the form $\lambda\,\text{id}_{\bf a}$ for some $\lambda\in\mathbb C.$ In $\Rep\,G$ the simple objects are precisely the irreducible representations. For a finite or compact Lie group $G$, $\Rep\,G$ is always semisimple. For non-compact groups there can be representations which are indecomposable but not irreducible, and this spoils semisimplicity. These categories are still \emph{abelian}\footnote{A definition of what this means can again be found in Chapter 1 of \cite{etingof2016tensor}; it is a stronger notion than additive but weak than semisimplicity.} and many results extend naturally to this more general case. We will however focus on semisimple categories in this paper.

 \subsubsection{Monoidal categories and braidings}
 \label{sec:MONOID}
 A \emph{monoidal category} is a category with a tensor product. More precisely, it is a category with: 
 \begin{enumerate}
 	\item A functor $\otimes: \fcy C\times \fcy C\rightarrow \fcy C$. This means that we can form tensor product ${\bf a}\otimes{\bf b}$ of any two objects ${\bf a}$ and ${\bf b}$. Also, given morphisms $f: {\bf a}\rightarrow{\bf b}$ and $g: {\bf c}\rightarrow{\bf d}$ we can form their tensor product: $f\otimes g:{\bf a}\otimes{\bf c}\rightarrow{\bf b}\otimes{\bf d}.$ Being a function, $\otimes$ should respect morphism composition:
 	\begin{equation}(f_1\circ f_2)\otimes(g_1\circ g_2) = (f_1\otimes g_1)\circ(f_2\otimes g_2)\end{equation}
 	\item A special object ${\bf 1}\in\fcy C$.
 \end{enumerate}
 The tensor product is required to be associative, and $\bf 1$ acts as the identity:\footnote{Here we are only describing a \emph{strict} monoidal category. More generally the associativity and identity hold only up to a unique isomorphism, and additional axioms need to be introduced. We will not worry about such subtleties here.}
 \begin{equation}({\bf a}\otimes{\bf b})\otimes{\bf c} = {\bf a}\otimes({\bf b}\otimes{\bf c})\,,\quad \text{ and }{\bf a}\otimes{\bf 1} = {\bf a} = {\bf 1}\otimes{\bf a}.\end{equation}
 On a linear category, we require the tensor product to be bilinear. Both $\mathsf{Mat}(\mathbb C)$ and $\Rep\,G$ form linear monoidal categories with the usual tensor products. 
 
 A \emph{braiding} in a monoidal category relates ${\bf a}\otimes{\bf b}$ to ${\bf b}\otimes{\bf a}.$ More precisely, it is a family of isomorphisms
 \begin{equation}\beta_{{\bf a},{\bf b}}:{\bf a}\otimes{\bf b}\rightarrow{\bf b}\otimes{\bf a}\end{equation}
 satisfying the identities\footnote{The first equation means, in plain language, that to braid ${\bf a}$ with ${\bf b}\otimes{\bf c}$ we can first braid ${\bf a}$ with $\bf b$ doing nothing to $\bf c$ and then with $\bf c$ doing nothing to $\bf b$. The second is analogous.}
 \begin{equation}
 \beta_{{\bf a},{\bf b}\otimes{\bf c}} = (\text{id}_{\bf b}\otimes\beta_{{\bf a},{\bf c}})\circ(\beta_{{\bf a},{\bf b}}\otimes\text{id}_{\bf c})\,,\quad \beta_{{\bf a}\otimes{\bf b},{\bf c}} = (\beta_{{\bf a},{\bf c}}\otimes\text{id}_{\bf b})\circ(\text{id}_{\bf a}\otimes\beta_{{\bf b},{\bf c}})\,,
 \end{equation}
 and for any pair of morphisms $f:{\bf a}\rightarrow{\bf b}$ and $g:{\bf c}\rightarrow{\bf d}$,
 \begin{equation}
 \beta_{{\bf b},{\bf d}}\circ(f\otimes g) = (g\otimes f)\circ\beta_{{\bf a},{\bf c}}\,.
 \end{equation}
 The braiding is \emph{symmetric} if 
 \begin{equation}
 \beta_{{\bf b},{\bf a}}\circ\beta_{{\bf a},{\bf b}} = \text{id}_{{\bf a}\otimes{\bf b}}.
 \end{equation}
 A monoidal category with a braiding (resp.~symmetric braiding) is called \emph{braided} (resp.~\emph{symmetric}).
 
 Both $\mathsf{Mat}(\mathbb C)$ and $\Rep\,G$ have symmetric braidings. In $\mathsf{Mat}(\mathbb C)$, fix some basis ${\bf e}^{(k)}_i$ of $\mathbb C^k$ and write vectors in $\mathbb C^{m}\otimes \mathbb C^n$ in the form
 \begin{equation}
 v = \sum_{i=1}^m\sum_{j = 1}^n v_{ij}\ {\bf e}^{(m)}_i\otimes {\bf e}^{(n)}_j
 \end{equation}
 The braided vector $\beta_{m,n}(v)\in \mathbb C^{n}\otimes \mathbb C^m$ then corresponds to transposing the matrix $(v_{ij})$. 
 The braiding on $\Rep\,G$ can be defined likewise.

 \subsubsection{Symmetrizing and antisymmetrizing}
 \label{sec:sym}
 
 Let $\bf a$ be an object in a symmetric tensor category $\fcy C$, and consider the tensor product ${\bf a}^{\otimes k}$. For any $i<k$ we can define the morphism
 \begin{equation}\sigma_i = \text{id}_{\bf a}^{\otimes i-1}\otimes\beta_{\bf a,\bf a}\otimes\text{id}_{\bf a}^{\otimes k-i-1}\end{equation}
 which interchanges the $i$ and $i+1$ copies of $\bf a$ in the tensor product. It follows from the braiding axioms that the operators $\sigma_i$ satisfy the relations
 \begin{equation}\begin{split}
 \sigma_i^2 &= 1 \\
 \sigma_i\sigma_j &= \sigma_j\sigma_i \text{ for } |i-j|>1 \\
 \sigma_i\sigma_{i+1}\sigma_{i} &= \sigma_{i+1}\sigma_i\sigma_{i+1}\\
 \end{split}\end{equation}
 which are the generating relations for the group symmetric group $S_k$. Thus in any symmetric tensor category we have a morphism $S_k\rightarrow\Hom({\bf a}^{\otimes k}\rightarrow{\bf a}^{\otimes k}).$ We can then decompose the space $\Hom({\bf a}^{\otimes k}\rightarrow{\bf a}^{\otimes k})$ into representations of $S_k$, so that all morphisms can be classified by their symmetry properties. We can for instance define the fully symmetric morphisms to be those that that transform trivially under $S_k$ and the fully antisymmetric morphisms to be those that change sign under the braiding. This generalizes the notion of symmetric and antisymmetric tensors. 
 
 By classifying idempotent morphisms by their symmetry properties under $S_k$, we can classify the simple objects ${\bf b}_i$ in
 $${\bf a}^{\otimes k} = \bigoplus_{i = 1}^K {\bf b}_i$$
 by their symmetry properties as well. The collection of all simple objects which transform trivially under $S_k$ together form an (in general not simple) object in ${\bf a}^{\otimes k}$ which is called the \emph{symmetrized product} $S^k({\bf a})$. We can likewise define the \emph{antisymmetrized} product $\Lambda^k({\bf a})$ by restricting to objects transforming in the fully antisymmetric representation of $S_k$.
 
 \subsubsection{Tensor functors}
 If we have categories with extra structure, then it is natural to define special functors which preserve these structures. An \emph{additive functor} is a functor which preserves direct sums and $\bf 0$. A \emph{strict monoidal functor} is a functor which preserves tensor products and $\bf 1$. A \emph{strict braided monoidal functor} is a strict monoidal functor which furthermore preserves the braiding. We shall define a \emph{strict braided tensor functor} to be a functor which is both strict braided monoidal and additive.

 \subsection{Rigidity, traces and dimensions}
 \label{sec:RIG}
 
 In this section we will now define \emph{rigidity} in the context of a symmetric monoidal category. Given an object $\ba$ in a monoidal category, the object $\overline\ba$ is \emph{dual} to $\ba$ if there exists morphisms $\delta^{\overline\ba,\ba}:\overline\ba\otimes\ba\rightarrow{\bf 1}$ and  $\delta_{\ba,\overline\ba}:{\bf 1}\rightarrow\ba\otimes\overline\ba$, which satisfy the zig-zag relations
 \begin{equation}\label{eq:zigzag}
 ({\bf 1}\otimes \delta^{\overline\ba,\ba})\circ (\delta_{\ba,\overline\ba}\otimes {\bf 1})= \text{id}_{\ba}\,,\qquad (\delta^{\overline\ba, \ba}\otimes {\bf 1})\circ ({\bf 1}\otimes \delta_{\ba,\overline\ba})=\text{id}_{\overline\ba}\,.
 \end{equation}
 These morphisms are called the \emph{cap} and \emph{cocap} for their graphical representations:
 $$\delta^{\overline\ba,\ba}=\includegraphics[trim=0 1.5em 0 0, scale=0.4]{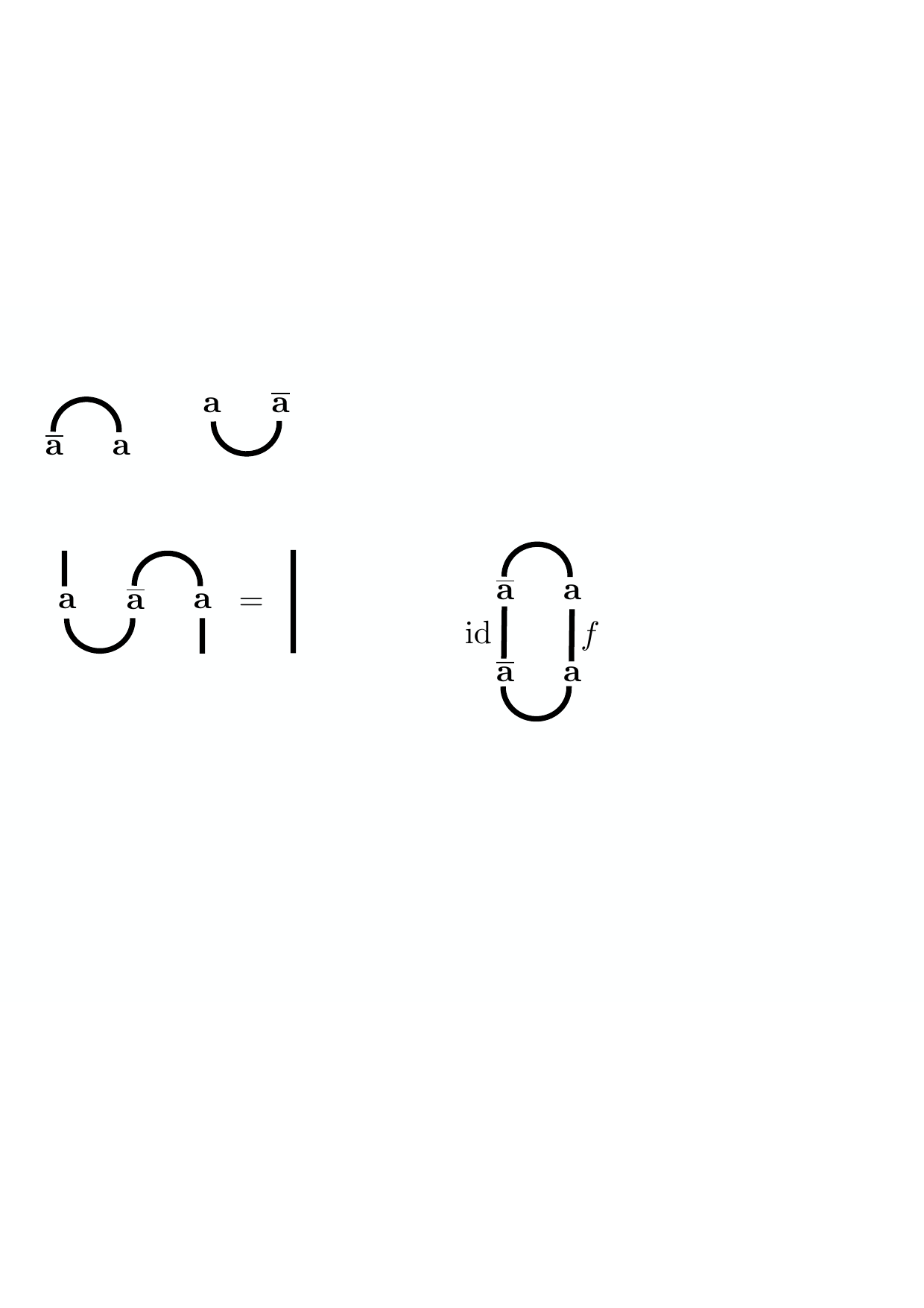}\qquad \text{and}\qquad \delta_{\ba,\overline\ba}=\includegraphics[trim=0 1.5em 0 0,scale=0.4]{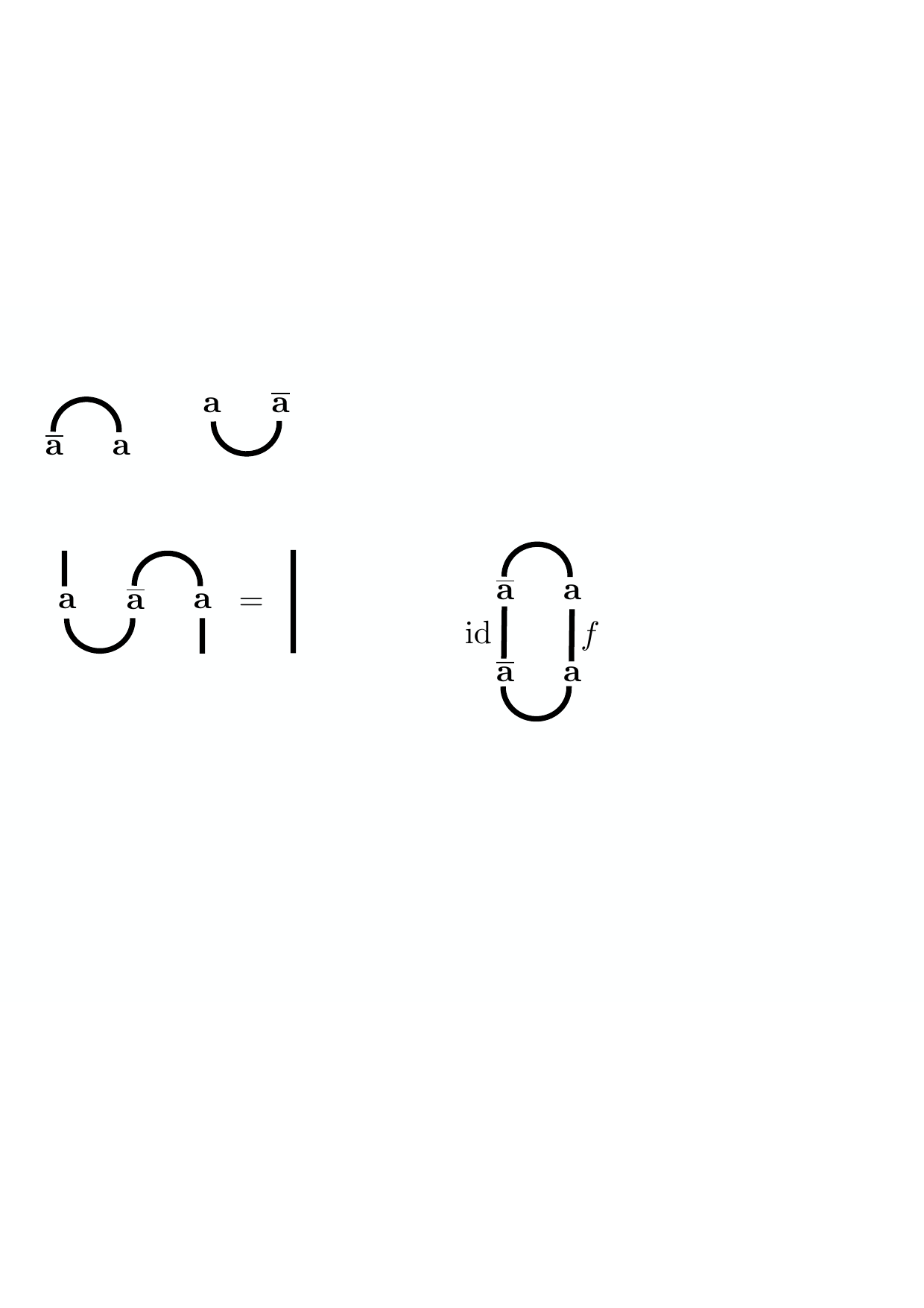}\,.$$
 The first zig-zag relation can be expressed graphically as:
 \begin{equation}
\includegraphics[trim=0 3em 0 0, scale=0.3]{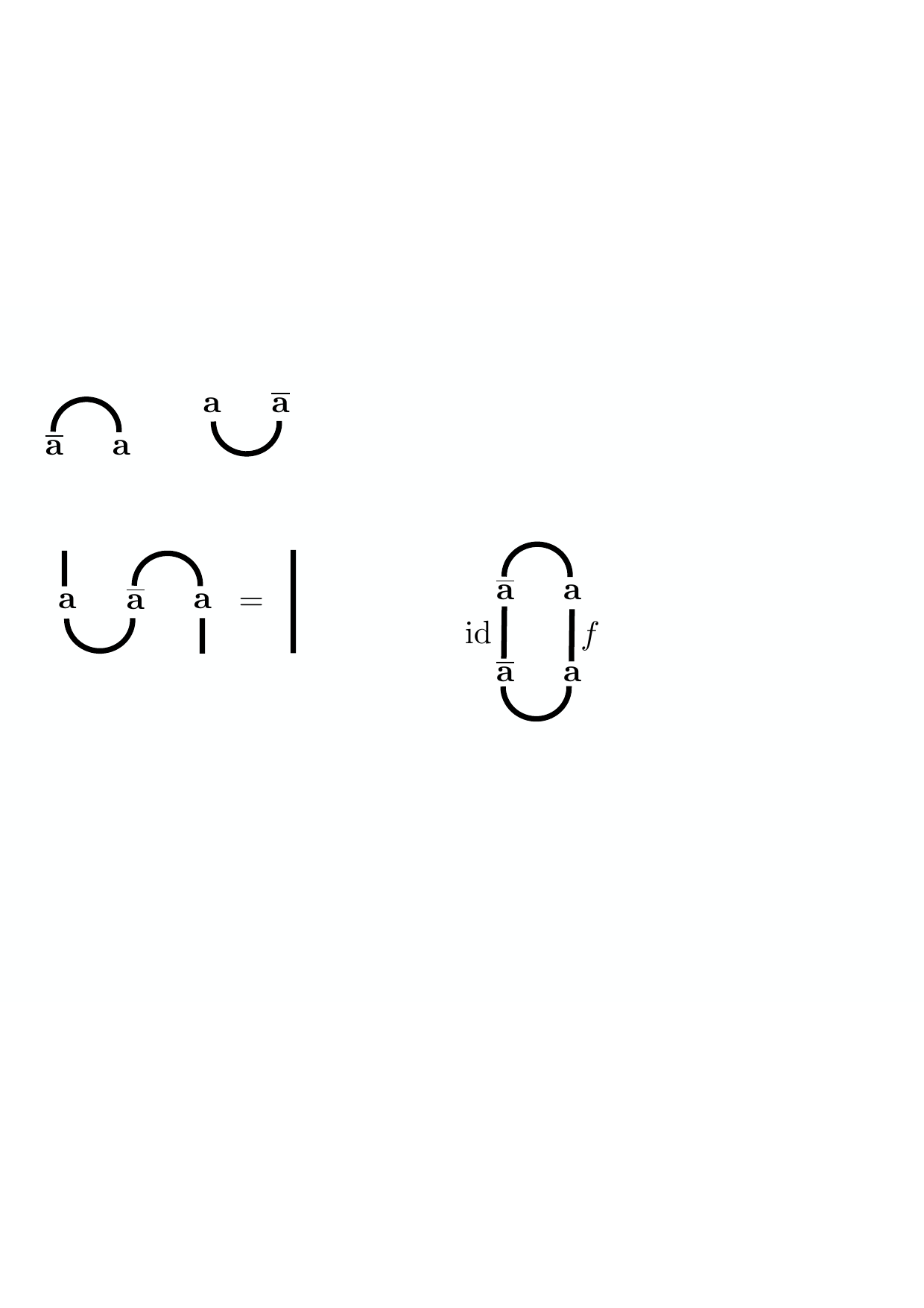},
\end{equation}
and an analogous expression holds for the second zig-zag relation.

 If $\overline\ba$ is dual to $\ba$, then $\ba$ is dual to $\overline\ba$. This is a consequence of the braiding, for we can define morphisms:
 \begin{equation}
 \delta_{\overline \ba,\ba} = \delta_{\ba,\overline\ba}\circ \beta_{\overline\ba,\ba}\,,\quad \delta^{\ba,\overline\ba} = \beta_{\overline\ba,\ba} \circ \delta^{\overline \ba,\ba}\,,
 \end{equation}
 which also satisfy the zig-zag relations, but with $\ba$ and $\overline\ba$ swapped. Proving this is an exercise in graphical manipulations. 
 
 It can be proven (see e.g.~section 2.10 of \cite{etingof2016tensor}) that for a given object $\ba$ its dual $\overline\ba$ is unique up to unique isomorphism. For this reason we can safely talk about \emph{the} dual of an object, without having to worry about which dual we are actually talking about.
 
 A symmetric monoidal category is \emph{rigid} if every object has a dual.
 
 \subsubsection{Dimensions and traces}
 Let $\fcy C$ be a rigid symmetric tensor category, and $\bf a$ an object in $\fcy C$. For any morphism $f:\bf a\rightarrow\bf a$, we define the \emph{trace}:
 \begin{equation}\label{eq:trDef}
 \text{tr}_{\bf a}(f) = \delta^{\overline\ba,\ba} \circ (\text{id}_{\overline\ba}\otimes f) \circ \delta_{\overline\ba,\ba} = \includegraphics[trim=0 5em 0 0, scale=0.35]{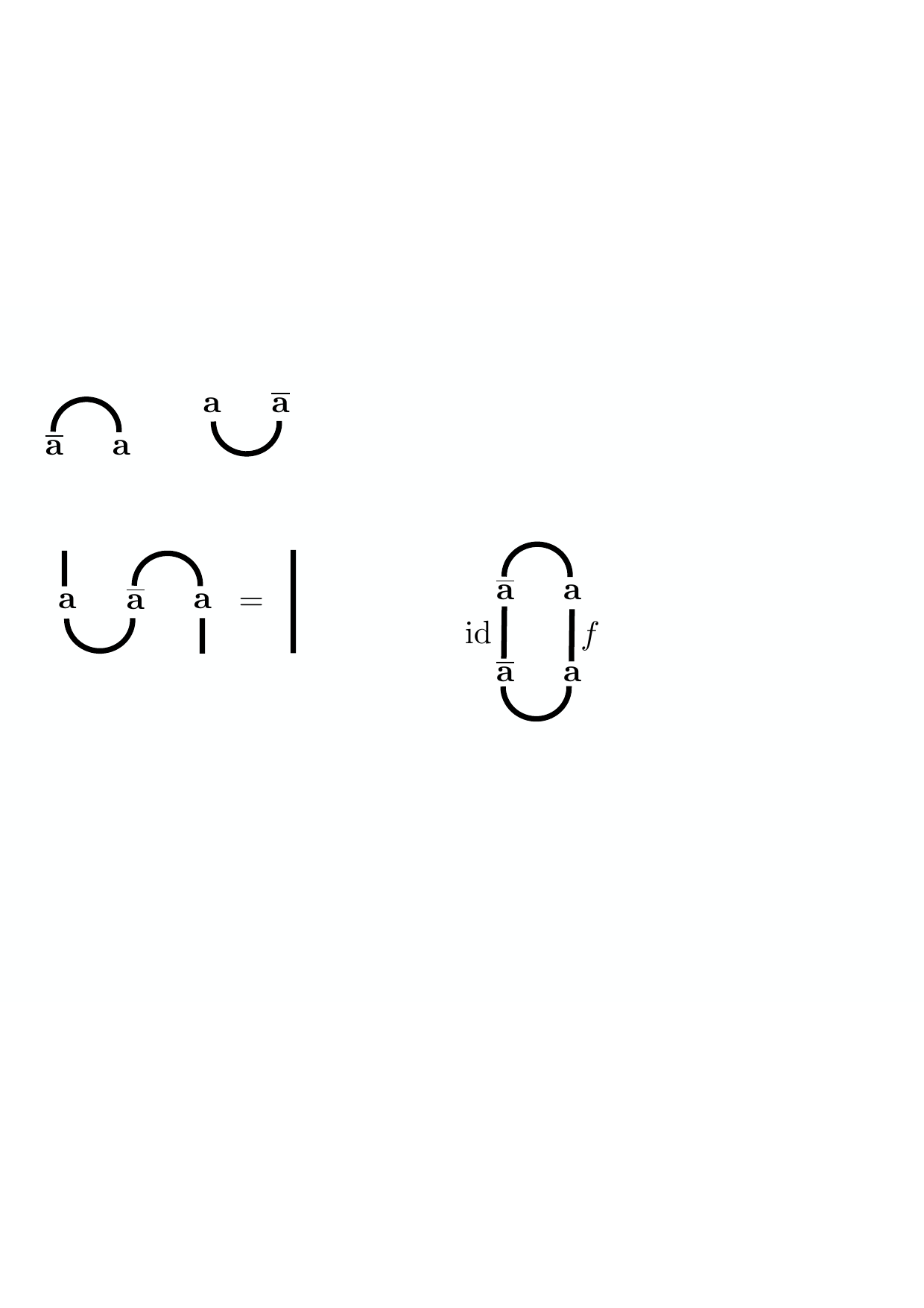}\,.
 \end{equation}\\[-0.5em]
 Using string diagrams, the reader may convince themselves that an equivalent definition is
 \begin{equation}
 \text{tr}_{\bf a}(f) = \delta^{\ba,\overline\ba} \circ (f\otimes\text{id}_{\overline\ba}) \circ \delta_{\ba,\overline\ba}\ , 
 \end{equation}
 so that our choice to place $f$ on the right, rather then the left, in \eqref{eq:trDef} was unimportant.
 
 \begin{prop} The trace satisfies the following identities:
 	\begin{enumerate}
 		\item 
 		$\mathrm{tr}_{\bf a}(\lambda f+\mu g) = \lambda\mathrm{tr}_{\bf a}(f) + \mu \mathrm{tr}_{\bf a}(g)\text{ for } f,g:\ba\rightarrow\ba\text{ and }\lambda,\mu\in\mathbb C$
 		\item $\mathrm{tr}_{\bf a\otimes \bf b}(f\otimes g)  = \mathrm{tr}_{\bf a}(f)\mathrm{tr}_{\bf b}(g) \text{ for } f:\ba\rightarrow{\bf a} \text{ and } g:{\bf b}\rightarrow{\bf b}$
 		\item $\mathrm{tr}_{\bf a}(g\circ f) = \mathrm{tr}_{\bf b}(f\circ g) \text{ for } f:\ba\rightarrow{\bf b} \text{ and } g:{\bf b}\rightarrow\ba$
 \end{enumerate}\end{prop}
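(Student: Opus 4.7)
The three properties are all routine string-diagram manipulations in a rigid symmetric tensor category, so the plan is to prove each by unfolding the definition \eqref{eq:trDef} and applying the zig-zag relations \eqref{eq:zigzag} together with bilinearity of $\circ$ and $\otimes$ and the symmetric braiding axioms.

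\emph{Linearity (property 1).} This is the easiest: I would simply substitute $\lambda f + \mu g$ into \eqref{eq:trDef}. Since composition $\circ$ is bilinear on morphisms in a linear category, and since the tensor product $-\otimes-$ is also bilinear, the expression $\delta^{\overline\ba,\ba}\circ(\mathrm{id}_{\overline\ba}\otimes(\lambda f+\mu g))\circ\delta_{\overline\ba,\ba}$ distributes immediately to $\lambda\mathrm{tr}_\ba(f)+\mu\mathrm{tr}_\ba(g)$. No diagrammatic manipulation is needed.

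\emph{Multiplicativity under $\otimes$ (property 2).} First I would note that in a rigid monoidal category, the dual of $\ba\otimes\bf b$ can be taken to be $\overline{\bf b}\otimes\overline\ba$, with cap and cocap constructed from the individual caps/cocaps in the standard way (this is proved as in section 2.10 of \cite{etingof2016tensor} and matches the picture of ``nested'' arcs). In a \emph{symmetric} category, however, one can equivalently take the dual of $\ba\otimes\bf b$ to be $\overline\ba\otimes\overline{\bf b}$ by conjugating with the braiding $\beta_{\overline\ba,\overline{\bf b}}$. Using this second form, the cap is $(\delta^{\overline\ba,\ba}\otimes \delta^{\overline{\bf b},\bf b})$ composed with an appropriate braiding that crosses $\overline{\bf b}$ past $\ba$, and similarly for the cocap. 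Plugging into the trace definition and then using naturality of the braiding ($\beta$ commutes past $f\otimes g$), the symmetric property $\beta^{-1}=\beta$ and the usual interchange law $(\alpha\otimes\gamma)\circ(\delta\otimes\eta)=(\alpha\circ\delta)\otimes(\gamma\circ\eta)$, the diagram separates into the two disjoint pieces
\begin{equation}
\bigl[\delta^{\overline\ba,\ba}\circ(\mathrm{id}_{\overline\ba}\otimes f)\circ\delta_{\overline\ba,\ba}\bigr]\otimes\bigl[\delta^{\overline{\bf b},\bf b}\circ(\mathrm{id}_{\overline{\bf b}}\otimes g)\circ\delta_{\overline{\bf b},\bf b}\bigr]
=\mathrm{tr}_\ba(f)\otimes \mathrm{tr}_{\bf b}(g).
\end{equation}
Because $\mathrm{End}({\bf 1})\cong\mathbb C$ and $\otimes$ on endomorphisms of $\bf 1$ is ordinary multiplication, this gives property 2. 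The braiding step is the one I would expect to be the main obstacle---it is essentially a matter of being pedantic about which dual one uses and verifying that all braidings that appear can be absorbed into the cap/cocap.

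\emph{Cyclicity (property 3).} I would draw $\mathrm{tr}_\ba(g\circ f)$ as a loop consisting of $\delta_{\overline\ba,\ba}$ on the bottom, then $f:\ba\to\bf b$ followed by $g:\bf b\to\ba$ on the right strand, and $\delta^{\overline\ba,\ba}$ on top. The goal is to transform this diagram into $\mathrm{tr}_{\bf b}(f\circ g)$, which has $\delta_{\overline{\bf b},\bf b}$ and $\delta^{\overline{\bf b},\bf b}$ as its closing arcs and $g$ followed by $f$ on the right strand. Using the two zig-zag relations \eqref{eq:zigzag} for $\ba$ and for $\bf b$, one can pull $g$ around the cap on the top: insert $\mathrm{id}_{\overline{\bf b}}=(\delta^{\overline{\bf b},\bf b}\otimes\mathrm{id}_{\overline{\bf b}})\circ(\mathrm{id}_{\overline{\bf b}}\otimes\delta_{\bf b,\overline{\bf b}})$ on an otherwise unused strand, and then use naturality of caps/cocaps (which follows from the zig-zag relations and rigidity) to slide $g$ from the top of the loop to the bottom. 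After cancellation of the zig-zag on $\ba$, what remains is exactly the diagram for $\mathrm{tr}_{\bf b}(f\circ g)$. The slickest way is to write out both sides in string diagrams side-by-side and perform two applications of the zig-zag relations; I expect the bookkeeping of the $\ba/\overline\ba$ versus $\bf b/\overline{\bf b}$ caps to be the only thing requiring care.
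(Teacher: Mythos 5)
The paper does not actually supply a proof of this proposition: immediately after stating it, the authors write ``Each of these properties is straightforward to prove, and they generalize the usual linearity and cyclicity properties of the matrix trace,'' and move on. So there is no authorial argument to compare against; your proposal fills a deliberate gap.

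Your arguments are correct and are precisely the standard string-diagram computations the authors have in mind. Linearity is indeed immediate from bilinearity of $\circ$ and $\otimes$. For multiplicativity you correctly observe that the choice of dual for $\ba\otimes{\bf b}$ is immaterial (duals are unique up to unique isomorphism, so the trace does not depend on the choice of cap/cocap), and that in a symmetric category the braiding lets you disentangle the two nested loops into a disjoint union, after which $\mathrm{End}({\bf 1})\cong\mathbb C$ turns $\otimes$ into multiplication. For cyclicity, the slick way to package your zig-zag argument is to introduce the transposed morphism $\overline g:\overline\ba\to\overline{\bf b}$ and use the two sliding identities $\delta^{\overline\ba,\ba}\circ(\mathrm{id}_{\overline\ba}\otimes g)=\delta^{\overline{\bf b},{\bf b}}\circ(\overline g\otimes\mathrm{id}_{\bf b})$ and $(\overline g\otimes\mathrm{id}_\ba)\circ\delta_{\overline\ba,\ba}=(\mathrm{id}_{\overline{\bf b}}\otimes g)\circ\delta_{\overline{\bf b},{\bf b}}$ (both consequences of the zig-zag relations); applied in sequence to $\mathrm{tr}_\ba(g\circ f)$ they carry $g$ once around the loop and yield $\mathrm{tr}_{\bf b}(f\circ g)$. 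This is exactly the ``slide $g$ around the cap'' manoeuvre you describe, and it is valid without any use of the braiding, so cyclicity actually holds in any rigid monoidal category, not merely a symmetric one.
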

 Each of these properties is straightforward to prove, and they generalize the usual linearity and cyclicity properties of the matrix trace. Further, we define the \emph{dimension} of $\bf a$ to be $\text{dim}({\bf a}) = \text{tr}_{\bf a}(\text{id}_{\bf a}).$ This generalizes the usual notion of the dimension of a representation.
 
 \begin{prop}\label{pr:dimForm}We have the following identities:
 	\begin{equation}
 	\mathrm{dim}({\bf a}\otimes{\bf b}) = \mathrm{dim}({\bf a})\mathrm{dim}({\bf b})\,,\quad \mathrm{dim}({\bf a}\oplus{\bf b}) = \mathrm{dim}({\bf a})+\mathrm{dim}({\bf b})\,,\quad \mathrm{dim}(\overline\ba) = \mathrm{dim}(\ba)\,.
 	\end{equation}
 \end{prop}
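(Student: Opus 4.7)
The plan is to deduce each of the three identities directly from the three trace axioms established in the immediately preceding proposition, together with the structural definitions already recorded in Appendix \ref{sec: category}. None of the three steps is genuinely difficult; the only place I expect to need real care is the bookkeeping for the third identity, which is also the only one that uses the braiding in a nontrivial way.

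For the multiplicativity identity, I would simply note that $\text{id}_{\ba\otimes{\bf b}} = \text{id}_\ba\otimes\text{id}_{\bf b}$ by bifunctoriality of $\otimes$, and then apply Property 2 (tensor-multiplicativity of the trace) verbatim to $f=\text{id}_\ba$ and $g=\text{id}_{\bf b}$. This is a one-line deduction.

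For the additivity identity, I would start from the defining decomposition $\text{id}_{\ba\oplus{\bf b}} = \iota_1\circ\pi_1 + \iota_2\circ\pi_2$ of the identity on a direct sum, which is already part of the definition of $\oplus$ given earlier in the appendix. Property 1 (linearity) splits the trace into two terms, and Property 3 (cyclicity) then rewrites each as $\tr_{\ba\oplus{\bf b}}(\iota_i\circ\pi_i)=\tr_{\ba_i}(\pi_i\circ\iota_i)=\tr_{\ba_i}(\text{id}_{\ba_i})$, which is $\dim(\ba)$ or $\dim({\bf b})$. Summing recovers the claim.

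For the dual-invariance identity, I would exploit the identification of $\ba$ as a dual of $\overline\ba$: the cap $\delta^{\ba,\overline\ba}$ and cocap $\delta_{\overline\ba,\ba}$ realizing this reversed pairing are, as explained at the end of Section \ref{sec:RIG}, obtained from $\delta^{\overline\ba,\ba}$ and $\delta_{\ba,\overline\ba}$ by pre- or post-composing with a suitable braiding. Substituting $\overline\ba$ into the trace formula \eqref{eq:trDef} for the identity morphism yields $\dim(\overline\ba) = \delta^{\ba,\overline\ba}\circ\delta_{\ba,\overline\ba}$; rewriting each factor in terms of the original pairings and the braiding, and then using associativity of composition to slide the braiding from one factor onto the other, collapses the expression to $\delta^{\overline\ba,\ba}\circ\delta_{\overline\ba,\ba} = \dim(\ba)$. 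The main obstacle here is purely notational bookkeeping: one has to keep straight which ordered pair $(\delta^{\cdot,\cdot},\delta_{\cdot,\cdot})$ realizes which dual pairing, and which direction of the braiding converts one into the other. Drawing the identity-loop of $\overline\ba$ as a string diagram and sliding it around into an identity-loop of $\ba$ renders the manipulation visually obvious and sidesteps any residual subtlety.
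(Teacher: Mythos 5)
Your proposal is correct, and all three steps are the right ones. A point worth noting: the paper states Proposition~\ref{pr:dimForm} without giving a proof (just as it states the preceding trace axioms without proof), so there is no ``paper's argument'' to compare against line by line; but the intended derivation is plainly the one you give — read each identity off the trace axioms together with the definitions of $\otimes$, $\oplus$, and duality. Your treatments of multiplicativity (Property~2 applied to $\text{id}_\ba\otimes\text{id}_{\bf b}=\text{id}_{\ba\otimes{\bf b}}$) and additivity (linearity plus cyclicity applied to $\text{id}_{\ba\oplus{\bf b}}=\iota_1\circ\pi_1+\iota_2\circ\pi_2$) are exactly the one-line arguments that suffice.

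For the third identity your braiding-sliding argument is correct, but you can skip it entirely by exploiting the fact that the paper has \emph{already} recorded two equivalent presentations of the trace, namely Eq.~\eqref{eq:trDef} and the variant on the next line, $\tr_\ba(f)=\delta^{\ba,\overline\ba}\circ(f\otimes\text{id}_{\overline\ba})\circ\delta_{\ba,\overline\ba}$. Evaluating that second form at $f=\text{id}_\ba$ gives $\dim(\ba)=\delta^{\ba,\overline\ba}\circ\delta_{\ba,\overline\ba}$. Evaluating the \emph{first} form, Eq.~\eqref{eq:trDef}, for the object $\overline\ba$ — using $\ba$ as the dual of $\overline\ba$ with caps $\delta^{\ba,\overline\ba}$ and $\delta_{\ba,\overline\ba}$, as constructed in section~\ref{sec:RIG} — gives $\dim(\overline\ba)=\delta^{\ba,\overline\ba}\circ\delta_{\ba,\overline\ba}$. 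These are literally the same morphism, so $\dim(\overline\ba)=\dim(\ba)$ with no further manipulation; the braiding has already done its work in establishing the equivalence of the two trace presentations. Your version in effect re-derives that equivalence inline, which is fine but not necessary given what is already on the page.
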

 
 \subsubsection{Implications of semisimplicity}
 \label{sec:SEMSIM}
 Semisimplicity places many non-trivial constraints on a tensor category. We will now explore a few non-trivial consequences of semisimplicity.
 
 \begin{prop}\label{pr:zeroDim} For any simple object $\ba$ in a symmetric tensor category $\fcy C$, $\mathrm{dim}(\ba)\neq 0$.
 \end{prop}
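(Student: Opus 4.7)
The plan is a direct argument based on the following principle: semisimplicity forces the unit $\bf 1$ to appear as a direct summand of $\overline\ba \otimes \ba$, and the canonical embedding/projection witnesses for this summand must, by Schur-type dimension counting, be scalar multiples of the cocap $\delta_{\overline\ba,\ba}$ and the cap $\delta^{\overline\ba,\ba}$; the composition of the two is exactly $\dim(\ba) \cdot \mathrm{id}_{\bf 1}$, and it is also $\mathrm{id}_{\bf 1}$, so $\dim(\ba)$ must be nonzero.

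The first step is to verify the two prerequisites just mentioned. For one-dimensionality of the relevant $\Hom$ spaces, I would invoke the standard rigid-category adjunction to produce natural isomorphisms $\Hom(\ba \to \ba) \cong \Hom({\bf 1} \to \overline\ba \otimes \ba)$ and $\Hom(\overline\ba \to \overline\ba) \cong \Hom(\overline\ba \otimes \ba \to {\bf 1})$. A short zig-zag check shows that the first isomorphism sends $\mathrm{id}_\ba$ to $\delta_{\overline\ba,\ba}$ and the second sends $\mathrm{id}_{\overline\ba}$ to $\delta^{\overline\ba,\ba}$. Since $\ba$ is simple, $\Hom(\ba \to \ba) = \mathbb C\cdot\mathrm{id}_\ba$; and $\overline\ba$ is also simple (a nontrivial decomposition of $\overline\ba$ would, upon re-dualizing, force one on $\ba \cong \overline{\overline\ba}$), so $\Hom(\overline\ba \to \overline\ba) = \mathbb C\cdot\mathrm{id}_{\overline\ba}$. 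Hence both $\Hom$ spaces are one-dimensional, spanned respectively by $\delta_{\overline\ba,\ba}$ and $\delta^{\overline\ba,\ba}$; in particular these morphisms are nonzero. (Alternatively, one sees nonvanishing directly: if $\delta_{\overline\ba,\ba}=0$, the zig-zag relation \eqref{eq:zigzag} combined with $\delta_{\overline\ba,\ba}=\beta_{\ba,\overline\ba}\circ\delta_{\ba,\overline\ba}$ would force $\mathrm{id}_\ba = 0$, a contradiction.)

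For the second prerequisite, since $\delta_{\overline\ba,\ba}$ is a nonzero element of $\Hom({\bf 1} \to \overline\ba \otimes \ba)$ and $\bf 1$ is simple, the standard semisimple decomposition argument (the multiplicity of a simple summand $S$ in $T$ equals $\dim\Hom(S \to T)$) shows that ${\bf 1}$ occurs as a direct summand of $\overline\ba \otimes \ba$. Thus there exist morphisms $\iota:{\bf 1}\to\overline\ba\otimes\ba$ and $\pi:\overline\ba\otimes\ba\to{\bf 1}$ with $\pi\circ\iota=\mathrm{id}_{\bf 1}$. By the one-dimensionality established above, $\iota = c\,\delta_{\overline\ba,\ba}$ and $\pi = d\,\delta^{\overline\ba,\ba}$ for some $c,d \in \mathbb C$, and composing gives
\[
\mathrm{id}_{\bf 1} \;=\; \pi \circ \iota \;=\; cd\,\bigl(\delta^{\overline\ba,\ba}\circ\delta_{\overline\ba,\ba}\bigr) \;=\; cd\cdot\dim(\ba)\cdot\mathrm{id}_{\bf 1},
\]
where the last equality is just the defining formula \eqref{eq:trDef} for $\dim(\ba)=\mathrm{tr}_\ba(\mathrm{id}_\ba)$. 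Therefore $cd\cdot\dim(\ba)=1$, so $\dim(\ba)\neq 0$.

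The main obstacle I anticipate is keeping the bookkeeping for duals clean: one must confirm that $\overline\ba$ is simple, that the rigid adjunctions send identity morphisms to the cap and cocap with the normalizations specified by the zig-zag relations, and that the "multiplicity equals $\dim\Hom$" statement for semisimple categories applies without issue. All of these are routine consequences of Schur's lemma and the zig-zag identities, but they need to be stated explicitly since the proposition concerns any symmetric tensor category in the paper's sense.
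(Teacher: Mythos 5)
Your proof is correct and takes essentially the same route as the paper's: both establish one-dimensionality of $\Hom({\bf 1}\to \overline\ba\otimes\ba)$ and $\Hom(\overline\ba\otimes\ba\to{\bf 1})$ (the paper works with $\ba\otimes\overline\ba$, which is equivalent via the braiding), use semisimplicity to exhibit $\bf 1$ as a direct summand whose projection and embedding morphisms are nonzero scalar multiples of the cap and cocap, and then read off $\dim(\ba)$ as a nonzero scalar from $\pi\circ\iota=\mathrm{id}_{\bf 1}$. The only difference is cosmetic: you spell out explicitly why $\overline\ba$ is simple and why the rigid adjunction sends identities to the (co)cap, steps the paper leaves implicit.
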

 \begin{proof} 
 	Because $\ba$ is simple, $\Hom(\ba\rightarrow\ba)$ is one-dimensional. From this it follows that both
 	${\Hom({\bf 1}\rightarrow\ba\otimes\overline\ba)}$ and
 	${\Hom(\ba\otimes\overline\ba\rightarrow{\bf 1})}$ 
 	are also one-dimensional. Using semisimplicity we then find that
 	\begin{equation} 
 	\ba\otimes\overline\ba = {\bf 1}\oplus...
 	\end{equation}
 	where the additional simple objects are not isomorphic to $\bf 1$. Since both ${\Hom({\bf 1}\rightarrow\ba\otimes\overline\ba)}$ and
 	${\Hom(\ba\otimes\overline\ba\rightarrow{\bf 1})}$ are one-dimensional, we can deduce that
 	\begin{equation}
 	\delta_{\ba,\overline\ba} = \lambda\iota_{\bf 1}\,,\quad \delta^{\ba,\overline\ba} = \mu\pi_{\bf 1}
 	\end{equation}
 	for non-zero $\lambda,\mu\in\mathbb C$. We can then use this to compute:
 	\begin{equation}
 	\text{dim}(\ba) = \text{tr}_{\ba}(\text{id}_\ba) = \delta^{\ba,\overline\ba}\circ\delta_{\ba,\overline\ba} = \lambda\mu (\pi_{\bf 1}\circ \iota_{\bf 1}) = \lambda\mu 
 	\end{equation}
 	which is non-zero.
 \end{proof}
 
 \begin{prop}
 	If ${\bf b} \approx \ba_1\oplus...\oplus\ba_n$ then $\overline{\bf b}\approx \overline{\ba_1}\oplus...\oplus\overline{\ba_n}$.
 \end{prop}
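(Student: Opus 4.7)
The plan is to use the fact that in a rigid symmetric tensor category, duality extends to a contravariant functor on morphisms. Given any morphism $f:\ba\rightarrow{\bf b}$, one defines its dual $\overline{f}:\overline{\bf b}\rightarrow\overline{\ba}$ by
\begin{equation}
\overline{f} = (\delta^{\overline{\ba},\ba}\otimes\text{id}_{\overline{\bf b}})\circ(\text{id}_{\overline{\ba}}\otimes f\otimes\text{id}_{\overline{\bf b}})\circ(\text{id}_{\overline{\ba}}\otimes\delta_{{\bf b},\overline{\bf b}})\,.
\end{equation}
A straightforward string-diagram calculation using the zig-zag relations \eqref{eq:zigzag} shows that this assignment is contravariantly functorial: $\overline{g\circ f} = \overline{f}\circ\overline{g}$ and $\overline{\text{id}_{\ba}} = \text{id}_{\overline{\ba}}$. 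These are the only two facts about dualization we will need.

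First I would recall that the decomposition ${\bf b}\approx\ba_1\oplus\dots\oplus\ba_n$ provides embedding morphisms $\iota_i:\ba_i\rightarrow{\bf b}$ and projection morphisms $\pi_i:{\bf b}\rightarrow\ba_i$ satisfying
\begin{equation}
\pi_i\circ\iota_j = \delta_{ij}\,\text{id}_{\ba_i}\,,\qquad \sum_{i=1}^n \iota_i\circ\pi_i = \text{id}_{\bf b}\,.
\end{equation}
Applying the dualization functor to each of these morphisms, I obtain candidate embeddings $\overline{\pi_i}:\overline{\ba_i}\rightarrow\overline{\bf b}$ and candidate projections $\overline{\iota_i}:\overline{\bf b}\rightarrow\overline{\ba_i}$ (note that the direction of the arrows is reversed).

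Next I would verify the direct sum axioms for these candidate morphisms. By contravariance,
\begin{equation}
\overline{\iota_j}\circ\overline{\pi_i} = \overline{\pi_i\circ\iota_j} = \overline{\delta_{ij}\,\text{id}_{\ba_i}} = \delta_{ij}\,\text{id}_{\overline{\ba_i}}\,.
\end{equation}
To check the completeness relation I would use linearity of the dualization (inherited directly from its definition as a composition involving the cap and cocap), together with the identity $\overline{\text{id}_{\bf b}} = \text{id}_{\overline{\bf b}}$:
\begin{equation}
\sum_{i=1}^n \overline{\pi_i}\circ\overline{\iota_i} = \sum_{i=1}^n \overline{\iota_i\circ\pi_i} = \overline{\sum_{i=1}^n\iota_i\circ\pi_i} = \overline{\text{id}_{\bf b}} = \text{id}_{\overline{\bf b}}\,.
\end{equation}
This exhibits $\overline{\bf b}$ together with the morphisms $\{\overline{\pi_i},\overline{\iota_i}\}$ as a direct sum of $\overline{\ba_1},\dots,\overline{\ba_n}$, so by the uniqueness up to unique isomorphism of the direct sum we conclude $\overline{\bf b}\approx\overline{\ba_1}\oplus\dots\oplus\overline{\ba_n}$.

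The only step requiring genuine work is the functoriality of dualization, which in our setting means checking the two identities $\overline{g\circ f} = \overline{f}\circ\overline{g}$ and $\overline{\text{id}_{\ba}} = \text{id}_{\overline{\ba}}$ directly from the zig-zag relations; this is standard and I would either invoke it from the literature (e.g.\ section 2.10 of \cite{etingof2016tensor}) or sketch the diagrammatic proof. The remaining manipulations are purely formal and essentially automatic once contravariant functoriality and linearity of dualization are in hand.
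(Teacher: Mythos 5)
Your proof is correct and takes essentially the same approach as the paper: both construct the new embedding/projection morphisms by dualizing $\pi_i$ and $\iota_i$ via the cap and cocap maps, and verify the direct sum axioms; the paper writes the dualized morphisms as string diagrams and says "it is easy to verify," while you package the same computation as contravariant functoriality and linearity of the duality.
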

 \begin{proof} It suffices to prove this for the simpler case ${\bf b}\approx\ba_1\oplus\ba_2$. In this case we have a pair of morphisms
 	\begin{equation}
 	\pi_k : {\bf b}\rightarrow\ba_k\,,\quad \iota_k : \ba_k\rightarrow{\bf b}\,
 	\end{equation}
 	for $k = 1,2$, satisfying the conditions defining the direct sum. Let us now define
 	\begin{equation}
 	\overline\pi_k = \raisebox{-2em}{\includegraphics[scale=0.45]{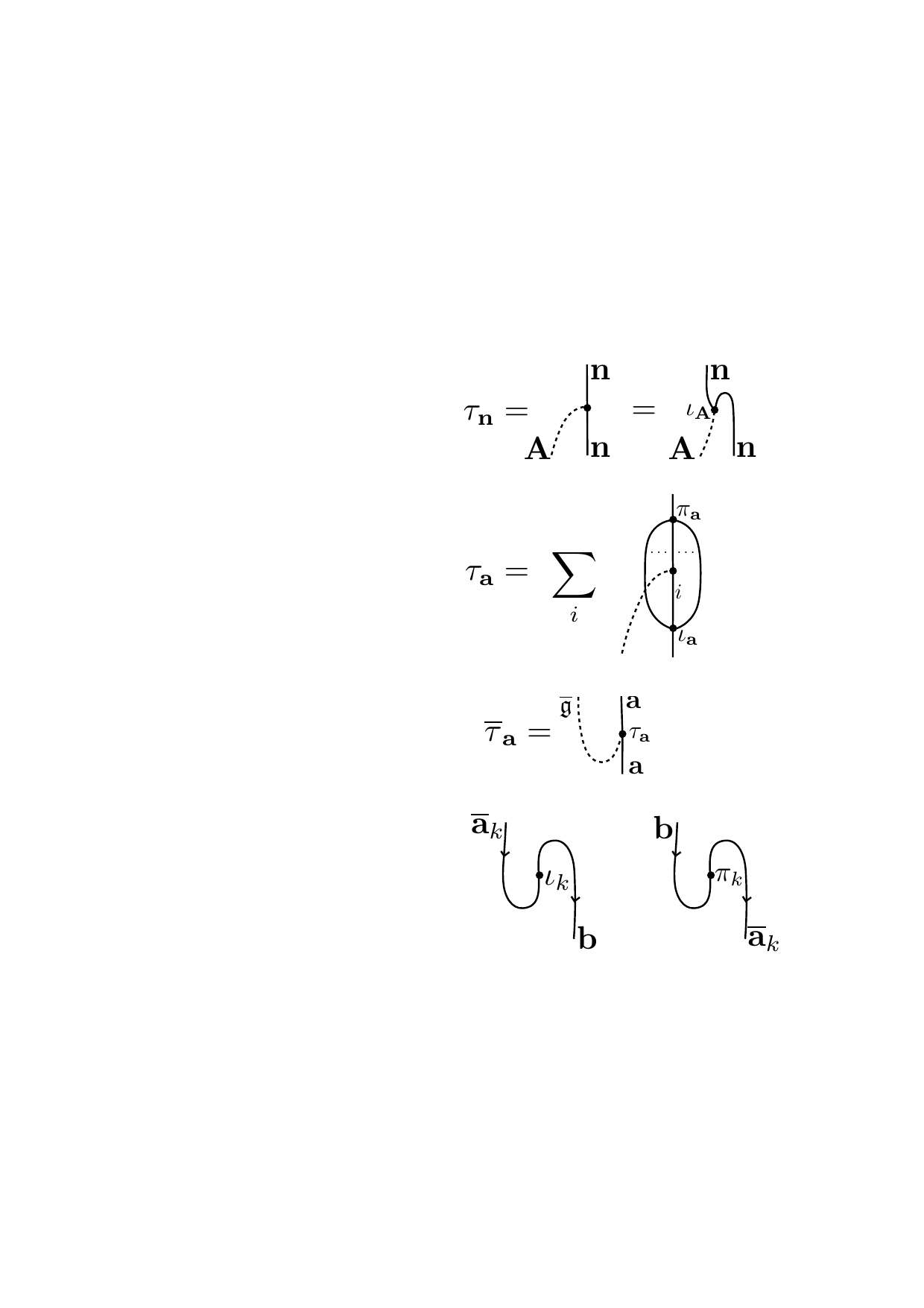}}\,,\quad \overline\iota_k = \raisebox{-2em}{\includegraphics[scale=0.45]{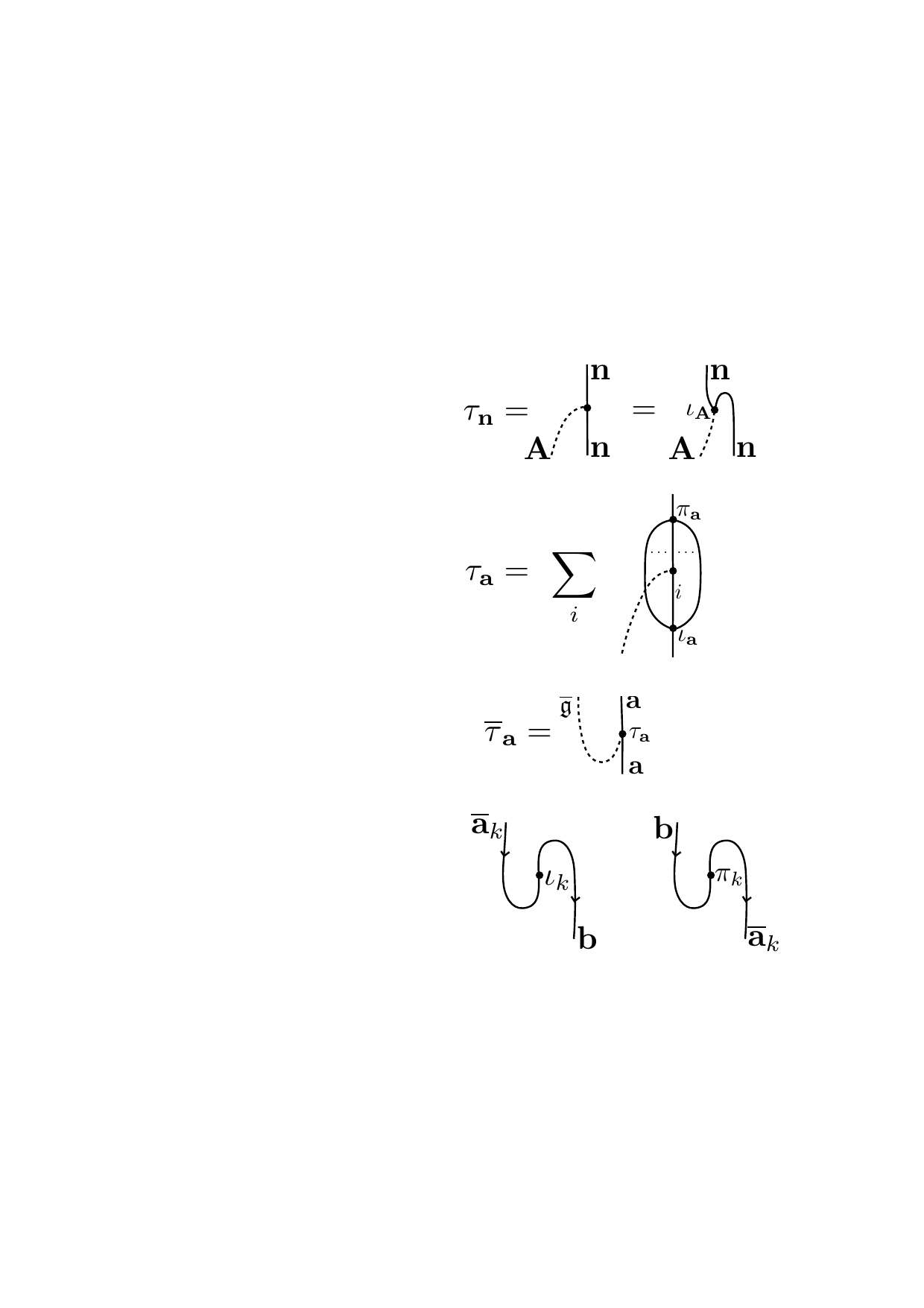}}\,.
 	\end{equation}
 	It is easy to verify that these maps also satisfy the direct sum conditions, and so $\overline{\bf b}\approx \overline{\ba_1}\oplus\overline{\ba_2}$.
 \end{proof}

 Given any two morphisms $f:\ba\rightarrow{\bf b}$ and $g:{\bf b}\rightarrow\ba$, we can define the bilinear pairing
 \begin{equation}
 \langle g,f\rangle = \text{tr}_{\ba} (g\circ f) = \text{tr}_{\bf b}(f\circ g)\,.
 \end{equation}
 
 \begin{prop}\label{pr:ndeg}
 	The pairing $\langle\cdot,\cdot\rangle$ on $\mathrm{Hom}(\ba\rightarrow{\bf b})\times\mathrm{Hom}({\bf b}\rightarrow\ba)$ is non-degenerate.
 \end{prop}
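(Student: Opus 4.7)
The plan is to use semisimplicity to reduce the claim to the case of simple objects, and then invoke proposition \ref{pr:zeroDim} to conclude.

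First I would decompose $\ba \approx \bigoplus_i \ba_i$ and ${\bf b} \approx \bigoplus_j {\bf b}_j$ into simple objects with projection and embedding morphisms $\pi^{\ba}_i, \iota^{\ba}_i, \pi^{\bf b}_j, \iota^{\bf b}_j$. Any morphism $f:\ba\rightarrow{\bf b}$ decomposes as $f = \sum_{i,j} \iota^{\bf b}_j \circ f_{ji} \circ \pi^{\ba}_i$ with $f_{ji} = \pi^{\bf b}_j \circ f \circ \iota^{\ba}_i \in \Hom(\ba_i\rightarrow{\bf b}_j)$, and similarly for $g$. Using cyclicity of the trace (and the relations $\pi^{\ba}_i\circ\iota^{\ba}_{i'} = \delta_{ii'}\mathrm{id}_{\ba_i}$, etc.) one finds that $\langle g, f\rangle$ splits as a sum $\sum_{i,j} \langle g_{ij}, f_{ji}\rangle$ of pairings on $\Hom({\bf b}_j\rightarrow\ba_i)\times\Hom(\ba_i\rightarrow{\bf b}_j)$ between simple components. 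Thus non-degeneracy on $\Hom(\ba\rightarrow{\bf b})\times\Hom({\bf b}\rightarrow\ba)$ follows from non-degeneracy on each $\Hom(\ba_i\rightarrow{\bf b}_j)\times\Hom({\bf b}_j\rightarrow\ba_i)$.

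Next I would handle the simple case. If $\ba_i \not\approx {\bf b}_j$, then semisimplicity forces $\Hom(\ba_i\rightarrow{\bf b}_j) = 0 = \Hom({\bf b}_j\rightarrow\ba_i)$, so the pairing is trivially non-degenerate. If $\ba_i \approx {\bf b}_j$, fix an isomorphism $u:\ba_i\rightarrow{\bf b}_j$. Then $\Hom(\ba_i\rightarrow{\bf b}_j)$ is one-dimensional, spanned by $u$, and $\Hom({\bf b}_j\rightarrow\ba_i)$ is one-dimensional, spanned by $u^{-1}$. For $f = \lambda u$ and $g = \mu u^{-1}$,
\begin{equation}
\langle g, f\rangle = \mathrm{tr}_{\ba_i}(g\circ f) = \lambda\mu\,\mathrm{tr}_{\ba_i}(\mathrm{id}_{\ba_i}) = \lambda\mu\,\mathrm{dim}(\ba_i)\,.
\end{equation}
By proposition \ref{pr:zeroDim}, $\dim(\ba_i)\neq 0$, so this pairing is non-degenerate on the one-dimensional spaces.

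Combining these two steps finishes the proof. The only non-trivial ingredient is proposition \ref{pr:zeroDim}; the rest is bookkeeping about how morphism spaces decompose in a semisimple category, together with cyclicity and additivity of the trace established earlier. I do not anticipate any obstacle beyond carefully carrying out this decomposition.
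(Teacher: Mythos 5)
Your proof is correct and follows essentially the same strategy as the paper's: decompose into simple constituents via semisimplicity, observe that the pairing block-diagonalizes over pairs of simples, and reduce non-degeneracy to the non-vanishing of $\dim(\bf c)$ from proposition \ref{pr:zeroDim}. The paper phrases this by directly constructing dual bases $P^{ij}_{\bf c} = \iota^i_{{\bf b},\bf c}\circ\pi^j_{\ba,\bf c}$ and $Q^{ij}_{\bf c}$ and computing $\langle Q^{ji}_{\bf c}, P^{i'j'}_{\bf c}\rangle = \delta^{ii'}\delta^{jj'}\dim({\bf c})$, while you decompose both objects into direct sums first, but the two presentations are the same argument.
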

 \begin{proof}
 	For any simple object $\bf c$ which appears in the decomposition of an object $\ba$ we can choose a basis of projection and embedding morphisms $\iota^i_{\ba,\bf c}$ and $\pi^i_{\ba,\bf c}$ satisfying
 	\begin{equation}\pi^i_{\ba,\bf c}\circ \iota^j_{\ba,\bf c} =\delta^{ij} \text{id}_{\bf c}.\end{equation}
 	If $\bf c$ also appears in $\bf b$ we can likewise choose a basis of morphisms $\iota^i_{\bf b,\bf c}$ and $\pi^i_{\bf b,\bf c}$. We then construct a basis of $\Hom(\ba\rightarrow\bf b)$ comprising of morphisms $P^{ij}_{\bf c} = \iota_{{\bf b},\bf c}^i\circ\pi_{\ba,\bf c}^j$ and likewise a basis of $\Hom({\bf b}\rightarrow\ba)$ of morphisms $Q^{ij}_{\bf c} = \iota_{{\ba},\bf c}^i\circ\pi_{{\bf b},\bf c}^j$. Now we compute the pairing 
 	\begin{equation}
 	\langle Q^{ji}_{\bf c},P^{i'j'}_{\bf c}\rangle = \text{tr}_{\ba}(Q^{ji}_{\bf c}\circ P^{i'j'}_{\bf c}) = \delta^{ii'}\text{tr}_{\bf a}(\iota^j_{\ba,{\bf c}}\circ\pi^{j'}_{\ba,{\bf c}}) = 
 	\delta^{ii'}\text{tr}_{\bf c}(\pi^{j'}_{\ba,{\bf c}}\circ\iota^j_{\ba,{\bf c}}) = \delta^{ii'}\delta^{jj'}\text{dim}({\bf c}).\, 
 	\end{equation}
 	Since $\text{dim}({\bf c})\ne 0$ is always non-zero by proposition \ref{pr:zeroDim}, the pairing $\langle\cdot,\cdot\rangle$ is non-degenerate. \end{proof}

 \subsubsection{A note on definitions}
 \label{sec:DEFNOTE}
 
 As in the paper we only consider symmetric tensor categories, we have defined rigidity, trace and dimensions to be as simple as possible in this context. In a more general tensor category, if \eqref{eq:zigzag} is satisfied by $\ba$ and $\overline\ba$ then we say that $\overline\ba$ is a left dual to $\ba$, and that $\ba$ is a right dual to $\overline\ba$. Unlike the symmetric case, in a general tensor category the left dual of $\ba$ (which we can denote $\ba^-$) and the right dual of $\ba$ (which we can denote $^-\ba$) do not have to be isomorphic.
 
 A \emph{pivotal} tensor category is a tensor category with a series of isomorphisms $^-\ba\rightarrow\ba^-$ satisfying various naturalness conditions. This pivotal structure can be used to define left and right traces; if these traces are always equal we say that the tensor category is \emph{spherical}. Even on symmetric tensor categories there may be multiple inequivalent pivotal structures, and hence inequivalent traces and dimensions. In some applications (such as to 3d TQFTs) distinguishing between different structures may be important.
 
 In a symmetric tensor category, the braiding gives rise to a \emph{canonical} pivotal structure which is automatically spherical. The potential existence of other pivotal structures is not important for our purposes.

 \subsection{Further facts about Deligne categories}
 \subsubsection{Positive tensor categories}
 \label{sec:PosCat}
 A \emph{positive} symmetric tensor category is one for which all objects have non-negative dimension. For any group $G$ the category $\Rep\,G$ is clearly positive. On the other hand, as we have seen already, Deligne categories are in general not positive. To see a general reason, consider some object $\ba$ in a symmetric tensor category $\fcy C$ with dimension $\mathrm{dim}(\ba) = \alpha$. We can then compute that the dimension of the antisymmetrization $\Lambda^k(\ba)$ is
 \begin{equation}\label{eq:asymDim}
 \mathrm{dim}(\Lambda^k(\ba)) = \frac{\alpha(\alpha-1)...(\alpha-k+1)}{k!}\,.
 \end{equation}
 If $\alpha$ is not an integer than this quantity will become negative for some sufficiently large values of $k$. So we conclude that all dimensions must be integers in a positive category, which excludes Deligne categories apart from the integer parameter case when they coincide with $\Rep\,G$.

\subsubsection{Universal properties of $\Reptilde\,U(n)$, $\Reptilde\,O(n)$, $\Reptilde\,Sp(n)$ and $\Reptilde\,S_n$}
Irreducible representations of a group can be classified as real, pseudoreal, or complex. This classification generalizes to simple objects in a symmetric tensor category. Since the dual of a simple object is itself simple, either ${\Hom(\ba\rightarrow\overline\ba)}$ is zero-dimensional, or there exists an isomorphism $i_{\ba}:\ba\rightarrow\overline\ba$, unique up to a scale factor. In the latter case, we can define a morphism
\begin{equation}
B^{\ba,\ba} \equiv \delta^{\ba,\overline\ba}\circ(\text{id}_\ba\otimes i_{\ba})\in \text{Hom}(\ba\otimes\ba \rightarrow\bf 1)\,.
\end{equation}
Going the other way, given a $B^{\ba,\ba}$ we can construct $i_{\ba}=(B^{\ba,\ba} \otimes \text{id}_\ba)\circ(\text{id}_\ba \otimes \delta_{\ba,\overline\ba})$, so the two Hom-spaces are isomorphic, in particular $\text{Hom}(\ba\otimes\ba \rightarrow\bf 1)$ is one-dimensional.
This implies that the constructed morphism is either symmetric or antisymmetric:
\begin{equation}B^{\ba,\ba}\circ\beta_{\ba,\ba} = \pm B^{\ba,\ba}\,.\end{equation}
We hence have the following tripartite classification of simple objects:
\begin{enumerate}
	\item If $\ba$ and $\overline\ba$ are not isomorphic, that is, if $\Hom(\ba\rightarrow\overline\ba)$ is trivial, then we say that $\ba$ is \emph{complex}.
	\item If $\ba$ and $\overline\ba$ are isomorphic and $B^{\ba,\ba}$ is symmetric then we say that $\ba$ is \emph{real}.
	\item If $\ba$ and $\overline\ba$ are isomorphic and $B^{\ba,\ba}$ is pseudoreal, then we say that $\ba$ is \emph{pseudoreal}.
\end{enumerate}
 
The categories $\Reptilde\,U(n)$, $\Reptilde\,O(n)$ and $\Reptilde\,Sp(n)$ are special in the theory of symmetric tensor categories:
\begin{thm} {\rm (Deligne \cite{Deligne1982,Deligne1990,Del02})} Let $\fcy C$ be a symmetric tensor category and $\ba\in\fcy C$ any object with $\text{dim}(\ba) = n$:
 	\begin{enumerate}
 		\item There is a unique (up to isomorphism) functor $F:\Reptilde\,U(n)\rightarrow\fcy C$ with $F({[+]}) = \ba.$
 		\item If there is a symmetric isomorphism $\ba\rightarrow\overline\ba$, then there is a unique (up to isomorphism) functor $F:\Reptilde\,O(n)\rightarrow\fcy C$ with $F([1]) = \ba.$
 		\item If there is a skew-symmetric isomorphism $\ba\rightarrow\overline\ba$, then there is a unique (up to isomorphism) functor $F:\Reptilde\,Sp(n)\rightarrow\fcy C$ with $F([1]) = \ba.$
 	\end{enumerate}
 	If $n\not\in\mathbb Z$ these functors are all faithful.\textsuperscript{\ref{note:faithful}}
 \end{thm}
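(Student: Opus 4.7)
The plan is to handle all three cases uniformly by exhibiting each Deligne category as the \emph{free} symmetric tensor category on an object carrying the relevant self-duality data and dimension $n$. I will sketch the $\Reptilde\,O(n)$ case; the $\Reptilde\,U(n)$ and $\Reptilde\,Sp(n)$ cases are identical with obvious sign changes and the swap of symmetric for antisymmetric caps.

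First I would build $F$ on $\Rephat\,O(n)$ by specifying it on generators. Set $F([k]) := \ba^{\otimes k}$, $F({\bf 1}) := {\bf 1}$, $F(\mathrm{id}_{[k]}) := \mathrm{id}_{\ba^{\otimes k}}$, $F(\beta_{[1],[1]}) := \beta_{\ba,\ba}$, and use the chosen symmetric isomorphism $\varphi : \ba \to \overline\ba$ to define the images of the cap and cocap via the duality morphisms $\delta^{\overline\ba,\ba}, \delta_{\ba,\overline\ba}$ of $\fcy C$. By monoidality and linearity this determines $F$ on every string diagram. I would then verify that every defining relation of $\Rephat\,O(n)$ survives under $F$: the zigzag identities hold by rigidity of $\fcy C$; symmetry of the cap follows because $\varphi$ is symmetric; the symmetric-braid axioms ($\beta^2 = \mathrm{id}$, compatibility with duals, naturality) all hold in the target; and crucially a closed loop in $\Rephat\,O(n)$ is sent to $\mathrm{tr}_\ba(\mathrm{id}_\ba) = \dim(\ba) = n$, matching the loop-equals-$n$ rule by hypothesis.

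Once $F : \Rephat\,O(n) \to \fcy C$ is in hand, extension to $\Reptilde\,O(n)$ is automatic from the universal properties of the Karoubi envelope and additive completion. For uniqueness, given a second functor $G$ with $G([1]) \approx \ba$, I would choose an isomorphism $\eta_{[1]} : F([1]) \to G([1])$ and set $\eta_{[k]} := \eta_{[1]}^{\otimes k}$; naturality then forces $G$ to agree with $F$ on the braiding, and the possible rescaling of $\varphi$ is absorbed into $\eta_{[1]}$, with the dimension constraint $\dim(\ba) = n$ pinning down the overall normalisation of the cap--cocap pair. The natural isomorphism extends canonically through the Karoubi and additive completions. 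For faithfulness when $n \notin \mathbb{Z}$, I would invoke Wenzl's semisimplicity theorem cited in the paper: $\Reptilde\,O(n)$ is semisimple and every simple object has non-zero dimension (an explicit non-vanishing polynomial in $n$). Since $F$ preserves dimensions and sends direct sums to direct sums, no non-zero simple object maps to ${\bf 0}$; semisimplicity then upgrades this to injectivity of $F$ on every $\mathrm{Hom}$-space.

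The hard part is the verification in the first step that \emph{every} relation among string diagrams in $\Rephat\,O(n)$ is forced by the axioms of a symmetric braided rigid linear category together with $\dim(\ba) = n$ and the symmetric self-duality of $\ba$; equivalently, that $\Rephat\,O(n)$ is truly \emph{freely generated}. This is a presentation result in the spirit of the Joyal--Street / Reshetikhin--Turaev graphical calculus specialised to symmetric braiding: one must show that any two string-diagram representatives of the same Brauer morphism can be connected by a sequence of symmetric tensor category moves. The argument is standard in the categorical literature (see, e.g., \cite{turaev2017monoidal}) but requires care in reducing arbitrary compositions of caps, cocaps, and braidings to a canonical form. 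Once this presentation theorem is in hand, the rest of the proof is essentially bookkeeping.
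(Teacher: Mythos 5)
The paper does not prove this theorem: it is stated as Deligne's result with a citation to his original papers, and no argument is given in the text. So there is no in-paper proof to compare your sketch against; I will assess your proposal on its own terms.

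Your outline follows the standard route one would expect: define $F$ on the generators of $\Rephat\,O(n)$ using the duality and braiding data of $(\fcy C, \ba, \varphi)$, verify that the relations are respected, lift through the Karoubi envelope and additive completion by their universal properties, argue uniqueness via a natural isomorphism built from $\eta_{[1]}$, and deduce faithfulness from Wenzl's semisimplicity. In broad strokes this is right, and you correctly single out the hard step. But you then discharge that hard step with a wave at ``Joyal--Street / Reshetikhin--Turaev graphical calculus,'' and that is exactly where the proposal stops being a proof. Those coherence theorems characterize the free symmetric (or braided, or ribbon) monoidal categories on an object, possibly with duals, but none of them says, off the shelf, that the free $\mathbb{C}$-linear rigid symmetric monoidal category on a symmetrically self-dual object with loop value $n$ is precisely the Brauer diagram category $\Rephat\,O(n)$. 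Establishing that presentation --- that the diagrammatic relations of section \ref{sec:brouer} exhaust all relations forced by the axioms --- \emph{is} the substantive content of Deligne's theorem. Deferring it to ``standard bookkeeping'' is deferring the theorem.

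Two smaller points. Your uniqueness argument claims the constraint $\dim(\ba)=n$ ``pins down the overall normalisation of the cap--cocap pair.'' This is false: rescaling $\varphi$ by a scalar $\lambda$ multiplies the cap by $\lambda$ and the cocap by $\lambda^{-1}$, preserving both the zigzag identities and the loop value $n$. What actually kills the ambiguity is that any such rescaling (and, more generally, any change of nondegenerate symmetric pairing on a possibly non-simple $\ba$) is absorbed into the natural transformation, via a scalar automorphism of $\ba$ when $\ba$ is simple and an automorphism relating congruent pairings in general; this requires an argument, not an appeal to $\dim(\ba)=n$. For faithfulness, the cleanest route is not the direct-sum phrasing you use but the observation that a tensor functor between rigid categories preserves traces, together with non-degeneracy of the trace pairing in $\Reptilde\,O(n)$ for $n\notin\mathbb{Z}$ (Proposition \ref{pr:ndeg}, resting on nonvanishing dimensions, Proposition \ref{pr:zeroDim}). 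If $f\neq0$ then $\mathrm{tr}(g\circ f)\neq0$ for some $g$, hence $\mathrm{tr}(F(g)\circ F(f))\neq 0$ and $F(f)\neq0$. This uses the same Wenzl ingredient you cite, just applied more directly.
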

Essentially, this says that out of all the tensor categories with an object $\ba$ of dimension $\text{dim}(\ba) = n \not\in\mathbb Z$, the category $\Reptilde\,U(n)$ is the most symmetric, in the sense that it contains the minimal space of morphisms in $\Hom(\ba^{\otimes k}\rightarrow\ba^{\otimes k}).$ In a similar vein, $\Reptilde\,O(n)$ is the most symmetric tensor category with a real representation $\text{dim}(\ba) = n$ and $\Reptilde\,Sp(n)$ is the most symmetric tensor category with a pseudoreal representation.

The category $\Reptilde\,S_n$ also satisfies a universality property; it is the universal category with a commutative Frobenius algebra of dimension $n$ \cite{Deligne1990}. To explain the precise meaning of this statement is beyond the scope of the appendix, so instead we shall simply note that Frobenius algebras arise naturally as the algebraic structure underlying 2d TQFTs (see \cite{kock_2003} for a pedagogic introduction to this topic). While traditionally defined over vector spaces, they can be generalized to any category.\footnote{See footnote \ref{foot:int} for further discussion.} The universality property of $\Reptilde\,S_n$ states that for any category $\fcy C$, functors from $\Reptilde\,S_n\rightarrow\fcy C$ are in one-to-one correspondence with commutative Frobenius algebras in $\fcy C$.

\subsection{Deligne's theorem on classification of tensor categories}
\label{sec:Deligne}

We have seen that the representations of any group $G$ are naturally described by a symmetric tensor category $\Rep\,G$. By taking this abstract approach to representation theory, we then exhibited several families of categories which smoothly interpolate between the representation categories for various groups such as $O(N)\,,$ $U(N)$ and $S_N$ (section \ref{sec:other}). By inspection, for non-integer $n$ these Deligne categories contain objects of non-integer and negative dimensions, and so cannot be equivalent to $\Rep\,G$ for some group $G$.
 
One wonder if there exist some tensor categorical symmetries which have all objects of integer and positive dimensions, and yet which are inequivalent to any group symmetry. Deligne's theorem on tensor categories says that this is impossible. This theorem describes the conditions under which a symmetric tensor category is equivalent to $\Rep\,G$. More generally, it classifies all symmetric tensor categories satisfying a certain finiteness condition, but for positive categories the statement is particularly simple:

 \begin{thm} {\rm (Deligne \cite{Deligne1990,Del02})} Any positive symmetric tensor category $\mathcal C$ is equivalent to $\Rep\,G$ for some group $G$. If $\mathcal C$ is a fusion category\footnote{A \emph{fusion category} is a tensor category with a finite number of simple objects.}, then $\fcy C$ is equivalent to $\Rep\,G$ for some unique\footnote{There exist \emph{isocategorical groups} $G_1$ and $G_2$ for which $\Rep\,G_1$ and $\Rep\,G_2$ are equivalent as fusion categories, but which have different braidings and so are distinct as symmetric fusion categories \cite{DAVYDOV2001273,2002RvMaP14733I,2000math......7196E}} finite group $G$. 
 \end{thm}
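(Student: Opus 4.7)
\medskip
\noindent\emph{Proof plan.} The plan is to prove the theorem via Tannakian reconstruction. Recall that a symmetric tensor category $\calC$ over $\bC$ is equivalent to $\Rep\,G$ for an affine group scheme $G$ if and only if it admits a \emph{fiber functor}---a faithful exact symmetric monoidal functor $\omega:\calC\to\mathsf{Vec}(\bC)$. One then recovers $G$ as $\mathrm{Aut}^\otimes(\omega)$, the group of tensor automorphisms of $\omega$, with the equivalence $\calC\simeq\Rep\,G$ coming from the natural action of $G$ on the vector spaces $\omega(\ba)$. So the problem reduces to two tasks: constructing a fiber functor out of positivity, and then invoking the standard Tannakian reconstruction formalism of Saavedra--Rivano and Deligne.

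The starting observation for the first task is that positivity forces a strong ``annihilation'' property for exterior powers. Indeed, for any $\ba\in\calC$ with $\dim(\ba)=d\in\bZ_{\ge0}$, the formula \reef{eq:asymDim} gives $\dim(\Lambda^{d+1}(\ba))=0$, and proposition \ref{pr:zeroDim} together with semisimplicity then force $\Lambda^{d+1}(\ba)\approx{\bf 0}$. This sharply distinguishes positive categories from the Deligne categories of the main text, where exterior powers never vanish for non-integer $n$, and it implies that the length of $\ba^{\otimes k}$ as a direct sum of simple objects grows at most polynomially in $k$. Hence $\calC$ has \emph{subexponential growth} in Deligne's sense, with an explicit, object-by-object bound extracted directly from the dimension function.

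The heart of the argument is then Deligne's 2002 theorem: any symmetric tensor category over $\bC$ of subexponential growth is \emph{super-Tannakian}, i.e.~equivalent to $\Rep(G,\epsilon)$ for an affine proalgebraic supergroup $G$ equipped with a distinguished parity element $\epsilon$. This is the main obstacle, and the step where I would simply invoke Deligne's deep result: his proof proceeds by a careful analysis of symmetric and exterior power decompositions, construction of a fiber functor to the category of super vector spaces via a subtle averaging/factorization technique exploiting subexponential growth, and finally super-Tannakian reconstruction. Granting this, the last step is to show that positivity forces $\epsilon=1$: if $\epsilon$ were nontrivial, any odd simple object $\ba$ would have categorical dimension equal to the negative of its super-dimension (the parity sign enters the cap-cocap trace in \reef{eq:trDef}), contradicting positivity combined again with proposition \ref{pr:zeroDim}. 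Thus the odd sector is trivial and $\calC\simeq\Rep\,G$ for an ordinary affine group scheme $G$.

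For the fusion case, finiteness of simple objects in $\calC$ forces $\Rep\,G$ to have finitely many irreducibles, so one can pass to the finite quotient of $G$ through which all irreducible representations factor and obtain a finite group. Uniqueness up to isomorphism then follows from the Tannakian formalism: the symmetric braiding removes the isocategorical ambiguity mentioned in the footnote (which only arises in the non-braided setting, where two groups can have equivalent fusion categories of representations but inequivalent braidings) and recovers $G$ uniquely as the automorphism group of the canonical fiber functor, itself determined by $\calC$ up to unique isomorphism.
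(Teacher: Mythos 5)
The paper does not prove this theorem: it is stated as Deligne's result with citations, and the surrounding text only notes---by way of motivation, not proof---that positivity forces integer dimensions (via the exterior-power formula \eqref{eq:asymDim} and proposition \ref{pr:zeroDim}), whence the growth bound $\text{length}(\ba^{\otimes k})\leq\dim(\ba)^k$ placing $\calC$ within the scope of Deligne's general subexponential-growth/super-Tannakian classification. Your sketch assembles the remaining steps in a way that tracks Deligne's actual argument (Tannakian reconstruction, positivity forcing the parity element to be trivial, finiteness in the fusion case), and treating the 2002 theorem as a black box is the right call --- it is the hard content.

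One intermediate step is wrong, however. From $\Lambda^{d+1}(\ba)\approx{\bf 0}$ you claim that $\text{length}(\ba^{\otimes k})$ grows polynomially in $k$; it does not. Take $\Rep\,SU(2)$ with $\ba$ the $2$-dimensional fundamental: $\Lambda^3(\ba)=0$, yet $\text{length}(\ba^{\otimes k})=\binom{k}{\lfloor k/2\rfloor}\sim 2^k/\sqrt{\pi k/2}$. The bound that is both correct and sufficient is the crude exponential one $\text{length}(\ba^{\otimes k})\leq\dim(\ba)^k$, which follows from every simple object having dimension $\geq 1$ in a positive category, and which is precisely what Deligne's ``subexponential growth'' hypothesis (bounded by $(C_\ba)^k$ for a fixed $C_\ba$) demands. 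Your conclusion survives, but the polynomial-growth waypoint should be removed. A last minor note: to pass from $\dim(\Lambda^{d+1}(\ba))=0$ to $\Lambda^{d+1}(\ba)\approx{\bf 0}$ you need positivity itself, not only proposition \ref{pr:zeroDim} and semisimplicity, since in a non-positive category positive and negative simple dimensions can cancel.
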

 Because any category $\Rep\,G$ is automatically positive, we see that positivity is both necessary and sufficient for a symmetric tensor category to be of the form $\Rep\,G$.
 
 We will now describe Deligne's theorem more generally. Given an $\ba$ in a tensor category $\fcy C$, we define $\text{length}(\ba)$ to be the number of simple objects appearing in $\ba$. We say that $\fcy C$ has \emph{subexponential growth} if for every $\ba\in\fcy C$ there exists a $C_\ba$ such that for all $k>0$, $\text{length}(\ba^{\otimes k})\leq (C_\ba)^k$. 
 
 All positive tensor categories satisfy the subexponential growth condition. Indeed, as discussed in section \ref{sec:PosCat}, in any positive tensor category the dimensions of all objects must be integers and so:
 \begin{equation}
 \text{length}(\ba^{\otimes k}) \leq \text{dim}(\ba)^k\,.
 \end{equation}
 On the other hand, Deligne categories such as $\Reptilde\,O(n)$ do not satisfy the subexponential growth condition. For non-integer $n$ the number of linearly independent morphisms ${{\bf n}^{\otimes k}\rightarrow {\bf n}^{\otimes k}}$ grows factorially in $k$, and this in turn implies that the number of simple objects in ${\bf n}^{\otimes k}$ also grows factorially.

 Let us begin with fusion categories. These always have subexponential growth, as proved in section 9.9 of \cite{etingof2016tensor}.  Given a finite group $G$, let $z\in G$ be in the centre of $G$ and satisfy $z^2 = 1$. On any irreducible representation of $G$, $z$ acts either as the identity or minus the identity. For any simple object $\ba\in\Rep\,G$ we define
 \begin{equation}
 \text{sgn}(\ba) = \begin{cases}
 +1 & \text{ if } z \text{ acts as the identity on } \ba \\
 -1 & \text{ if } z \text{ acts as the minus the identity on } \ba \\
 \end{cases}\end{equation}
 We then define $\Rep(G,z)$ to be the category $\Rep\,G$ but with the modified braiding:
 \begin{equation}\beta_{\mathbf{a},\mathbf{b}}({\bf a}\otimes {\bf b}) = (-1)^{\text{sgn}(\ba)\text{sgn}({\bf b})} {\bf b}\otimes {\bf a}.
 \end{equation}
 We can think of these as representations where if $\text{sgn}({\bf a}) = -1$ we use Grassmann rather than real variables. Indeed the pair $(G,z)$ can be thought of as a finite analogue of a supergroup \cite{2017JMP....58d1704B}. We then have the following result:
 \begin{thm}{\rm (Deligne \cite{Deligne1990,Del02}) }Any symmetric fusion category $\mathcal C$ is equivalent to $\Rep(G,z)$ for a unique finite group $G$ and $z\in G$ as above.
 \end{thm}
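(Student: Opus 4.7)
The plan is to reduce the classification to super-Tannakian reconstruction. The key idea is that any symmetric fusion category $\fcy C$ should admit a symmetric monoidal $\mathbb{C}$-linear fiber functor $\omega : \fcy C \to \mathsf{sVec}$ into the category of finite-dimensional super-vector spaces, from which one can recover an affine supergroup scheme whose representations reproduce $\fcy C$. Since $\fcy C$ is assumed fusion, so is finite with finitely many simple objects and finite-dimensional Hom-spaces, this reconstructed supergroup will be finite, and one then has to unpack exactly how it reduces to a pair $(G,z)$.

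First I would establish that $\fcy C$ has subexponential growth in the sense used in the excerpt: for every object $\ba$, $\mathrm{length}(\ba^{\otimes k}) \leq C_{\ba}^{\,k}$ for some constant $C_\ba$. In the fusion setting this is a standard consequence of Frobenius--Perron dimension theory: the FP-dimension is multiplicative under $\otimes$ and bounds the length, and the finiteness of the set of simple objects forces boundedness of all $C_\ba$. Next, and this is the technical heart, I would invoke Deligne's existence theorem for fiber functors in the super setting: any symmetric tensor category over $\mathbb{C}$ of subexponential growth admits a fiber functor to $\mathsf{sVec}$. This is the step that does all the real work and I would not re-prove it; I would cite it and use it as a black box. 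The argument inside that theorem uses Schur--Weyl-type analysis of the $S_k$-action on $\ba^{\otimes k}$ built from the braiding (see Section~\ref{sec:sym}) together with Deligne's bound on the lengths of the isotypic components of $S_k$ appearing in $\ba^{\otimes k}$.

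Once $\omega$ is in hand, Tannakian reconstruction produces an affine supergroup scheme $\mathbb{G} = \underline{\mathrm{Aut}}^{\otimes}(\omega)$ over $\mathbb{C}$ and a tensor equivalence $\fcy C \simeq \Rep\,\mathbb{G}$. I would then argue that $\mathbb{G}$ is a finite (ordinary) group scheme equipped with a distinguished central element of order at most two: finiteness follows from the fusion hypothesis (finitely many simple objects force finite-dimensional regular corepresentation), so $\mathbb{G}$ is a finite algebraic supergroup, and over $\mathbb{C}$ in characteristic zero such a supergroup is automatically of the form $(G,z)$ where $G$ is a finite group and $z\in Z(G)$ with $z^2=1$ encoding the $\mathbb{Z}/2$-grading of $\mathsf{sVec}$ through its action on $\omega(\ba)$. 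The specified modified braiding $\beta_{\ba,{\bf b}} = (-1)^{\mathrm{sgn}(\ba)\mathrm{sgn}({\bf b})}$ is exactly what is forced by compatibility of $\omega$ with the symmetry in $\mathsf{sVec}$. For uniqueness of $(G,z)$, I would use that two such pairs producing equivalent symmetric fusion categories must have matching fiber functors up to isomorphism, and therefore isomorphic reconstructed supergroups; this is cleaner than in the purely fusion (non-symmetric) case where isocategorical but non-isomorphic groups exist.

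The main obstacle, by a wide margin, is the existence of the fiber functor to $\mathsf{sVec}$. Without subexponential growth this would simply be false (the Deligne categories $\Reptilde\,O(n)$ for $n\notin\mathbb Z$ are precisely counterexamples, and they fail subexponential growth because the Brauer algebras have factorially-growing dimension). So the whole proof rests on showing that fusion $\Rightarrow$ subexponential growth $\Rightarrow$ a fiber functor exists, and only the first implication is elementary. The second implication is Deligne's deep theorem and would be the part I would not attempt to re-derive; everything else is then formal Tannakian machinery plus bookkeeping to extract the pair $(G,z)$ and its uniqueness.
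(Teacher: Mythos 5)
The paper states this theorem without proof, citing Deligne's original papers \cite{Deligne1990,Del02}; there is no derivation in the text for you to match against. Your outline is nonetheless the correct one, and it follows the route the paper implicitly defers to: establish subexponential growth for fusion categories (via Frobenius--Perron dimensions, as in section~9.9 of \cite{etingof2016tensor}), invoke Deligne's existence theorem for a fiber functor to super-vector spaces as a black box, reconstruct an affine supergroup scheme by Tannakian duality, and argue that it reduces to a pair $(G,z)$.

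The one step you treat too lightly is the passage from ``finite affine supergroup scheme over $\mathbb{C}$'' to ``pair $(G,z)$.'' A finite supergroup scheme in characteristic zero may still carry a non-trivial purely-odd Lie part $V$: the exterior algebra on $V^*$ is a perfectly good finite-dimensional commutative super Hopf algebra, so finiteness alone does not force $V=0$. What forces $V=0$ is semisimplicity of the representation category, which is part of the fusion hypothesis; when $V\neq 0$ the exterior algebra has a non-semisimple module category. You attribute the reduction to ``finiteness'' when you should invoke semisimplicity at this point. With that dependence made explicit, the rest of your bookkeeping is sound, including your correct observation that uniqueness of $(G,z)$ is cleaner in the symmetric setting than in the bare fusion setting, where isocategorical but non-isomorphic groups exist.
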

 
 Deligne's theorem for arbitrary subexponential symmetric tensor categories is more technical to state. Broadly speaking it shows that such categories are equivalent to representations of supergroups.

 \subsection{Unitarity}
Usually in quantum field theory we have some notion of unitarity. In order to define this for theories with a more general categorical symmetry we need to introduce some new structure.
 
 As in section \ref{sec:unitarity}, \emph{conjugation} $*$ is a anti-linear braided monoidal functor $*:\fcy C\rightarrow\fcy C$ which is involutive, $** = \text{id}_{\fcy C}$,
 and for which $\ba^*$ is dual to $\ba$.
 
 Because $\ba^*$ is dual to $\ba$, there are cap and cocap maps $\delta^{\ba,\ba^*}$ and $\delta_{\ba^*,\ba}$, and also $\delta^{\ba^*,\ba}$ and $\delta_{\ba,\ba^*}$, satisfying the zig-zag relations. As in the previous subsections, we can always choose these maps to satisfy
 \begin{equation}
 \delta_{\ba^*,\ba} = \delta_{\ba,\ba^*}\circ\beta_{\ba^*,\ba}\,,\quad \delta^{\ba,\ba^*} = \beta_{\ba^*,\ba}\circ\delta^{\ba^*,\ba}.
 \end{equation}
 
 \begin{prop}\label{pr:realCap}
 	For any simple object $\ba$, there are cap and cocap maps $\delta^{\ba,\ba^*}$ and $\delta_{\ba^*,\ba}$ such that
 	\begin{equation}
 	\label{eq:needRealCap}
 	(\delta^{\ba,\ba^*})^* = \delta^{\ba^*,\ba}\,,\quad (\delta_{\ba^*,\ba})^* = \delta_{\ba,\ba^*}\,.
 	\end{equation}
 \end{prop}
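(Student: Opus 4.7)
My plan is to leverage the uniqueness (up to scalar) of duality data combined with the involutivity and functoriality of conjugation. The key observation is that simplicity of $\ba$ forces all the relevant Hom spaces to be one-dimensional, so that once one cap is fixed everything else is pinned down by linear algebra.

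Since $\ba$ is simple, so is its dual $\ba^*$, and by the argument in the proof of proposition \ref{pr:zeroDim} the four spaces $\text{Hom}(\ba\otimes\ba^*\to\mathbf{1})$, $\text{Hom}(\ba^*\otimes\ba\to\mathbf{1})$, $\text{Hom}(\mathbf{1}\to\ba\otimes\ba^*)$, $\text{Hom}(\mathbf{1}\to\ba^*\otimes\ba)$ are each one-dimensional. I would start by fixing any nonzero cap $\delta^{\ba,\ba^*}$ and simply \emph{defining} $\delta^{\ba^*,\ba}:=(\delta^{\ba,\ba^*})^*$. Since $*$ is an involutive monoidal functor with $*(\mathbf{1})=\mathbf{1}$, this image lies in the correct Hom space $\text{Hom}(\ba^*\otimes\ba\to\mathbf{1})$ (using $**=\mathrm{id}_{\fcy C}$ so that $*(\ba\otimes\ba^*)=\ba^*\otimes\ba$), and it is nonzero because $*$ is an anti-linear bijection on each morphism space. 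Applying $*$ a second time and using involutivity gives $(\delta^{\ba^*,\ba})^*=\delta^{\ba,\ba^*}$, so the cap half of (\ref{eq:needRealCap}) is immediate.

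Next, I would \emph{define} $\delta_{\ba^*,\ba}$ and $\delta_{\ba,\ba^*}$ as the unique cocaps satisfying the zig-zag identities (\ref{eq:zigzag}) paired with $\delta^{\ba,\ba^*}$ and $\delta^{\ba^*,\ba}$ respectively: within the one-dimensional cocap Hom spaces any nonzero element differs from the correct one by a scalar, and that scalar is pinned down by either of the two zig-zag equations. The remaining task (mostly bookkeeping) is to verify the cocap half of (\ref{eq:needRealCap}). For this I would apply $*$ to the zig-zag equation
\[(\mathrm{id}_{\ba^*}\otimes \delta^{\ba,\ba^*})\circ(\delta_{\ba^*,\ba}\otimes\mathrm{id}_{\ba^*})=\mathrm{id}_{\ba^*}.\]
Because $*$ is a covariant strict monoidal functor it preserves composition, tensor product and identities (here $(\mathrm{id}_{\ba^*})^*=\mathrm{id}_{\ba}$), so the image reads
\[(\mathrm{id}_{\ba}\otimes \delta^{\ba^*,\ba})\circ((\delta_{\ba^*,\ba})^*\otimes\mathrm{id}_{\ba})=\mathrm{id}_{\ba},\]
which is exactly the zig-zag relation defining $\delta_{\ba,\ba^*}$ from the cap $\delta^{\ba^*,\ba}$. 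By the uniqueness established above, $(\delta_{\ba^*,\ba})^*=\delta_{\ba,\ba^*}$; applying $*$ once more and invoking $**=\mathrm{id}_{\fcy C}$ yields the reverse identity. The only delicate point is that the anti-linearity of $*$ might in principle interfere with the scalar matching, but because the zig-zag equations have scalar coefficient $1$ (an integer, hence self-conjugate) anti-linearity is harmless; this is the step I would verify carefully but expect no genuine obstacle.
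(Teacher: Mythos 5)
Your argument has a genuine gap, and it arises at the very first step, in the sentence ``simply \emph{defining} $\delta^{\ba^*,\ba}:=(\delta^{\ba,\ba^*})^*$.'' In the paper's setup the four morphisms $\delta^{\ba,\ba^*}$, $\delta^{\ba^*,\ba}$, $\delta_{\ba,\ba^*}$, $\delta_{\ba^*,\ba}$ are \emph{not} four independent pieces of data: the displayed equations immediately preceding the proposition impose that $\delta^{\ba,\ba^*}$ and $\delta^{\ba^*,\ba}$ are related via the braiding (and likewise for the cocaps), so that once $\delta^{\ba,\ba^*}$ is fixed, $\delta^{\ba^*,\ba}$ is already determined up to nothing. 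The paper's computation $(\delta^{\ba,\ba^*})^* = \alpha\,\delta^{\ba^*,\ba}$ with $|\alpha|=1$ shows that for a generic initial choice this braiding-determined $\delta^{\ba^*,\ba}$ differs from $(\delta^{\ba,\ba^*})^*$ by a non-trivial phase $\alpha$. By \emph{declaring} $\delta^{\ba^*,\ba}$ to be $(\delta^{\ba,\ba^*})^*$, you silently discard the braiding constraint: your four morphisms satisfy both zig-zag relations and the reality relations, but they will generically fail $\delta^{\ba,\ba^*}=\beta_{\ba^*,\ba}\circ\delta^{\ba^*,\ba}$. The reality relations have then been rendered true by fiat, which makes the cap half of \eqref{eq:needRealCap} a tautology rather than a theorem. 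This matters downstream: the very next computation in the paper, showing $\mathrm{dim}(\ba)^*=\mathrm{dim}(\ba)$, uses \emph{both} the reality relations and the braiding relations simultaneously, so a cap/cocap system satisfying only the former is not what the proposition is meant to provide.

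The missing ingredient is precisely the rescaling that the paper's proof performs. What must be shown is that the phase $\alpha$ in $(\delta^{\ba,\ba^*})^* = \alpha\,\delta^{\ba^*,\ba}$ (where now $\delta^{\ba^*,\ba}$ is the braiding-determined cap) can be absorbed while preserving the zig-zag. Since $*$ is involutive one first checks $|\alpha|=1$; applying $*$ to the zig-zag identity shows that the cocap phase $\beta$ (defined analogously) satisfies $\alpha\beta=1$; and then replacing $\delta^{\ba,\ba^*}\to\alpha^{1/2}\delta^{\ba,\ba^*}$, $\delta_{\ba^*,\ba}\to\alpha^{-1/2}\delta_{\ba^*,\ba}$ kills both phases. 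Your derivation of the cocap relation from the conjugated zig-zag identity is correct and essentially identical to a step the paper uses (to establish $\alpha\beta=1$), and your remark about the zig-zag having no complex coefficients is also correct; but the essential content of the proposition --- that a single rescaling reconciles the braiding and conjugation structures --- is entirely absent from your write-up. You could salvage the argument by noting that the one remaining degree of freedom (rescaling $\delta^{\ba,\ba^*}$ by $\lambda$, which shifts the ratio between the braiding-determined cap and the conjugation-determined cap by $\lambda^*/\lambda=\lambda^{-2}$ when $|\lambda|=1$) suffices to force the two to agree --- but that is exactly the $\alpha^{1/2}$ rescaling you have omitted.
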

 \begin{proof}
 	For simple object $\ba$, $\Hom(\ba\otimes\ba^*\rightarrow{\bf 1})$ and $\Hom(\ba^*\otimes\ba\rightarrow{\bf 1})$ are one-dimensional and so
 	\begin{equation}\label{eq:capStar}
 	(\delta^{\ba,\ba^*})^*= \alpha \delta^{\ba^*,\ba}
 	\end{equation} 
 	for some number $\alpha$. We thus find that
 	\begin{equation}\label{eq:capStar2}
 	(\delta^{\ba,\ba^*})^{**} = \alpha(\delta^{\ba,\ba^*}\circ\beta_{\ba,\ba^*})^* = |\alpha|^2 \delta^{\ba,\ba^*}\circ\beta_{\ba,\ba^*}\circ\beta_{\ba^*,\ba} = |\alpha|^2\delta^{\ba,\ba^*}\,.
 	\end{equation} 
 	Because $*$ is involutive,  $|\alpha| = 1$. 
 	Repeating the same logic for $\delta_{\ba,\ba^*}$ implies that
 	\begin{equation}
 	(\delta_{\ba,\ba^*})^* = \beta \delta_{\ba^*,\ba},\qquad |\beta|=1\,.
 	\end{equation}
 	Applying $*$ to both sides of the zig-zag relations:
 	\begin{equation}
 	\left[(\text{id}_{\ba}\otimes\delta^{\ba^*,\ba})\otimes(\delta_{\ba,\ba^*} \otimes\text{id}_{\ba})\right]^* = (\text{id}_{\ba})^*
 	\end{equation}
 	we find that $\alpha \beta=1$. We now define the new cap and cocap morphisms $\hat\delta^{\ba,\ba^*} = \alpha^{1/2} \delta^{\ba,\ba^*}$ and $\hat\delta_{\ba^*,\ba} = \alpha^{-1/2}\delta_{\ba^*,\ba}$, which still satisfy the zig-zag relations and also satisfy \reef{eq:needRealCap}.
 \end{proof}

 From now on we shall only work with cap and cocap maps supplied by proposition \ref{pr:realCap}.
  As a consequence, we see that all dimensions in a category with conjugation must be real:
 \begin{equation}
 \text{dim}(\ba)^* = (\delta^{\ba,\ba^*}\circ\delta_{\ba,\ba^*})^* = \delta^{\ba,\ba^*}\circ\beta_{\ba^*,\ba}\circ\beta_{\ba,\ba^*}\circ\delta_{\ba,\ba^*} = \text{dim}(\ba).
 \end{equation}

 Given a complex conjugation $*$ on a symmetric tensor category $\fcy C$, we can define a notion of Hermitian adjoint. Given any morphism $f:\ba\rightarrow{\bf b}$, we can define the adjoint $f^\dagger:{\bf b}\rightarrow\ba$ to be the morphism:
 \begin{equation}\label{eq:adjDef}
 f^\dagger = (\text{id}_{\ba}\otimes\delta^{ {\bf b}^*,{\bf b}})\circ(\text{id}_{\ba}\otimes f^*\otimes\text{id}_{\bf b})\circ(\delta_{\ba,\ba^*}\otimes\text{id}_{\bf b}) = \includegraphics[trim=0 4em 0 0, scale=0.35]{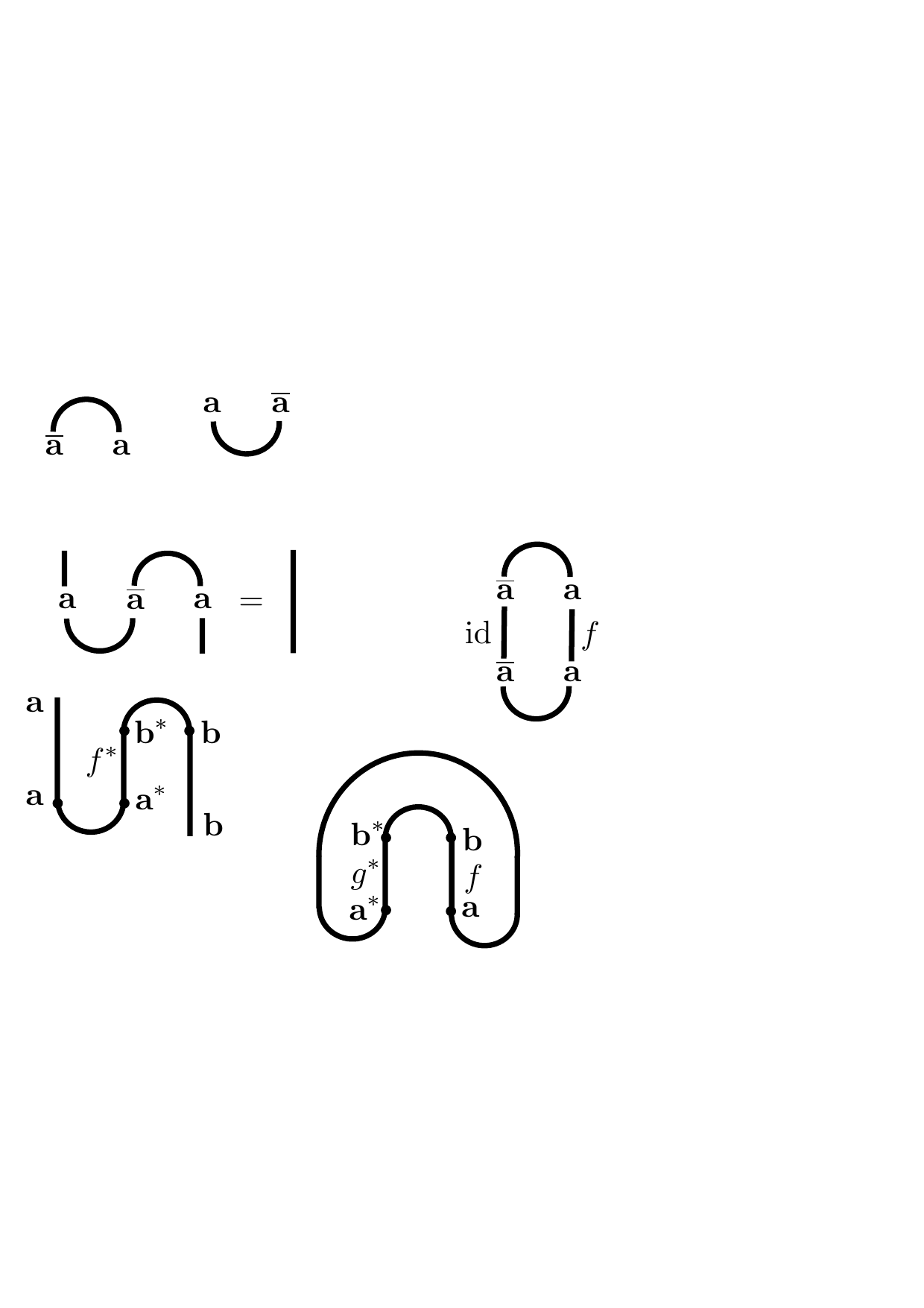}.
 \end{equation}\\[-0.5em]
 We use the adjoint to define a sesquilinear form on $\Hom(\ba\rightarrow{\bf b})$: given any two morphisms $f,g :\ba\rightarrow{\bf b}$ we define 
 \begin{equation}
 \langle g^\dagger,f\rangle = \text{tr}_{\ba}(g^\dagger\circ f) = \includegraphics[trim=0 4em 0 0, scale=0.35]{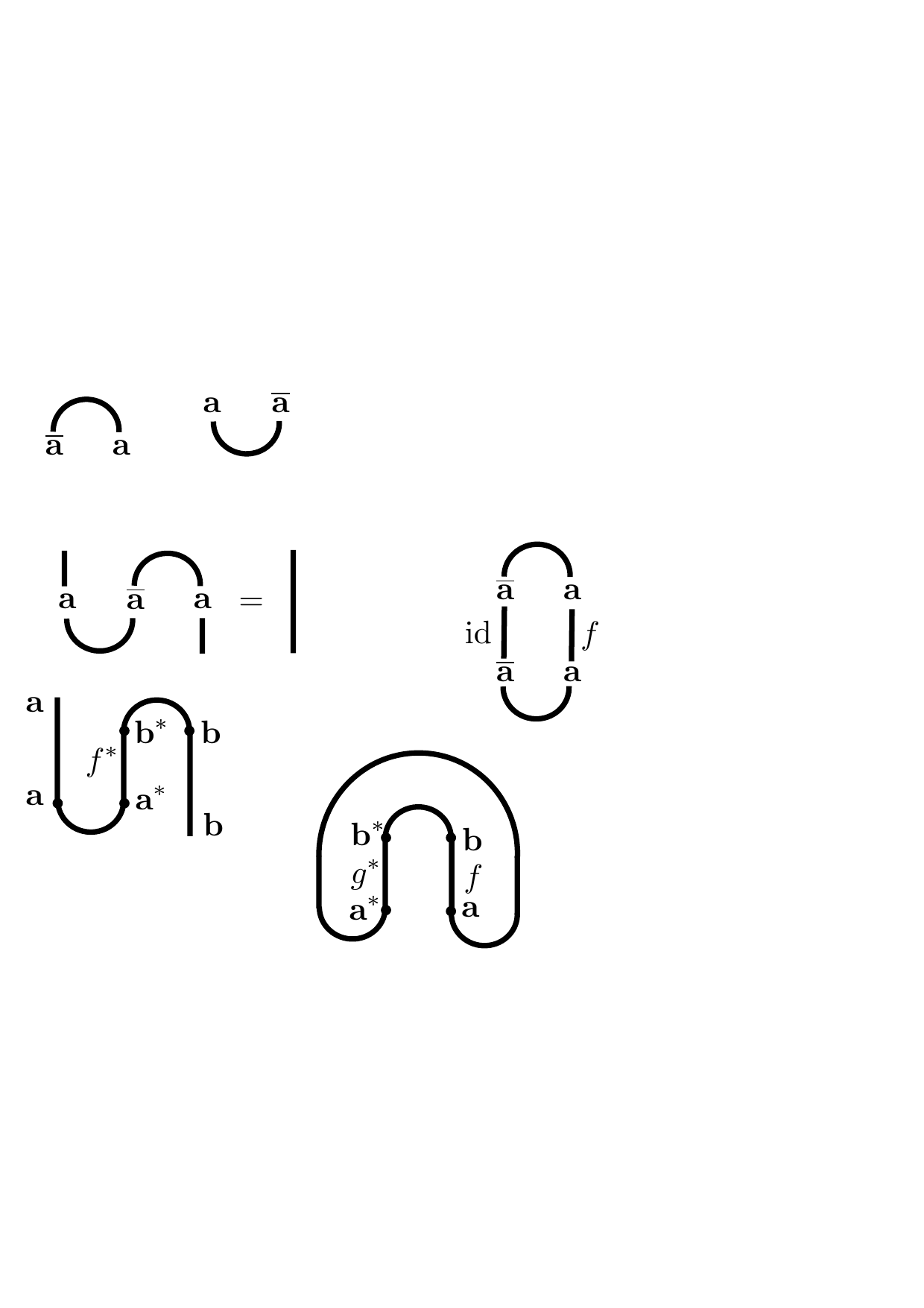}\,. 
 \end{equation}\\[-0.5em]

 \begin{prop} The sesquilinear product $\langle\cdot,\cdot\rangle$ defined above is conjugate symmetric and non-degenerate. If $\langle\cdot,\cdot\rangle$ is positive-definite on all $\Hom$-sets then $\fcy C$ is positive.
 \end{prop}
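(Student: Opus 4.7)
My plan is to establish the proposition in three stages: first conjugate symmetry, then non-degeneracy, and finally the implication about positivity of $\fcy C$. The technical backbone throughout is the interaction of the Hermitian adjoint $\dagger$ with the conjugation $*$, which in turn rests on Proposition \ref{pr:realCap} (the cap and cocap maps can be chosen $*$-invariant).

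First I would establish three auxiliary facts, each essentially a one-line string-diagram manipulation. (i) $(f^\dagger)^\dagger = f$: applying the snake definition \eqref{eq:adjDef} twice and using $f^{**} = f$ together with the zig-zag relations \eqref{eq:zigzag} collapses two snakes to the identity. (ii) $(g\circ f)^\dagger = f^\dagger \circ g^\dagger$ for composable $f,g$: draw the snake around $g\circ f$ and slide the intermediate cap/cocap pair into the middle. (iii) For any $h:\ba\to\ba$, $\tr_\ba(h^\dagger) = \overline{\tr_\ba(h)}$: tracing the snake simply contracts it into $\tr_{\ba^*}(h^*)$, and since $*$ preserves caps and cocaps by Proposition \ref{pr:realCap} this equals $(\tr_\ba(h))^*$. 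Conjugate symmetry then follows at once:
\begin{equation}
\overline{\langle f^\dagger,g\rangle} = \overline{\tr_\ba(f^\dagger\circ g)} = \tr_\ba((f^\dagger\circ g)^\dagger) = \tr_\ba(g^\dagger\circ f) = \langle g^\dagger,f\rangle.
\end{equation}

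For non-degeneracy, I would observe that $g\mapsto g^\dagger$ is an anti-linear map $\Hom(\ba\to{\bf b})\to\Hom({\bf b}\to\ba)$, which by fact (i) above is an involution and therefore a bijection. The sesquilinear pairing is thus obtained from the bilinear trace pairing $(h,f)\mapsto \tr_\ba(h\circ f)$ of Proposition \ref{pr:ndeg} by precomposing the first slot with this bijection. Since that trace pairing is non-degenerate, so is $\langle\cdot,\cdot\rangle$.

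Finally, for the positivity implication, I would first check that $(\text{id}_\ba)^\dagger = \text{id}_\ba$: applying \eqref{eq:adjDef} with $f = \text{id}_\ba$ and using $(\text{id}_\ba)^* = \text{id}_{\ba^*}$ reduces to a zig-zag. Therefore $\langle \text{id}_\ba,\text{id}_\ba\rangle = \tr_\ba(\text{id}_\ba) = \dim(\ba)$. If the sesquilinear form is positive-definite on every $\Hom(\ba\to\ba)$ and $\ba$ is a non-zero simple object, then $\text{id}_\ba\neq 0$ forces $\dim(\ba) > 0$. An arbitrary object decomposes as a direct sum of simples by semisimplicity, so Proposition \ref{pr:dimForm} (additivity of dimension under $\oplus$) yields $\dim({\bf b})\geq 0$ for every ${\bf b}\in\fcy C$, i.e.\ $\fcy C$ is positive. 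I expect no serious obstacle here; the only real care needed is to keep orientations of strings and the placement of $*$ versus $\dagger$ straight throughout the diagrammatic manipulations in step one.
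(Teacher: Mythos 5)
Your proof is correct and follows essentially the same approach as the paper: conjugate symmetry by a string-diagram manipulation built on the $*$-invariance of the cap and cocap maps from Proposition \ref{pr:realCap}, non-degeneracy via Proposition \ref{pr:ndeg} plus the fact that $\dagger$ is an anti-linear involution, and positivity from $\langle\text{id}_\ba,\text{id}_\ba\rangle = \dim(\ba)$. You spell out the auxiliary facts $(f^{\dagger})^\dagger = f$, $(g\circ f)^\dagger = f^\dagger\circ g^\dagger$, and $\tr_\ba(h^\dagger)=\overline{\tr_\ba(h)}$ that the paper leaves implicit, and add the (trivial but technically needed) passage from simple objects to all objects via semisimplicity and additivity of dimension, but the mathematical content is the same.
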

 \begin{proof}
 	To prove conjugate symmetry we compute
 	\begin{equation}
 	\langle g^\dagger,f\rangle^* = \text{tr}_{\ba^*}(({g^\dagger})^*\circ f^*) = \includegraphics[trim=0 4em 0 0, scale=0.35]{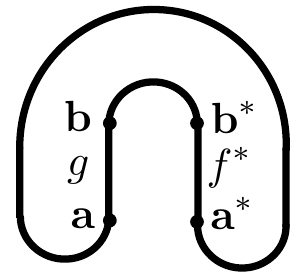}= \text{tr}_{\ba}(f^\dagger\circ g) = \langle f^\dagger,g\rangle.
 	\end{equation}\\[-0.5em]
 	Non-degeneracy follows from proposition \ref{pr:ndeg} and the fact that $\dagger$ is idempotent.
 	
 	Finally, to prove that $\langle\cdot,\cdot\rangle$ can only be positive-definite if $\fcy C$ is positive, we compute
 	\begin{equation}
 	\langle \text{id}_{\ba}^\dagger,\text{id}_\ba\rangle = \text{tr}_{\ba}(\text{id}_{\ba}) = \text{dim}(\ba),
 	\end{equation}
 	and so positive-definiteness hence requires $\text{dim}(\ba) > 0$ for any simple object $\ba$.
 \end{proof}

 We have thus seen that conjugation defines a Hermitian form on the Hom-sets of $\fcy C$, but that this conjugation can only be positive-definite for positive categories. If the Hermitian form $\<\cdot,\cdot\>$ is positive-definite on every Hom-set, $\fcy C$ is called a \emph{unitary category} (see e.g. \cite{2018arXiv180800323P}). Because they are not positive, Deligne categories cannot be unitary for non-integer $n$. 
 
 In section \ref{sec:unitarity} we proved that CFTs with a non-integer $n$ Deligne category symmetry are non-unitary. Note that this does not follow automatically from the non-unitarity of these categories, but requires a separate argument.

\subsubsection{Another note on definitions}
Our definition of conjugation is a little different from that given in the literature, for instance in \cite{SELINGER2007139}. A $\dagger$-category is a monoidal category with an involutive functor $\dagger:\fcy C^{\text{op}}\rightarrow\fcy C$, which acts as the identity on objects. In other words, for every morphism $f:\ba\rightarrow{\bf b}$ there is an adjoint morphism $f^\dagger:{\bf b}\rightarrow\ba$ such that
\begin{equation}
\text{id}_\ba^\dagger = \text{id}_\ba\,,\qquad (g\circ f)^\dagger = f^\dagger\circ g^\dagger\,,\qquad f^{\dagger\dagger} = f\,.
\end{equation}
If $\fcy C$ is a braided tensor category, we say that $\dagger$ is \emph{antilinear} and \emph{braided} if
\begin{equation}
(\lambda f+\mu g)^\dagger = \lambda^* f^\dagger + \mu^* g^\dagger\,,\qquad \beta_{\ba,{\bf b}}^\dagger = \beta_{\ba,\bf b}\,.
\end{equation}
As we saw in \eqref{eq:adjDef}, our notion of conjugation in a symmetric tensor category can be used to define such a map $\dagger$, and it is straightforward to check that this $\dagger$ is \emph{antilinear} and \emph{braided}. Conversely, given an antilinear, braided, and pivotal $\dagger$ we can define for any morphism $f$ a conjugate morphism $f^*:a^*\rightarrow b^*$ by
\begin{equation}
f^* = (\delta^{\ba^*,\ba}\otimes\text{id}_{{\bf b}^*})\circ(\text{id}_{\ba^*}\otimes f^\dagger\otimes\text{id}_{{\bf b}^*})\circ(\text{id}_{\ba^*}\otimes\delta_{{\bf b},{\bf b}^*})\,.
\end{equation}
Hence our notion of a symmetric tensor category with conjugation is the same as that of a symmetric tensor category with a antilinear, braided $\dagger$-functor.

\section{Continuous categories}
\label{sec:CONTCATS}

In this section we shall develop some aspects of the theory of continuous categories. To the best of our knowledge this is not a concept that has explicitly been studied in the literature, although it appears to be related to the notion of a fundamental group of a tensor category as described in \cite{Deligne1990,ETINGOF2014}, with our maximal adjoint being the Lie algebra of that group.\footnote{We would like to thank Pavel Etingof for this remark.}

Let us recall from section \ref{sec:current} that an \emph{adjoint} $(\mathfrak g,\tau)$ in a symmetric tensor category $\fcy C$ is an object $\mathfrak g$ and a family of morphisms $\tau_\ba:\mathfrak g\otimes\ba\rightarrow\ba$ satisfying \eqref{eq:ctcon1}, \eqref{eq:ctcon2}, \eqref{eq:ctcon3}, and \eqref{eq:adjCon}, which we repeat for convenience:
\begin{enumerate}
\item \emph{(Naturalness)} For every morphism $f:\ba\rightarrow{\bf b}$, the following diagram commutes:
\begin{equation}\label{eq:nattau}
\begin{tikzcd}
\mathfrak g\otimes\ba \arrow[r,"\text{id}_{\mathfrak g}\otimes f"] \arrow[d,"\tau_\ba"]
& \mathfrak g\otimes{\bf b}  \arrow[d, "\tau_{\bf b}"] \\
\ba \arrow[r, "f"]
& {\bf b}
\end{tikzcd}\end{equation}
\item For all objects $\ba,{\bf b}\in\fcy C$ the following identities are satisfied:
\begin{align}
\label{eq:tensCond}&\tau_{\ba\otimes\bf b} = \tau_{\ba}\otimes\text{id}_{\bf b} + (\text{id}_\ba\otimes\tau_{\bf b})\circ(\beta_{\mathfrak g,\ba}\otimes\text{id}_{\bf b})\,,\\
\label{eq:symCond}&\tau_{\mathfrak g}\otimes\beta_{\mathfrak g,\mathfrak g} = -\tau_{\mathfrak g}\,.
\end{align}
\item \emph{(Non-degeneracy)}
$\text{ If a morphism }f:\mathfrak g\rightarrow\mathfrak g \text{ satisfies } \tau_{\ba}\circ(f\otimes\text{id}_{\ba}) = 0 \text{ for every }\ba\in\fcy C\,,\text{ then } f = 0\,.$
\end{enumerate}
If a category has an adjoint object $(\mathfrak g,\tau)$ we shall say that the category is \emph{continuous}, otherwise we say that it is \emph{discrete}.

Our definition of an adjoint does not make explicit use of the additive structure on $\fcy C$, and indeed we could generalize the notion of an adjoint to any symmetric monoidal category. Semisimplicity will however prove very constraining:
\begin{prop}\label{pr:sumTau}
Given an adjoint $(\mathfrak g,\tau)$ and any two objects $\ba_1,\ba_2\in\fcy C$,
\begin{equation}
\tau_{\ba_1\oplus\ba_2} = \iota_1\circ\tau_{\ba_1}\circ({\rm id}_{\mathfrak g}\otimes\pi_1) + \iota_2\circ\tau_{\ba_2}\circ({\rm id}_{\mathfrak g}\otimes\pi_2)\,.
\end{equation}
\end{prop}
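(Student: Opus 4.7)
The plan is to deduce this identity from the naturality axiom \eqref{eq:nattau} applied to the structural morphisms of the direct sum, together with the defining identity $\iota_1\circ\pi_1+\iota_2\circ\pi_2=\text{id}_{\ba_1\oplus\ba_2}$. No use of the bracket-like conditions \eqref{eq:tensCond}, \eqref{eq:symCond}, or non-degeneracy should be required — this is purely a naturality statement.

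First, apply the naturality diagram \eqref{eq:nattau} with $f=\iota_i:\ba_i\to\ba_1\oplus\ba_2$ for $i=1,2$. This yields
\begin{equation}
\tau_{\ba_1\oplus\ba_2}\circ(\text{id}_{\mathfrak g}\otimes\iota_i)=\iota_i\circ\tau_{\ba_i}\,.
\end{equation}
Next, compose both sides on the right with $\text{id}_{\mathfrak g}\otimes\pi_i$ to obtain
\begin{equation}
\tau_{\ba_1\oplus\ba_2}\circ\bigl(\text{id}_{\mathfrak g}\otimes(\iota_i\circ\pi_i)\bigr)=\iota_i\circ\tau_{\ba_i}\circ(\text{id}_{\mathfrak g}\otimes\pi_i)\,.
\end{equation}

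Then sum over $i=1,2$ and use bilinearity of $\otimes$ together with the direct-sum identity $\iota_1\circ\pi_1+\iota_2\circ\pi_2=\text{id}_{\ba_1\oplus\ba_2}$. The left-hand side collapses to $\tau_{\ba_1\oplus\ba_2}\circ(\text{id}_{\mathfrak g}\otimes\text{id}_{\ba_1\oplus\ba_2})=\tau_{\ba_1\oplus\ba_2}$, giving exactly the claimed formula. There is no real obstacle here; the only thing to be careful about is that $\otimes$ is bilinear on morphisms so that $\text{id}_{\mathfrak g}\otimes(\iota_1\circ\pi_1+\iota_2\circ\pi_2)=\text{id}_{\mathfrak g}\otimes\iota_1\circ\pi_1+\text{id}_{\mathfrak g}\otimes\iota_2\circ\pi_2$, which holds in any linear monoidal category. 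The same argument generalizes verbatim to direct sums of arbitrarily many summands.
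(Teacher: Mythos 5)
Your proof is correct and follows essentially the same route as the paper's: apply naturality \eqref{eq:nattau} to the embeddings $\iota_i$, post-compose with $\text{id}_{\mathfrak g}\otimes\pi_i$ (using functoriality of $\otimes$ to merge the compositions), then sum over $i$ and invoke bilinearity of $\otimes$ together with $\iota_1\circ\pi_1+\iota_2\circ\pi_2=\text{id}_{\ba_1\oplus\ba_2}$. Your remark that only naturality and the linear monoidal structure are needed — and not \eqref{eq:tensCond}, \eqref{eq:symCond}, or non-degeneracy — is accurate and matches what the paper's argument uses implicitly.
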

\begin{proof} 
Let $\pi_i$ and $\iota_i$ be the projection and embedding morphisms from $\ba_i$ into $\ba_1\oplus\ba_2$. Applying naturality to $\iota_i$, we find that
\begin{equation}
\tau_{\ba_1\oplus\ba_2}\circ(\text{id}_{\mathfrak g}\otimes\iota_i) = \iota_i\circ\tau_{\ba_i}\implies \tau_{\ba_1\oplus\ba_2}\circ\left(\text{id}_{\mathfrak g}\otimes(\iota_i\circ\pi_i)\right) = \iota_i\circ\tau_{\ba_i}\circ(\text{id}_{\mathfrak g}\otimes\pi_{a_i})\,.
\end{equation}
We thus find that
\begin{equation}
\iota_1\circ\tau_{\ba_1}\circ({\rm id}_{\mathfrak g}\otimes\pi_1) + \iota_2\circ\tau_{\ba_2}\circ({\rm id}_{\mathfrak g}\otimes\pi_2) = 
\tau_{\ba_1\oplus\ba_2}\circ\left(\text{id}_{\mathfrak g}\otimes(\iota_1\circ\pi_1+\iota_2\circ\pi_2)\right) = \tau_{\ba_1\oplus\ba_2}\,,
\end{equation}
which is what we set out to prove.
\end{proof}

\begin{prop}\label{pr:simpAdj}
	Any adjoint $(\mathfrak g,\tau)$ can be decomposed as $\mathfrak g \approx \ba_1\oplus\dots\oplus\ba_n$, 
	\begin{equation}
	\tau_{\bf b} = \sum_{i = 1}^n (\sigma_i)_{\bf b}\circ(\pi_i\otimes{\rm id}_{\bf b})\,,
	\end{equation}
	where each $\ba_i$ is a simple object and each $(\ba_i,\sigma_i)$ is an adjoint for all $i = 1,\dots,n$.
\end{prop}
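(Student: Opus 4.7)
The plan is to use semisimplicity to split $\mathfrak{g}$ and then transport the morphisms $\tau$ to each simple summand. By semisimplicity, $\mathfrak{g}\approx\ba_1\oplus\dots\oplus\ba_n$ with each $\ba_i$ simple, equipped with embeddings $\iota_i:\ba_i\to\mathfrak{g}$ and projections $\pi_i:\mathfrak{g}\to\ba_i$ satisfying $\pi_j\circ\iota_i=\delta_{ij}\,\text{id}_{\ba_i}$ and $\sum_i\iota_i\circ\pi_i=\text{id}_{\mathfrak{g}}$. For each $i$ and each ${\bf b}\in\fcy{C}$ I define
\begin{equation*}
(\sigma_i)_{\bf b}:=\tau_{\bf b}\circ(\iota_i\otimes\text{id}_{\bf b})\,.
\end{equation*}
The decomposition of $\tau$ in the statement is then immediate from the completeness relation $\sum_i\iota_i\circ\pi_i=\text{id}_{\mathfrak{g}}$, exactly in the spirit of proposition~\ref{pr:sumTau}. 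It remains to verify that each pair $(\ba_i,\sigma_i)$ satisfies the three axioms of an adjoint.

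\textbf{Naturality and the tensor condition.} Naturality of $\sigma_i$ follows immediately from naturality of $\tau$: for any $f:{\bf b}\to{\bf c}$, both sides of the naturality square for $\sigma_i$ reduce to $\tau_{\bf c}\circ(\iota_i\otimes f)$. For the tensor condition \eqref{eq:tensCond}, I apply \eqref{eq:tensCond} to $\tau_{{\bf b}\otimes{\bf c}}$ and compose with $\iota_i\otimes\text{id}_{{\bf b}\otimes{\bf c}}$ on the right. The first summand becomes $(\sigma_i)_{\bf b}\otimes\text{id}_{\bf c}$, and the second summand is handled using naturality of the braiding, $\beta_{\mathfrak{g},{\bf b}}\circ(\iota_i\otimes\text{id}_{\bf b})=(\text{id}_{\bf b}\otimes\iota_i)\circ\beta_{\ba_i,{\bf b}}$, which turns it into $(\text{id}_{\bf b}\otimes(\sigma_i)_{\bf c})\circ(\beta_{\ba_i,{\bf b}}\otimes\text{id}_{\bf c})$.

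\textbf{Antisymmetry.} This is the main obstacle, because \eqref{eq:symCond} controls only $\tau_{\mathfrak{g}}$, whereas the antisymmetry I need is a statement about $(\sigma_i)_{\ba_i}$. I first apply naturality of $\tau$ to $\iota_i:\ba_i\to\mathfrak{g}$ and compose with $\pi_i$ on the left to obtain
\begin{equation*}
\tau_{\ba_i}=\pi_i\circ\tau_{\mathfrak{g}}\circ(\text{id}_{\mathfrak{g}}\otimes\iota_i),\qquad\text{hence}\qquad (\sigma_i)_{\ba_i}=\pi_i\circ\tau_{\mathfrak{g}}\circ(\iota_i\otimes\iota_i)\,.
\end{equation*}
Naturality of the braiding gives $\beta_{\mathfrak{g},\mathfrak{g}}\circ(\iota_i\otimes\iota_i)=(\iota_i\otimes\iota_i)\circ\beta_{\ba_i,\ba_i}$, and combining this with \eqref{eq:symCond} yields $(\sigma_i)_{\ba_i}\circ\beta_{\ba_i,\ba_i}=-(\sigma_i)_{\ba_i}$.

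\textbf{Non-degeneracy.} Suppose $f:\ba_i\to\ba_i$ satisfies $(\sigma_i)_{\bf b}\circ(f\otimes\text{id}_{\bf b})=0$ for every ${\bf b}$. Simplicity of $\ba_i$ forces $f=\lambda\,\text{id}_{\ba_i}$ for some $\lambda\in\mathbb{C}$. Set $g:=\lambda\,\iota_i\circ\pi_i\in\Hom(\mathfrak{g}\to\mathfrak{g})$; then for every ${\bf b}$,
\begin{equation*}
\tau_{\bf b}\circ(g\otimes\text{id}_{\bf b})=(\sigma_i)_{\bf b}\circ(f\otimes\text{id}_{\bf b})\circ(\pi_i\otimes\text{id}_{\bf b})=0\,,
\end{equation*}
so the non-degeneracy condition \eqref{eq:adjCon} for $\tau$ forces $g=0$. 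Composing on the left by $\pi_i$ and on the right by $\iota_i$ gives $\lambda\,\text{id}_{\ba_i}=0$; since $\ba_i$ is a nonzero simple object this forces $\lambda=0$. Hence each $(\ba_i,\sigma_i)$ is an adjoint, completing the proof.
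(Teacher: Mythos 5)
Your proof is correct and follows essentially the same route as the paper: define $(\sigma_i)_{\bf b}=\tau_{\bf b}\circ(\iota_i\otimes\text{id}_{\bf b})$, verify the adjoint axioms (which the paper calls ``straightforward'' and you helpfully spell out), and push non-degeneracy of $\tau$ down to each summand. The only cosmetic difference is in the non-degeneracy step, where you invoke simplicity to write $f=\lambda\,\text{id}_{\ba_i}$ before arguing; the paper avoids this by directly noting that $\iota_i\circ f\circ\pi_i=0$ implies $f=\pi_i\circ(\iota_i\circ f\circ\pi_i)\circ\iota_i=0$, which is slightly more economical but equivalent.
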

\begin{proof} 
Using semisimplicity we can always decompose $\mathfrak g \approx \ba_1\oplus\dots\oplus\ba_n$, with project and embedding morphisms $\pi_i$ and $\iota_i$. For each $i$ we can then define
\begin{equation}
(\sigma_i)_{\bf b} \equiv (\tau_{\bf b})\circ(\iota_i\otimes\rm{id}_{\bf b})\,,\quad \text{so that}\quad \tau_{\bf b} = \sum_{i = 1}^n (\sigma_i)_{\bf b}\circ(\pi_i\otimes{\rm id}_{\bf b})\,.
\end{equation}
 It is straightforward to check that $(\ba_i,\sigma_i)$ satisfies naturality, \eqref{eq:tensCond} and \eqref{eq:symCond}. To prove non-degeneracy, we note that if $f:\ba_i\rightarrow\ba_i$ satisfies
\begin{equation}
(\sigma_i)_{\bf b}\circ(f\otimes\text{id}_\ba) = 0
\end{equation}
for all ${\bf b}\in\fcy C$, then 
\begin{equation}
\tau_{\bf b}\circ\left((\iota_i\circ f\circ\pi_i)\otimes\text{id}_\ba\right) = 0\,.
\end{equation}
The non-degeneracy of $\tau$ then implies that 
\begin{equation}
\iota_i\circ f\circ\pi_i = 0 \implies f = \pi_i\circ0\circ\iota_i = 0\,,
\end{equation}
and so $(\ba_i,\sigma_i)$ is an adjoint.
\end{proof}
Taken together, these two propositions allows us to restrict our attention to simple objects. To specify an adjoint $(\mathfrak g,\tau)$ we need only specify $\tau_\ba$ for each simple object $\ba$, and to find all possible adjoints in a category we can restrict our search to simple adjoints.

We can introduce a partial ordering on the collection of adjoints in a category $\fcy C$, writing $(\mathfrak g,\tau) \succeq (\mathfrak h,\sigma)$ if there exists a morphism $f:\mathfrak h\rightarrow\mathfrak g$ such that
\begin{equation}
\sigma_\ba = \tau_{\ba}\circ(f\otimes\text{id}_\ba) \text{ for all }\ba\in\fcy C\,.
\end{equation}
We shall say that two adjoints are isomorphic, $(\mathfrak g,\tau)\approx(\mathfrak h,\sigma)$, if both $(\mathfrak g,\tau)\succeq(\mathfrak h,\sigma)$ and $(\mathfrak h,\sigma)\succeq(\mathfrak g,\tau)$.

\begin{prop}\label{pr:isoadj}
Given isomorphic adjoints $(\mathfrak g,\tau)\approx(\mathfrak h,\sigma)$, there exists a unique isomorphism $f:\mathfrak g\rightarrow\mathfrak h$ such that
\begin{equation}\tau_\ba = \sigma_\ba\circ(f\otimes\mathrm{id}_\ba)\,,\qquad \sigma_\ba = \tau_\ba\circ(f^{-1}\otimes\mathrm{id}_\ba)\,.\end{equation}
\end{prop}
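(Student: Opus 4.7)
The plan is to exploit the non-degeneracy condition \eqref{eq:adjCon} of adjoints, which is essentially the only tool that converts statements about families of morphisms $\tau_\ba$ indexed by $\ba \in \fcy C$ into statements about single morphisms out of $\mathfrak g$. Unwinding the definition $\approx$ of isomorphic adjoints, the hypothesis gives us two morphisms $f_1 : \mathfrak g \to \mathfrak h$ and $f_2 : \mathfrak h \to \mathfrak g$ satisfying, for every $\ba \in \fcy C$,
\begin{equation}
\tau_\ba = \sigma_\ba \circ (f_1 \otimes \text{id}_\ba), \qquad \sigma_\ba = \tau_\ba \circ (f_2 \otimes \text{id}_\ba).
\end{equation}
The goal is to promote $f_1$ (say) to a genuine isomorphism, with $f_2$ as its inverse, and to argue uniqueness.

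For existence, I would simply substitute one relation into the other. Substituting the expression for $\sigma_\ba$ into that for $\tau_\ba$ gives
\begin{equation}
\tau_\ba = \tau_\ba \circ \bigl((f_2 \circ f_1) \otimes \text{id}_\ba \bigr),
\end{equation}
which rearranges to
\begin{equation}
\tau_\ba \circ \bigl((f_2 \circ f_1 - \text{id}_{\mathfrak g}) \otimes \text{id}_\ba\bigr) = 0 \quad\text{for every } \ba.
\end{equation}
Non-degeneracy of the adjoint $(\mathfrak g,\tau)$ forces $f_2 \circ f_1 = \text{id}_{\mathfrak g}$. The symmetric argument, substituting the other way around and applying non-degeneracy of $(\mathfrak h,\sigma)$, yields $f_1 \circ f_2 = \text{id}_{\mathfrak h}$. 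Setting $f := f_1$ and $f^{-1} := f_2$ gives the desired isomorphism satisfying both stated identities.

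For uniqueness, suppose $f, f' : \mathfrak g \to \mathfrak h$ both satisfy $\tau_\ba = \sigma_\ba \circ (f \otimes \text{id}_\ba) = \sigma_\ba \circ (f' \otimes \text{id}_\ba)$. Subtracting and using linearity of composition gives $\sigma_\ba \circ \bigl((f - f') \otimes \text{id}_\ba\bigr) = 0$ for every $\ba$, and non-degeneracy of $(\mathfrak h,\sigma)$ yields $f = f'$. In particular the inverse $f^{-1}$ is unique as well. There is no serious obstacle: the entire content of the proposition is a direct bookkeeping exercise built on the non-degeneracy axiom, which is precisely why that axiom was included in the definition of an adjoint in the first place.
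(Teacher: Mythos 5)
Your existence argument is the same as the paper's: substitute one intertwining relation into the other, rearrange, and invoke non-degeneracy of each adjoint in turn to obtain $f_2\circ f_1=\text{id}_{\mathfrak g}$ and $f_1\circ f_2=\text{id}_{\mathfrak h}$.

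For uniqueness there is a small but genuine type mismatch. The non-degeneracy axiom \eqref{eq:adjCon} is stated only for endomorphisms: it says that if $h:\mathfrak h\rightarrow\mathfrak h$ kills every $\sigma_\ba$, then $h=0$. Your morphism $f-f'$ goes from $\mathfrak g$ to $\mathfrak h$, so the axiom does not directly apply to it. The fix is cheap: you already know from the existence part that $f$ is invertible, so consider $(f-f')\circ f^{-1}:\mathfrak h\rightarrow\mathfrak h$ and note that
\begin{equation}
\sigma_\ba\circ\bigl(((f-f')\circ f^{-1})\otimes\text{id}_\ba\bigr)=\sigma_\ba\circ\bigl((f-f')\otimes\text{id}_\ba\bigr)\circ(f^{-1}\otimes\text{id}_\ba)=0
\end{equation}
for all $\ba$; non-degeneracy then gives $(f-f')\circ f^{-1}=0$, hence $f=f'$. (Equivalently you could consider $f^{-1}\circ(f-f'):\mathfrak g\to\mathfrak g$ and use non-degeneracy of $(\mathfrak g,\tau)$.) The paper itself phrases uniqueness differently, appealing to uniqueness of inverses after showing that any competitor $f'$ is also a two-sided inverse of $g$; your version, once the domain issue is patched, is arguably a bit more direct.
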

\begin{proof} Let $f:\mathfrak g\rightarrow\mathfrak h$ and $g:\mathfrak h\rightarrow\mathfrak g$ be morphisms satisfying
\begin{equation}\label{eq:intcond}
\tau_\ba = \sigma_\ba\circ(f\otimes\mathrm{id}_\ba)\,,\qquad \sigma_\ba = \tau_\ba\circ(g\otimes\mathrm{id}_\ba)\,,
\end{equation}
for all $\ba\in\fcy C$. We can then compute
\begin{equation}
\tau_\ba\circ\left((\text{id}_{\mathfrak g}-g\circ f)\otimes\text{id}_\ba\right) = \tau_\ba - \sigma_\ba\circ\left(f\otimes\text{id}_\ba\right) = 0\,.
\end{equation}
and so the non-degeneracy condition on $(\mathfrak g,\tau)$ implies that $g\circ f = \text{id}_{\mathfrak g}$. An identical argument implies that $f\circ g =\text{id}_{\mathfrak h}$ and thus that $g = f^{-1}$. As inverses in a category are unique, it follows that $f$ and $g$ are the unique morphisms satisfying \eqref{eq:intcond}.
\end{proof}

Let us now consider the task of finding all adjoints, up to isomorphism, in a category which is \emph{finitely generated}. Recall that, as defined in section \ref{sec:CFT}, an object $\bf g$ generates a category $\fcy C$ if any simple object in $\fcy C$ appears in ${\bf g}^{\otimes k}$ for sufficiently large $k$. If such an object exists we say that $\fcy C$ is finitely generated. The Deligne categories we have considered in this paper are all finitely generated, and so are the representation categories of finite group and of semisimple Lie algebras, so this condition is not very restrictive. 

\begin{prop}\label{pr:adjinggb}
If a continuous category $\fcy C$ is generated by an object $\bf g$ and $(\mathfrak h,\tau)$ is an adjoint in $\fcy C$, then the morphism
\begin{equation} 
\eta \equiv (\tau_{\bf g}\otimes\rm{id}_{\overline{\bf g}})\circ(\rm{id}_{\mathfrak h}\otimes\delta_{{\bf g},\overline{\bf g}})
\end{equation}
embeds $\mathfrak h$ into ${\bf g}\otimes\overline{\bf g}$.
\end{prop}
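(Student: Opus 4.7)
Since $\mathcal{C}$ is semisimple, showing that $\eta$ embeds $\mathfrak{h}$ into ${\bf g}\otimes\overline{\bf g}$ reduces to showing $\eta$ is a monomorphism: that $\eta\circ f = 0$ forces $f = 0$ for any morphism $f:{\bf c}\to\mathfrak{h}$. The plan is to extract $\tau_{\bf g}$ from $\eta$, use the generation hypothesis to propagate the resulting vanishing to every $\tau_\ba$, and finally invoke the non-degeneracy axiom \eqref{eq:adjCon}.

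\textbf{Recovering $\tau_{\bf g}$ from $\eta$.} The key diagrammatic identity I would establish is
\begin{equation*}
(\text{id}_{\bf g}\otimes\delta^{\overline{\bf g},{\bf g}})\circ(\eta\otimes\text{id}_{\bf g}) \;=\; \tau_{\bf g}\,.
\end{equation*}
Substituting the definition of $\eta$, the composition features $\delta_{{\bf g},\overline{\bf g}}$ followed by $\delta^{\overline{\bf g},{\bf g}}$ acting on adjacent $\overline{\bf g}$ and ${\bf g}$ strands; the zig-zag relations \eqref{eq:zigzag} collapse this to the identity on ${\bf g}$, leaving only $\tau_{\bf g}$ applied to $\mathfrak{h}$ and the ${\bf g}$-factor born from $\delta_{{\bf g},\overline{\bf g}}$. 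Precomposing the displayed identity with $f\otimes\text{id}_{\bf g}$ then turns $\eta\circ f = 0$ directly into $\tau_{\bf g}\circ(f\otimes\text{id}_{\bf g}) = 0$.

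\textbf{Propagating vanishing to all objects.} An induction on $k$ using the tensor-product axiom \eqref{eq:tensCond} and naturality of the braiding (to commute $f$ past $\beta_{\mathfrak{h},{\bf g}}$) gives
\begin{equation*}
\tau_{{\bf g}^{\otimes k}}\circ(f\otimes\text{id}_{{\bf g}^{\otimes k}}) \;=\; 0 \qquad \text{for every }k \ge 1\,.
\end{equation*}
Since ${\bf g}$ generates $\mathcal{C}$, any simple object $\ba\in\mathcal{C}$ embeds into some ${\bf g}^{\otimes k}$ via $\iota:\ba\hookrightarrow {\bf g}^{\otimes k}$ with splitting $\pi\circ\iota = \text{id}_\ba$. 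Applying the naturality square \eqref{eq:nattau} to $\iota$ yields $\tau_\ba = \pi\circ\tau_{{\bf g}^{\otimes k}}\circ(\text{id}_{\mathfrak{h}}\otimes\iota)$, so $\tau_\ba\circ(f\otimes\text{id}_\ba) = 0$ for every simple $\ba$; Proposition \ref{pr:sumTau} extends this to arbitrary $\ba\in\mathcal{C}$.

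\textbf{Conclusion and main obstacle.} The non-degeneracy axiom \eqref{eq:adjCon} is stated only for endomorphisms of $\mathfrak{h}$, so a small bridging step is needed. For any $h:\mathfrak{h}\to{\bf c}$, the composition $f\circ h$ is an endomorphism of $\mathfrak{h}$ inheriting $\tau_\ba\circ((f\circ h)\otimes\text{id}_\ba) = 0$ for all $\ba$; hence \eqref{eq:adjCon} gives $f\circ h = 0$ for every such $h$, and Proposition \ref{pr:ndeg} (non-degeneracy of the trace pairing) then forces $f = 0$. The main technical step is the zig-zag identity above: it is diagrammatically transparent but requires care to write out as a formal equality of morphisms. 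A secondary subtlety is the endomorphism-to-general-$f$ bridge, which is handled cleanly by the trace-pairing argument just described.
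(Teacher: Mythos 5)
Your proof is correct and follows essentially the same route as the paper: recover $\tau_{\bf g}$ from $\eta$ via the zig-zag relation, propagate the vanishing to $\tau_{{\bf g}^{\otimes k}}$ using \eqref{eq:tensCond}, push it to every object via generation and naturality, and then invoke non-degeneracy. The only structural difference is cosmetic: the paper argues by contradiction, picking a simple subobject $\mathfrak j\subset\ker\eta$ so that $\iota_{\mathfrak j}\circ\pi_{\mathfrak j}$ is automatically a nonzero endomorphism of $\mathfrak h$, whereas you work directly with a general $f$ and bridge to endomorphisms via the trace pairing of Proposition \ref{pr:ndeg} — equivalent ways of handling the same small subtlety.
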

\begin{proof}
We will prove this by contradiction, assuming that $\eta$ is not an embedding morphism, i.e.~has a non-trivial kernel. In this case there must exist a simple object $\mathfrak j$ in $\mathfrak h$ (with $\pi_{\mathfrak j}$ and $\iota_{\mathfrak j}$ projection and embedding morphisms) such that
\begin{equation}
\eta\circ\iota_{\mathfrak j} = 0 \implies\tau_{\bf g}\circ(\iota_{\mathfrak j}\otimes\text{id}_{\bf g}) = 0\,.
\end{equation}
We can then use \eqref{eq:tensCond} to show that for any $k$,
\begin{equation}\label{eq:vanfork}
\tau_{{\bf g}^{\otimes k}}\circ(\iota_{\mathfrak j}\otimes\text{id}_{{\bf g}^{\otimes k}}) = 0\,.
\end{equation}

Now consider any other simple object $\ba\in\fcy C$. Because $\bf g$ generates $\fcy C$ there exists embedding and projection operators $\iota_\ba:\ba\rightarrow{\bf g}^{\otimes k}$ and $\pi_{\ba}:{\bf g}^{\otimes k}\rightarrow\ba$ for some sufficiently large value of $k$. Applying \eqref{eq:nattau} to $\pi_\ba$ and then \eqref{eq:vanfork} we find that
\begin{equation}
\tau_\ba\circ(\text{id}_{\mathfrak h}\otimes\pi_\ba)\circ(\iota_{\mathfrak j}\otimes\text{id}_{{\bf g}^{\otimes k}}) = \pi_\ba\circ\tau_{{\bf g}^{\otimes k}}\circ(\iota_{\mathfrak j}\otimes\text{id}_{{\bf g}^{\otimes k}})\circ(\iota_{\mathfrak j}\otimes\text{id}_{{\bf g}^{\otimes k}}) = 0\,.
\end{equation}
If we now contract the l.h.s sides of this equation with $\pi_{\mathfrak j}\otimes\iota_\ba$ we find that
\begin{equation}
\tau_\ba\circ(\text{id}_{\mathfrak h}\otimes\pi_\ba)\circ(\iota_{\mathfrak j}\otimes\text{id}_{{\bf g}^{\otimes k}})\circ(\pi_{\mathfrak j}\otimes\iota_\ba) = \tau_\ba \circ (\iota_{\mathfrak j}\circ\pi_{\mathfrak j})\otimes(\pi_\ba\circ\iota_\ba)) = \tau_\ba\circ((\iota_{\mathfrak j}\circ\pi_{\mathfrak j})\otimes\text{id}_\ba)\,,
\end{equation}
and so conclude that for every $\ba\in\fcy C$,
\begin{equation}
\tau_\ba\circ((\iota_{\mathfrak j}\circ\pi_{\mathfrak j})\otimes \text{id}_\ba) = 0\,.
\end{equation}
Because $\iota_{\mathfrak j}\circ\pi_{\mathfrak j}$ is a non-zero morphism from $\mathfrak h\rightarrow\mathfrak h$, we find that non-degeneracy condition for $(\mathfrak h,\tau)$ has been violated, and so we have a contradiction.
\end{proof}

By this proposition, to find all adjoints we may simply go through the list of all simple objects in ${\bf g}\otimes\overline{\bf g}$. As a simple application, $\Reptilde\,O(n)$ is generated by the object $\bf n$. Since ${{\bf n}^{\otimes 2}\approx {\bf S}\oplus{\bf A}\oplus{\bf 1}}$, to find all of the adjoints in $\Reptilde\,O(n)$ we can simply check these three possibilities, and find that ${\bf A}$ is the only adjoint (for ${\bf S}$ and ${\bf 1}$ the category does not contain morphisms $\tau_{\ba}$ with the needed properties).

In a symmetric tensor category $\fcy C$ we will define the maximal adjoint object $(\mathfrak m,T)$, if it exists, to be an adjoint such that for any other adjoint $(\mathfrak g,\tau)$, $(\mathfrak m,T)\succeq(\mathfrak g,\tau)$. Due to proposition \ref{pr:isoadj}, if a maximal adjoint exists, it is unique up to unique isomorphism. 

\begin{prop}
In a finitely generated tensor category $\fcy C$ there exists a unique maximal adjoint.
\end{prop}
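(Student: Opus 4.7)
The plan is to build $(\mathfrak m, T)$ explicitly as a direct sum over representatives of the isomorphism classes of simple adjoints, verify it is an adjoint, and then check the universal property; uniqueness will then be immediate from proposition \ref{pr:isoadj}. First, I would bound the set of simple adjoints. Proposition \ref{pr:adjinggb} tells us that any simple adjoint $(\ba,\sigma)$ embeds $\ba$ into ${\bf g}\otimes\overline{\bf g}$, and since $\fcy C$ is semisimple and ${\bf g}\otimes\overline{\bf g}$ has finite length, only finitely many simple objects can underlie adjoints. On each such $\ba$ the space of adjoint structures is finite-dimensional, since a structure is determined by its restriction to a tensor-generating object via \eqref{eq:tensCond}; by proposition \ref{pr:isoadj} two structures on a simple object are isomorphic precisely when they differ by a nonzero scalar, so selecting one representative per iso class yields a finite list $(\ba_1,\sigma_1),\dots,(\ba_N,\sigma_N)$.

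Next, I would set $\mathfrak m=\bigoplus_i \ba_i$ with embeddings $\iota_i$ and projections $\pi_i$, and define
\begin{equation*}
T_{\bf b}=\sum_{i=1}^N (\sigma_i)_{\bf b}\circ(\pi_i\otimes\mathrm{id}_{\bf b})\in\mathrm{Hom}(\mathfrak m\otimes{\bf b}\to{\bf b}).
\end{equation*}
Naturality of $T$ in ${\bf b}$, the Leibniz rule \eqref{eq:tensCond}, and non-degeneracy follow directly by linearity from the corresponding properties of the $\sigma_i$, using Schur's lemma to control off-diagonal endomorphisms between non-isomorphic summands. The main technical obstacle is the antisymmetry axiom \eqref{eq:symCond}: expanding $\mathfrak m\otimes\mathfrak m=\bigoplus_{i,j}\ba_i\otimes\ba_j$ and tracking components, one sees that $T_{\mathfrak m}\circ\beta_{\mathfrak m,\mathfrak m}=-T_{\mathfrak m}$ is equivalent to the vanishing of all cross-actions $(\sigma_i)_{\ba_j}$ for $i\neq j$. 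This is the categorical analogue of the fact that distinct simple ideals of a semisimple Lie algebra commute. I would prove it by the projection-and-restriction argument sketched after proposition \ref{pr:simpAdj}: for any adjoint $(\mathfrak h,\tau)$ containing both $\ba_i$ and $\ba_j$ as simple summands, applying $\pi_i$ (resp.\ $\pi_j$) to $\tau_\mathfrak h\circ\beta=-\tau_\mathfrak h$ restricted to $\ba_i\otimes\ba_j$ forces the corresponding cross-action to be zero; one then shows that the hypothetical failure of this vanishing for free-standing simple adjoints contradicts the non-degeneracy condition \eqref{eq:adjCon} by using the Jacobi identity for the induced bracket on $\ba_i\oplus\ba_j$.

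Maximality is then straightforward: given any adjoint $(\mathfrak g,\tau)$, proposition \ref{pr:simpAdj} decomposes $\mathfrak g\approx\bigoplus_\alpha{\bf c}_\alpha$ into simple adjoints $({\bf c}_\alpha,\rho_\alpha)$, each of which is isomorphic to exactly one $(\ba_{i(\alpha)},\sigma_{i(\alpha)})$ from our list; proposition \ref{pr:isoadj} provides a unique intertwining isomorphism ${\bf c}_\alpha\to\ba_{i(\alpha)}$, and assembling these via $\iota_{i(\alpha)}$ yields the required morphism $\iota_{\mathfrak g}:\mathfrak g\to\mathfrak m$ satisfying $\tau_\ba=T_\ba\circ(\iota_{\mathfrak g}\otimes\mathrm{id}_\ba)$, hence $(\mathfrak m,T)\succeq(\mathfrak g,\tau)$. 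Finally, uniqueness of the maximal adjoint up to unique isomorphism follows from proposition \ref{pr:isoadj} applied to the pair of $\succeq$-relations between any two maximal adjoints.
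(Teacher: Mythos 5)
Your overall strategy is the right one and closely parallels the paper's, but there is a genuine gap at the step where you form the finite list of simple adjoints. You observe (correctly) that the adjoint structures on a fixed simple object $\ba$ form a finite-dimensional vector space, and that two structures on $\ba$ are isomorphic as adjoints iff they are proportional. But you then conclude that ``selecting one representative per iso class yields a finite list.'' This does not follow: if the vector space $U_{\ba}$ of adjoint structures on $\ba$ has dimension $k\geq 2$, the isomorphism classes form the projective space $\mathbb P(U_{\ba})=\mathbb P^{k-1}$, which is infinite, so there is no finite list of representatives. Moreover a single copy of $\ba$ in $\mathfrak m$ cannot possibly dominate the entire family: the only morphisms $\ba\to\ba$ are scalars, so from a single $(\ba,\sigma)$ you can only reach the scalar multiples of $\sigma$, not the rest of $U_{\ba}$. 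A concrete instance is $\fcy C=\Rep\,(U(1)\times U(1))$ (finitely generated by $V_{(1,0)}\oplus V_{(-1,0)}\oplus V_{(0,1)}\oplus V_{(0,-1)}$), where the trivial object $\mathbf 1$ underlies a two-dimensional space of adjoint structures $\tau_{V_{(n,m)}}=(an+bm)\,\mathrm{id}$, and the maximal adjoint is $\mathbf 1^{\oplus 2}$, not $\mathbf 1$.

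The paper avoids this by working with the vector space rather than its projectivization: for each simple $\ba_i$ appearing in $\mathbf g\otimes\overline{\mathbf g}$ it identifies the finite-dimensional space $U_i$ of adjoint structures (equivalently, of embeddings $\ba_i\hookrightarrow\mathbf g\otimes\overline{\mathbf g}$ arising from adjoints, using Proposition \ref{pr:adjinggb} to set up this correspondence), picks a basis $f_1,\dots,f_{k_i}$, and puts $\ba_i^{\oplus k_i}$ into $\mathfrak m$ with the structure that projects onto the $j$-th summand and applies the $j$-th basis adjoint. Then any $(\ba_i,\tau)$ with $\tau=\sum_j\lambda_j\tau_j$ is dominated via the embedding $\ba_i\to\ba_i^{\oplus k_i}$ with components $\lambda_j$, which handles the whole family at once. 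You should replace your ``one representative per iso class'' step with this basis construction; the rest of your argument (the reduction to simple adjoints via Proposition \ref{pr:simpAdj}, the verification of the axioms for $T$, and the appeal to Proposition \ref{pr:isoadj} for uniqueness) goes through as you wrote it. Incidentally, your remark that verifying the antisymmetry axiom for $T_{\mathfrak m}$ reduces to vanishing of cross-actions between distinct simple summands is a correct and worthwhile observation; the paper compresses this into ``it is straightforward to verify,'' so making it explicit (as you do) is an improvement, though the argument you sketch for it would need to be fleshed out.
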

\begin{proof}
We will begin with the following easy observation. Given any two adjoints $(\mathfrak h,\tau_1)$ and $(\mathfrak h,\tau_2)$ where $\mathfrak h$ is simple, then any linear combination $(\mathfrak h, \lambda_1\tau_1+\lambda_2\tau_2)$ is also an adjoint so long as $\lambda_1\tau_1+\lambda_2\tau_2\neq 0$.

Let $\bf g$ generate $\fcy C$ and let $\ba_i$ be a simple object in ${\bf g}\otimes\overline{\bf g}\approx\ba_1\oplus\dots\oplus\ba_n$. We can then define $U_i$ to be the set of morphisms $f:\ba_i\rightarrow{\bf g}\otimes\overline{\bf g}$ such that 
\begin{equation}
f = (\tau_{\bf g}\otimes{\rm id}_{\overline{\bf g}})\circ({\rm id}_{\ba_i}\otimes\delta_{{\bf g},\overline{\bf g}})
\end{equation}
for some adjoint $(\ba_i,\tau)$. Note that if such an adjoint exists it is unique; for if there were two such adjoints $(\ba_i,\tau_1)$ and $(\ba_i,\tau_2)$ then the $(\ba_i,\tau_1-\tau_2)$ would be an adjoint but would violate proposition \ref{pr:adjinggb}.

Since any linear combinations of adjoints is an adjoint, the set $U_i$ is a vector space. Choose a basis $f_1,\dots,f_{k_i}$ of $U_i$, corresponding to adjoints $(\ba_i,\tau_1),\dots,(\ba_i,\tau_{k_i})$. We can then define an adjoint $(\ba_i^{\oplus k_i},\sigma_i)$ with
\begin{equation}
(\sigma_i)_{\bf b} = \sum_{j = 1}^{k_i} \iota_j\circ(\tau_j)_{\bf b}\circ(\pi_j\otimes\text{id}_{\bf b})\,,
\end{equation}
where $\iota_i$ and $\pi_i$ are the embedding and projection morphisms into $\ba_i^{\oplus k_i}$. By construction $(\ba_i^{\oplus k_i},\sigma_i)\succeq (\ba_i,\tau)$ for any adjoint $(\ba_i,\tau)$.

Repeating this construction for each $\ba_i\in{\bf g}\otimes\overline{\bf g}$, we can then define $\mathfrak m = \bigoplus_{i = 1}^n\ba_i^{\oplus k_i}$ and
\begin{equation}
T_{\bf b} = \sum_{i = 1}^n \iota_{\ba_i}\circ(\sigma_i)_{\bf b}\circ(\pi_{\ba_i}\otimes\text{id}_{\bf b})\,.
\end{equation}
It is straightforward to verify that $(\mathfrak m,T)$ is an adjoint, and, using propositions \ref{pr:adjinggb} and \ref{pr:simpAdj}, that for any adjoint $(\mathfrak h,\tau)$ in $\fcy C$, $(\mathfrak m,T)\succeq(\mathfrak h,\tau)$.
\end{proof}


\small

\bibliography{ONbiblio}
\bibliographystyle{utphys}

\end{document}